\title{
Ising Model on Locally Tree-like Graphs: Uniqueness of Solutions to Cavity Equations}
\author{Qian Yu$^{*}$ and Yury Polyanskiy$^{\dagger}$\\
$^{*}$Department of Electrical and Computer Engineering, Princeton University 
\\
$^{\dagger}$Department of Electrical Engineering and Computer Science, Massachusetts Institute of Technology
}
\date{June 2020}
\newcommand\boxconv{\stackMath\mathbin{\stackinset{c}{0ex}{c}{0ex}{\ast}{\Box}}}
\def\esssup{\mathrm{esssup}}
\def\supp{\textup{supp}}
\theoremstyle{definition}
\newtheorem{defn}{Definition} 
\theoremstyle{plain}
\newtheorem{thm}[defn]{Theorem} 
\newtheorem*{thm*}{Theorem}
\theoremstyle{plain}
\newtheorem{prop}[defn]{Proposition}
\newtheorem{corollary}[defn]{Corollary}
\theoremstyle{remark}
\newtheorem{remark}{Remark} 
\def\EE{\mathbb{E}}
\def\PP{\mathbb{P}}
\def\calQ{\mathcal{Q}}
\def\calQs{\calQ_{\textup{s}}}
\def\mus{\mu_{\textup{s}}}
\def\mreals{\mathbb{R}}
\def\simiid{\stackrel{iid}{\sim}}
\def\Ber{\mathrm{Ber}}
\def\eqdef{\triangleq}
\newif\ifgeneral
\begin{document}

\maketitle

\begin{abstract}
   In the study of Ising models on large locally tree-like graphs, in both rigorous and non-rigorous
methods one is often led to understanding the so-called belief propagation distributional
recursions and its fixed points. We prove that there is at most one non-trivial fixed point for Ising
models with zero or certain random external fields.
Previously this was only known for sufficiently
``low-temperature'' models. Our main innovation is in applying information-theoretic ideas of channel comparison
leading to a new metric (degradation index) between binary-input-symmetric (BMS) channels under which the Belief Propagation (BP)
operator is a strict contraction (albeit non-multiplicative). A key ingredient of our proof is a
strengthening of the classical stringy tree lemma of~\cite{EPKS2000}. 
   
   Our result simultaneously closes the following 6 conjectures in the literature:
1) independence of robust reconstruction accuracy to leaf noise in broadcasting on
trees~\cite{MNS16}; 
2) uselessness of global information for a labeled 2-community stochastic block model, or
2-SBM~\cite{kms16}; 
3) optimality of local algorithms for 2-SBM under noisy side information~\cite{mx15};
  4) uniqueness of BP fixed point in broadcasting on trees in the Gaussian (large degree)
  limit~\cite{mx15};
5) boundary irrelevance in broadcasting on trees~\cite{ACGP21};
6) characterization of entropy (and mutual information) of community labels given the graph in 2-SBM~\cite{ACGP21}.


   

\end{abstract}

\tableofcontents


\section{Main result and motivation}\label{sec:intro}

The central object of interest in this paper is a belief propagation
(BP) operator $\calQ_s$ that takes a symmetric distribution $\mu$ and produces another symmetric distribution
$\calQ_s \mu$. 
We 
call a probability distribution $\mu$ on $(-\infty,+\infty]$ \textit{symmetric} if 
 	\begin{equation}\label{eq:def_sym}
 	d\mu(r)= e^{r}
	d\mu(-r),  \qquad \iff \qquad \mu[E] = \int e^{-r} \mathbbm{1}\{-r \in E\} d\mu(r)
\end{equation} 
for every measurable subset $E\subseteq (-\infty,+\infty)$ (see \cite[Section. 15.2.2]{mezard2009information} or Section \ref{subsec:der} for motivation). A special distribution $\mu(\{0\})=1$
is denoted by a Dirac-delta $\delta_0$ and is called \emph{trivial}. The operator $\calQ_s$ 
depends on 
three parameters: a \textit{crossover probability} $\delta \in [0,1]$, a symmetric \textit{survey
distribution} $\mu_s$ and a \textit{(branching or) degree distribution} $P_d$ on $\mathbb{Z}_+$.
Given these, we define $\calQ_s\mu$ for any symmetric $\mu$ to be the 
probability law of random variable $R$
\begin{align}\label{eq:bp_survey}
     R \triangleq \sum_{u=1}^d Z_u F_{\delta}(\tilde R_u) + S, 
\end{align} 
  where $d\sim P_{d}$, $\tilde R_{u}\simiid \mu$, 
 $Z_u \simiid (-1)^{\Ber(\delta)}$, $S\sim \mu_s$ (all jointly independent) and 
\begin{align*}
    F_\delta(x)\triangleq\ln \frac{(1-\delta)e^x+\delta}{\delta e^x+1-\delta}=2 \tanh^{-1}\left( \left(1-2\delta\right)
\tanh{x\over2}\right).
\end{align*}
The special case of $S=0$ or, equivalently, $\mu_s=\delta_0 $  is referred to as \textit{BP
without survey} and in this case we denote the BP operator by $\calQ$ without the subscript $s$.

In this paper we consider the topic of convergence of iterations $\calQ_s^h \mu_0$ as
$h\to\infty$. Naturally, in this regard, we define distribution $\mu$ to be a \textit{BP fixed
point} if $\calQ_s \mu = \mu$. Note that in the case of no survey ($S=0$), and only in that case, there is a
\textit{trivial fixed point} $\mu = \delta_0$.

The main result of our work is the following.
\begin{thm}\label{thm:main} There exists at most one non-trivial symmetric BP fixed point $\mu^*$, unless we
are in the exceptional case of $S=0$ (no survey), $d = 1$ a.s. and $\delta \in \{0,1\}$ (in which
case $\mathcal{Q}_{\textup{s}}$ is an identity operator). In the
non-exceptional case, for all non-trivial symmetric $\mu$, the recursion $\calQ^{h}_{\textup{s}}\mu$ converges weakly to the same fixed point, to $\mu^*$ if it exists, or to the trivial $\delta_0$ otherwise.
\end{thm}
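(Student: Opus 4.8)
\noindent\textit{Proof strategy.}\ The plan is to transport the problem to the space of binary-input memoryless symmetric (BMS) channels, identifying a symmetric law $\mu$ with the channel whose log-likelihood-ratio has distribution $\mu$, and to exploit two structures on this space: the \emph{degradation order} $\preceq$ ($\mu\preceq\nu$ iff the $\mu$-channel is a stochastic garbling of the $\nu$-channel) and the new \emph{degradation-index} metric $d$ announced in the abstract. Under this dictionary, $\calQs$ in \eqref{eq:bp_survey} is the composition of three channel maps: (i) check-node combination of each $\tilde R_u$ with an independent $\mathrm{BSC}(\delta)$ (this is exactly what $F_\delta$ and the sign $Z_u$ implement, since $\tanh(\tfrac12\ln\tfrac{1-\delta}{\delta})=1-2\delta$); (ii) the $P_d$-mixture of the $d$-fold variable-node combination (convolution of LLR laws); (iii) variable-node combination with the fixed side-information channel $\mus$. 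Each of these --- hence $\calQs$ --- is $\preceq$-monotone by the standard density-evolution facts that box-plus, convolution, and mixtures preserve channel degradation. Consequently $\calQs^h\Delta_\infty$, started from the perfect channel $\Delta_\infty$ (the $\preceq$-top element), is $\preceq$-decreasing, and by weak compactness of the space of symmetric laws it converges to a fixed point $\bar\mu$ satisfying $\mu\preceq\bar\mu$ for every symmetric BP fixed point $\mu$ (the maximal fixed point). Symmetrically, $\calQs^h\delta_0$ is $\preceq$-increasing (its first iterate is $\mus$, which dominates $\delta_0$) to the minimal fixed point $\underline\mu$, which is non-trivial whenever $S\neq 0$.

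The heart of the matter is the contraction estimate: the degradation index $d$ should satisfy $d(\calQs\mu,\calQs\nu)<d(\mu,\nu)$ for every pair of distinct non-trivial symmetric laws, while remaining non-expansive under $\calQs$ in general (the ``strict but non-multiplicative'' contraction). The intended construction of $d$ measures how much extra noise --- gauged against a reference family, e.g.\ via check-node composition with a $\mathrm{BSC}$ --- must be injected into each of two BMS channels so as to push it below the other in $\preceq$; one must check this is a genuine metric, is $\preceq$-monotone, and is compatible with weak convergence. The contraction is not an SDPI-type estimate of the kind available at low temperature, and this is precisely where a strengthening of the stringy tree lemma of \cite{EPKS2000} enters: it upgrades the classical mutual-information comparison between a branching recursion and its ``stringy''/path-decomposed surrogate to the sharp comparison, visible at the level of $\preceq$, that makes quantitative the per-step strict decrease of $d$. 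Establishing this strengthening and deducing the contraction from it is, I expect, the main obstacle.

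Granting the contraction, uniqueness is nearly free: a non-trivial fixed point $\mu^{*}$ has $\mu^{*}\preceq\bar\mu$, which forces $\bar\mu$ non-trivial, whence $d(\mu^{*},\bar\mu)=d(\calQs\mu^{*},\calQs\bar\mu)<d(\mu^{*},\bar\mu)$ unless $\mu^{*}=\bar\mu$; so the non-trivial fixed point, if any, is $\bar\mu$, and there is no ``separating'' intermediate non-trivial fixed point (no hard phase). For global convergence, first dispose of the degenerate regimes by inspection --- $\delta=\tfrac12$ and $P_d(\{0\})=1$ render $\calQs$ constant after one step, and the excluded case $S=0,\ d=1,\ \delta\in\{0,1\}$ is exactly when $\calQs=\mathrm{id}$ --- and note $\calQs$ otherwise preserves non-triviality. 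If $S\neq 0$, then $\underline\mu$ is non-trivial, hence $\underline\mu=\bar\mu$ by uniqueness, and the sandwich $\calQs^h\delta_0\preceq\calQs^h\mu_0\preceq\calQs^h\Delta_\infty$ (valid for all $\mu_0$, since degradation inequalities are weakly closed) forces $\calQs^h\mu_0\to\bar\mu$. If $S=0$: when $\bar\mu=\delta_0$ the upper half of the sandwich already gives $\calQs^h\mu_0\to\delta_0$; when $\bar\mu\neq\delta_0$, the absence of a separating fixed point means $\delta_0$ is linearly repelling (equivalently, the Kesten--Stigum parameter exceeds $1$), so a local analysis near $\delta_0$ gives $\calQs\,\mathrm{BSC}(\epsilon)\succeq\mathrm{BSC}(\epsilon)$ for $\epsilon$ close to $\tfrac12$; then $\calQs^h\mathrm{BSC}(\epsilon)$ increases (using weak continuity of $\calQs$ to pass to the limit) to a non-trivial fixed point, necessarily $\bar\mu$, and since any non-trivial $\mu_0$ degradation-dominates such a $\mathrm{BSC}(\epsilon)$ after finitely many steps, the matching lower sandwich again yields $\calQs^h\mu_0\to\bar\mu$. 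Weak continuity of $\calQs$, weak compactness of symmetric laws, and compatibility of $d$ with weak convergence are the remaining auxiliary facts to record.
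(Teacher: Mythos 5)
Your high-level framework --- BMS channels ordered by degradation, the degradation-index metric, strict contraction derived from a strengthening of the stringy tree lemma, uniqueness by antisymmetry, and convergence by sandwiching between the $\preceq$-extremal initializations --- is exactly the paper's framework. The uniqueness part of your sketch is essentially correct (modulo the detail that the paper's strict contraction holds for $\calQ^k$ with $k\in\{1,2\}$ rather than uniformly for $k=1$, which matters only to avoid a false per-step claim when $\PP[d>2]=0$; since fixed points are invariant under $\calQ^k$ this does not affect the argument).

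There is, however, a genuine gap in your convergence argument for the case $S=0$ and $\bar\mu\neq\delta_0$. You propose to build the lower half of the sandwich from $\mathrm{BSC}(\epsilon)$ with $\epsilon$ close to $1/2$, invoking ``linear repulsion'' at $\delta_0$ above the Kesten--Stigum threshold to conclude $\calQ\,\mathrm{BSC}(\epsilon)\succeq\mathrm{BSC}(\epsilon)$. This degradation claim is false in general. For example, with fixed degree $d=2$ one computes $\beta(0;\calQ B_\epsilon)=(1-2\delta)(1-2\epsilon)<(1-2\epsilon)=\beta(0;B_\epsilon)$ for every $\delta\in(0,1/2)$, so $\calQ B_\epsilon\not\succeq B_\epsilon$ even though $d(1-2\delta)^2>1$ is possible; for $d=3$ the same $\beta(0)$ comparison already fails throughout the parameter window $1-2\delta\in(1/\sqrt3,2/3)$, which lies strictly above Kesten--Stigum. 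The issue is structural: $\calQ$ does not map $\mathrm{BSC}$'s to $\mathrm{BSC}$'s, and linear instability of $\delta_0$ in the usual (e.g.\ mean or moment) sense does not translate into an improvement in the degradation partial order against a $\mathrm{BSC}$ reference. The paper sidesteps this entirely: once uniqueness of the non-trivial fixed point $\mu^*$ is established, the lower leg of the sandwich is taken to be $\hat\mu = B_\phi\boxconv\mu^*$ with $\phi=\phi^*(\mu^*,\mu)<1/2$, and the monotone increase $\calQ\hat\mu = \calQ(B_\phi\boxconv\mu^*)\succeq B_\phi\boxconv\calQ\mu^* = \hat\mu$ is an immediate consequence of the (ordinary, non-strict) stringy tree lemma plus the fixed-point property of $\mu^*$. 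You should replace your local-analysis step with this argument; the rest of your outline then goes through as intended.
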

This result is contained in Theorems \ref{thm: uniq} and  \ref{thm: uniq_s} below. As alluded to in the abstract, Theorem~\ref{thm:main} resolves a number of long-standing questions in
the theory of Ising models on trees and locally tree-like graphs. In a nutshell, the main
innovation of our work is the discovery of a (rather strange) metric  between distributions
(equivalently, between BMS channels) under which a finite number of applications of
$\mathcal{Q}_{\textup{s}}$ is strictly contracting (see Definition \ref{def:deg_ind} below).

We note, however, that our result says little about the actual structure of the fixed point. From
prior work, though, we know that in
the case of $S=0$ (no survey) and fixed degree $d$ the $\delta_0$ is the unique
fixed point iff $(1-2\delta)^2 d \le 1$  (a Kesten-Stigum threshold~\cite{EPKS2000}). Above
criticality the non-trivial fixed point $\mu^*$ emerges and it is known to be approximately
Gaussian~\cite{9517800} in the sense that if $R_\delta \sim \mu^*$ then as $(1-2\delta)^2 d \to 1^+$:
$$ {R_{\delta}\over \sqrt{1-2\delta-d^{-{1\over2}}}} \stackrel{d}{\to}
\mathcal{N}(0,{\sigma^2\over 2}),\qquad \sigma^2 \eqdef {16d\sqrt{d}\over d-1}\,.$$

We next proceed to deriving the recursion~\eqref{eq:bp_survey} and explaining its connection to
Ising models, statistical physics and stochastic block model.

\subsection{Derivation of BP recursion}\label{subsec:der}

Consider an inference problem for the Ising model 
on infinite trees (see Fig.\ref{fig:main} for an illustration). We have a rooted tree channel that is generated recursively, where each vertex $v$ has an i.i.d. number of children sampled from a given degree distribution $P_{d}$.  Each vertex $v$ is associated with a binary random variable $X_v$. 
 The variable on the root, denoted by $X_0$, is $\Ber{(\frac{1}{2})}$. Then for any other vertex $v$,
 $X_v$ is identical to the variable on their parent node with probability $1-\delta$, conditioned on all other variables that are not their descendant, where $\delta$ is a given parameter in $[0,1]$.
 
 
 
 

 	\begin{figure}[htbp]
\centering
  \includegraphics[width=0.6\linewidth]{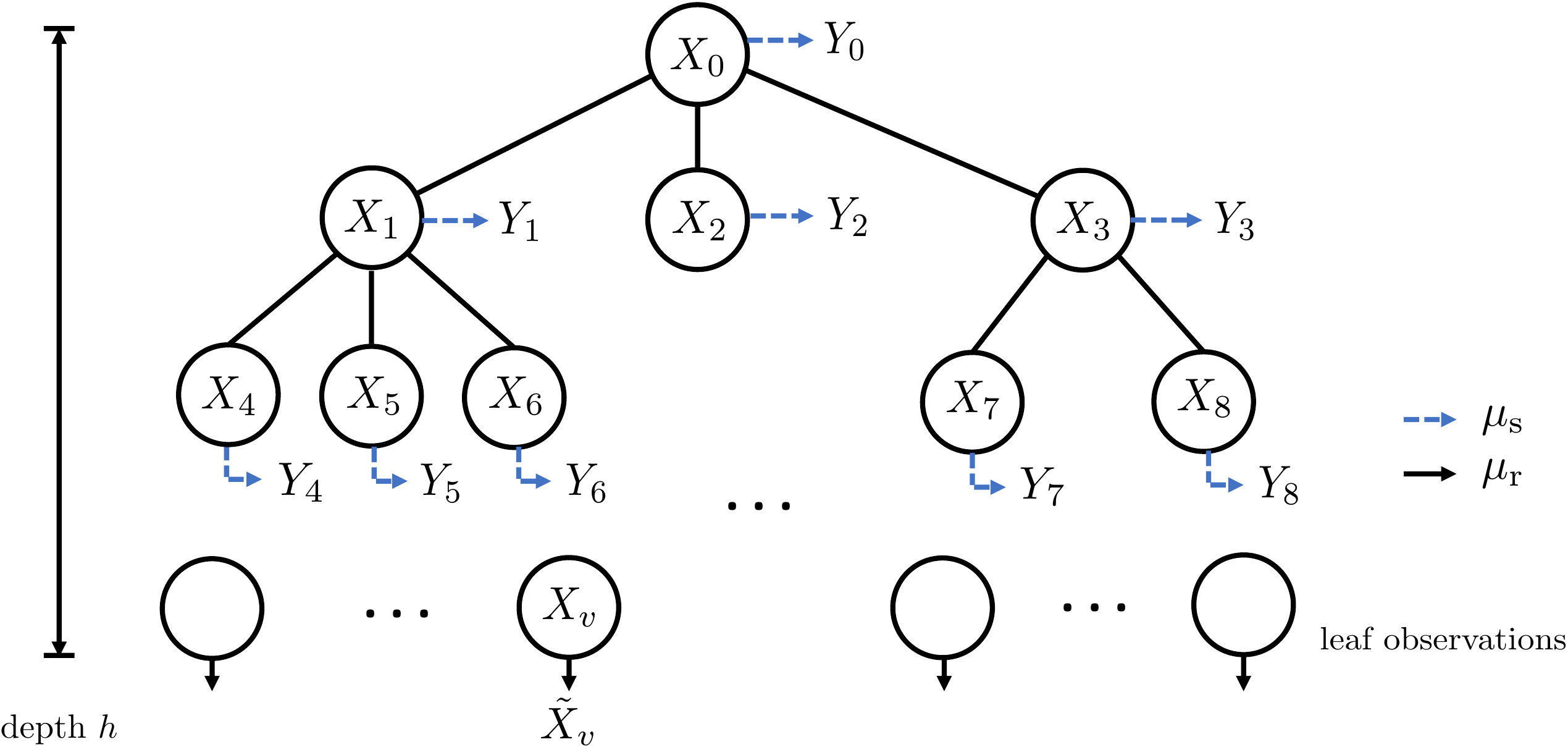}
\caption{Illustration of a tree channel with depth $h$.  
The solid lines represent binary symmetric channels  with error probability $\delta$. The
estimation of $X_0$ is based on leaf observations and the possible surveys. The corresponding
channels are represented by the solid arrows and the dashed arrows,  with log-likelihood ratio distributions given
by $\mu_{\textup{r}}$ and $\mu_{\textup{s}}$, respectively.  The leaf observations {consist
of }only the vertices at depth $h$, not including vertex $2$ in this example. 
}
\label{fig:main}
\end{figure}

 In a basic setting called broadcasting on trees (BOT), 
 we are interested in the process of estimating 
 $X_0$ given the random tree graph structure and the collection of variables on all vertices with depth $h$. 
 Formally, we aim to characterize the distribution of the log-likelihood ratio (LLR) for estimating $X_0$ conditioned on $X_0=0$. 
  Let $L_h$ denote the set of all vertices at depth $h$ from the root, $X_{L_h} = \{X_v: v \in
L_h\}$ be the vector of all values at depth $h$, and  $T_{(h)}$ denote the tree subgraph induced by all vertices in $\{L_0,L_1,...,L_h\}$. The set of all observed information consists of $T_{(h)}$ and $X_{L_h}$. 
  The LLR distribution for tree networks of depth $h$, denoted by $\mu_{(h)}$, is 
  given by    
  \begin{align*}
     \mu_{(h)}[{E}]=\PP\left[\ln {\PP[T_{(h)}, X_{L_h}|X_{0}=0]\over \PP[T_{(h)}, X_{L_h}|X_{0}=1]}\in{E}\bigg|X_0=0\right]
 \end{align*} for any measurable ${E}\subseteq (-\infty,+\infty]$.

     Knowing the LLR distribution, we can compute 
     fundamental quantities such as minimum error probabilities and mutual information. Particularly, 
     the recoverability of $X_0$ with non-trivial error probability (bounded away from $\frac{1}{2}$ for large $h$) is possible if and only if the LLR distributions converge to a non-trivial distribution.

   More generally, a variant of BOT is considered in 
 \cite{ACGP21}, where in addition to the \emph{leaf observations} 
 (i.e., variables at depth $h$), 
 each vertex above depth $h$ is also observed through identical BMS channels, and the received information is called the \emph{surveys}. This formulation can be further generalized to include the case where leaf observations are noisy, with corresponding observation channels being identical and BMS. The LLR distributions for such cases can be similarly defined. Formally, for any vertex $v$ with  depth less than $h$, there is a survey observation denoted by $Y_v$. For any vertex $v$ with depth $h$, there is a (possibly) noisy leaf observation denoted by $\tilde{X}_v$. All $Y_v$'s and $\tilde{X}_v$'s are jointly independent conditioned on the tree structure and the variables on the tree. The conditional distribution of each $Y_v$ and $\tilde{X}_v$ only depends on their corresponding variable $X_v$. They are specified  by either the transition distribution of the survey channels or the leaf observation channels, respectively. 
 Let $Y_{(h)}\triangleq\{Y_v: v\in L_0\cup L_1\cup...\cup L_{h-1} \}$ be the collection of all surveys and $\tilde{X}_{(h)}\triangleq\{\tilde{X}_v: v\in L_{h} \}$ be the collection of noisy leaf observations. The set of all observations is given by $\mathcal{O}_h\triangleq (T_{h}, Y_{(h)} ,\tilde{X}_{(h)})$. 
 
 Now that $\mathcal{O}_h$ can belong to continuous domains, 
 we define the LLR distribution $\mu_{(h)}$ to be the law of\footnote{Here the LLR variable $\ln\frac{f_{0,h}\left(\mathcal{O}_h\right)}{f_{1,h}\left(\mathcal{O}_h\right)}$ is constructed using the Radon–Nikodym derivative, which is well-defined for all measurable spaces and unique up to a set of zero measure.} 
   
  \begin{align*}
  \ln\frac{f_{0,h}\left(\mathcal{O}_h\right)}{f_{1,h}\left(\mathcal{O}_h\right)} \textup{ given } X_0=0\textup{, where } f_{v,h}\triangleq \frac{\textup{d}P_{\mathcal{O}_h|X_0=v}}{\textup{d}P_{\mathcal{O}_h}}.\nonumber
 \end{align*} 
  Following the same definition, we define the LLR distributions for all  binary-input channels by replacing $X_0$ with their input variables and  $\mathcal{O}_h$ with their outputs.


 

 %


    
   
    The BP operator arises naturally as the recursion rule for the LLR distributions of the tree channels described above. In the no survey setting, we have $\mu_{(h+1)}=\mathcal{Q}\mu_{(h)}$. When survey channels are present, 
  we have $\mu_{(h+1)}=\mathcal{Q}_{\textup{s}}\mu_{(h)}$, where $\mu_{\textup{s}}$ is set to be the LLR distribution of the corresponding survey channels. 
   
   The derivation of BP recursion relies on the fact that the observed information 
   can be partitioned into subsets of independent variables given  $X_0$ and the tree structure. Therefore, the overall LLR 
   can be written as the summation of individual LLRs from each component. Specifically, consider the tree channel of depth $h+1$. Let $d$ be the degree of the root vertex and let $1,...,d$ be the labels of the $d$ children. The individual terms consist of the LLRs from the subtrees rooted at each vertex $u\in[d]$, and the LLR from the survey when it exists.
   
   
   
   Due to the recursive structure of the tree construction, the subtree rooted at each vertex $u\in [d]$ resembles the same network that is defined for depth $h$. Consequently, if we consider  the LLR variable for estimating $X_u$ with the subtree information, which is a function of all relevant surveys, leaf observations, and the subtree structure, then by definition, the law of this variable is given by $\mu_{(h)}$ 
   conditioned on $X_u=0$. For convenience, 
   we denote this variable by $\hat{R}_u$. 
   Observe that each subtree is a BMS channel. We have $-\hat{R}_u\sim \mu_{(h)}$ conditioned on $X_u=1$. 
   Hence, by letting $Z_u\triangleq (-1)^{X_u}$ and  $\tilde{R}_u\triangleq Z_u \hat{R}_u$,  we have $\tilde{R}_u\sim \mu_{(h)}$ independent of $X_u$. 
      The LLR component for $X_0$ that corresponds to this subtree takes into account of the uncertainty of $X_u$, which is given by 
   $$\ln\frac{e^{\hat{R}_u}\mathbb{P}[X_u=0|X_0=0]+\mathbb{P}[X_u=1|X_0=0]}{e^{\hat{R}_u}\mathbb{P}[X_u=0|X_0=1]+\mathbb{P}[X_u=1|X_0=1]}=F_{\delta}(\hat{R}_u)=Z_uF_{\delta}(\tilde{R}_u).$$
 Hence, the overall LLR can be written as $\sum_{u=1}^d Z_uF_{\delta}(\tilde{R}_u) +S $, where 
   $S$ is the LLR variable for the survey at vertex $0$. 
   Recall that $\tilde{R}_u$, $Z_u$, and $S$ are jointly independent conditioned  on $X_0=0$, which is given by the tree channel construction. Further, we also have  $Z_u \sim (-1)^{\Ber(\delta)}$ and $S\sim \mu_s$ under the same condition.  Thus, we have recovered the BP operator specified by equation \eqref{eq:bp_survey}. Finally, the initial condition of the recursion is simply given by $\mu_{(0)}=\mu_{\textup{r}}$, where $\mu_{\textup{r}}$ is the LLR distribution of the leaf observation channels.
   
   
   The symmetry condition of $\mu_{\textup{s}}$,  $\mu_{\textup{r}}$, and $\mu_{(h)}$ arises from the BMS property of the 
   corresponding channels. 
   Generally, for any BMS channel, let $\mu$ denote its  LLR distribution and let $\mu^-$ be the distribution of the same LLR function generated with input $X_0=1$. By the BMS property, we must have $d\mu^-(r)=d\mu(-r)$ for any $r\in [-\infty,+\infty)$. Then the definition of LLR gives $d\mu_{(h)}(r)=e^{r}d\mu_{(h)}^-(r)$, which implies equation \eqref{eq:def_sym}. 
   


 

\subsection{Cavity method and previous work}

The operator $\calQ$
is also known as density evolution~\cite[Section 2.2]{coja2018information},
Bethe recursion~\cite[Definition 1.6]{dembo2013factor}. It arises from a so-called cavity
method~\cite{mezard1987spin}, which (non-rigorously, but often correctly) allows one to infer
important qualities and quantities of statistical physical systems based on the knowledge of the
fixed points of the $\calQ$. Correspondingly, the distributional identity $\calQ \mu = \mu$ is
known as the 1RSB cavity equation, with Parisi parameter set to $x=1$~\cite[Section 14.6,
(19.72)]{mezard2009information}. The particular version of the $\calQ$ corresponds to cavity
equations for the Ising model on a locally tree-like graph for ferromagnetic $\delta < {1\over2}$
or anti-ferromagnetic $\delta > {1\over 2}$ case. We mention that distributional recursions are
not necessary for understanding the former~\cite{mot09} (due to absence of frustration in the boundary condition) but
are necessary for the latter, see~\cite{coja2018information}. The version with survey (BP operator
$\calQ_s$) would correspond to certain random external fields, which are not independent across
sites.

Both the BOT and the BP fixed point
formulations have been widely studied in statistical physics \cite{cite-key, mot09}, evolutionary
biology \cite{10.1214/21-ECP423, Daskalakis2011}, and information theory \cite{EPKS2000,
gu2020broadcasting}. The 
condition for the existence of non-trivial symmetric BP fixed points has been determined exactly,
which can be described using the  branching number \cite{cite-key, EPKS2000, brc}. Equivalently, this resolved the problem of identifying the set of $(P_d, \delta)$ for which recovery is possible
in the BOT. The version of BOT with noisy leaf observations was introduced in~\cite{janson2004},
who also demonstrated that the regime of recovery (for general leaf observation channels under the Ising model case we are considering
here) is unchanged. 

A renewed interest in BOT was sparked by the groundbreaking works of~\cite{decelle2011asymptotic, mns15}, which
connected it to the stochastic block model with 2 communities (2-SBM). In 2-SBM the goal is to estimate a set of hidden labels by observing an associated random graph. The labels are defined on $n$ vertices, each being i.i.d. Ber$(\frac{1}{2})$. 
The graph is constructed by independently connecting any pair of vertices, with probability
$\frac{a}{n}$ for pairs with the same labels and with probability $\frac{b}{n}$ for the rest.
It turned out that in the regime of $n\to\infty$ the probability of error of estimating the label
given the graph can be lower bounded by the probability of recovery error in BOT with $P_d =
\mathrm{Poisson}({a+b\over 2})$ and $\delta = {b\over a+b}$. On the other hand,~\cite{MNS16} shows
that the 2-SBM error can be upper bounded by the recovery error in the robust BOT with the same
parameters. This was established by running BP from a good
initialization (for details, see \cite[Section 5]{MNS16}). Consequently, the uniqueness of BP fixed point implies that the upper and the
lower bounds coincide, showing that the performance of the optimal (but
exponential time) maximum likelihood estimator can be achieved in polynomial time. This result, however, was only shown
in~\cite{MNS16} under the condition of ``high SNR'' or low temperature. They made a conjecture
that the result (and BP uniqueness) should hold unconditionally. In~\cite{ACGP21} the range under which the
uniqueness holds was further enlarged. This paper resolves the conjecture in full.

The consequence of the discussion above is that  for the 2-SBM we can explicitly (modulo
computing the BP fixed point) evaluate the probability of error for recovering an individual
vertex label. A more global quantity is conditional entropy of \textit{all vertex labels} given
the graph. \cite{ACGP21} gives a formula for the latter, but only under the assumption of
\textit{boundary irrelevance (BI)} for the problem of BOT with survey. BI refers to the effect
of leaf observations becoming independent of the root variable when conditioned on the survey information. BI was conjectured to
be true in  \cite{kms16} for binary erasure survey channels, and in \cite{ACGP21} for general
symmetric survey channels. In our language, BI is
equivalent to uniqueness of BP fixed point for the $\calQ_s$. We thus provide a positive proof of these
conjectures in Section \ref{sec:bi_p}.

The setting of BOT with survey arose in the line of work~\cite{mx15,kms16} which investigated optimality of local
algorithms (of BP type) and made conjectures similar to the BI. Furthermore, using methods
of~\cite{MNS16} they were able to prove those conjectures in the regime of high-SNR. 
Our proof for all those conjectures closes the full spectra of the SNR and essentially follows 
from BI -- see Section \ref{sec:sbm_side}.

We also mention that uniqueness of BP fixed point has been investigated under a simplified formulation,
where the recursion is approximated using central limit theorem when the degrees are large -- see
Conjecture 2.6 in \cite{mx15}. Our proof techniques directly extend
to this limiting regime as well -- see Section \ref{sec:large_d}.

In conclusion, identifying the uniqueness of the BP fixed point for either $\calQ$ (no survey) or
$\calQ_s$ (with survey) has been a long-unsolved question, appearing in a web of interlinked
problems. Our resolution of the uniqueness closes all related conjectures as well
(Section~\ref{sec:imp}). Although partial resolutions were already presented in~\cite{MNS16,ACGP21}
the method here appears to be completely different and we do not believe that tightening of the
previous methods would be able  to close the full range of SNR -- this is briefly discussed
further in Section~\ref{sec:compare_mns}.

\subsection{Extension to non-symmetric distributions}

In the context of cavity equation, the BP recursion can  be defined and studied for asymmetric
distributions. 
To understand how the general class of distributions is defined, we need to recall that the BP operator is derived in a framework where the input $\mu$ can be viewed as the LLR distribution of some binary-input leaf observation channels conditioned on $X_0=0$. 
Any such distribution must satisfy the following condition. 
\begin{defn}\label{def:llrd}
We call a probability distribution $\mu$ on  $(-\infty,+\infty]$ an \emph{LLR distribution} if 
 \begin{align}\label{eq:intasym}
    \int e^{-r} d\mu(r)\leq 1. 
 \end{align}
We define the \emph{complement distribution} of $\mu$ to be any distribution $\mu^-$ on $[-\infty,+\infty)$ that satisfies 
 \begin{align}\label{eq:negative}
     d\mu (r) = e^{r}
	d\mu^-(r) \qquad \iff \qquad \mu^-[E] = \int e^{-r} \mathbbm{1}\{r \in E\} d\mu(r) 
 \end{align}
 for every measurable subset $E\subseteq (-\infty,+\infty)$.
\end{defn}
 In other words, 
$\mu$ must allow the existence of a $\mu^-$ that can be served as the law of the LLR variable generated by $X_0=1$.
\footnote{Conversely, 
      any LLR distribution can be mapped to a binary-input channel, similar to that any symmetric distribution is the LLR distribution of a BMS channel.} 
For general asymmetric LLR distributions, 
the BP operator that reflects the same process (i.e., BOT recursion specified by the same parameters but with asymmetric leaf observation channels) needs to be written in a slightly different form. It can be derived from the same steps in Section \ref{subsec:der}. Specifically, for any LLR distribution $\mu$ and any associated complement distribution $\mu^-$, we define $\mathcal{Q}_{\textup{s}} \mu$ to be the probability law of random variable $R$ 
 \begin{align}\label{eq:bp_basic_as}
    R \triangleq \sum_{u=1}^d F_{\delta}(\hat R_u) +S,\, 
\end{align} 
where $d\sim P_{d}$, $\hat R_{u}\simiid (1-\delta)\mu+\delta \mu^{-}$, and $S\sim \mu_s$, all jointly independent.
We call any LLR distribution $\mu$ a \emph{BP fixed point} if $\calQ_{\textup{s}}\mu=\mu$. 
It can be seen that for symmetric $\mu$ this definition coincides with the one we gave earlier in this section. 

Our main result implies the uniqueness of BP fixed points over general distributions  and  the unique convergence of BP recursion with  general initialization. 
The proof for the following result is presented in  Appendix \ref{app:asym}. 

\begin{corollary}[Asymmetric BP fixed points]\label{thm: uniq_as}
Fix any degree distribution $P_d$, parameter $\delta\in[0,1]$, and symmetric $\mu_{\textup{s}}$ that belongs to the non-exceptional case specified in Theorem \ref{thm:main}. There is at most one non-trivial BP fixed point $\mu^*$, and it is symmetric. For all non-trivial LLR distribution $\mu$, the recursion $\calQ^{h}_{\textup{s}}\mu$ converges weakly to the same symmetric fixed point, to $\mu^*$ if it exists, or to the trivial $\delta_0$ otherwise.  

\end{corollary}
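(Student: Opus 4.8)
The plan is to reduce Corollary~\ref{thm: uniq_as} to Theorem~\ref{thm:main} by a \emph{degradation sandwich}. The key structural fact is that the operator $\calQ_{\textup{s}}$, read off the depth-$h$ BOT channel, is monotone with respect to the degradation (more-noisy) partial order $\preceq$ on binary-input channels, even without symmetry: passing from depth $h$ to depth $h+1$ only prepends $\mathrm{BSC}(\delta)$ edges and then applies a deterministic log-likelihood statistic, and ``leaf channel $A$ degrades leaf channel $B$'' propagates to the corresponding tree channels. Granting this, I would first note that it suffices to prove the convergence assertion of the Corollary: a non-trivial BP fixed point $\mu$ obeys $\mu=\calQ_{\textup{s}}^{h}\mu$ for all $h$, so $\mu$ equals the common weak limit, which by Theorem~\ref{thm:main} is the symmetric $\mu^*$; this simultaneously yields ``at most one non-trivial fixed point'' and ``it is symmetric''.

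For the convergence, let $\mu$ be any non-trivial LLR distribution and $L$ the universal limit ($\mu^*$ if it exists, else $\delta_0$). \emph{Upper bound:} every LLR distribution is a degradation of the noiseless channel, whose LLR law is the (symmetric, non-trivial) Dirac mass $\delta_{+\infty}$ at $+\infty$; hence $\calQ_{\textup{s}}^{h}\mu\preceq\calQ_{\textup{s}}^{h}\delta_{+\infty}$, and the right-hand side tends weakly to $L$ by Theorem~\ref{thm:main}. \emph{Lower bound:} I claim every non-trivial LLR distribution $\mu$ \emph{upgrades} some non-trivial symmetric $\nu$, i.e.\ $\nu\preceq\mu$. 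To see this, pick a measurable set $A$ in the output space of the channel $W$ of $\mu$ with $W(A\mid 0)\neq W(A\mid 1)$ (non-triviality guarantees one exists), degrade $W$ to the binary channel $Y\mapsto\mathbbm 1\{Y\in A\}$, and then degrade that binary channel to a $\mathrm{BSC}(p)$ by a (generally asymmetric) binary post-processing; a short computation shows the only way one is forced to $p=\tfrac12$ is that the two rows of the binary channel sum to one, i.e.\ that it is itself trivial. So $\nu=\mathrm{BSC}(p)$ with $p\neq\tfrac12$ works, and $\calQ_{\textup{s}}^{h}\nu\to L$ by Theorem~\ref{thm:main} ($\nu$ being symmetric and non-trivial). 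Combining, $\calQ_{\textup{s}}^{h}\nu\preceq\calQ_{\textup{s}}^{h}\mu\preceq\calQ_{\textup{s}}^{h}\delta_{+\infty}$ with both outer sequences converging weakly to $L$.

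To close the sandwich: the two-sided bound makes $(\calQ_{\textup{s}}^{h}\mu)_h$ tight, so it has weak subsequential limits; since $\preceq$ is closed under weak convergence and antisymmetric (up to channel equivalence), any such limit $\mu_\infty$ satisfies $L\preceq\mu_\infty\preceq L$, hence $\mu_\infty=L$; as every subsequential limit is $L$, the whole sequence converges to $L$. (When $L=\delta_0$ only the upper half is needed, via degradation-monotonicity of mutual information and its vanishing at $\delta_0$.) Corollary~\ref{thm: uniq_as} then follows from the reduction in the first paragraph.

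The step I expect to be the main obstacle is making the lower-bound lemma and the closing of the sandwich fully rigorous \emph{without symmetry and in the presence of atoms at $\pm\infty$}: one must verify that the thresholding and post-processing steps yield honest degradations (not merely information-order inequalities), that the degradation order on the non-compact space of LLR distributions is weakly closed and antisymmetric up to equivalence, and that the outer sandwich sequences genuinely converge to the claimed $L$. I do not anticipate needing ideas beyond Theorem~\ref{thm:main}; the remaining work is careful bookkeeping with channels and their log-likelihood representations.
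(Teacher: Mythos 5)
Your approach is essentially the paper's: sandwich $\calQ_{\textup{s}}^h\mu$ between two sequences whose limits are controlled by Theorem~\ref{thm:main} and then close via weak-closedness and antisymmetry of degradation for LLR distributions. The upper bound by the noiseless channel ($B_0=\delta_{+\infty}$) is identical; the reduction ``any fixed point is the common limit, hence symmetric and unique'' is identical.

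The one genuine difference is your lower bound. You degrade $\mu$ to a BSC$(p)$ with $p\neq\tfrac12$ via a threshold set $A$ and asymmetric binary post-processing; since \emph{any} non-trivial symmetric initialization converges to $L$ by Theorem~\ref{thm:main}, this suffices. The paper instead extends the degradation index to asymmetric $\mu$ (replacing the ML flip probability by a minimax thresholding error) and sets $\underline\mu_{(0)}=B_{\phi^*(\mu^*,\mu)}\boxconv\mu^*$, which is also a non-trivial symmetric degradation of $\mu$. Both work; yours is a bit more elementary (one does not need $\mu^*$ or the degradation-index extension on hand to write it down), while the paper's choice dovetails with machinery it has already built and gives a lower sequence that is monotone in $h$ right away. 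Your BSC-degradation lemma is correct: a binary-input binary-output channel with rows $(a,1-a)$ and $(1-b,b)$ is degradable to BSC$(p)$ precisely when $(a+b-1)(m-n)=1-2p$ for some stochastic post-processing $(m,n)\in[0,1]^2$ with $m\alpha+n\beta=1$, $\alpha=1+a-b$, $\beta=1-a+b$; non-triviality ($a+b\neq1$) then yields a solution with $p\neq\tfrac12$.

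You correctly flag the technical debts — that $\preceq$ on asymmetric LLR distributions is weakly closed and antisymmetric, and that a sandwich theorem holds there. These are exactly what the paper's Proposition~\ref{prop:order_LLR} supplies, via the asymmetric $\beta$-curve $\beta(t;\mu)=\EE_{R\sim\frac12(\mu+\mu^-)}|\tanh\frac R2-t|$ and an asymmetric BSS-type argument giving antisymmetry directly (LLR being a minimal sufficient statistic makes it exact, not merely ``up to channel equivalence''). One more item to record explicitly is that $\calQ_{\textup{s}}$ preserves $\preceq$ on general LLR distributions — this is the natural coupling / data-processing argument and is the content the paper references when writing $\underline\mu_{(h)}\preceq\tilde\mu_{(h)}\preceq\mu_{(h)}$. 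With those three facts in hand your proof is complete and correct.
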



\subsection{Organization of the paper}

The rest of the paper is organized as follows. In Section \ref{sec:pre}, we define some important tools and provide the proof ideas for our main theorem. Then in Section \ref{sec:pt_uniq}, we provide the proof details for the key intermediate steps. We illustrate in Section \ref{sec:imp}  how our results imply the solutions to the open conjectures mentioned earlier.     
Finally in Section~\ref{sec:gllr}, we extend our results to general tree structures, 
covering curious cases where the  
spin interactions can be stochastic (see the setting of i.i.d. weights in \cite{10.1214/21-ECP423}), periodic (see the first example in \cite{brc}), or nonisotropic (see the illustration in Fig. \ref{fig:threegraphs}).

\section{Proof ideas and outline}\label{sec:pre}

We first present the proof of the main result for the case of no survey $\mu_s = \delta_0$.
Namely, we show the following. 

\begin{thm}[Uniqueness without survey]\label{thm: uniq}
Fix any degree distribution $P_d$ and parameter $\delta\in[0,1]$ such that either $\PP[d=1] <1$ or
$\delta \in (0,1)$. There is at most one non-trivial symmetric BP fixed point $\mu^*$ for $\mathcal{Q}$. For all non-trivial symmetric $\mu$, the recursion $\calQ^{h}\mu$ converges weakly when $h\rightarrow\infty$ to the same fixed point, to $\mu^*$ if it exists, or to the trivial $\delta_0$ otherwise. 
\end{thm}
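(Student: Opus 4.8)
### Proof Proposal

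The plan is to introduce a metric on symmetric distributions (equivalently, on BMS channels) under which the BP operator $\mathcal{Q}$ is a strict contraction after a bounded number of iterations, and then to combine this with a compactness argument to conclude both uniqueness and convergence. The starting point is the information-theoretic notion of channel degradation: a BMS channel $P$ is a degradation of $Q$ (written $P \preceq_{\mathrm{deg}} Q$) if $P$ can be obtained from $Q$ by post-processing. I would first recall the classical facts that (i) $\mathcal{Q}$ is monotone with respect to $\preceq_{\mathrm{deg}}$, since $F_\delta$-summation of independent LLRs corresponds to concatenation/combination of channels and degradation is preserved under such operations, and (ii) iterates started from the ``most informative'' input (a perfect channel, i.e.\ $\mu$ a point mass at $+\infty$ together with its symmetric partner) and from a ``partially informative'' input are both monotone in $h$, hence converge weakly to fixed points $\bar\mu$ and $\underline\mu$ respectively, which sandwich every other fixed point. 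So it suffices to show $\bar\mu = \underline\mu$.

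The key quantitative object is what the paper calls the \emph{degradation index}: roughly, for two symmetric $\mu \preceq_{\mathrm{deg}} \nu$ one measures ``how much noise'' must be added to $\nu$ to reach $\mu$, or dually how much one can improve $\mu$ toward $\nu$ by a convex combination with a perfect channel. I would define a one-parameter family of reference channels (e.g.\ BSC or erasure channels, or convex interpolations $\lambda \nu + (1-\lambda)(\text{perfect})$) and set the index of $(\mu,\nu)$ to be the extremal $\lambda$ for which the comparison still holds. The crucial lemma — and this is where the \emph{strengthened stringy tree lemma} of \cite{EPKS2000} enters — is that applying $\mathcal{Q}$ strictly increases this index unless the gap is already zero: intuitively, because each leaf edge passes through a BSC$(\delta)$ and the tree has branching, information is strictly ``mixed,'' so two nearby-but-distinct channels are pulled strictly closer in the degradation index after one step (or a bounded number of steps, to handle the degenerate $d=1$ / $\delta\in\{0,1\}$ corner that is excluded by hypothesis). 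The non-multiplicativity the abstract warns about means I cannot get a uniform contraction factor $<1$ on the index itself; instead I expect to need a strict-monotonicity-plus-compactness argument: on the compact set of fixed points sandwiched between $\underline\mu$ and $\bar\mu$, the map ``index between $\mu$ and $\mathcal{Q}\mu$-comparison reference'' is continuous, a fixed point forces the index increment to vanish, and vanishing increment forces $\bar\mu = \underline\mu$.

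Concretely the steps are: (1) set up $\preceq_{\mathrm{deg}}$ and verify monotonicity and continuity of $\mathcal{Q}$ under weak convergence; (2) produce the extremal fixed points $\underline\mu \preceq_{\mathrm{deg}} \bar\mu$ as monotone limits and reduce Theorem~\ref{thm: uniq} to showing they coincide; (3) define the degradation index and prove its basic properties (monotonicity, that it is $0$ iff the two channels are equal, lower semicontinuity); (4) prove the strict contraction lemma: $\mathrm{index}(\mathcal{Q}^{k}\mu_1, \mathcal{Q}^k \mu_2) < \mathrm{index}(\mu_1,\mu_2)$ for some fixed $k$ whenever the index is positive, using the strengthened stringy tree lemma to control the worst case; (5) apply this to $\mu_1 = \bar\mu$, $\mu_2 = \underline\mu$, both fixed, to force $\mathrm{index}(\bar\mu,\underline\mu) = 0$, i.e.\ $\bar\mu = \underline\mu =: \mu^*$ (or both $=\delta_0$ if no nontrivial sandwich exists); (6) finally, for an arbitrary nontrivial symmetric initialization $\mu$, degrade/enhance it between two shifted copies of reference channels that are respectively below and above, sandwich $\mathcal{Q}^h\mu$ between $\mathcal{Q}^h$ of those, and invoke the squeeze to get weak convergence to $\mu^*$ (the nontriviality of $\mu$ guarantees it is not degraded below $\delta_0$, so the limit is $\mu^*$ when it exists). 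I expect step (4) — the strict, uniform-over-the-compact-set contraction of the degradation index under $\mathcal{Q}^k$, and in particular pinning down the right $k$ and the right reference family so that the stringy tree estimate actually yields strictness rather than just non-expansion — to be the main obstacle; everything else is soft (monotonicity, compactness, squeeze).
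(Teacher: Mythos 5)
Your outline matches the paper's strategy closely: set up the degradation partial order, introduce a one-parameter degradation index, prove strict contraction of that index under $\calQ^k$ via a strengthened stringy tree lemma, then squeeze for convergence. That is exactly the mechanism the paper uses. Two points, however, need fixing before the argument goes through.

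First, the reduction ``it suffices to show $\bar\mu = \underline\mu$'' is circular in the no-survey case. To produce an increasing sequence $\calQ^h \mu_0 \uparrow \underline\mu$ whose limit lies \emph{below every} non-trivial fixed point, you would need a non-trivial $\mu_0$ that is both $\preceq$ all non-trivial fixed points and satisfies $\calQ\mu_0 \succeq \mu_0$; the only such $\mu_0$ the machinery exhibits is $B_\phi \boxconv \mu^*$ for a fixed point $\mu^*$, which presupposes the uniqueness you are trying to prove (and starting from $\delta_0$ is useless since it is itself fixed). The paper instead applies the index-contraction argument directly to an arbitrary pair $\mu,\nu$ of non-trivial symmetric fixed points: set $\phi = \phi^*(\mu,\nu)$, observe that if $\phi>0$ then the strict contraction lemma plus $\calQ^k\nu=\nu$ forces $\phi^*(\mu,\nu)<\phi$ --- a contradiction --- hence $\phi^*(\mu,\nu)=0$ and $\mu \preceq \nu$, and by symmetry $\nu \preceq \mu$, giving $\mu=\nu$ by antisymmetry of $\preceq$. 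The monotone iterates (from $B_0$, and from $B_\phi \boxconv \mu^*$ via the stringy tree lemma) are used only afterwards, for the convergence statement. Second, the reference family must be parametrized by \emph{box-convolution} with a BSC, i.e.\ $\phi^*(\mu,\nu)=\inf\{\phi : B_\phi \boxconv \mu \preceq \nu\}$, and not by a convex mixture $\lambda\nu + (1-\lambda)B_0$ of distributions. Only box-convolution commutes with $\calQ$ through the stringy tree lemma --- the lemma compares $B_\phi\boxconv\calQ\mu$ with $\calQ(B_\phi\boxconv\mu)$ --- and the strengthening of that commutation to strict degradation is precisely what drives the contraction in your step (4); mixtures admit no analogous commutation, so the contraction lemma would not even have a statement. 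Once both are fixed you can also drop the compactness/Edelstein hedge: applying strict contraction to two fixed points yields a contradiction directly, no compactness needed (and the paper notes the space is in fact \emph{not} compact in the degradation metric), and the squeeze in step (6) likewise works directly once uniqueness is in hand.
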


The proof of Theorem \ref{thm: uniq} builds upon ideas of channel comparison (a.k.a. comparison of experiments), which were previously used
in~\cite{EPKS2000} to show certain negative results, and more recently by~\cite{gu2020broadcasting} for the positive side. Here we extend this methodology in two ways:
a) strengthening the \textit{stringy tree lemma} from~\cite{EPKS2000}; and b) 
introducing of the concept of \emph{degradation index}. The latter allows us to define a potential function over symmetric distributions (and LLR distributions in general) that is only stabilized 
at a unique solution. 
For clarity, we illustrate the main concepts over symmetric distributions, which enables simplifications compared to their general forms. 
We first state the definition of degradation (see \cite[Section 15.2.3]{mezard2009information} for a reference). 
\begin{defn}
\label{def:degr}
For any two symmetric distributions $\mu_{Y}$, $\mu_{Z}$ defined on 
$(-\infty,+\infty]$, we say $\mu_Y$ is a \emph{degraded version} of $\mu_{Z}$, denoted by $\mu_Y\preceq
\mu_Z$, if one can define a joint distribution 
$\mu_{Y,Z}$ with $\mu_{Y}$, $\mu_{Z}$ as marginal distributions, such that $\mu_{Y|Z}$ is invariant under $(Y,Z)\rightarrow(-Y,-Z)$. 

\end{defn}
 
Intuitively, for any $\mu_Y\preceq
\mu_Z$, $\mu_Z$ can be viewed as the LLR distribution of a symmetric binary hypothesis testing problem and 
$\mu_Y$ can be viewed as a noisy version of $\mu_Z$ where the observation is corrupted by a symmetric noise channel $\mu_{Y|Z}$. A more detailed discussion on degradation can be found in Appendix \ref{app:deg_detail}. 

Our definition of the degradation index is based on the operator known as box-convolution
(notation coming from the LDPC codes, see \cite[page 181]{richardson2008modern}). Consider a pair
of BMS channels $A_1$ and $A_2$, with LLR distributions $\mu_i$, $i\in \{1,2\}$. Out of them we
can produce a new BMS channel $A$ as follows:
consider an input bit $X$ and generate an independent $X'$ as Bernoulli$(1/2)$, let $Y' =
A_2(X')$ be the noisy observation of $X'$ and let $Y = A_1(X \oplus X')$ be the noisy observation
of XOR of $X$ and $X'$. The $A$ channel is a channel from $X$ to the pair $(Y,Y')$ and we denote
its LLR by $\mu = \mu_1 \boxconv \mu_2$.  More formally, we have the following definition.

\begin{defn}[Box Convolution for Symmetric Distributions]\label{def:boxc}
For any $\phi\in[0,1]$, let $B_\phi$ denote the symmetric distribution defined on $\{-\ln\frac{1-\phi}{\phi},\ln\frac{1-\phi}{\phi}\}$. 
We define box convolution $\boxconv$ to be the  weakly continuous bilinear operator over the space of symmetric distributions satisfying the following condition  
$$ B_{\phi_1} \boxconv B_{\phi_2}\triangleq B_{\phi_1+\phi_2-2\phi_1\phi_2} \ \ \ \ \textup{for all } \phi_1, \phi_2\in\left[0,1\right].$$
\end{defn}
It is clear that box convolution is commutative. One can show the following alternative definition, which proves that box convolution is associative.
\begin{prop}\label{prop:box_alter}
Let $X\sim \mu$, $Y\sim \nu$ be independent random variables with symmetric distributions, then $\mu\boxconv\nu$ is 
identical to 
the law of 
\begin{align}\label{eq:prop6eq}
Z
\triangleq 2\tanh^{-1} \left(\tanh{X\over 2} \tanh {Y\over
2}\right).
\end{align}
\end{prop}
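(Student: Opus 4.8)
\emph{Proof plan.} The idea is to introduce the candidate operator $\Phi$ that sends a pair of symmetric distributions $(\mu,\nu)$ to the law of $2\tanh^{-1}(\tanh(X/2)\tanh(Y/2))$ for independent $X\sim\mu$, $Y\sim\nu$, and to show that $\Phi$ has every property that characterizes $\boxconv$ in Definition~\ref{def:boxc}; the uniqueness built into that definition then forces $\Phi=\boxconv$, which is exactly the claim. Concretely I would verify four items: (a) $\Phi$ maps pairs of symmetric distributions to symmetric distributions; (b) $\Phi$ is bilinear, i.e.\ it preserves finite convex combinations in each argument; (c) $\Phi$ is weakly continuous in each argument; and (d) $\Phi(B_{\phi_1},B_{\phi_2})=B_{\phi_1+\phi_2-2\phi_1\phi_2}$.

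Items (b) and (c) are soft. For (b), a sample from $\sum_i\alpha_i\mu_i$ is obtained by first drawing an index $i$ with probability $\alpha_i$ and then drawing from $\mu_i$; since $Y$ is independent of this index and the map $(x,y)\mapsto 2\tanh^{-1}(\tanh(x/2)\tanh(y/2))$ acts pointwise, the output law is $\sum_i\alpha_i\Phi(\mu_i,\nu)$, and symmetrically in the second slot. For (c), under the conventions $\tanh(\pm\infty/2)=\pm1$ and $\tanh^{-1}(\pm1)=\pm\infty$ the map $(x,y)\mapsto 2\tanh^{-1}(\tanh(x/2)\tanh(y/2))$ extends to a continuous map $[-\infty,+\infty]^2\to[-\infty,+\infty]$, and on $(-\infty,+\infty]^2$ (the support of any pair of symmetric laws) the value $-\infty$ is never attained, so it is continuous into $(-\infty,+\infty]$. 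Hence $\mu_n\Rightarrow\mu$ and $\nu_n\Rightarrow\nu$ imply $\mu_n\otimes\nu_n\Rightarrow\mu\otimes\nu$, and the continuous mapping theorem yields $\Phi(\mu_n,\nu_n)\Rightarrow\Phi(\mu,\nu)$.

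Item (d) is the only real computation, and it is short. From $e^{-\ln\frac{1-\phi}{\phi}}=\frac{\phi}{1-\phi}$ one gets $\tanh(\tfrac12\ln\tfrac{1-\phi}{\phi})=\frac{1-\phi/(1-\phi)}{1+\phi/(1-\phi)}=1-2\phi$, so $B_\phi$ puts mass $1-\phi$ on the atom whose $\tanh(\cdot/2)$ equals $1-2\phi$ and mass $\phi$ on its negative. Feeding two such two-point laws into $\Phi$, the output $Z$ satisfies $\tanh(Z/2)=\pm(1-2\phi_1)(1-2\phi_2)$; writing $1-2\phi^*\triangleq(1-2\phi_1)(1-2\phi_2)$, i.e.\ $\phi^*=\phi_1+\phi_2-2\phi_1\phi_2$, one reads off $\PP[Z=\ln\tfrac{1-\phi^*}{\phi^*}]=(1-\phi_1)(1-\phi_2)+\phi_1\phi_2=1-\phi^*$ and $\PP[Z=-\ln\tfrac{1-\phi^*}{\phi^*}]=(1-\phi_1)\phi_2+\phi_1(1-\phi_2)=\phi^*$, that is, $Z\sim B_{\phi^*}$ (the boundary cases $\phi_i\in\{0,1\}$ are handled by the same convention and need no separate argument).

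For (a) I see two routes: either verify it directly, showing $\EE[f(Z)]=\EE[e^{-Z}f(-Z)]$ for bounded measurable $f$ by combining the symmetry identities $\EE[h(X,Y)]=\EE[e^{-X}h(-X,Y)]=\EE[e^{-Y}h(X,-Y)]$ with the oddness $g(-x,y)=g(x,-y)=-g(x,y)$ and an elementary relation expressing $e^{-g(x,y)}$ through $e^{-x}$ and $e^{-y}$; or, more economically, deduce (a) from (b)--(d), since every symmetric distribution is a weak limit of finite mixtures of $B_\phi$'s (every BMS channel is a mixture of BSCs) and symmetric distributions form a weakly closed convex set, so by (b)--(d) $\Phi(\mu,\nu)$ is a weak limit of mixtures of the symmetric laws $B_{\phi^*}$, hence symmetric. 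Once (a)--(d) are in place, Definition~\ref{def:boxc} gives $\Phi=\boxconv$. Associativity (asserted in the surrounding text) then falls out immediately: both $(\mu\boxconv\nu)\boxconv\rho$ and $\mu\boxconv(\nu\boxconv\rho)$ equal the law of $2\tanh^{-1}(\tanh(X/2)\tanh(Y/2)\tanh(W/2))$ for independent $X\sim\mu$, $Y\sim\nu$, $W\sim\rho$, because applying $\tanh(\cdot/2)$ turns the nested recursion into an ordinary triple product. I expect the only delicate point to be (a) if done by hand --- the measure-theoretic bookkeeping on the non-compact half-line $(-\infty,+\infty]$ (integrability of $e^{-r}$ near $-\infty$, atoms at $+\infty$) --- together with making the uniqueness in Definition~\ref{def:boxc} fully rigorous, i.e.\ the weak density of finite mixtures of the $B_\phi$'s among all symmetric distributions; steps (b)--(d) are routine.
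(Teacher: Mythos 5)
Your proposal follows essentially the same approach as the paper's own proof: you check that the candidate operator is weakly continuous and bilinear, verify the defining identity on the two-point distributions $B_{\phi_1},B_{\phi_2}$ by direct computation, and then invoke the uniqueness built into Definition~\ref{def:boxc} (via the fact that every symmetric distribution is a mixture of the $B_\phi$'s, which the paper makes explicit as $\phi\sim\tfrac12(1-\tanh\tfrac{|X|}{2})$). Your version is more detailed --- in particular you spell out the atom computation and note that symmetry of the output can be deduced from bilinearity plus weak continuity plus closedness of the symmetric laws --- but the argument is the same.
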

\begin{proof}
Note that the law specified by equation  \eqref{eq:prop6eq} is weakly continuous and bilinear. 
It fulfills all requirements in Definition \ref{def:boxc} (the special cases of $\mu=B_{\phi_1}$ and $\nu=B_{\phi_2}$ can be directly verified). 
On the other hand, the box convolution is uniquely determined through bilinear expansion where the input  distributions are expressed as   mixtures of
$\{B_{\phi}\}_{\phi\in[0,1]}$. For instance, any symmetric $\mu$ is a mixture 
under the law of  $\phi\sim \frac{1}{2}(1- \tanh\frac{|X|}{2})$. Hence, based on this uniqueness, any operator that satisfies Definition \ref{def:boxc} needs to be identical to the instance provided in the proposition. 
\end{proof}
\begin{remark}\label{remark:box_phy} We point out a convenient interpretation of the channel
corresponding to $B_\phi\boxconv \mu$. This channel corresponds to sequentially concatenating a binary symmetric channel (BSC)
with crossover probability $\phi$ and a channel with LLR $\mu$. Thus, compared to the $\mu$
channel the input bit first experiences a random $\phi$-flip. The general box convolution
$\mu_1\boxconv \mu$ corresponds to the same channel, except that the crossover probability $\phi$ is random (but known to the receiver) and $\log {1-\phi\over \phi} \sim \mu_1$. 
\end{remark}

The stringy tree lemma in \cite{EPKS2000} can be stated as follows, using box convolution.
\begin{thm}\label{thm:stringy}[Stringy Tree Lemma (STL)~\cite{EPKS2000}]
For any $\phi\in[0,1]$ and any symmetric distributions $\mu_1, \mu_2, ..., \mu_d$, we have 
$$ B_\phi \boxconv (\mu_1 * \mu_2 *\,...\, *\mu_d)\preceq (B_\phi \boxconv\mu_1) * (B_\phi \boxconv\mu_2)*\, ...\,* (B_\phi \boxconv\mu_d) .$$
\end{thm}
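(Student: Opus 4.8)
The plan is to reduce the Stringy Tree Lemma to a statement about a single coordinate of a tree channel, and to prove it by exhibiting an explicit degrading (symmetric) channel that maps the ``product-of-box-convolutions'' channel onto the ``box-convolution-of-the-product'' channel. To set this up, recall the interpretation from Remark~\ref{remark:box_phy}: the channel with LLR $B_\phi \boxconv (\mu_1 * \cdots * \mu_d)$ is obtained by taking an input bit $X$, passing it through a BSC$(\phi)$ to get $X'$, and then observing $X'$ through $d$ parallel independent BMS channels with LLRs $\mu_1,\dots,\mu_d$ (since convolution of LLR distributions corresponds to independent observations of the same bit). On the other side, $(B_\phi\boxconv\mu_1)*\cdots*(B_\phi\boxconv\mu_d)$ is the channel where $X$ is passed through $d$ \emph{independent} BSC$(\phi)$'s, producing $X'_1,\dots,X'_d$, and each $X'_u$ is observed through the $\mu_u$ channel. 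So the lemma says: having all the noise flips coupled into a single common flip $X'$ produces a \emph{less informative} channel than having independent flips per branch — which is intuitively the ``stringy tree'' phenomenon, where splitting one edge into many parallel edges only helps the observer.

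The key step is to construct the coupling/degradation explicitly. Condition on $X$. In the RHS model we have the tuple $(X'_1,\dots,X'_d)$ with each $X'_u = X \oplus \xi_u$, $\xi_u\simiid \Ber(\phi)$, and then outputs $Y_u$ generated from $X'_u$ through channel $\mu_u$. In the LHS model we have a single $\xi\sim\Ber(\phi)$, $X' = X\oplus\xi$, and $Y_u$ generated from the \emph{same} $X'$ through channel $\mu_u$. I would build a channel taking the RHS output $(Y_1,\dots,Y_d)$ to a simulated LHS output as follows: the degrading channel does not see $X$, but from $(Y_1,\ldots,Y_d)$ it must produce $(\tilde Y_1,\dots,\tilde Y_d)$ distributed as the LHS output. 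The natural idea is to note that in the RHS, conditionally on $X$, the vector $(X'_1,\ldots,X'_d)$ is i.i.d.\ $\Ber$ shifted, whereas in the LHS $(X'_1 = \cdots = X'_d)$. One recovers the LHS from the RHS by a symmetric ``majority-type'' or rather ``re-randomization'' operation: given $Y_1, \ldots, Y_d$, sample a fresh common bit-flip and resample each $Y_u$ appropriately. Concretely, since each $\mu_u$ is symmetric, the map $Y_u \mapsto -Y_u$ implements flipping the input bit of that channel; so a degrading channel can, based on shared external randomness $\zeta_u$, decide whether to flip $Y_u$. The difficulty is that to go from independent $\xi_u$'s to a single $\xi$ we would need to \emph{decrease} entropy of the flip vector, which a memoryless relabeling cannot do on its own — so the degradation must route information through the $Y_u$'s themselves, which is exactly where the STL gets its content.

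Given that direct construction is delicate, the cleaner route — and the one I would actually pursue — is an \textbf{induction on $d$ together with a monotonicity/data-processing argument}. For $d=1$ the statement is an equality. For the inductive step, write $\mu_1 * \cdots * \mu_d = (\mu_1 * \cdots * \mu_{d-1}) * \mu_d =: \nu * \mu_d$, and use bilinearity and associativity of $\boxconv$ (Proposition~\ref{prop:box_alter}) to reduce to the $d=2$ case $B_\phi\boxconv(\nu*\mu_d)\preceq (B_\phi\boxconv\nu)*(B_\phi\boxconv\mu_d)$, \emph{plus} the fact that $\boxconv$ and $*$ preserve degradation (i.e., $\alpha\preceq\beta \Rightarrow \alpha*\gamma\preceq\beta*\gamma$ and $\alpha\boxconv\gamma\preceq\beta\boxconv\gamma$), which are standard facts about BMS channels that I would either cite or prove in an appendix on degradation. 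Then the whole lemma follows from the single inequality
\begin{equation*}
B_\phi \boxconv (\nu_1 * \nu_2) \preceq (B_\phi\boxconv\nu_1) * (B_\phi\boxconv\nu_2),
\end{equation*}
which by mixture decomposition of $\nu_1,\nu_2$ into $B_\psi$'s (used in the proof of Proposition~\ref{prop:box_alter}) and bilinearity of both $*$ and $\boxconv$ reduces to the purely combinatorial base case where $\nu_1 = B_{\psi_1}$, $\nu_2 = B_{\psi_2}$ are single BSCs. That base case is a finite computation: compare the channel $X \to (Y_1, Y_2)$ where $Y_i = X \oplus \xi \oplus \eta_i$ with shared $\xi\sim\Ber(\phi)$, versus $Y_i = X \oplus \xi_i \oplus \eta_i$ with independent $\xi_i\sim\Ber(\phi)$, $\eta_i\sim\Ber(\psi_i)$; one checks directly that the former is a degraded version of the latter by averaging out / re-coupling the flip, i.e.\ exhibiting the explicit stochastic matrix. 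The main obstacle is this base case together with verifying that degradation is preserved under $\boxconv$ (preservation under $*$ is classical); I expect the base case to go through by a short explicit construction, and I would isolate the ``$\boxconv$ preserves $\preceq$'' fact as a lemma, proving it via Remark~\ref{remark:box_phy} (concatenating a common pre-channel preserves the degradation order).
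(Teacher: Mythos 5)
The paper does not actually prove the Stringy Tree Lemma; Theorem~\ref{thm:stringy} is cited from~\cite{EPKS2000} and used as a black box. So there is no ``paper's own proof'' to compare against directly. However, the paper \emph{does} build an entirely parallel modular machine to prove the \emph{strict} versions it actually needs: Proposition~\ref{lemma:uniq_4b} handles the base case $\nu_1=B_{\delta_1},\nu_2=B_{\delta_2}$ by an explicit $\beta$-curve computation, Corollary~\ref{coro:uniq_4} lifts this to general $\mu_1,\mu_2$ by bilinearity and the ``$\beta$-curve of a mixture is the mixture of $\beta$-curves'' fact, and Proposition~\ref{lemma:degfd} performs the induction on $d$ using the fact that $*$ preserves $\preceq$. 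Your proposal is exactly the non-strict shadow of this chain: induction on $d$ via the $d=2$ case and $*$-monotonicity, reduction to $B_{\psi_1}*B_{\psi_2}$ via mixture decompositions of $\nu_1,\nu_2$ together with bilinearity of both $*$ and $\boxconv$, and then the finite base case. So the route you found is the one the paper itself uses for its strengthening; the choice to relegate the STL to a citation rather than reprove it is purely economy of exposition.

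Two remarks on the places you leave open. First, the base case is where the real work lives, and your first sketch (construct an explicit degrading stochastic matrix from $(Y_1',Y_2')$ to $(\tilde Y_1,\tilde Y_2)$) is more delicate than you make it sound: the degrading kernel must introduce positive correlation between the two observations without knowledge of $X$, and a naive memoryless relabeling will not do it, as you yourself observe. Your alternative route --- compute both $\beta$-curves and compare via Theorem~\ref{thm:ftdeg} --- is the clean way to close the base case, and is exactly how the paper handles it in Proposition~\ref{lemma:uniq_4b}: both sides are mixtures of at most two $B$'s, so both $\beta$-curves are piecewise linear with two corners, and the comparison reduces to the concavity of $F_\phi$ on $\mathbb{R}_{\ge 0}$. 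I would commit to the $\beta$-curve route rather than the explicit-kernel route. Second, for the induction step you only need that $*$ preserves $\preceq$ (Proposition~\ref{prop:q_deg}, item~5); ``$\boxconv$ preserves $\preceq$'' is true but plays no role here, and listing it as a needed ingredient slightly obscures the logic. Other than these, the proposal is sound and follows the same decomposition the paper relies on for its strict refinements.
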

\begin{remark}
Recall the physical interpretation of box-convolution. 
Applying the theorem above repetitively compares any tree channel with a depth $1$ tree channel, each edge formed by concatenating all channels on a path in the original tree from its root to a corresponding leaf (hence the name stringy tree).  
\end{remark}


We establish the commutation relation between the BP operator and box convolution, but instead under a stronger notion of degradation (to be specified in Definition \ref{def:strict_deg}). Observe that the BP operator can be expressed using  elementary operations. 
%
\begin{prop}\label{prop:q_express}
For any symmetric $\mu$, we have $\calQ\mu= \EE_{P_d}[(B_\delta\boxconv \mu)^{*(d)}]$, where $(\cdot)^{*(d)}$ denotes self convolution by $d$ times.
\end{prop}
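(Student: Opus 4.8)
The plan is to verify the identity by comparing the definition of $\calQ\mu$ in~\eqref{eq:bp_survey} (with $S=0$) against the right-hand side, using Proposition~\ref{prop:box_alter} to identify $B_\delta\boxconv\mu$ and then recognizing ordinary convolution as the law of a sum of independent LLR variables. First I would unwind~\eqref{eq:bp_survey} for $\mu_s=\delta_0$: the output $R=\sum_{u=1}^d Z_u F_\delta(\tilde R_u)$ with $\tilde R_u\simiid\mu$, $Z_u\simiid(-1)^{\Ber(\delta)}$, all independent, and $d\sim P_d$. Conditioning on $d$, it suffices to show that $Z_u F_\delta(\tilde R_u)$ has law $B_\delta\boxconv\mu$, since then the conditional law of $R$ given $d$ is the $d$-fold convolution $(B_\delta\boxconv\mu)^{*(d)}$ (convolution being exactly the law of a sum of i.i.d.\ copies), and averaging over $d\sim P_d$ gives $\EE_{P_d}[(B_\delta\boxconv\mu)^{*(d)}]$.

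Next I would establish the single-term claim: $Z F_\delta(\tilde R)\sim B_\delta\boxconv\mu$ for $\tilde R\sim\mu$ symmetric, $Z\sim(-1)^{\Ber(\delta)}$ independent. By Remark~\ref{remark:box_phy}, $B_\delta\boxconv\mu$ is the LLR distribution of the channel obtained by prepending a BSC with crossover $\delta$ to the channel with LLR $\mu$; alternatively, via Proposition~\ref{prop:box_alter}, it is the law of $2\tanh^{-1}(\tanh\frac{X}{2}\tanh\frac{W}{2})$ with $X\sim\mu$, $W\sim B_\delta$ independent. The key computation is that $W\sim B_\delta$ takes values $\pm\ln\frac{1-\delta}{\delta}$, so $\tanh\frac{W}{2}=\pm(1-2\delta)$ with the sign distributed as $Z$; hence $2\tanh^{-1}(\tanh\frac{X}{2}\tanh\frac{W}{2}) = Z\cdot 2\tanh^{-1}((1-2\delta)\tanh\frac{X}{2}) = Z F_\delta(X)$, using the identity $F_\delta(x)=2\tanh^{-1}((1-2\delta)\tanh\frac{x}{2})$ given in the paper and the oddness of $\tanh^{-1}$. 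This matches $Z F_\delta(\tilde R)$ in distribution, completing the single-term identification.

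The only subtlety — and the step I expect to require the most care — is the $\delta\in\{0,1\}$ boundary behavior of $B_\delta$, where $\ln\frac{1-\delta}{\delta}\in\{\pm\infty\}$ and $B_\delta$ degenerates; here one checks directly that $B_0=\delta_0$ (so $B_0\boxconv\mu=\mu$ and $F_0(x)=x$, consistent) and $B_1$ is the "sign-flip" distribution at $\pm\infty$ giving $F_1(x)=-x$, and that the weakly continuous bilinear extension in Definition~\ref{def:boxc} behaves correctly in the limit, so no separate argument is needed beyond noting continuity. Aside from this, the proof is a direct unwinding of definitions; I would present it in the order: (i) reduce to fixed $d$ by conditioning; (ii) reduce to a single term by the convolution-equals-sum identity; (iii) compute the single term via Proposition~\ref{prop:box_alter} and the $\tanh$ identity for $F_\delta$.
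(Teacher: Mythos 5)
Your proof is correct and takes essentially the same route as the paper: identify the single-term law $Z_u F_\delta(\tilde R_u)\sim B_\delta\boxconv\mu$ via Proposition~\ref{prop:box_alter}, then recognize the sum over $u\in[d]$ as a $d$-fold convolution conditionally on $d$, and average over $P_d$. Your added $\tanh$ computation just makes explicit what the paper's proof of Proposition~\ref{prop:box_alter} calls ``can be directly verified.'' One small correction to your boundary remark: $B_0$ is not $\delta_0$; it is the point mass at $+\infty$ (the noiseless channel), which is the identity element for $\boxconv$, whereas the trivial distribution $\delta_0$ equals $B_{1/2}$ and is absorbing for $\boxconv$ (so $\delta_0\boxconv\mu=\delta_0$, not $\mu$). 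Your conclusion $B_0\boxconv\mu=\mu$ and $F_0=\mathrm{id}$ is still right, just for the corrected reason, and the slip does not affect the main argument.
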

\begin{proof}
Recall the definition of BP operators. As implied by Proposition \ref{prop:box_alter}, for any independent  $Z_u\sim (-1)^{\textup{Ber}(\delta)}$ and  $\tilde{R}_u\sim \mu$, the law of $Z_uF_{\delta}(\tilde{R}_u)$ can be exactly written as $B_\delta\boxconv \mu$. Then, conditioned on any fixed $d$, the law for the summation of those intermediate variables equals the convolution of their distributions. Finally, the distribution $\calQ\mu$ is obtained by taking the mixture over the randomness in degree distribution.
\end{proof}
%
  The key result is summarized in Theorem \ref{lm:uniq_key} (proved in Section \ref{sec:pt_uniq}), which relies on the following definition. 
\begin{defn} [Strict Degradation]\label{def:strict_deg}
For any two distributions $\nu$ and $\mu$, we define $\nu \prec \mu$ if $\exists$ $\phi\in (0,\frac{1}{2}]$ such that $\nu\preceq B_\phi\boxconv\mu$.
\end{defn}
\begin{thm}\label{lm:uniq_key} 
For any $\delta\in [0,1]$ and degree distribution $P_d$ satisfying $\PP[d\leq 1]<1$, let $\calQ$ be the associated BP operator. 
Then for any $\phi\in(0,\frac{1}{2})$ and symmetric $\mu$, we have 
\begin{align}
    B_{\phi}\boxconv\mathcal{Q}^2\mu&\prec \mathcal{Q}^2(B_{\phi}\boxconv\mu) &&\textup{if } \PP[d>2]=0,\label{sdeg:key1}\\
   B_{\phi}\boxconv\mathcal{Q}\mu&\prec  \mathcal{Q}(B_{\phi}\boxconv\mu)&&\textup{otherwise}.\label{sdeg:key2}
\end{align}
\end{thm}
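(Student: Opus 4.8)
The plan is to combine Proposition \ref{prop:q_express}, the Stringy Tree Lemma (Theorem \ref{thm:stringy}), and the associativity/commutativity of box convolution to get a (non-strict) degradation $B_\phi \boxconv \calQ\mu \preceq \calQ(B_\phi \boxconv \mu)$ for free, and then upgrade it to the strict version $\prec$ by exhibiting an explicit nonzero extra flip. First I would handle the non-strict statement: using Proposition \ref{prop:q_express}, write $\calQ\mu = \EE_{P_d}[(B_\delta\boxconv\mu)^{*(d)}]$; since box convolution is bilinear and weakly continuous (Definition \ref{def:boxc}), it commutes with the mixture over $P_d$, so $B_\phi\boxconv\calQ\mu = \EE_{P_d}[B_\phi\boxconv(B_\delta\boxconv\mu)^{*(d)}]$. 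Applying STL termwise to the $d$-fold convolution $(B_\delta\boxconv\mu)^{*(d)}$ gives $B_\phi\boxconv(B_\delta\boxconv\mu)^{*(d)} \preceq (B_\phi\boxconv B_\delta\boxconv\mu)^{*(d)}$; by commutativity this equals $(B_\delta\boxconv(B_\phi\boxconv\mu))^{*(d)}$, and taking the $P_d$-mixture (degradation is preserved under mixtures with the same mixing law, via the standard coupling argument) yields $B_\phi\boxconv\calQ\mu \preceq \calQ(B_\phi\boxconv\mu)$. Iterating once more handles $\calQ^2$ in the $\PP[d>2]=0$ case.

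The real content — and the main obstacle — is converting $\preceq$ into $\prec$, i.e. producing a definite $\phi' \in (0,\tfrac12]$ with $B_\phi\boxconv\calQ\mu \preceq B_{\phi'}\boxconv\calQ(B_\phi\boxconv\mu)$. The point is that STL is typically \emph{strictly} lossy: the stringy tree channel has independent $\phi$-flips on each of the $d$ root-to-leaf strings, whereas the original tree channel shares a single $\phi$-flip at the root across all $d$ subtrees, and this sharing is strictly more informative whenever $d\ge 2$ and the subtree channels are not completely degenerate. I would make this quantitative by opening up the coupling that realizes the STL degradation and checking that, after the subsequent $F_\delta$-map and convolution inside the next layer of $\calQ$, the residual noise channel $\mu_{Y|Z}$ can be lower-bounded by a genuine BSC with some crossover $\phi' > 0$ depending only on $\delta$, $P_d$, and enough nondegeneracy of $\mu$. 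The hypothesis $\PP[d\le 1] < 1$ is exactly what guarantees the "sharing" is nonvacuous with positive probability; the two-step version ($\calQ^2$) is needed precisely when $\PP[d=2]$ can be close to $1$ so that a single application might see $d=1$ too often, and one extra layer restores a macroscopic fraction of $d\ge 2$ events (or of $d\ge 3$, which forces strict loss even when $d=2$ otherwise leaves the channel unchanged).

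Concretely, I would (i) prove a \emph{strict} refinement of STL: for $d\ge 2$ and symmetric $\mu_1,\dots,\mu_d$ none of which is $\delta_{+\infty}$ (noiseless), $B_\phi\boxconv(\mu_1*\cdots*\mu_d) \preceq B_{\phi'}\boxconv[(B_\phi\boxconv\mu_1)*\cdots*(B_\phi\boxconv\mu_d)]$ with $\phi'>0$ explicit — this is the heart of the argument and I expect it to require carefully tracking the joint law of the shared-flip channel and the independent-flip channel and identifying an explicit intermediate BSC; (ii) verify that the BP operator cannot map a non-trivial $\mu$ to something so degenerate that the hypotheses of the strict STL fail at the relevant layer (a non-trivial symmetric distribution stays non-trivial and the survey-free $\calQ$ with $\delta\in\{0,1\}$, $d=1$ is exactly the excluded identity case, already ruled out by $\PP[d\le1]<1$); (iii) assemble (i) and (ii) with the mixture and monotonicity properties of $\preceq$ and $\boxconv$ (all of which are recorded or easily derived from the definitions above) to conclude. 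The bookkeeping of "degradation is preserved under $\boxconv$ with a fixed $B_\phi$ and under $*$" is routine from Definition \ref{def:degr} and will be used freely.
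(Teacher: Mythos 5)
The central claim in step (i) of your plan---a strict STL for all $d\geq 2$ with explicit $\phi'>0$, i.e.\ $B_\phi\boxconv(\mu_1*\mu_2)\prec (B_\phi\boxconv\mu_1)*(B_\phi\boxconv\mu_2)$ whenever neither $\mu_i$ is noiseless---is in fact \emph{false} for $d=2$, and this failure is precisely what forces the two-step inequality~\eqref{sdeg:key1}. By Proposition~\ref{lemma:strict_beta}, $\prec$ requires a strict gap between $\beta$-curves at every $t\in[0,t_{\max}]$, including $t=0$. But for two-term convolutions the two $\beta$-curves coincide at $t=0$: in the piecewise-linear computation of Proposition~\ref{lemma:uniq_4b}, the values at $t=0$ on both sides are the same, $(1-2\phi)\max\{1-2\delta_1,1-2\delta_2\}$. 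Informally, under uniform prior a single shared $\phi$-flip before a two-fold convolution and two independent $\phi$-flips afterward give the same Bayes error, so there is no ``genuine BSC $\phi'>0$'' to extract. The paper introduces the refinement $\prec_s$ (Definition~\ref{def:semi_strict}) precisely to record this partial strictness: $B_\phi\boxconv(\nu^{*(2)})\prec_0 (B_\phi\boxconv\nu)^{*(2)}$, strict only for $t>0$; the full $\prec$ holds only for $d\geq 3$ (Proposition~\ref{lemma:degfd}).

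Your heuristic explanation for why $\calQ^2$ is needed---that $\PP[d=2]$ may be close to $1$ so a single layer sees $d\leq 1$ too often, and an extra layer restores $d\geq 3$ events---is also not correct: the obstruction is deterministic and persists even when $d=2$ with probability one, which is exactly the case the two-step inequality is designed for. The paper's fix is to chain the $\prec_0$ step with a second application of $\calQ$ and then close the gap at $t=0$ using the Rule of Convolution (Proposition~\ref{lemma:convdeg}) with a carefully chosen $\ell\in\supp$ to shift the strictness window down through zero. So while the non-strict part of your argument is exactly how the paper also derives $\preceq$, the upgrade to $\prec$ requires the $\beta$-curve/$\prec_s$ machinery rather than a direct coupling lower bound, because the quantity you would be trying to bound below is genuinely zero in the borderline $d=2$ case.
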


\begin{remark}
Note that the fixed point equation can be written as $\mathcal{Q}\mu=\mu$, Theorem \ref{lm:uniq_key} implies that $\mu$ and $B_{\phi}\boxconv\mu$ can not be both non-trivial symmetric fixed points. As we will show later in this work (see Remark \ref{rem:deg_met}),  Theorem \ref{lm:uniq_key} essentially states that either $\mathcal{Q}^2$ or $\mathcal{Q}$ will always reduce a distance function between distinct non-trivial symmetric distributions  measured based on the degradation index.  
\end{remark}

\begin{remark}
The stringy tree construction in \cite{EPKS2000} implies that $B_{\phi}\boxconv\mathcal{Q}\mu\preceq  \mathcal{Q}(B_{\phi}\boxconv\mu)$ for any degree distribution. Theorem \ref{lm:uniq_key} provides a strict version of this result for $\PP[d>2]>0$. Note that this condition can not be further relaxed, as one can show that inequality (\ref{sdeg:key2}) is not satisfied when $\PP[d>2]=0$. 
\end{remark}

\ifgeneral
\begin{remark}
Theorem \ref{lm:uniq_key} shows that our proof requires special treatment when $d\leq 2$. As shown in 
Section \ref{sec:gllr}, these two possible cases can be naturally unified when viewed under a generalized model, where the infinite tree is generated from arbitrary elements. A generalized version of inequality (\ref{sdeg:key1}) and (\ref{sdeg:key2}) is provided, and the number of applications of the BP operator for
strict 
inequality to hold depends on a requirement called polygon condition (see Definition \ref{def:polc} and Theorem \ref{thm:gen_t_tec}).


\end{remark}
\fi




We also have the following fact shown in Appendix \ref{app:pp_trans_deg_stdeg} (by checking that
degradation is transitive and box-convolution-preserving):
 \begin{prop}\label{prop:trans_deg_stdeg}
 If $\nu\preceq\tau\prec \mu$ or $\nu\prec\tau\preceq \mu$, then $\nu\prec\mu$.
 \end{prop}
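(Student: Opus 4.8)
The plan is to derive both implications from two structural properties of the degradation order: (i) $\preceq$ is transitive; and (ii) $\preceq$ is preserved by box convolution, i.e.\ $\nu\preceq\mu$ implies $\rho\boxconv\nu\preceq\rho\boxconv\mu$ for every symmetric $\rho$. Granting (i) and (ii), both cases are immediate. If $\nu\preceq\tau\prec\mu$, then by Definition \ref{def:strict_deg} there is $\phi\in(0,\frac12]$ with $\tau\preceq B_\phi\boxconv\mu$, so transitivity gives $\nu\preceq B_\phi\boxconv\mu$, which is exactly $\nu\prec\mu$ with the same $\phi$. If $\nu\prec\tau\preceq\mu$, pick $\phi\in(0,\frac12]$ with $\nu\preceq B_\phi\boxconv\tau$; property (ii) applied with $\rho=B_\phi$ gives $B_\phi\boxconv\tau\preceq B_\phi\boxconv\mu$, and transitivity again yields $\nu\preceq B_\phi\boxconv\mu$, i.e.\ $\nu\prec\mu$. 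Note that in both cases the witnessing crossover stays in $(0,\frac12]$, so nothing about the range needs to be checked.

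For (i), I would use the channel reformulation of Definition \ref{def:degr}: $\mu_X\preceq\mu_Y$ exactly when the $\mu_X$-channel factors as the $\mu_Y$-channel followed by a further Markov kernel $K$ that is invariant under simultaneously flipping the sign of its input and output (this equivalence is the content of Appendix \ref{app:deg_detail}). Given $\mu_X\preceq\mu_Y$ via $K_1$ and $\mu_Y\preceq\mu_Z$ via $K_2$, the kernel ``$K_2$ followed by $K_1$'' is again Markov and inherits sign-flip invariance from its two factors, and chaining the two factorizations shows the $\mu_X$-channel equals the $\mu_Z$-channel followed by this composition; hence $\mu_X\preceq\mu_Z$. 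Equivalently, one can stay inside the joint-law language of Definition \ref{def:degr} and glue the two couplings along their common $\mu_\tau$-marginal.

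For (ii), I would use the physical picture of Remark \ref{remark:box_phy}: $\rho\boxconv\mu$ is the channel that first passes the input bit through a BSC with a random, receiver-known crossover $\phi$ (with $\log\frac{1-\phi}{\phi}\sim\rho$) and then through the $\mu$-channel. If $\nu\preceq\mu$ is witnessed by a sign-flip-invariant kernel $K$, then conditioned on any value of $\phi$ the channel $B_\phi\boxconv\nu$ is $B_\phi\boxconv\mu$ followed by the \emph{same} $K$; since $K$ does not depend on $\phi$, applying $K$ to the $\mu$-part of the output while retaining $\phi$ exhibits $\rho\boxconv\nu$ as a symmetric degradation of $\rho\boxconv\mu$. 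A cleaner alternative is to verify (ii) first on the extreme points $\mu=B_\psi$, where the box-convolution identity of Definition \ref{def:boxc} is explicit, and then extend to general $\mu$ by bilinearity and weak continuity of $\boxconv$, using that both sides of a degradation inequality vary weakly continuously with the mixing measures.

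The only real work is bookkeeping: moving between the ``joint distribution with sign-flip-invariant conditional'' formulation of degradation and the equivalent ``output passes through a further symmetric kernel'' formulation, and checking that in (ii) the degrading kernel can be taken independent of the random crossover $\phi$. Once the equivalence of the two formulations of degradation is in place, the proof collapses to the two-line composition argument above, and I do not expect any genuine obstacle.
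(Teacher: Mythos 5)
Your proof is correct and follows exactly the paper's route: for the first case you use transitivity of $\preceq$ after unfolding the definition of $\prec$, and for the second case you first apply box-convolution–preservation of $\preceq$ (the fifth property in Proposition~\ref{prop:order}) and then transitivity. The additional sketches of transitivity and convolution-preservation are sound but not new work, since the paper already establishes them in Proposition~\ref{prop:lim_deg_com}.
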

 
Given these results, we are ready to prove the main theorem for the no survey case. 
\begin{proof}[Proof of Theorem \ref{thm: uniq}] 
Consider any two non-trivial symmetric fixed points $\mu,\nu$, we first prove that $\mu\preceq\nu$. 

\begin{defn}[Degradation Index] 
\label{def:deg_ind} 
For any two symmetric distributions $\mu$ and $\nu$, we define the \emph{degradation index} from $\nu$ to $\mu$ to be
\begin{align}\label{eq:degind_def}
\phi^*(\mu
,\nu)    \triangleq \inf\left\{{\phi}
\ |\  B_{\phi}\boxconv \mu\preceq \nu \right\}.
\end{align}
\end{defn}
We use the following Proposition, which is proved in Appendix \ref{app:pl_index_def}.
\begin{prop}\label{lemma:index_def}\label{prop:deg_closed}\label{prop:tri}
  Degradation index has the following properties.
  \begin{enumerate}
      \item We always have $\phi^*(\mu,\nu)<\frac{1}{2}$ for $\nu$ non-trivial.
      \item For any symmetric $\mu$ and $\nu$, we have $B_{\phi^*(\mu,\nu)}\boxconv\mu\preceq \nu$.
      \item For any $\phi\in(0, \frac{1}{2})$ and any symmetric 
$\mu$ and $\nu$ satisfying $B_{\phi}\boxconv\mu\prec \nu$, we have $\phi^*(\mu,\nu)<\phi$.
\item For any symmetric $\mu$, $\nu$, $\tau$, we have 
$1-2\phi^*(\mu,\tau)\geq (1-2\phi^*(\mu,\nu))(1-2\phi^*(\nu,\tau))$.
  \end{enumerate} 
\end{prop}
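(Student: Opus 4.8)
The plan is to dispatch the four properties in the listed order, leaning on a few structural facts about $\boxconv$ and $\preceq$ that I would record first (these are part of the degradation background of Appendix~\ref{app:deg_detail}): box-convolution is \emph{monotone} for degradation, $\alpha\preceq\beta\Rightarrow\rho\boxconv\alpha\preceq\rho\boxconv\beta$, and is a \emph{noising} operation, $B_\phi\boxconv\mu\preceq\mu$ with in particular $B_{1/2}\boxconv\mu=\delta_0\preceq\nu$ for every $\nu$; it obeys the semigroup law $B_{\phi_1}\boxconv B_{\phi_2}=B_{\phi_1+\phi_2-2\phi_1\phi_2}$, so $\phi\mapsto B_\phi\boxconv\mu$ becomes monotonically more degraded as $\phi\uparrow\tfrac12$; and $\phi\mapsto B_\phi\boxconv\mu$ is weakly continuous while $\preceq$ is closed under weak limits. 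The cleanest way to see these (and the cancellation lemma below) is through the characterization of $\preceq$ as the convex order $\le_{\mathrm{cx}}$ of the standard (Blackwell) measures, together with the fact that in the coordinate $t=\tanh(\cdot/2)$ the operation $\boxconv$ is multiplication and $B_c\boxconv$ simply rescales the standard measure by the fixed constant $1-2c$.

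For part~1, I would use that a non-trivial symmetric $\nu$ always degrades a non-trivial BSC: applying the (symmetric, deterministic, with a fair coin on ties) hard decision to the output of the $\nu$-channel produces $\mathrm{BSC}(p_e)\preceq\nu$ with crossover $p_e=\nu((-\infty,0))+\tfrac12\nu(\{0\})<\tfrac12$, strictness coming from $\nu\neq\delta_0$. By the channel picture in Remark~\ref{remark:box_phy} the channel of $B_{p_e}\boxconv\mu$ factors as $X\to(X\oplus\mathrm{flip})\to Y$ with $X\to(X\oplus\mathrm{flip})$ a $\mathrm{BSC}(p_e)$, hence $B_{p_e}\boxconv\mu\preceq B_{p_e}\preceq\nu$ and $\phi^*(\mu,\nu)\le p_e<\tfrac12$. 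For part~2, I would argue that $\Phi:=\{\phi\in[0,\tfrac12]:B_\phi\boxconv\mu\preceq\nu\}$ is a closed interval containing $\tfrac12$, so it contains its infimum $\phi^*(\mu,\nu)$: it is upward closed because for $\phi_1\le\phi_2$ one writes $B_{\phi_2}=B_{\phi_1}\boxconv B_c$ (semigroup law) and uses noising plus transitivity of $\preceq$; it contains $\tfrac12$ since $B_{1/2}\boxconv\mu=\delta_0\preceq\nu$; and it is closed since $B_{\phi_n}\boxconv\mu\to B_\phi\boxconv\mu$ weakly and $\preceq\nu$ is preserved under weak limits.

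Part~3 is the only place where something genuinely new is needed, and it is the step I expect to be the main obstacle. First I would unpack $B_\phi\boxconv\mu\prec\nu$ (Definition~\ref{def:strict_deg}) as $B_\phi\boxconv\mu\preceq B_\psi\boxconv\nu$ for some $\psi\in(0,\tfrac12]$; then choose $c$ with $0<c\le\phi$ and $c\le\psi$ (so $c\in(0,\tfrac12)$) and split $B_\phi=B_c\boxconv B_{\phi'}$, $B_\psi=B_c\boxconv B_{\psi'}$ with $\phi'=\tfrac{\phi-c}{1-2c}\in[0,\phi)$ and $\psi'=\tfrac{\psi-c}{1-2c}\in[0,\tfrac12]$, so that associativity and commutativity of $\boxconv$ give $B_c\boxconv(B_{\phi'}\boxconv\mu)\preceq B_c\boxconv(B_{\psi'}\boxconv\nu)$. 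The crux is then a cancellation lemma: for $c\in(0,\tfrac12)$, $B_c\boxconv\alpha\preceq B_c\boxconv\beta$ implies $\alpha\preceq\beta$. I would prove this by passing to standard measures, where $B_c\boxconv$ multiplies the standard measure by the positive constant $1-2c$ and the convex order $\le_{\mathrm{cx}}$ is invariant under multiplication by a fixed positive constant, so the rescaling is undoable. The lemma then yields $B_{\phi'}\boxconv\mu\preceq B_{\psi'}\boxconv\nu\preceq\nu$ with $\phi'<\phi$, hence $\phi^*(\mu,\nu)\le\phi'<\phi$.

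For part~4, set $\phi_1=\phi^*(\mu,\nu)$ and $\phi_2=\phi^*(\nu,\tau)$. By part~2 we have $B_{\phi_1}\boxconv\mu\preceq\nu$ and $B_{\phi_2}\boxconv\nu\preceq\tau$; box-convolving the first by $B_{\phi_2}$ (monotonicity), combining with the second by transitivity, and invoking the semigroup law gives $B_{\phi_1+\phi_2-2\phi_1\phi_2}\boxconv\mu\preceq\tau$, so $\phi^*(\mu,\tau)\le\phi_1+\phi_2-2\phi_1\phi_2$. Since $1-2(\phi_1+\phi_2-2\phi_1\phi_2)=(1-2\phi_1)(1-2\phi_2)$ and $x\mapsto1-2x$ is decreasing, this is exactly $1-2\phi^*(\mu,\tau)\ge(1-2\phi^*(\mu,\nu))(1-2\phi^*(\nu,\tau))$; the degenerate cases in which $\mu$, $\nu$, or $\tau$ equals $\delta_0$ are checked directly against $\phi^*(\delta_0,\cdot)=0$ and $\phi^*(\cdot,\delta_0)=\tfrac12$ for a non-trivial first argument. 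To summarize: parts~1, 2, and~4 are bookkeeping with the monotonicity, weak continuity, and semigroup structure of $\boxconv$ and the weak-limit-closedness of $\preceq$; the single substantial ingredient is the cancellation lemma in part~3, whose proof I would route through the convex-order description of degradation.
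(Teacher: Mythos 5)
Your proof is correct and follows essentially the same route as the paper's: part~1 is the hard-decision argument (the paper's choice $\phi=\EE_{R\sim\nu}[(1-\mathrm{sgn}\,R)/2]$ is exactly your $p_e$), part~2 rests on closedness of $\preceq$ under weak limits, part~3 relies on the invertibility of $B_c\boxconv(\cdot)$ for $c<\tfrac12$ (the paper's Proposition~\ref{prop:bequi}, item~6, which you recast as a cancellation lemma and justify via the $\beta$-curve/Blackwell scaling, the same mechanism as the paper's equation~\eqref{eq:beta_bconv}), and part~4 is the semigroup law plus transitivity. You are somewhat more explicit than the paper's rather terse Appendix proof, especially for parts~2 and~3, but the key ideas are identical.
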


  
    We consider the fixed point condition $\calQ \mu=\mu$ and any $\phi\in(0,\frac{1}{2})$ satisfying   $B_{\phi}\boxconv\mu\preceq \nu$. From the first property in  Proposition \ref{lemma:index_def}, such $\phi$ exists. 
    Then from the second property, we can choose $\phi=\phi^*(\mu,\nu)$ unless $\phi^*(\mu,\nu)=0$. We focus on the non-trivial case where $\mathbb{P}[d\leq 1]<1$. Otherwise, any fixed point $\mu$ has to satisfy $\mu(\{0\})=1$, which makes them unique and trivial, unless $\mathbb{P}[d\leq 1]=1$, which falls into the exceptional case stated in the theorems.    
    Under this condition, Theorem \ref{lm:uniq_key} states that there is an integer $k$ for any degree distribution such that
    $$ B_{\phi}\boxconv\mu\prec \mathcal{Q}^k(B_{\phi}\boxconv\mu). $$
    Because $\mathcal{Q}$ describes BP, it preserves degradation.  Hence,
      $$ \mathcal{Q}^k(B_{\phi}\boxconv\mu)\preceq {Q}^k\nu=\nu. $$
    Recall the transitivity property stated in Proposition \ref{prop:trans_deg_stdeg}. 
    This implies $B_{\phi}\boxconv\mu\prec \nu$.

        However, this conclusion is mutually exclusive with $\phi=\phi^*(\mu,\nu)$ according to the third property in Proposition \ref{lemma:index_def}, which states that strict degradation provides strict upper bound on degradation index.   
          Thus, we must have $\phi^*(\mu,\nu)=0$, and $\mu\preceq\nu$ follows from the second property in Proposition \ref{prop:deg_closed}.
          
          By symmetry, we have $\nu\preceq\mu$ as well, 
	  which in turn implies $\nu = \mu$ since degradation satisfies antisymmetry (see the fourth property in Proposition \ref{prop:q_deg}). 
	  Therefore, there can be at most one non-trivial symmetric
	  BP fixed point.

	We next prove convergence of iterations $\mu_h = \calQ^h \mu$ to  $\mu^*$, which denotes
	from now on either the unique non-trivial fixed point (if it exists) or $\delta_0$
	(otherwise).
	First, notice
	that if $\mu_h \to \mu_\infty$ (weakly) then $\mu_\infty$ must be a fixed point of
	$\calQ$.  Indeed, we have $\mu_{h+1} = \calQ \mu_h$ and taking $h\to\infty$ we get the
	statement $\calQ \mu_\infty = \mu_\infty$ after applying (weak) continuity of $\calQ$. The
	weak continuity follows from the following argument. Let $\{R_{h,u}, u=1,\ldots\}$ be iid
	$\sim \mu_h$. From Skorokhod's representation we can assume $R_{h,u} \to R_u$ (almost
	surely) as $h\to \infty$. But then from equation~\eqref{eq:bp_survey} we see that
	conditioned on any value of degree $d$ we have
	$$ \sum_{u=1}^d Z_u F_\delta(R_{h,u}) \stackrel{a.s.}{\to} \sum_{u=1}^d Z_u
	F_\delta(R_{u})\,,$$
	implying that $\calQ \mu_h \to \calQ \mu_\infty$.

	Second, we notice that the sequence $\tilde \mu_h = \calQ^h B_0$ (i.e. BP iteration initialized from the
	measure corresponding to perfectly observed leaves: $B_0[+\infty] = 1$) is monotonically decreasing (since it
	corresponds to a channel from $X_0$ to $X_{L_h}$, see Fig.~\ref{fig:main}). Thus, it is
	convergent (Prop.~\ref{prop:lim_deg_com}) and even $\tilde \mu_h \to \mu^*$. Indeed, since $B_0 \succeq
	\mu^*$ we must have $\lim \tilde \mu_h \succeq \mu^*$, which implies that the limit cannot
	be $\delta_0$ unless $\mu^*=\delta_0$.

	Third, we notice that for any $\phi\in[0,1]$, the stringy tree lemma implies that the sequence $\hat \mu_h = \calQ^h
	\hat \mu $, where $\hat \mu = B_\phi\boxconv \mu^*$, is monotonically increasing and hence
	convergent by Prop.~\ref{prop:lim_deg_com}. Indeed, $\hat\mu_1 = \calQ \hat{\mu} =\calQ (B_\phi \boxconv \mu^*) \succeq B_\phi
	 \boxconv \calQ \mu^* = \hat\mu$. Thus applying $\calQ^{h-1}$ to both sides we obtain $\hat\mu_h \succeq
	\hat\mu_{h-1}$. Further, if $\hat \mu \neq \delta_0$ then we can see that the limit point satisfies $\lim_{h\to\infty} \hat
	\mu_h \succeq \hat{\mu}$, which can not be $\delta_0$. Hence, the convergence in this case is always to $\mu^*$. 

	Finally, we complete the proof. If $\mu^* = \delta_0$ (i.e. no non-trivial fixed points
	exist) then $\delta_0 \preceq \mu \preceq B_0$ and applying $\calQ^h$ we obtain that
	$\mu_h \to \delta_0$ by sandwiching (cf.~Prop.~\ref{prop:lim_deg_com}). If $\mu^* \neq
	\delta_0$ and $\mu \neq \delta_0$ then let $\phi < 1/2$ be such that $B_\phi \boxconv \mu^* \preceq
	\mu$ (Prop. \ref{prop:deg_closed}) and we have $\hat{\mu}\triangleq B_\phi \boxconv \mu^*\neq \delta_0$. In this case, $\hat \mu_h \preceq \mu_h \preceq \tilde \mu_h $ and again the
	sandwich property shows $\mu_h \to \mu^*$.
\end{proof}


\begin{remark}[Distance contraction]\label{rem:deg_met}

We have shown that a finite number of applications of the BP operator 
 satisfies the following contractivity condition 
 	$$ d(\calQ^k \mu, \calQ^k \nu) < d(\mu,\nu) \qquad \forall \mu,\nu \neq \delta_0\,,$$
	for a metric $d$ between non-trivial symmetric distributions, defined as 
\begin{align}\label{eq:deg_m}
    d(\mu,\nu)=|\ln(1-2\phi^*(\mu,\nu))+\ln(1-2\phi^*(\nu,\mu))|.
\end{align}
    That $d(\mu,\nu)=0$ implies $\mu=\nu$ is clear, while the triangle inequality follows from
    Proposition \ref{prop:tri} (fourth property). 
    Some properties of this metric are discussed in Appendix \ref{app:deg_met}. In particular, it
    is strictly stronger than weak convergence (Levy-Prokhorov) metric.

 A common way to prove convergence is to apply a well-known principle of Edelstein, which states
 that contractive self-map over a compact metric space defines recursions that converge to a
 unique fixed point~\cite[Remark
 3.2]{Edelstein62}. However, as 
 Remark 
\ref{remark:lack_comp} in Appendix \ref{app:deg_detail} shows, the space of symmetric distributions is not compact under the degradation metric,  even with a radius constraint. Thus, we proved the convergence of BP recursion directly.  


   We mention that classically, contraction methods have mostly been studied for linear recursions
   (i.e. affine combinations of independent variables), see  \cite{ROSLER1992195, Rosler99alimit, neininger2004general}. 
   Note that setting $S=0$, $Z_u=c$ (constant) and $F_\delta(r)= r$ in~\eqref{eq:bp_survey}
   reduces the search for BP fixed points to finding stable laws. The speed of convergence to
   stable laws (in particular in the central limit theorems) has been studied by constructing
   special 
   distances such as Zolotarev metrics~\cite{doi:10.1137/1121086}. 
 (For such metrics contraction properties can be proved by analyzing the norms of the coefficient
 matrices \cite{neininger2004general}.) Our work may be seen as identifying the appropriate
 counterpart (degradation distance) for the non-linear model of~\eqref{eq:bp_survey}.
\end{remark}


\subsection{Comparison to the methods of~\cite{MNS16} and~\cite{ACGP21}}\label{sec:compare_mns}

Since partial resolutions of the BP uniqueness were already done in~\cite{MNS16}, it is natural to
ask whether the method here is merely tightening of~\cite{MNS16}. Especially since
in~\cite{ACGP21} the range in which uniqueness holds was extended compared to~\cite{MNS16} by
precisely leveraging channel degradation. We want to argue, however, that the proof here is
fundamentally different and explain why the methods as
in~\cite{MNS16,ACGP21} are inherently tailored to ``high-SNR'' cases exclusively. 

Consider the case of regular trees (i.e., with fixed $d$) which are key building blocks in the proofs of \cite{MNS16,ACGP21}. The authors investigated contraction properties of potential functions that are either defined in the form of the $L_p$ distance, or can be lower bounded by them. In the regime of $d(1-2\delta)^2>1$, 
such potential functions are non-contractive when the LLR distributions are close to the trivial distribution. {For instance, taking 
 symmetric Gaussian distributions with vanishing second moments. The application of BP operators can be approximated 
 with $F_{\delta}(R)\approx (1-2\delta)R$, and the distributions are scaled by a factor of $d^{\frac{1}{2}}(1-2\delta)$. As a consequence, metrics defined in $L_p$ norms are also increased by the same factor 
 after each BP recursion.} 
Therefore, to exploit any contraction property, the core of the proofs in \cite{MNS16,ACGP21} is
to identify cases where the BP recursion is bounded away from the trivial distribution. 
However, when this condition is imposed, 
the high SNR requirement emerges  
 as all BP fixed points converge to $\delta_0$ when $d(1-2\delta)^2\rightarrow 1^+$.  
 

 

In this work, we employed a different approach by constructing a new measure of distance between
the LLR distributions. In contrast to previous works,  where degradation was mainly used to assist the analysis for existing potential functions, we incorporate degradation as a part of the construction. The  metric we developed is in some sense ``scale-invariant'' and allows us to treat the
``low--SNR'' cases and ``high--SNR'' cases simultaneously. 





\subsection{Extension to Broadcast with Survey}\label{sec:sur}

Now we present the proof ideas for non-trivial survey distributions. Formally, we prove the following theorem. 



\begin{thm}[Uniqueness with non-trivial survey]\label{thm: uniq_s}

Fix any degree distribution $P_d$, parameter $\delta\in[0,1]$, and non-trivial symmetric survey distribution $\mu_{\textup{s}}$. 
There is exactly one non-trivial symmetric BP fixed point $\mu^*$. For all symmetric $\mu$, the recursion $\mathcal{Q}_{\textup{s}}^h\mu$ converges weakly to $\mu^*$.  
\end{thm}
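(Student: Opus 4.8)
The plan is to follow the blueprint of Theorem~\ref{thm: uniq} almost verbatim, the two genuine differences being that a non-trivial fixed point must now be produced by hand (since $\delta_0$ is no longer fixed: $\mathcal{Q}_{\textup{s}}\delta_0=\mu_{\textup{s}}\neq\delta_0$), and that the strict-degradation lemma (Theorem~\ref{lm:uniq_key}) must be re-derived for $\mathcal{Q}_{\textup{s}}$. The starting point is the identity $\mathcal{Q}_{\textup{s}}\rho=(\mathcal{Q}\rho)*\mu_{\textup{s}}$, valid for every symmetric $\rho$ (the term $S$ in~\eqref{eq:bp_survey} simply convolves in; compare Proposition~\ref{prop:q_express}). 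From it one reads off three facts used throughout: $\mathcal{Q}_{\textup{s}}$ preserves degradation; $\mathcal{Q}_{\textup{s}}$ is weakly continuous (Skorokhod representation together with continuity of $F_\delta$ and of $x\mapsto x+S$, exactly as in the proof of Theorem~\ref{thm: uniq}); and $\mathcal{Q}_{\textup{s}}\rho=(\mathcal{Q}\rho)*\mu_{\textup{s}}\succeq\mu_{\textup{s}}$ (discarding the $\mathcal{Q}\rho$-coordinate is a symmetric degradation).

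For \emph{existence}, consider $\tilde\mu_h\triangleq\mathcal{Q}_{\textup{s}}^h B_0$, where $B_0$ is the perfect channel. Every channel is a degraded version of $B_0$, so $\tilde\mu_1\preceq B_0$; applying $\mathcal{Q}_{\textup{s}}^h$ gives $\tilde\mu_{h+1}\preceq\tilde\mu_h$. Hence $(\tilde\mu_h)$ is degradation-decreasing, converges weakly by Proposition~\ref{prop:lim_deg_com}, and its limit $\mu^*$ satisfies $\mathcal{Q}_{\textup{s}}\mu^*=\mu^*$ by weak continuity. Since $\tilde\mu_h\succeq\mu_{\textup{s}}$ for all $h\ge1$ and degradation is preserved under weak limits (Proposition~\ref{prop:lim_deg_com}), $\mu^*\succeq\mu_{\textup{s}}$, which is non-trivial; thus $\mu^*\neq\delta_0$, and because $\delta_0$ is not a fixed point, every fixed point is automatically non-trivial.

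For \emph{uniqueness}, run the degradation-index argument as in Theorem~\ref{thm: uniq}. Let $\mu,\nu$ be two symmetric fixed points and set $\phi^*\triangleq\phi^*(\mu,\nu)<\tfrac12$ (Proposition~\ref{lemma:index_def}(1)), so $B_{\phi^*}\boxconv\mu\preceq\nu$ (Proposition~\ref{lemma:index_def}(2)). The non-strict commutation $B_\phi\boxconv\mathcal{Q}_{\textup{s}}\rho\preceq\mathcal{Q}_{\textup{s}}(B_\phi\boxconv\rho)$ is immediate from the stringy tree lemma: $B_\phi\boxconv\big((\mathcal{Q}\rho)*\mu_{\textup{s}}\big)\preceq(B_\phi\boxconv\mathcal{Q}\rho)*(B_\phi\boxconv\mu_{\textup{s}})\preceq\big(\mathcal{Q}(B_\phi\boxconv\rho)\big)*\mu_{\textup{s}}$, using Theorem~\ref{thm:stringy} twice and $B_\phi\boxconv\mu_{\textup{s}}\preceq\mu_{\textup{s}}$ together with the fact that convolution preserves degradation; iterating gives $B_{\phi^*}\boxconv\mu=B_{\phi^*}\boxconv\mathcal{Q}_{\textup{s}}^k\mu\preceq\mathcal{Q}_{\textup{s}}^k(B_{\phi^*}\boxconv\mu)$ for every $k$. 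The new ingredient is the \emph{survey strict-degradation lemma}: for any non-trivial symmetric $\mu_{\textup{s}}$ there is a finite $k$ (one or two applications, depending on $P_d$) with
\[
  B_\phi\boxconv\mathcal{Q}_{\textup{s}}^k\rho\ \prec\ \mathcal{Q}_{\textup{s}}^k(B_\phi\boxconv\rho)\qquad\text{for all }\phi\in(0,\tfrac12)\text{ and symmetric }\rho.
\]
Granting this, since $\mathcal{Q}_{\textup{s}}$ preserves degradation and $\nu$ is fixed, $\mathcal{Q}_{\textup{s}}^k(B_{\phi^*}\boxconv\mu)\preceq\mathcal{Q}_{\textup{s}}^k\nu=\nu$, and Proposition~\ref{prop:trans_deg_stdeg} upgrades the chain to $B_{\phi^*}\boxconv\mu\prec\nu$ whenever $\phi^*>0$; Proposition~\ref{lemma:index_def}(3) then forces $\phi^*(\mu,\nu)<\phi^*$, a contradiction. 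Hence $\phi^*=0$, so $\mu\preceq\nu$; symmetrically $\nu\preceq\mu$, and antisymmetry of degradation (Proposition~\ref{prop:q_deg}) gives $\mu=\nu$. Convergence then follows by sandwiching: for arbitrary symmetric $\mu$ we have $\delta_0\preceq\mu\preceq B_0$, hence $\mathcal{Q}_{\textup{s}}^h\delta_0\preceq\mathcal{Q}_{\textup{s}}^h\mu\preceq\mathcal{Q}_{\textup{s}}^h B_0$; the right-hand sequence decreases to $\mu^*$ (the unique fixed point), while the left-hand sequence, which \emph{increases} because $\mathcal{Q}_{\textup{s}}\delta_0=\mu_{\textup{s}}\succeq\delta_0$, also converges to a fixed point, necessarily $\mu^*$; Proposition~\ref{prop:lim_deg_com} then gives $\mathcal{Q}_{\textup{s}}^h\mu\to\mu^*$ weakly.

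The main obstacle is the survey strict-degradation lemma. The heuristic is that a survey placed at every node acts, in the stringy-tree picture of Theorem~\ref{thm:stringy}, as one extra leaf reached through the channel $\mu_{\textup{s}}$; after bounding the depth-$k$ tree-with-surveys by its fully stringy version, the root is fed by at least three independent non-degenerate strings whenever $\PP[d\ge2]>0$ (the $d$ subtree-strings plus the root survey), so that the combinatorial argument behind Theorem~\ref{lm:uniq_key} already yields a strict inequality with $k=1$; two generations of surveys supply the third string when $d\le1$ a.s., and the degenerate case $\PP[d=0]=1$ is closed directly using $B_\phi\boxconv\mu_{\textup{s}}\preceq B_\phi\boxconv\mu_{\textup{s}}$, i.e.\ $B_\phi\boxconv\mu_{\textup{s}}\prec\mu_{\textup{s}}$ (Definition~\ref{def:strict_deg}). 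Making this rigorous — in particular tracking how the convolved $\mu_{\textup{s}}$-strings contribute to strictness and pinning down the exact $k$ for degree distributions supported on $\{0,1\}$ — is cleanest inside the generalized-tree framework of Section~\ref{sec:gllr}, where the survey is just another generating element and the required number of BP applications is governed by the polygon condition (Definition~\ref{def:polc}).
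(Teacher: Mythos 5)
Your proof has the same architecture as the paper's: existence via the decreasing sequence $\tilde\mu_h=\mathcal{Q}_{\textup{s}}^h B_0$ whose limit sits above $\mu_{\textup{s}}$; uniqueness via the degradation-index contradiction, for which one must feed a strict-degradation version of the commutation relation into Proposition~\ref{prop:trans_deg_stdeg}; and convergence by sandwiching between $\mathcal{Q}_{\textup{s}}^h\delta_0$ and $\mathcal{Q}_{\textup{s}}^h B_0$. All of those steps are executed correctly, and in particular your derivation of the \emph{non-strict} commutation $B_\phi\boxconv\mathcal{Q}_{\textup{s}}\rho\preceq\mathcal{Q}_{\textup{s}}(B_\phi\boxconv\rho)$ from the stringy tree lemma is fine. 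That part of the proof is essentially identical to the paper's proof of Theorem~\ref{thm: uniq_s}.

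The genuine gap is in the survey strict-degradation lemma, which you state but do not prove, and whose sketch contains a false claim. You assert that $k=1$ works when $\PP[d\ge 2]>0$ and that $k=2$ (``two generations of surveys supply the third string'') works when $d\le 1$ a.s. The first claim is right (it is inequality~\eqref{sdeg:sur_key2} of Theorem~\ref{thm:key_survey}), but the second is not: when $P_d$ is supported on $\{0,1\}$, the number of iterations needed in Theorem~\ref{thm:key_survey} depends on $\mu$ and $\phi$, not only on $P_d$, and can be arbitrarily large. Concretely, Proposition~\ref{lemma:sur_imp} tracks a parameter $s_k=F_\delta(s_{k-1})-r_{\max}(\mu_{\textup{s}})$ starting from $s_0=F_\phi(r_{\max}(\mu))$; strict degradation (rather than mere $\prec_{s_k}$) is only obtained once $s_k<0$, and for a weak survey ($r_{\max}(\mu_{\textup{s}})$ small) and a strong input ($r_{\max}(\mu)$ large) this takes roughly $s_0/r_{\max}(\mu_{\textup{s}})$ steps, not $2$. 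Your heuristic (``the root is fed by three non-degenerate strings'') misses that when $d=1$ the third string --- the survey --- can be so weak that its contribution to the $\beta$-curve gap near $t_{\max}$ is negligible; one must iterate until the $r_{\max}$-gap $s_k$ becomes negative before Corollary~\ref{coro:uniq_4}'s bound (the quantity $|F_\phi(r_{\max}(\mu_1))-F_\phi(r_{\max}(\mu_2))|$) stops obstructing strictness at small $t$. The paper's machinery for doing this bookkeeping is the refined relation $\prec_s$ of Definition~\ref{def:semi_strict} together with Propositions~\ref{prop:sur_basic} and~\ref{lemma:sur_imp}; without re-deriving that, your proof is incomplete. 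You do flag this as the ``main obstacle,'' which is correct, but the confident parenthetical ``one or two applications'' would mislead a reader into thinking the remaining work is a short case-check.

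One smaller note: for the degenerate case $\PP[d=0]=1$ you do not actually need the strict-degradation lemma at all, since then $\mathcal{Q}_{\textup{s}}\rho=\mu_{\textup{s}}$ for every $\rho$ and the whole theorem is immediate; the paper simply splits this off at the start of the proof rather than feeding it through the contraction argument. Your observation that $B_\phi\boxconv\mu_{\textup{s}}\prec\mu_{\textup{s}}$ is true, but invoking it here adds an unnecessary appeal to the contraction machinery for a case where the recursion is constant after one step.
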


The proof of uniqueness relies on the following intermediate step, which is proved in Section \ref{sec:pt_key_survey}.

\begin{thm}\label{thm:key_survey}
For any $\delta\in [0,1]$, non-trivial symmetric $\mu_{\textup{s}}$, and degree distribution $P_d$ with $\PP[d=0]<1$, 
let $\calQ_{\textup{s}}$ be the associated BP operator. Then for any $\phi\in(0,\frac{1}{2})$ and symmetric $\mu$, we have  
\begin{align}
    B_{\phi}\boxconv\mathcal{Q}_{\textup{s}}^k\mu&\prec \mathcal{Q}_{\textup{s}}^k(B_{\phi}\boxconv\mu)
 &&\textup{if } \PP[d>1]=0,\label{sdeg:sur_key1}\\
   B_{\phi}\boxconv\mathcal{Q}_{\textup{s}}\mu&\prec  \mathcal{Q}_{\textup{s}}(B_{\phi}\boxconv\mu)&&\textup{otherwise},\label{sdeg:sur_key2}
\end{align}
for some $k\in\mathbb{N}_+$.
\end{thm}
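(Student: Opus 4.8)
The plan is to reduce Theorem~\ref{thm:key_survey} to the no-survey result Theorem~\ref{lm:uniq_key} by a coupling/monotonicity argument that shows the survey only \emph{helps} the strict degradation. First I would recall from Proposition~\ref{prop:q_express}-type reasoning that the surveyed operator factors as $\calQ_{\textup{s}}\mu = \calQ\mu * \mu_{\textup{s}}$, i.e. adding an independent copy of $S\sim\mu_{\textup{s}}$ to the no-survey recursion. Since box-convolution distributes over the edge channels but not over ordinary convolution, the stringy tree lemma (Theorem~\ref{thm:stringy}) already gives the non-strict inequality $B_\phi\boxconv\calQ_{\textup{s}}\mu \preceq \calQ_{\textup{s}}(B_\phi\boxconv\mu)$ for \emph{any} degree distribution with $\PP[d=0]<1$ — the survey term $\mu_{\textup{s}}$ is treated as just one more ``string.'' So the whole content is to upgrade $\preceq$ to $\prec$, i.e. to exhibit an explicit $\phi'\in(0,\frac12]$ with $B_\phi\boxconv\calQ_{\textup{s}}^k\mu \preceq B_{\phi'}\boxconv\calQ_{\textup{s}}^k(B_\phi\boxconv\mu)$.

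The key steps, in order: (1) Establish the decomposition $\calQ_{\textup{s}}\mu=\calQ\mu * \mu_{\textup{s}}$ and note that convolution is monotone in $\preceq$ (convolving both sides by the same distribution preserves degradation, since it corresponds to appending a common extra observation — this is in the same spirit as Proposition~\ref{prop:trans_deg_stdeg}). (2) Split into the two cases of the statement. In the case $\PP[d>1]>0$ (the ``otherwise'' branch), note that conditioning on $d\ge 2$ we have at least two subtree channels, so $\calQ\mu$ contains $\mu*\mu$ as a convolution factor with positive probability, and the strict-degradation argument behind Theorem~\ref{lm:uniq_key} for $\PP[d>2]>0$ applies verbatim \emph{after} one fewer iteration because the survey effectively contributes an extra branch: the pair (one real child, one survey) already plays the role of the ``$d>2$'' configuration that made Theorem~\ref{lm:uniq_key}'s single-step inequality strict. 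Concretely I would re-run the proof of \eqref{sdeg:key2} with the multiset of incoming channels augmented by $\mu_{\textup{s}}$, so that the strictness-producing combinatorial configuration (a vertex with $\ge 2$ genuinely non-trivial incoming box-convolved channels) is present whenever $\PP[d\ge 1]>0$ and $\mu_{\textup{s}}$ is non-trivial. (3) In the degenerate case $\PP[d>1]=0$, i.e. $d\in\{0,1\}$ a.s. with $\PP[d=1]>0$, iterate: after $k$ steps the surveyed recursion accumulates $k$ independent survey terms along the surviving path, so $\calQ_{\textup{s}}^k\mu$ again contains a configuration with $\ge2$ non-trivial box-convolved factors (the $k\ge2$ stacked surveys), and the same strictness argument closes it for $k$ large enough depending only on how ``thin'' $\mu_{\textup{s}}$ is — here I would take $k=2$, since two independent non-trivial symmetric summands already suffice.

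The hard part will be making the strictness step rigorous, i.e. pinning down the precise quantitative statement: from a convolution of two (or more) box-convolved symmetric channels $\, (B_\phi\boxconv\nu_1)*(B_\phi\boxconv\nu_2)$ with $\nu_1,\nu_2$ non-trivial, one must extract that it is a \emph{strictly} degraded version of $B_{\phi'}\boxconv\bigl(B_\phi\boxconv(\nu_1*\nu_2)\bigr)$ for some explicit $\phi'>0$ — this is exactly the extra mileage beyond the classical STL, and it is where the ``strengthening of the stringy tree lemma'' advertised in the abstract does its work. I expect the cleanest route is to isolate the relevant inequality as a standalone lemma about box-convolution of two channels (a two-variable strict STL), prove it by an explicit coupling that reveals the lost information as a genuine noisy flip of controllable magnitude, and then bootstrap from two factors to the general case by induction, tracking how $\phi'$ shrinks. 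Everything else — the reduction from $\calQ_{\textup{s}}$ to $\calQ$, the case split, the iteration count $k$ — is bookkeeping on top of that lemma and the already-available Propositions~\ref{prop:box_alter}, \ref{prop:q_express}, \ref{prop:trans_deg_stdeg}.
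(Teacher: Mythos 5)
Your decomposition $\calQ_{\textup{s}}\mu=\calQ\mu*\mu_{\textup{s}}$ matches the paper, and you correctly anticipate that a ``strict stringy tree lemma'' is the technical engine (this is Proposition~\ref{lemma:uniq_4b}/Corollary~\ref{coro:uniq_4} in the paper). But the heart of the theorem — extracting full $\prec$ over the entire $\beta$-curve range — is not established by your sketch, and one of your concrete claims is false. In the $\PP[d>1]=0$ branch you assert that $k=2$ always suffices because ``two independent non-trivial symmetric summands already suffice.'' This is wrong: the paper's Proposition~\ref{lemma:sur_imp} shows that after $k$ iterations the $\beta$-curve gap is strict only for $s>s_k$ with $s_k=F_\delta(s_{k-1})-r_{\max}(\mu_{\textup{s}})$ and $s_0=F_\phi(r_{\max}(\mu))$, so the required $k$ grows roughly like $F_\phi(r_{\max}(\mu))/r_{\max}(\mu_{\textup{s}})$ — unboundedly large when $\mu=B_0$, $\phi\to 0$, and $\mu_{\textup{s}}$ is very noisy. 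The intuition ``two summands suffice'' fails because each deeper survey is attenuated by an extra $B_\delta$-box-convolution, so its strictness window (of width $\sim 2r_{\max}(\mu_{\textup{s}})$ in $s$-coordinates) shrinks, and you need enough of them to tile $[0, s_0]$.

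You also mis-identify the mechanism by which the survey produces strictness. You frame it as ``the survey plays the role of an extra branch,'' making $d\ge 2$ behave like $d\ge 3$ in Theorem~\ref{lm:uniq_key}. That analogy doesn't carry a proof, because the survey is \emph{not} a box-convolved channel: in $B_\phi\boxconv\calQ_{\textup{s}}\mu$ the survey gets hit by $B_\phi$, while in $\calQ_{\textup{s}}(B_\phi\boxconv\mu)$ it does not — and it is exactly the slack $B_\phi\boxconv\mu_{\textup{s}}\prec\mu_{\textup{s}}$ that opens the gap, not an extra ``string.'' The paper isolates this in Proposition~\ref{prop:sur_basic} (one factor box-convolved, one not), producing a $\prec_{s_{\min}}$ with $s_{\min}=F_\phi(r_{\max}(\calQ\mu))-r_{\max}(\mu_{\textup{s}})$, and then must show in the $\PP[d>1]>0$ case that the two strictness windows — the one from Proposition~\ref{prop:sur_basic} and the one from the no-survey $\prec_0$ of equation~\eqref{eq:si_7} propagated through the rule of convolution — overlap enough to cover $[0, r_{\max}(B_\phi\boxconv\calQ_{\textup{s}}\mu)]$. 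That range-covering computation (equations~\eqref{eq:si_6}--\eqref{eq:si_9}), which hinges on $\tilde r_Q>F_\phi(r_Q)$, is the actual content you would need to supply; ``applies verbatim'' does not discharge it. A minor typographical note: your statement of the needed two-variable lemma has the direction of degradation reversed — you want $B_\phi\boxconv(\nu_1*\nu_2)\preceq B_{\phi'}\boxconv\bigl((B_\phi\boxconv\nu_1)*(B_\phi\boxconv\nu_2)\bigr)$, not the other way around.
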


Assuming the correctness of Theorem \ref{thm:key_survey},  the proof is obtained as follows.  

\begin{proof}[Proof of Theorem \ref{thm: uniq_s}] 
We focus on the non-trivial case where $\PP[d=0]<1$, otherwise, we have $\mathcal{Q}_{\textup{s}}\mu=\mu_{\textup{s}}$ and Theorem \ref{thm: uniq_s} clearly holds. 
The key observation here is that Theorem \ref{thm:key_survey} plays the same role as Theorem \ref{lm:uniq_key} in the no survey case. Hence, by following the same steps in the proof of Theorem \ref{thm: uniq}, 
one can show that 
there is at most one non-trivial symmetric BP fixed point. 
On the other hand, by the monotone convergence property of degradation, the BP recursion with noiseless leaf observation (i.e., $\mu_{\textup{r}}(\{+\infty\})=1$) converges weakly to a symmetric fixed point. Note that $\mu_{\textup{s}}$ is non-trivial. Any BP fixed point with respect to $\calQ_{\textup{s}}$ must also be non-trivial. 
This proves the existence of $\mu^*$. To summarize, there is exactly one unique non-trivial symmetric BP fixed point.  

The convergence of 
$\mu_h = \mathcal{Q}_{\textup{s}}^h \mu$ to  $\mu^*$ can be proved  by sandwiching between the BP recursions initialized by $B_{0}$ and $\delta_0$. The first sequence $\mathcal{Q}_{\textup{s}}^h B_0$ corresponds to the noiseless leaf observation case, which converges to $\mu^*$. The second sequence $\mathcal{Q}_{\textup{s}}^h \delta_0$ corresponds to the no leaf observation case (i.e., $\mu_{\textup{r}}(\{0\})=1$), which is monotonically increasing (since they each corresponds to a channel with an increasing set of surveys), and hence 
convergent to same the symmetric fixed point (due to uniqueness).  
Therefore, the sandwiching 
of Prop. \ref{prop:sand} can be applied with the comparison $\mathcal{Q}_{\textup{s}}^h\delta_0\preceq \mu_{h}\preceq \mathcal{Q}_{\textup{s}}^h B_0$, which is due to the natural coupling. 
\end{proof}

\begin{remark}
The existence of survey channels significantly affects the properties of BP fixed points. As we have shown, when the survey channels have a non-zero capacity, there is always one unique BP fixed point, and it is non-trivial. However, when survey channels are absent, either the trivial and non-trivial fixed points coexist, or only the trivial solution remains. This difference is also reflected in the statements of the uniqueness theorems.


\end{remark}










\section{Technical details}\label{sec:pt_uniq}

In this section, we prove the key intermediate steps, i.e., 
Theorem \ref{lm:uniq_key} and Theorem \ref{thm:key_survey}.
The gist of the proof is to 
characterize the commutation rules between the building blocks of the BP operators. 
   In particular, we use the commutativity of box convolution, which is equivalently $F_\phi \circ F_\delta = F_\delta \circ F_\phi$. Then we develop an exchange rule between convolution and box convolution. For convenience, we make the following definition.
\begin{defn}
\label{def:beta_s}
For any symmetric distribution $\mu$, we define its $\beta$\emph{-curve} as a function on domain $t \in\mathbb{R}$ given by the following equation.
\begin{align}
    \beta(t;\mu)\triangleq \mathbb{E}_{R\sim\mu}\left[\max\left\{\tanh\frac{|R|}{2},|t|\right\}\right].
\end{align}
    We also define
    \begin{align*}
        t_{\max}(\mu)&\triangleq \inf\{t\in[0,1]\ |\ \beta(t) = t\}. 
        \end{align*}
\end{defn}

The meaning of the $\beta$-curve is given by the next two results (one trivial, one classical). 

\begin{prop}\label{prop:rem:beta_err}
For any $t\in[0,1]$ and symmetric $\mu$, the function value $\frac{1}{2}(1-\beta(t;\mu))$    equals the minimum error probability 
for a binary hypothesis testing problem with LLR  distribution given by $\mu$ and prior given by $\Ber(\frac{1-t}{2})$.
\end{prop}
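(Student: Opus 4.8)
The plan is to reduce the minimum error probability to an explicit linear functional of $\mu$, and then verify the claimed identity on the building blocks $B_\phi$, extending by linearity via the mixture decomposition of symmetric distributions.

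\emph{Step 1: a closed form for the Bayes error.} First I would recall that the log-likelihood ratio is a sufficient statistic for binary hypothesis testing, so the minimum error probability depends on the observation channel only through its LLR distribution. Writing the prior as $\pi_0 = \PP[X=0] = \frac{1+t}{2}$ and $\pi_1 = \PP[X=1] = \frac{1-t}{2}$, and using that under $X=1$ the LLR has law $d\mu^-(r) = e^{-r}\, d\mu(r)$ (the complement distribution of Definition~\ref{def:llrd}, which for symmetric $\mu$ is just the reflection of $\mu$), the MAP rule gives
\[
P_e^*(\mu) = \int \min\bigl(\pi_0 \, d\mu(r),\ \pi_1\, d\mu^-(r)\bigr) = \EE_{R\sim\mu}\Bigl[\min\Bigl(\tfrac{1+t}{2},\ \tfrac{1-t}{2}\,e^{-R}\Bigr)\Bigr].
\]
The point of this expression is that it is an integral of a fixed bounded function against $\mu$, hence \emph{linear} in $\mu$; and $\beta(t;\mu)$ is linear in $\mu$ as well, directly from Definition~\ref{def:beta_s}.

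\emph{Step 2: reduction to $B_\phi$.} I would then invoke the representation of an arbitrary symmetric $\mu$ as a mixture $\mu = \int_{[0,1/2]} B_\phi\, d\rho(\phi)$, where $\rho$ is the law of $\tfrac12\bigl(1-\tanh\tfrac{|R|}{2}\bigr)$ for $R\sim\mu$ — the same decomposition used in the proof of Proposition~\ref{prop:box_alter}. By Fubini (all integrands are nonnegative and bounded), both $\mu\mapsto P_e^*(\mu)$ and $\mu\mapsto \tfrac12\bigl(1-\beta(t;\mu)\bigr)$ commute with this mixture, so it suffices to establish $P_e^*(B_\phi) = \tfrac12\bigl(1-\beta(t;B_\phi)\bigr)$ for every $\phi\in[0,1/2]$.

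\emph{Step 3: the $B_\phi$ computation.} For $\mu = B_\phi$ the hypothesis test is exactly the problem of decoding the input of a $\mathrm{BSC}(\phi)$ from its output under the prior $\Ber(\tfrac{1-t}{2})$. Comparing the four joint probabilities $\pi_x\,\PP[Y=y\mid X=x]$ and taking, for each output symbol, the smaller one (a short split into the cases $\phi\le\tfrac{1-t}{2}$ and $\phi>\tfrac{1-t}{2}$) yields $P_e^*(B_\phi) = \min\bigl(\phi,\tfrac{1-t}{2}\bigr)$. On the other side, $\tanh\tfrac{|R|}{2}=1-2\phi$ is deterministic under $B_\phi$ (since $B_\phi$ is supported on $\pm\ln\frac{1-\phi}{\phi}$ with $\phi\le\tfrac12$), so $\beta(t;B_\phi) = \max(1-2\phi,\,t)$ and hence $\tfrac12\bigl(1-\beta(t;B_\phi)\bigr) = \tfrac12\min(2\phi,\,1-t) = \min\bigl(\phi,\tfrac{1-t}{2}\bigr)$, which matches. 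The main thing to be careful about is Step 1 — stating the sufficient-statistic reduction and the Bayes-error formula cleanly for a general (possibly continuous) observation space, i.e. justifying that $P_e^*$ really is a function of $\mu$ alone and equals the displayed integral; once that is in place, Steps 2 and 3 are routine, with only minor attention needed at the endpoints ($t=1$, $\phi\in\{0,\tfrac12\}$, and the atom of $\mu$ at $+\infty$).
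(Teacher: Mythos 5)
Your proof is correct, but it follows a genuinely different route from the paper's. The paper works directly with a general symmetric $\mu$: it writes out the MAP error as $\frac{1+t}{2}\mu\bigl[(-\infty,-2\tanh^{-1}t]\bigr]+\frac{1-t}{2}\int e^{-r}\mathbbm{1}\{r>-2\tanh^{-1}t\}\,d\mu(r)$, then uses the symmetry identity $\EE_{R\sim\mu}[f(R)]=\EE_{R\sim\mu}\bigl[\tfrac{e^{R/2}}{e^{R/2}+e^{-R/2}}f(R)+\tfrac{e^{-R/2}}{e^{R/2}+e^{-R/2}}f(-R)\bigr]$ to symmetrize the integrand, after which the result reads off as $\frac12(1-\beta(t;\mu))$. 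You instead observe that both sides are linear functionals of $\mu$ (for you this hinges on writing the Bayes error as $\EE_{R\sim\mu}[\min(\pi_0,\pi_1 e^{-R})]$ rather than a priori a concave functional of the channel), invoke the $B_\phi$-mixture representation of symmetric distributions from the proof of Proposition~\ref{prop:box_alter}, and reduce to a single BSC calculation. Your route is more modular and makes the identity essentially obvious once the formula $\beta(t;B_\phi)=\max(1-2\phi,t)$ is in hand; it also makes explicit a useful structural fact (linearity of the Bayes error in the LLR law) that the paper's symmetrization argument leaves implicit. The paper's route is more self-contained — it never needs the mixture decomposition — and gives a one-shot verification without a Fubini step. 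Both are valid; the edge cases you flag ($t=1$, $\phi\in\{0,\tfrac12\}$, mass at $+\infty$) indeed check out on both sides.
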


See Appendix \ref{app:pp_rem:beta_err} for a proof of the first result. The next result is a celebrated Blackwell–Sherman–Stein (BSS) theorem \cite{dbc,sso,csn,
blackwell1953equivalent} (an equivalent form is also stated in \cite[Theorem
4.74]{richardson2008modern}), which connects degradation and the $\beta$-curves.
\begin{thm}\label{thm:ftdeg}[Blackwell–Sherman–Stein]
For any pair of symmetric distributions $\nu$ and $\mu$, we have $\nu\preceq
\mu$ if and only if $\beta(t;\nu)\leq \beta(t;\mu)$ for all $t\in[0, 1]$.
\end{thm}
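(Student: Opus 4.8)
Throughout write $\Lambda \triangleq \tanh\frac{|R|}{2}\in[0,1]$ and let $\rho_\mu$ (resp.\ $\rho_\nu$) denote the law of $\Lambda$ when $R\sim\mu$ (resp.\ $R\sim\nu$); this is just the pushforward under $\phi\mapsto 1-2\phi$ of the mixing measure in the Blackwell-type representation $\mu=\int B_\phi\,d\pi_\mu(\phi)$ established in the proof of Proposition~\ref{prop:box_alter}. For necessity, suppose $\nu\preceq\mu$. By Definition~\ref{def:degr} the experiment with LLR law $\nu$ is obtained from the one with LLR law $\mu$ by a symmetric channel, so any test built on the $\nu$-observation can be run after first producing the $\mu$-observation; hence the minimum error probability under $\nu$ is at least that under $\mu$ for every prior. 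Taking the prior $\Ber(\frac{1-t}{2})$ and applying Proposition~\ref{prop:rem:beta_err} gives $\tfrac12(1-\beta(t;\nu))\ge\tfrac12(1-\beta(t;\mu))$, i.e.\ $\beta(t;\nu)\le\beta(t;\mu)$, first for $t\in[0,1]$ and then for all real $t$ since $\beta(t;\cdot)$ depends only on $|t|$.

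For sufficiency, the plan is to first translate the $\beta$-curve inequality into a stochastic-order statement on $\rho_\mu,\rho_\nu$ and then realize that order by explicit symmetric channels. Since $\max\{\lambda,t\}=t+(\lambda-t)^+$ for $t,\lambda\in[0,1]$, we have $\beta(t;\mu)=t+\EE_{\rho_\mu}[(\Lambda-t)^+]$ and likewise for $\nu$, so the hypothesis ``$\beta(t;\nu)\le\beta(t;\mu)$ for all $t\in[0,1]$'' is exactly ``$\EE_{\rho_\nu}[(\Lambda-t)^+]\le\EE_{\rho_\mu}[(\Lambda-t)^+]$ for all $t\in[0,1]$''. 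As every increasing convex function on $[0,1]$ is a constant plus a nonnegative mixture of the angle functions $\lambda\mapsto(\lambda-t)^+$, this is precisely the increasing convex order $\rho_\nu\le_{\mathrm{icx}}\rho_\mu$. By the classical factorization of the increasing convex order there is then a probability measure $\rho$ on $[0,1]$ with $\rho_\nu\le_{\mathrm{st}}\rho\le_{\mathrm{cx}}\rho_\mu$.

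It remains to build the two degradations $\nu\preceq\hat\mu\preceq\mu$ for a channel $\hat\mu$ whose $\Lambda$-law is $\rho$, using the interpretation (Remark~\ref{remark:box_phy}) of a symmetric distribution as ``reveal the crossover $\Phi=\frac{1-\Lambda}{2}$ to the receiver, then pass the bit through $\mathrm{BSC}(\Phi)$''. For $\rho\le_{\mathrm{cx}}\rho_\mu$, Strassen's theorem gives a coupling with $\hat\Lambda=\EE[\Lambda\mid\mathcal G]\sim\rho$ for $\Lambda\sim\rho_\mu$ and a sub-$\sigma$-algebra $\mathcal G$; the channel that, applied to the $\mu$-output, reveals only $\mathcal G$ (plus the received bit) is symmetric, and conditionally on $\mathcal G$ the induced channel is a BSC with unrevealed random crossover, hence $\mathrm{BSC}(\EE[\Phi\mid\mathcal G])=\mathrm{BSC}(\frac{1-\hat\Lambda}{2})$; composing with the deterministic map keeping only $\hat\Lambda$ and the bit yields $\hat\mu\preceq\mu$ with $\Lambda$-law $\rho$. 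For $\rho_\nu\le_{\mathrm{st}}\rho$, couple $\Lambda_\nu\le\hat\Lambda$ a.s.\ (Strassen), and from the $\hat\mu$-output apply an independent extra bit-flip with probability $\frac12(1-\Lambda_\nu/\hat\Lambda)\in[0,\tfrac12]$ and then keep only $\Lambda_\nu$ and the bit; since binary-convolving crossover $\frac{1-\hat\Lambda}{2}$ with $\frac12(1-\Lambda_\nu/\hat\Lambda)$ gives crossover $\frac{1-\Lambda_\nu}{2}$, this symmetric channel turns $\hat\mu$ into a BMS channel with LLR law exactly $\nu$. Transitivity of degradation then yields $\nu\preceq\mu$, completing the proof.

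The main obstacle is the sufficiency direction, and within it the two classical stochastic-order inputs—the characterization of $\le_{\mathrm{icx}}$ by the $(\,\cdot-t)^+$ functionals and its factorization through $\le_{\mathrm{st}}$ and $\le_{\mathrm{cx}}$—together with the measure-theoretic care needed so that the channels built in the last paragraph are genuinely symmetric kernels in the sense of Definition~\ref{def:degr}: one must choose jointly measurable versions of the Strassen couplings, and check that the construction behaves correctly on the atom of $\mu$ at $R=+\infty$ (where $\Lambda=1$). An equivalent route would be to invoke Blackwell's randomization criterion directly, first using the symmetry of the experiments to reduce the comparison to symmetric priors (whose Bayes risks are the values $\tfrac12(1-\beta(t;\cdot))$) and then symmetrizing any resulting randomization kernel to make it sign-equivariant; this merely relocates the same bookkeeping into the proof of the randomization criterion.
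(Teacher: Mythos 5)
The paper does not prove this theorem at all: it is stated as the classical Blackwell--Sherman--Stein result and cited to \cite{dbc,sso,csn,blackwell1953equivalent} and \cite[Theorem~4.74]{richardson2008modern}. The only gesture toward an argument in the paper is the one-line remark in Appendix~\ref{app:deg_detail} that the ``easy'' direction (degradation implies the $\beta$-curve inequalities) follows from comparing Bayes errors via Proposition~\ref{prop:rem:beta_err}, which is exactly your necessity argument. So there is no paper proof to compare against; you have supplied one.

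Your proof of the hard (sufficiency) direction is a standard and correct route for the binary-input case. The chain of ideas --- rewriting $\beta(t;\cdot)$ as $t+\EE[(\Lambda-t)^+]$ so that the hypothesis becomes $\rho_\nu\le_{\mathrm{icx}}\rho_\mu$, factoring $\le_{\mathrm{icx}}$ through $\le_{\mathrm{st}}$ and $\le_{\mathrm{cx}}$, realizing $\le_{\mathrm{cx}}$ by a martingale coupling (Strassen) and hence a ``coarsen the revealed crossover'' channel, and realizing $\le_{\mathrm{st}}$ by an extra independent bit-flip with crossover $\frac12(1-\Lambda_\nu/\hat\Lambda)$ --- is sound, and the algebra $1-2(\hat\Phi\ast q)=\hat\Lambda\cdot(\Lambda_\nu/\hat\Lambda)=\Lambda_\nu$ checks out. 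Two small points worth making explicit if you were to write this up fully: (i) the flip probability is ill-defined on the event $\hat\Lambda=0$, where you should simply take any value in $[0,\tfrac12]$ since $\Lambda_\nu=0$ there too; (ii) the atom at $R=+\infty$ ($\Lambda=1$) is harmless because it sits at the top of both stochastic orders and the Strassen coupling can be chosen to send it to itself; and (iii) one should verify that the two intermediate channels are genuinely \emph{symmetric} kernels in the sense of Definition~\ref{def:degr}, which is true because in both constructions the revealed side-information ($\hat\Lambda$ or $\Lambda_\nu$) is generated independently of the bit value given the old side-information, so the transition kernel is invariant under $(Y,Z)\to(-Y,-Z)$. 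You flagged all three of these; they are real but routine. In short, this is a correct, self-contained proof of a result that the paper treats as a black-box citation.
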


Our principal analytical tool is the following extension of the BSS result to the relation of strict degradation (see
Appendix \ref{app:pl_st_beta} for the proof).
\begin{prop}\label{lemma:strict_beta}
For any non-trivial
symmetric distributions $\mu$ and $\nu$, the following statements are equivalent.

\begin{enumerate}
    \item 
$\nu\prec\mu$.
\item 
$\beta(t;\nu)<\beta(t;\mu)$ for all $t\in[0, t_{\max}(\nu)]$. 
\item $\beta(t;\nu)<\beta(t;\mu)$ for all $t\in[0, t_{\max}(\mu))$ and $t_{\max}(\nu)<t_{\max}(\mu)$.
\end{enumerate}
\end{prop}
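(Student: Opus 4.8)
\textbf{Proof proposal for Proposition~\ref{lemma:strict_beta}.}

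The plan is to translate the definition of strict degradation into the language of $\beta$-curves using the Blackwell–Sherman–Stein theorem (Theorem~\ref{thm:ftdeg}), and then to exploit the explicit action of box-convolution by a $B_\phi$ on $\beta$-curves. First I would record the key computational fact: for $\phi\in[0,\tfrac12]$, $\beta(t;B_\phi\boxconv\mu)$ can be written explicitly in terms of $\beta(\cdot;\mu)$. Indeed, using Proposition~\ref{prop:box_alter} one has $\tanh\frac{|Z|}{2}=(1-2\phi)\tanh\frac{|R|}{2}$ when $Z\sim B_\phi\boxconv\mu$ and $R\sim\mu$, so
\begin{align*}
\beta(t;B_\phi\boxconv\mu)=\mathbb{E}_{R\sim\mu}\left[\max\left\{(1-2\phi)\tanh\tfrac{|R|}{2},\,|t|\right\}\right].
\end{align*}
In particular $B_\phi\boxconv\mu$ is a ``contracted'' version of $\mu$: its $\beta$-curve stays at the value $t$ for $t$ up to roughly $(1-2\phi)t_{\max}(\mu)$, and for $t$ above that it is strictly below $\beta(t;\mu)$ (whenever $\mu$ is non-trivial so that $t_{\max}(\mu)>0$). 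This computation also pins down $t_{\max}(B_\phi\boxconv\mu)<t_{\max}(\mu)$ and, conversely, shows that for any non-trivial $\nu$ one can choose $\phi\in(0,\tfrac12]$ making $B_\phi\boxconv\mu$ have $t_{\max}$ arbitrarily small.

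With this in hand I would prove the cyclic chain of implications. For $(1)\Rightarrow(3)$: if $\nu\preceq B_\phi\boxconv\mu$ for some $\phi\in(0,\tfrac12]$, then by Theorem~\ref{thm:ftdeg} $\beta(t;\nu)\le\beta(t;B_\phi\boxconv\mu)$ for all $t$, and the displayed formula gives $\beta(t;B_\phi\boxconv\mu)<\beta(t;\mu)$ strictly on $[0,t_{\max}(\mu))$ (using $\phi>0$ and non-triviality of $\mu$); combining, $\beta(t;\nu)<\beta(t;\mu)$ on $[0,t_{\max}(\mu))$, and since $\beta(\cdot;\nu)$ is a $\beta$-curve lying strictly below $\beta(\cdot;\mu)$ there, $t_{\max}(\nu)<t_{\max}(\mu)$ follows by a continuity/monotonicity argument about where a $\beta$-curve first meets the diagonal. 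The implication $(3)\Rightarrow(2)$ is immediate since $t_{\max}(\nu)<t_{\max}(\mu)$ forces $[0,t_{\max}(\nu)]\subseteq[0,t_{\max}(\mu))$. The substantive direction is $(2)\Rightarrow(1)$: given $\beta(t;\nu)<\beta(t;\mu)$ on the compact interval $[0,t_{\max}(\nu)]$, I would use compactness to extract a uniform gap, i.e.\ $\beta(t;\nu)\le\beta(t;\mu)-\epsilon$ on $[0,t_{\max}(\nu)]$ for some $\epsilon>0$ — here one must be slightly careful near $t=t_{\max}(\nu)$ where both curves meet the diagonal, so the ``gap'' should be measured against $\beta(t;\mu)$ versus the diagonal value, which is what actually matters. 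Then I would show that for $\phi>0$ small enough, $\beta(t;B_\phi\boxconv\mu)\ge\beta(t;\nu)$ for all $t\in[0,1]$: on $[0,t_{\max}(\nu)]$ this uses the uniform gap together with the fact that $\beta(t;B_\phi\boxconv\mu)\to\beta(t;\mu)$ uniformly as $\phi\to0$; on $[t_{\max}(\nu),1]$ one uses that $\beta(t;\nu)$ has already merged with the diagonal $|t|$ (by definition of $t_{\max}(\nu)$ and the fact that $\beta$-curves, once they touch the diagonal, stay on it) while $\beta(t;B_\phi\boxconv\mu)\ge |t|$ always. Applying Theorem~\ref{thm:ftdeg} then gives $\nu\preceq B_\phi\boxconv\mu$, hence $\nu\prec\mu$.

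The main obstacle I anticipate is the behavior at and near $t=t_{\max}(\nu)$: the strict inequality in $(2)$ degenerates there (both $\beta$-curves are squeezed to the diagonal near their respective $t_{\max}$ values, and in fact $\beta(t_{\max}(\nu);\nu)=t_{\max}(\nu)$), so a naive compactness argument over the closed interval does not directly yield a usable uniform bound. Resolving this requires understanding the local structure of $\beta$-curves near where they hit the diagonal — in particular that $\beta(t;\mu)>t$ strictly for $t<t_{\max}(\mu)$ and that $\beta(t;\mu)=t$ for $t\ge t_{\max}(\mu)$ — so that the relevant comparison on $[0,t_{\max}(\nu)]$ is really between $\beta(t;B_\phi\boxconv\mu)$ and the diagonal, which degrades gracefully as $\phi\to 0$. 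I would isolate these monotonicity and diagonal-crossing properties of $\beta$-curves as a preliminary lemma (they follow from the formula for $\beta$ as an expectation of $\max\{\cdot,|t|\}$, which is convex, non-decreasing in $|t|$, and $\ge|t|$ with equality iff $|t|\ge\tanh\frac{|R|}{2}$ a.s.) before assembling the three implications above.
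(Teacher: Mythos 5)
Your proof is correct and follows essentially the same route as the paper: the same cyclic chain of implications, the same homothety identity $\beta(t;B_\phi\boxconv\mu)=(1-2\phi)\beta\left(\frac{t}{1-2\phi};\mu\right)$ driving $(1)\Rightarrow(3)$, and the same compactness-plus-BSS argument for $(2)\Rightarrow(1)$, with the paper simply extracting the multiplicative gap $\phi=\min_{t\in[0,t_{\max}(\nu)]}\frac12\left(1-\frac{\beta(t;\nu)}{\beta(t;\mu)}\right)$ directly instead of your additive $\epsilon$ plus the uniform bound $\beta(t;\mu)-\beta(t;B_\phi\boxconv\mu)\le 2\phi$. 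One small correction: your worry that ``both curves meet the diagonal'' near $t_{\max}(\nu)$ is unfounded under hypothesis $(2)$, since $(2)$ forces $\beta(t_{\max}(\nu);\mu)>\beta(t_{\max}(\nu);\nu)=t_{\max}(\nu)$, so the compactness argument on the closed interval goes through without any special care at the endpoint.
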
 

Together with Prop.~\ref{prop:rem:beta_err} we can see that strict degradation between two
distributions can be checked by comparing
their associated Bayesian inference problems. The proof of Proposition \ref{lemma:strict_beta}
relies on a several elementary steps.  One such step is the fact that box convolution with any
$B_{\phi}$ can be viewed as a homothetic transformation:
\begin{prop}\label{prop:beta_box}
For any $\phi\in[0,\frac{1}{2})$, the $\beta$-curve of $B_\phi$ is given by
  \begin{align} \label{eq:beta_b}
        \beta(t; B_\phi)=\max\{|t|, 1-2\phi\}.
    \end{align} 
More generally, for any symmetric $\mu$, 
   \begin{align}
        \beta(t; B_\phi\boxconv\mu)&=(1-2\phi)\beta\left(\frac{t}{1-2\phi};\mu\right).\label{eq:beta_bconv}
    \end{align} 
\end{prop}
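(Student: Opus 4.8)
The plan is to reduce both identities to one-line computations built on the alternative description of box convolution in Proposition~\ref{prop:box_alter}. I would first handle the atomic case \eqref{eq:beta_b}. Since $B_\phi$ is supported on $\{\pm\ln\frac{1-\phi}{\phi}\}$, any $R\sim B_\phi$ satisfies $e^{|R|}=\frac{1-\phi}{\phi}$, so
$$\tanh\frac{|R|}{2}=\frac{e^{|R|}-1}{e^{|R|}+1}=\frac{\tfrac{1-\phi}{\phi}-1}{\tfrac{1-\phi}{\phi}+1}=1-2\phi$$
almost surely (with the convention $\tanh(+\infty)=1$ covering $\phi=0$). Substituting this constant into Definition~\ref{def:beta_s} gives $\beta(t;B_\phi)=\mathbb{E}[\max\{1-2\phi,|t|\}]=\max\{|t|,1-2\phi\}$, which is \eqref{eq:beta_b}.

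For the general identity \eqref{eq:beta_bconv} I would take independent $X\sim B_\phi$ and $Y\sim\mu$, so that by Proposition~\ref{prop:box_alter} the distribution $B_\phi\boxconv\mu$ is the law of $Z=2\tanh^{-1}\!\bigl(\tanh\tfrac X2\tanh\tfrac Y2\bigr)$. Because $\tanh$ is odd and strictly increasing, $\tanh\frac{|Z|}{2}=\bigl|\tanh\tfrac Z2\bigr|=\bigl|\tanh\tfrac X2\bigr|\,\bigl|\tanh\tfrac Y2\bigr|=(1-2\phi)\tanh\frac{|Y|}{2}$, where I used the deterministic value of $\tanh\frac{|X|}{2}$ from the previous step. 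Plugging this into $\beta(t;B_\phi\boxconv\mu)=\mathbb{E}\bigl[\max\{\tanh\tfrac{|Z|}{2},|t|\}\bigr]$ and pulling the nonnegative factor $1-2\phi$ out of the maximum via $\max\{ca,cb\}=c\max\{a,b\}$ gives
$$\beta(t;B_\phi\boxconv\mu)=\mathbb{E}_{Y\sim\mu}\Bigl[\max\Bigl\{(1-2\phi)\tanh\tfrac{|Y|}{2},\,|t|\Bigr\}\Bigr]=(1-2\phi)\,\mathbb{E}_{Y\sim\mu}\Bigl[\max\Bigl\{\tanh\tfrac{|Y|}{2},\,\tfrac{|t|}{1-2\phi}\Bigr\}\Bigr]=(1-2\phi)\,\beta\!\Bigl(\tfrac{t}{1-2\phi};\mu\Bigr),$$
where the last equality uses $|t/(1-2\phi)|=|t|/(1-2\phi)$. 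As a consistency check, taking $\mu=B_0$ (the box-convolution identity, with $\beta(t;B_0)=1$ on $[-1,1]$) recovers \eqref{eq:beta_b}.

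I do not expect a genuine obstacle: the argument is a direct substitution. The only points deserving attention are (i) invoking Proposition~\ref{prop:box_alter} --- rather than Definition~\ref{def:boxc} --- so that the computation is valid for an arbitrary symmetric $\mu$ and not merely for finite mixtures of the $B_\phi$'s; and (ii) the degenerate endpoints, namely $\phi=0$, $|t|\in\{0,1\}$, and atoms of $\mu$ (or of $B_\phi\boxconv\mu$) at $\pm\infty$, all of which are accommodated by the convention $\tanh(\pm\infty)=\pm1$ and by the fact that both sides of \eqref{eq:beta_bconv} are manifestly finite and continuous on the relevant ranges.
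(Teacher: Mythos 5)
Your proof is correct and follows essentially the same route the paper intends: it derives \eqref{eq:beta_b} by computing $\tanh(|R|/2)=1-2\phi$ deterministically for $R\sim B_\phi$ and plugging into Definition~\ref{def:beta_s}, then derives \eqref{eq:beta_bconv} from the alternative description of box convolution in Proposition~\ref{prop:box_alter} plus the factoring $\max\{ca,cb\}=c\max\{a,b\}$. The paper's proof is only a one-line pointer to the definitions; your write-up supplies exactly the intended details, including the correct handling of the $\phi=0$ endpoint via $\tanh(+\infty)=1$.
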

\begin{proof}
Equation (\ref{eq:beta_b}) directly follows from the definition of $\beta$-curves.
Then equation (\ref{eq:beta_bconv}) can be proved from the second definition of box convolution.  
\end{proof}

In order to prove the needed $\beta$-curve gaps 
for Theorem \ref{lm:uniq_key} and Theorem \ref{thm:key_survey},
we define the following refinement of strict degradation. 

\begin{defn}\label{def:semi_strict}
For any symmetric distributions $\mu$, $\nu$, and parameter $s$, we define $\nu\prec_s\mu$ if $\beta(t;\nu)<\beta(t;\mu)$ for all $t$ satisfying  
$$t\in\left(\tanh{\left(\frac{s}{2}\right)}
, 
t_{\max}(\nu)\right]$$
and $\beta(t;\nu)\leq \beta(t;\mu)$ for all other $t$. 
We also define
$$    r_{\max}(\mu)\triangleq \sup\{r\in[0,+\infty]\ |\ \mu([r,+\infty])>0\}=2\tanh^{-1}{t_{\max}(\mu)}.
    $$
\end{defn}

The following propositions are proved in 
Appendix \ref{app:pl_bounded_crit} and Appendix \ref{app:pl_uniq4b}, respectively.

\begin{prop}\label{prop:bounded_crit}
Let $\mu$, $\nu$, $\tau$ be symmetric distributions.

\begin{enumerate}
    \item 
We have $\nu\prec_s \mu$ if  $\beta(t;\nu)<\beta(t;\mu)$ for all $t\in \left(\tanh{\left(\frac{s}{2}\right)},
t_{\max}(\mu)\right)$, $\beta(t;\nu)\leq \beta(t;\mu)$ for all other $t$, and $t_{\max}(\nu)<t_{\max}(\mu)$; when $s<r_{\max}(\mu)$, the converse also holds true.
    \item If $\nu\preceq\tau\prec_s \mu$ or $\nu\prec_s\tau\preceq \mu$, then $\nu\prec_s\mu$.
    \item 
$\nu\prec \mu$ implies $\nu\prec_s\mu$ for any $s\geq - r_{\max}(\mu)$, the latter  implies $\nu\preceq\mu$. 
\item $\nu\prec_s\mu$ is always true when  $s\geq r_{\max}(\nu)$ and $\nu\preceq\mu$; if $\nu\prec_s\mu$ for any $s<0$, then $\nu\prec\mu$. 
\end{enumerate}
\end{prop}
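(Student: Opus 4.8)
The plan is to establish the four parts by directly unwinding the definition of $\prec_s$ (Definition~\ref{def:semi_strict}) and leaning on the BSS theorem (Theorem~\ref{thm:ftdeg}), the $\beta$-curve characterization of strict degradation (Proposition~\ref{lemma:strict_beta}), and the elementary continuity/monotonicity properties of $\beta$-curves. Throughout I will use that $\beta(t;\mu)$ is continuous, nondecreasing on $[0,1]$, convex-like with $\beta(t;\mu)\ge |t|$, and that $t_{\max}(\mu)$ is its smallest fixed point, with $\beta(t;\mu)=t$ for $t\ge t_{\max}(\mu)$; also $r_{\max}(\mu)=2\tanh^{-1}t_{\max}(\mu)$, so $\tanh(s/2)\ge t_{\max}(\mu)\iff s\ge r_{\max}(\mu)$.

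\emph{Part 1.} For the forward direction, suppose $\beta(t;\nu)<\beta(t;\mu)$ on $\left(\tanh(s/2),t_{\max}(\mu)\right)$, $\beta(t;\nu)\le\beta(t;\mu)$ elsewhere, and $t_{\max}(\nu)<t_{\max}(\mu)$. I must upgrade the strict inequality to hold on the whole interval $\left(\tanh(s/2),t_{\max}(\nu)\right]$. Since $t_{\max}(\nu)<t_{\max}(\mu)$, the point $t=t_{\max}(\nu)$ lies in the open interval $\left(\tanh(s/2),t_{\max}(\mu)\right)$ provided $\tanh(s/2)<t_{\max}(\nu)$, which holds in the nontrivial range (if $\tanh(s/2)\ge t_{\max}(\nu)$ the interval $\left(\tanh(s/2),t_{\max}(\nu)\right]$ is empty and the claim is vacuous on it — only the ``$\le$ everywhere'' part remains, which is given). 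So on $\left(\tanh(s/2),t_{\max}(\nu)\right]\subseteq\left(\tanh(s/2),t_{\max}(\mu)\right)$ strictness holds, and together with $\le$ elsewhere this is exactly $\nu\prec_s\mu$. For the converse when $s<r_{\max}(\mu)$, i.e. $\tanh(s/2)<t_{\max}(\mu)$: assume $\nu\prec_s\mu$. Then $\nu\preceq\mu$ by the $\le$-everywhere part and BSS, so $\beta(t;\nu)\le\beta(t;\mu)$ for all $t$; in particular $\beta(t_{\max}(\nu);\nu)=t_{\max}(\nu)$, so $t_{\max}(\mu)\ge t_{\max}(\nu)$. To get \emph{strict} separation of the fixed points, note that on $\left(t_{\max}(\nu),t_{\max}(\mu)\right)$ (if nonempty) we have $\beta(t;\nu)=t<\beta(t;\mu)$, and strictness of $\prec_s$ already forces this on the overlap with $\left(\tanh(s/2),t_{\max}(\nu)\right]$; a short argument using continuity at $t_{\max}(\nu)$ — $\beta(t_{\max}(\nu);\nu)=t_{\max}(\nu)$ and $\beta$ is strictly less than $\mu$'s just to the left — combined with $\beta(t;\mu)\ge t$ rules out $t_{\max}(\nu)=t_{\max}(\mu)$, giving $t_{\max}(\nu)<t_{\max}(\mu)$. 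Finally the strict inequality on $\left(\tanh(s/2),t_{\max}(\mu)\right)$ follows: on $\left(\tanh(s/2),t_{\max}(\nu)\right]$ it is the definition of $\prec_s$, and on $\left[t_{\max}(\nu),t_{\max}(\mu)\right)$ we have $\beta(t;\nu)=t\le \beta(t;\mu)$ with equality impossible since $t<t_{\max}(\mu)$ means $\beta(t;\mu)>t$ (as $t_{\max}(\mu)$ is the \emph{first} fixed point and $\beta(0;\mu)\ge 0$... more precisely $\beta(\cdot;\mu)-\mathrm{id}$ is nonincreasing-to-zero, so it is strictly positive before $t_{\max}(\mu)$).

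\emph{Parts 2--4.} Part~2 is a transitivity/composition statement mirroring Proposition~\ref{prop:trans_deg_stdeg}: if $\nu\preceq\tau$ then $\beta(\cdot;\nu)\le\beta(\cdot;\tau)$ everywhere (BSS) and $t_{\max}(\nu)\le t_{\max}(\tau)$; combining with $\tau\prec_s\mu$ gives $\beta(t;\nu)\le\beta(t;\tau)<\beta(t;\mu)$ on $\left(\tanh(s/2),t_{\max}(\tau)\right]\supseteq\left(\tanh(s/2),t_{\max}(\nu)\right]$ and $\le$ elsewhere, so $\nu\prec_s\mu$; the case $\nu\prec_s\tau\preceq\mu$ is symmetric, using that $\tau\preceq\mu\Rightarrow t_{\max}(\tau)\le t_{\max}(\mu)$ is not even needed since we only push the $\mu$-side up. Part~3: $\nu\prec\mu$ means (Proposition~\ref{lemma:strict_beta}, item~2) $\beta(t;\nu)<\beta(t;\mu)$ for all $t\in[0,t_{\max}(\nu)]$; if $s\ge -r_{\max}(\mu)$ then certainly $\left(\tanh(s/2),t_{\max}(\nu)\right]\subseteq[0,t_{\max}(\nu)]$, and $\le$ elsewhere follows from $\nu\prec\mu\Rightarrow\nu\preceq\mu$ (also in Proposition~\ref{lemma:strict_beta}), so $\nu\prec_s\mu$; the implication $\nu\prec_s\mu\Rightarrow\nu\preceq\mu$ is immediate from the ``$\le$ everywhere'' clause and BSS. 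Part~4: if $s\ge r_{\max}(\nu)$ then $\tanh(s/2)\ge t_{\max}(\nu)$, the interval $\left(\tanh(s/2),t_{\max}(\nu)\right]$ is empty, so $\nu\prec_s\mu$ reduces to $\beta(\cdot;\nu)\le\beta(\cdot;\mu)$, i.e. $\nu\preceq\mu$ — hence the stated equivalence under $\nu\preceq\mu$; conversely if $\nu\prec_s\mu$ for some $s<0$, then $\tanh(s/2)<0$, so the strict inequality holds on all of $[0,t_{\max}(\nu)]$ (indeed $(\tanh(s/2),t_{\max}(\nu)]\supseteq[0,t_{\max}(\nu)]$), which by Proposition~\ref{lemma:strict_beta}(item~2) is exactly $\nu\prec\mu$.

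\emph{Main obstacle.} I expect the only genuinely delicate point to be the converse in Part~1 — specifically, extracting the \emph{strict} fixed-point separation $t_{\max}(\nu)<t_{\max}(\mu)$ and the strict inequality on the full open interval $\left(\tanh(s/2),t_{\max}(\mu)\right)$ from the a priori weaker hypothesis $\nu\prec_s\mu$, which only asserts strictness up to $t_{\max}(\nu)$. The resolution hinges on the structural fact that $t\mapsto\beta(t;\mu)-t$ is nonincreasing and hits zero precisely at $t_{\max}(\mu)$, so it is strictly positive on $[0,t_{\max}(\mu))$; this is what lets me fill the gap $\left[t_{\max}(\nu),t_{\max}(\mu)\right)$ for free once I know $t_{\max}(\nu)<t_{\max}(\mu)$, and the latter in turn follows because if $t_{\max}(\nu)=t_{\max}(\mu)$ then evaluating the strict inequality of $\prec_s$ at points approaching $t_{\max}(\nu)$ from below contradicts continuity of $\beta(\cdot;\mu)$ together with $\beta(t_{\max}(\nu);\nu)=\beta(t_{\max}(\mu);\mu)=t_{\max}(\nu)$. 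Everything else is a routine bookkeeping of which subinterval of $[0,1]$ each hypothesis controls.
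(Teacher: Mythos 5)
Your overall strategy --- unwinding Definition~\ref{def:semi_strict} and comparing $\beta$-curves interval by interval, using BSS and Proposition~\ref{lemma:strict_beta} --- matches the paper's, and Parts~2 and~4 are correct as written. Two steps, however, have concrete gaps.

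In the converse of Part~1, your argument that $t_{\max}(\nu)<t_{\max}(\mu)$ (``$\beta$ is strictly less than $\mu$'s just to the left'' of $t_{\max}(\nu)$) presupposes that $(\tanh(s/2),t_{\max}(\nu)]$ is nonempty, i.e.\ $s<r_{\max}(\nu)$. When $s\ge r_{\max}(\nu)$ the relation $\nu\prec_s\mu$ carries no strictness at all, so the continuity argument has nothing to bite on; in that case you must instead read $t_{\max}(\nu)\le \tanh(s/2)<t_{\max}(\mu)$ directly from $r_{\max}(\nu)\le s<r_{\max}(\mu)$. The paper's proof splits into exactly these two subcases, and yours needs to as well. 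In Part~3, the claimed containment $(\tanh(s/2),t_{\max}(\nu)]\subseteq[0,t_{\max}(\nu)]$ is false whenever $s<0$, which is precisely the range that the hypothesis $s\ge -r_{\max}(\mu)$ is designed to allow: then $\tanh(s/2)<0$ and the interval dips below zero. To handle $t\in(\tanh(s/2),0)$ --- in particular those $t$ with $t_{\max}(\nu)<|t|<t_{\max}(\mu)$ --- you need the evenness of $\beta$-curves (the integrand in Definition~\ref{def:beta_s} depends on $|t|$ only) together with item~3 of Proposition~\ref{lemma:strict_beta}, which widens the strictness window to $[0,t_{\max}(\mu))$; evenness then extends it to $(-t_{\max}(\mu),t_{\max}(\mu))$, and $s\ge -r_{\max}(\mu)$ is exactly what places $(\tanh(s/2),t_{\max}(\nu)]$ inside that symmetric window. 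Item~2, which you invoke, only gives strictness for $|t|\le t_{\max}(\nu)$ and cannot reach those negative $t$. This is the observation the paper flags when it says the claim follows from Proposition~\ref{lemma:strict_beta} ``and the fact that $\beta$-curves are even functions.''
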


\begin{prop}\label{lemma:uniq_4b}
For any $\phi\in(0,1)$ and $\delta_1,\delta_2\in[0,1]$, let 
$s_{\min}=|{{F_{\phi}(r_{\max}(B_{\delta_1}))-F_{\phi}(r_{\max}(B_{\delta_2}))}}|.$
Then we have 
\begin{align}\label{eq:simple_beta_rel}
    B_\phi\boxconv (B_{\delta_1}*B_{\delta_2})\prec_{s_{\min}} (B_\phi\boxconv B_{\delta_1})*(B_\phi\boxconv B_{\delta_2}).
\end{align}
\end{prop}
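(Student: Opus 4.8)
The plan is to compute the $\beta$-curves of $\nu\triangleq B_\phi\boxconv(B_{\delta_1}*B_{\delta_2})$ and $\mu\triangleq(B_\phi\boxconv B_{\delta_1})*(B_\phi\boxconv B_{\delta_2})$ explicitly and then apply the first item of Proposition~\ref{prop:bounded_crit}. Using $B_\delta=B_{1-\delta}$ and $F_{1-\phi}=-F_\phi$ (which leaves $s_{\min}$, $\nu$, $\mu$ unchanged) we may assume $\delta_1\le\delta_2$, $\delta_1,\delta_2\in[0,\tfrac12]$, $\phi\in(0,\tfrac12]$. The value $\phi=\tfrac12$ is trivial ($\nu=\mu=\delta_0$), and so is $\delta_2=\tfrac12$ (then $\nu=\mu$ and the interval $(\tanh(s_{\min}/2),t_{\max}(\nu)]$ in Definition~\ref{def:semi_strict} is empty, so $\nu\prec_{s_{\min}}\mu$ holds vacuously). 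So assume $0\le\delta_1\le\delta_2<\tfrac12$, $0<\phi<\tfrac12$; write $m_i=1-2\delta_i$ and $\widetilde m_i=(1-2\phi)m_i$, so that $m_1\ge m_2>0$, $\widetilde m_1\ge\widetilde m_2>0$, and $\widetilde m_i\le 1-2\phi<1$.

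First I would record that $\nu$ and $\mu$ are, in magnitude, (at most) two-atom symmetric distributions, hence their $\beta$-curves are explicit continuous piecewise-linear functions. Indeed $\mu=B_{\delta_1'}*B_{\delta_2'}$ with $1-2\delta_i'=\widetilde m_i$, so under $\mu$ one has $\tanh(|R|/2)\in\{u,v\}$ with $u=\frac{\widetilde m_1+\widetilde m_2}{1+\widetilde m_1\widetilde m_2}$ and $v=\frac{\widetilde m_1-\widetilde m_2}{1-\widetilde m_1\widetilde m_2}$, carried with probabilities $\frac{1+\widetilde m_1\widetilde m_2}{2}$ and $\frac{1-\widetilde m_1\widetilde m_2}{2}$; the cancellations $\frac{1+\widetilde m_1\widetilde m_2}{2}u=\frac{\widetilde m_1+\widetilde m_2}{2}$ and $\frac{1-\widetilde m_1\widetilde m_2}{2}v=\frac{\widetilde m_1-\widetilde m_2}{2}$ make $\beta(\cdot;\mu)$ clean. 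Similarly $B_{\delta_1}*B_{\delta_2}$ has $\tanh(|R|/2)\in\{U,V\}$ with $U=\frac{m_1+m_2}{1+m_1m_2}$, $V=\frac{m_1-m_2}{1-m_1m_2}$ and probabilities $\frac{1\pm m_1m_2}{2}$, so equation~\eqref{eq:beta_bconv} of Proposition~\ref{prop:beta_box} gives $\beta(t;\nu)=\frac{1+m_1m_2}{2}\max\{(1-2\phi)U,t\}+\frac{1-m_1m_2}{2}\max\{(1-2\phi)V,t\}$, with the analogous cancellations $\frac{1+m_1m_2}{2}(1-2\phi)U=\frac{\widetilde m_1+\widetilde m_2}{2}$ and $\frac{1-m_1m_2}{2}(1-2\phi)V=\frac{\widetilde m_1-\widetilde m_2}{2}$.

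The two decisive facts to establish are: (i) $\tanh(s_{\min}/2)=v$, and (ii) the interlacing $0\le v\le(1-2\phi)V\le(1-2\phi)U\le u\le 1$, with $(1-2\phi)U<u$ strict. For (i), push $s_{\min}$ through $\tanh(\cdot/2)$: since $\tanh\bigl(\tfrac12 F_\phi(r_{\max}(B_{\delta_i}))\bigr)=(1-2\phi)\tanh\bigl(\tfrac12 r_{\max}(B_{\delta_i})\bigr)=\widetilde m_i$, the subtraction formula for $\tanh$ gives $\tanh(s_{\min}/2)=\frac{\widetilde m_1-\widetilde m_2}{1-\widetilde m_1\widetilde m_2}=v$; this is exactly why $s_{\min}$ is the correct refinement parameter. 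For (ii), use $\widetilde m_1\widetilde m_2=(1-2\phi)^2 m_1m_2\le m_1m_2$ together with $(1-2\phi)U=\frac{\widetilde m_1+\widetilde m_2}{1+m_1m_2}$ and $(1-2\phi)V=\frac{\widetilde m_1-\widetilde m_2}{1-m_1m_2}$: after clearing the (positive) denominators each inequality reduces to $(1-2\phi)^2\le1$ or $m_1^2\le1$, and $(1-2\phi)U<u$ is strict because $\phi\in(0,\tfrac12)$ and $m_1m_2>0$. The same computations give $t_{\max}(\mu)=u$ and $t_{\max}(\nu)=(1-2\phi)U$, hence $t_{\max}(\nu)<t_{\max}(\mu)$.

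Finally I would compare the two piecewise-linear curves on the five intervals cut out by $v,(1-2\phi)V,(1-2\phi)U,u$. On $[0,v]$ both equal the constant $\widetilde m_1$, and on $[u,1]$ both equal $t$; on each of the three middle intervals the difference $\beta(t;\mu)-\beta(t;\nu)$ is affine in $t$ and strictly positive on the interior of the interval: on $[v,(1-2\phi)V]$ it vanishes at $t=v$ and has slope $\frac{1-\widetilde m_1\widetilde m_2}{2}>0$, on $[(1-2\phi)V,(1-2\phi)U]$ it equals $\frac{m_1m_2\,(1-(1-2\phi)^2)}{2}\,t$, and on $[(1-2\phi)U,u]$ it vanishes at $t=u$ and has slope $-\frac{1+\widetilde m_1\widetilde m_2}{2}<0$. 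Thus $\beta(t;\nu)\le\beta(t;\mu)$ on $[0,1]$, with strict inequality for $t\in(v,u)$. Since $\tanh(s_{\min}/2)=v$ and $t_{\max}(\nu)<t_{\max}(\mu)=u$, the first item of Proposition~\ref{prop:bounded_crit} yields $\nu\prec_{s_{\min}}\mu$, i.e.~\eqref{eq:simple_beta_rel}. The main obstacle is the algebra behind (i) and (ii) — pinning down the exact identity $\tanh(s_{\min}/2)=v$ and the order of the four breakpoints; once these are in hand the region-by-region sign check is routine, each piece being affine with a sign-definite slope. The remaining degenerate values ($\delta_i\in\{0,\tfrac12\}$, with $V$ read as a limit when $\delta_1=\delta_2=0$ so that $B_{\delta_1}*B_{\delta_2}=B_0$, and $\phi=\tfrac12$) are disposed of by direct inspection.
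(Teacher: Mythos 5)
Your proposal is correct and follows essentially the same route as the paper's proof in Appendix~\ref{app:pl_uniq4b}: both sides are recognized as two-atom (in magnitude) symmetric distributions, the $\beta$-curves are piecewise linear with breakpoints at $v$, $(1-2\phi)V$, $(1-2\phi)U$, $u$ (the paper's $t_{\textup{(S)}}^{-}$, $t_{\textup{(T)}}^{-}$, $t_{\textup{(T)}}^{+}$, $t_{\textup{(S)}}^{+}$), the identity $\tanh(s_{\min}/2)=v$ and the interlacing $v\le(1-2\phi)V\le(1-2\phi)U\le u$ are established, and the strict gap on $(v,u)$ follows. The only cosmetic difference is that the paper phrases the $\beta$-curves as lower convex envelopes of four points and compares corner positions, whereas you verify the inequality segment by segment with explicit slopes; both are equivalent.
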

The $\beta$-curves for the distributions on both sides of inequality \eqref{eq:simple_beta_rel} are piecewise linear. 
Moreover, 
 the proof in Appendix \ref{app:pl_uniq4b} provides the exact condition for the inequality between $\beta$-curves to be strict. 
 By averaging over $\delta_1,\delta_2$ the $\beta$-curves in the above special case, one can prove the following corollary (see Appendix \ref{app:pc_uniq_4} for details).  
\begin{corollary}\label{coro:uniq_4}
For any $\phi\in(0,1)$ and any two symmetric distributions $\mu_1$, $\mu_2$, we have
\begin{align}\label{eq:gen_beta_rel}
    B_\phi\boxconv(\mu_1*\mu_2)\prec_{s} (B_\phi\boxconv\mu_1)*(B_\phi\boxconv\mu_2)
\end{align} 
for 
$$s= {|F_\phi(r_{\max}(\mu_1))-F_{\phi}(r_{\max}(\mu_2))|} .$$
\end{corollary}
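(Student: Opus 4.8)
The plan is to reduce \eqref{eq:gen_beta_rel} to the two-atom inequality of Proposition \ref{lemma:uniq_4b} by combining bilinearity of $*$ and $\boxconv$ with the linearity of $t\mapsto\beta(t;\cdot)$ in the distribution (immediate, since $\beta(t;\cdot)$ is the expectation of a bounded Borel function). Following the proof of Proposition \ref{prop:box_alter}, I would write each input as a mixture of box atoms, $\mu_i=\int B_\delta\,d\rho_i(\delta)$, where $\rho_i$ is the law of $\tfrac12\bigl(1-\tanh\tfrac{|X_i|}{2}\bigr)$ for $X_i\sim\mu_i$; the useful feature of this choice is that $\delta\mapsto r_{\max}(B_\delta)=2\tanh^{-1}(1-2\delta)$ pushes $\rho_i$ forward to the law of $|X_i|$, so its essential supremum is exactly $r_{\max}(\mu_i)$ and every neighbourhood of $r_{\max}(\mu_i)$ carries positive $\rho_i$-mass. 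By weak continuity and bilinearity, both sides of \eqref{eq:gen_beta_rel} decompose as double mixtures over $(\delta_1,\delta_2)\sim\rho_1\times\rho_2$ of the corresponding two-atom distributions, so that
\begin{align*}
\beta\bigl(t;(B_\phi\boxconv\mu_1)*(B_\phi\boxconv\mu_2)\bigr)-\beta\bigl(t;B_\phi\boxconv(\mu_1*\mu_2)\bigr)=\int\!\!\int\Delta_{\delta_1,\delta_2}(t)\,d\rho_1(\delta_1)\,d\rho_2(\delta_2),
\end{align*}
where $\Delta_{\delta_1,\delta_2}(t)\triangleq\beta\bigl(t;(B_\phi\boxconv B_{\delta_1})*(B_\phi\boxconv B_{\delta_2})\bigr)-\beta\bigl(t;B_\phi\boxconv(B_{\delta_1}*B_{\delta_2})\bigr)$. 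By Proposition \ref{lemma:uniq_4b} each $\Delta_{\delta_1,\delta_2}\ge 0$ everywhere (this is the $d=2$ stringy tree lemma), and, by the explicit strictness condition noted after that proposition, $\Delta_{\delta_1,\delta_2}(t)>0$ precisely for $t$ strictly above $\tanh\bigl(\tfrac12 s_{\min}(\delta_1,\delta_2)\bigr)$ with $s_{\min}(\delta_1,\delta_2)=\bigl|F_\phi(r_{\max}(B_{\delta_1}))-F_\phi(r_{\max}(B_{\delta_2}))\bigr|$ and at most $t_{\max}\bigl(B_\phi\boxconv(B_{\delta_1}*B_{\delta_2})\bigr)=\tanh\bigl(\tfrac12\bigl|F_\phi\bigl(r_{\max}(B_{\delta_1})+r_{\max}(B_{\delta_2})\bigr)\bigr|\bigr)$, the last identity because convolution adds extremal LLRs and box convolution by $B_\phi$ sends an extremal LLR $r$ to $F_\phi(r)$.

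It then suffices to show the averaged gap is $\ge0$ everywhere (clear, since the integrand is $\ge0$) and strict on $\bigl(\tanh(s/2),\,t_{\max}(B_\phi\boxconv(\mu_1*\mu_2))\bigr]$ with $s=|F_\phi(r_{\max}(\mu_1))-F_\phi(r_{\max}(\mu_2))|$, which is exactly the content of $\prec_s$ in Definition \ref{def:semi_strict}. At the right endpoint $t=t_{\max}(B_\phi\boxconv(\mu_1*\mu_2))$ the averaged gap equals $\beta\bigl(t;(B_\phi\boxconv\mu_1)*(B_\phi\boxconv\mu_2)\bigr)-t$, which is strictly positive because $r_{\max}\bigl((B_\phi\boxconv\mu_1)*(B_\phi\boxconv\mu_2)\bigr)=|F_\phi(r_{\max}(\mu_1))|+|F_\phi(r_{\max}(\mu_2))|>\bigl|F_\phi(r_{\max}(\mu_1)+r_{\max}(\mu_2))\bigr|$ by strict concavity of $r\mapsto|F_\phi(r)|$ on $[0,+\infty]$, which holds for every $\phi\in(0,1)$ since then $|1-2\phi|<1$. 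For $t$ in the open interval $\bigl(\tanh(s/2),\,t_{\max}(B_\phi\boxconv(\mu_1*\mu_2))\bigr)$, nonnegativity of the integrand reduces the task to producing a positive-$(\rho_1\times\rho_2)$-measure set of pairs with $\Delta_{\delta_1,\delta_2}(t)>0$; I would take $(\delta_1,\delta_2)$ in $\{\delta_1 : r_{\max}(B_{\delta_1})>r_{\max}(\mu_1)-\epsilon\}\times\{\delta_2 : r_{\max}(B_{\delta_2})>r_{\max}(\mu_2)-\epsilon\}$, a set of positive measure for each $\epsilon>0$, and observe that as $\epsilon\downarrow0$ continuity of $F_\phi$ forces $s_{\min}(\delta_1,\delta_2)\to s$ and $t_{\max}(B_\phi\boxconv(B_{\delta_1}*B_{\delta_2}))\to t_{\max}(B_\phi\boxconv(\mu_1*\mu_2))$, so for $\epsilon$ small enough (depending on the fixed $t$) the positivity interval of $\Delta_{\delta_1,\delta_2}$ contains $t$ throughout this set. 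A final sanity check, $t_{\max}(B_\phi\boxconv(\mu_1*\mu_2))\le t_{\max}((B_\phi\boxconv\mu_1)*(B_\phi\boxconv\mu_2))$, follows from the same strict-concavity fact (or from Proposition \ref{prop:bounded_crit}), so the collected inequalities assemble into $\prec_s$.

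The main obstacle I expect is the endpoint bookkeeping around $s$, namely verifying that the strict-positivity interval produced by averaging indeed reaches down to $\tanh(s/2)$ with $s=|F_\phi(r_{\max}(\mu_1))-F_\phi(r_{\max}(\mu_2))|$ — this is where the analysis near the ``maximal'' support point $(r_{\max}(\mu_1),r_{\max}(\mu_2))$ is essential — together with disposing of several degenerate cases: a trivial $\mu_i=\delta_0$ (both sides coincide and $\tanh(s/2)=t_{\max}$, so $\prec_s$ holds with an empty strict interval), the degenerate $\phi=\tfrac12$ (both sides collapse to $\delta_0$ and the claim is vacuous), and atoms of $\mu_i$ at $+\infty$ (so $r_{\max}(\mu_i)=+\infty$, but $F_\phi(r_{\max}(\mu_i))=2\tanh^{-1}|1-2\phi|$ stays finite, which the argument accommodates). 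A secondary, routine point is justifying the Fubini-type interchange of $\beta$-integration with the mixture integrals and the weak-continuity and bilinearity of $\boxconv$ and $*$ used to expand them over mixtures.
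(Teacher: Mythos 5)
Your proof is correct and follows essentially the same route as the paper's: both decompose $\mu_1,\mu_2$ into mixtures of $B_\delta$ atoms via Proposition~\ref{prop:box_alter}, reduce the $\beta$-curve inequality by bilinearity of $*$, $\boxconv$ and linearity of $\beta$-curves to the two-atom case of Proposition~\ref{lemma:uniq_4b}, establish the $t_{\max}$ separation from strict concavity of $F_\phi$ on $\mathbb{R}_{\geq 0}$, and obtain strictness on the required interval by noting that pairs $(\delta_1,\delta_2)$ near $(r_{\max}(\mu_1),r_{\max}(\mu_2))$ carry positive product measure. The only cosmetic difference is that you verify Definition~\ref{def:semi_strict} directly (treating the right endpoint $t_{\max}(B_\phi\boxconv(\mu_1*\mu_2))$ separately), whereas the paper invokes the equivalent characterization in item~1 of Proposition~\ref{prop:bounded_crit} to absorb that endpoint automatically.
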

Now we establish connections between $\beta$-curve and convolution. 
\begin{prop}\label{lemma:basic_conv}
   For any $\phi\in[0,\frac{1}{2}]$, $t\in[0,1)$,  and symmetric $\mu$, we have 
\begin{align*}
\beta(t; B_{\phi}*\mu)=\left(\frac{1+t-2t\phi}{2}\right)\beta\left(t_0;\mu\right)+\left(\frac{1-t+2t\phi}{2}\right)\beta\left(t_1;\mu\right),
\end{align*}
where
\begin{align*}
    t_0&\triangleq\frac{1+t-2\phi}{1+t-2t\phi}=\tanh{\left(\frac{r_{\max}(B_{\phi})}{2}+\tanh^{-1}(t)\right)},\\
        t_1&\triangleq\left|\frac{1-t-2\phi}{1-t+2t\phi}\right|=\tanh\left({\left|\frac{r_{\max}(B_{\phi})}{2}-\tanh^{-1}(t)\right|}\right).
\end{align*}
    \end{prop}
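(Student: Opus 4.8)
The plan is to read off the identity from the Bayesian meaning of the $\beta$-curve (Proposition~\ref{prop:rem:beta_err}) together with the fact that convolution of LLR distributions corresponds to adjoining a conditionally independent observation. Since $\beta(t;\cdot)=\beta(|t|;\cdot)$ and both sides of the claimed formula are easily seen to be even in $t$, we may assume $t\in[0,1)$. Write $P_e(\nu;t)\triangleq \tfrac12\bigl(1-\beta(t;\nu)\bigr)$ for the minimum error probability of the binary hypothesis test with LLR distribution $\nu$ and prior $\mathrm{Ber}\bigl(\tfrac{1-t}{2}\bigr)$. The distribution $B_\phi*\mu$ is the LLR law of the channel $X\mapsto(W,Y)$, where $W$ is the output of a $\mathrm{BSC}(\phi)$ applied to $X$ and $Y$ is the output of a BMS channel with LLR law $\mu$, with $W$ and $Y$ conditionally independent given $X$.

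First I would apply the tower property of Bayes error: for two conditionally independent observations, the Bayes error of the pair is the $W$-average of the Bayes error of $Y$ under the posterior $P_{X|W}$. An elementary Bayes-rule computation starting from prior bias $t$ shows that $W$ takes the ``more likely'' value with probability $p_0\triangleq \tfrac{1+t-2t\phi}{2}$, producing posterior bias $t_0=\tfrac{1+t-2\phi}{1+t-2t\phi}$, and the other value with probability $p_1\triangleq \tfrac{1-t+2t\phi}{2}=1-p_0$, producing posterior bias of magnitude $t_1=\bigl|\tfrac{1-t-2\phi}{1-t+2t\phi}\bigr|$. Hence $P_e(B_\phi*\mu;t)=p_0\,P_e(\mu;t_0)+p_1\,P_e(\mu;t_1)$; substituting $P_e=\tfrac12(1-\beta)$ and using $p_0+p_1=1$ collapses this to $\beta(t;B_\phi*\mu)=p_0\,\beta(t_0;\mu)+p_1\,\beta(t_1;\mu)$, which is the asserted identity.

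It then remains to match $t_0,t_1$ with the stated closed forms. Using $r_{\max}(B_\phi)=\ln\tfrac{1-\phi}{\phi}$ and $\tanh\bigl(\tfrac12\ln x\bigr)=\tfrac{x-1}{x+1}$ we get $\tanh\bigl(\tfrac{r_{\max}(B_\phi)}{2}\bigr)=1-2\phi$, and the $\tanh$ addition formula gives $\tanh\bigl(\tfrac{r_{\max}(B_\phi)}{2}+\tanh^{-1}t\bigr)=\tfrac{(1-2\phi)+t}{1+(1-2\phi)t}=t_0$ and $\tanh\bigl(\bigl|\tfrac{r_{\max}(B_\phi)}{2}-\tanh^{-1}t\bigr|\bigr)=\bigl|\tfrac{(1-2\phi)-t}{1-(1-2\phi)t}\bigr|=t_1$, the absolute value being harmless since $\beta(\,\cdot\,;\mu)$ only depends on $|t_1|$. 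The boundary cases $\phi=\tfrac12$ (where $B_{1/2}=\delta_0$, so $t_0=t_1=t$ and both sides equal $\beta(t;\mu)$) and $\phi=0$ (perfect side channel, both sides equal $\beta(1;\mu)=1$) are consistent and are checked directly.

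The only genuine work — and the main place to be careful — is the bookkeeping in the Bayes-rule step: keeping straight which posterior bias is $t_0$ and which is $t_1$, getting the probabilities $p_0,p_1$ right, and noting that the ``contradicting'' posterior bias changes sign as $t$ crosses $1-2\phi$ (absorbed by the absolute value). A purely computational alternative, which I would fall back on if the tower-property decomposition wants more justification, is to write $\beta(t;B_\phi*\mu)=(1-\phi)\,\mathbb{E}_{R\sim\mu}[g(a+R)]+\phi\,\mathbb{E}_{R\sim\mu}[g(R-a)]$ with $a=\ln\tfrac{1-\phi}{\phi}$ and $g(x)=\max\{\tanh\tfrac{|x|}{2},t\}$, use symmetry of $\mu$ in the form $\mathbb{E}_\mu[g(R-a)]=\mathbb{E}_\mu[e^{-R}g(a+R)]$ to combine the two terms into $(1-\phi)\,\mathbb{E}_\mu\bigl[(1+e^{-(a+R)})\,g(a+R)\bigr]$, and then simplify $(1+e^{-u})\max\{\tanh\tfrac{|u|}{2},t\}$ piecewise in $u$; this is more calculation but elementary.
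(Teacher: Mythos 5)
Your proposal is correct and takes essentially the same route as the paper's proof in Appendix B.7: both invoke Proposition~\ref{prop:rem:beta_err} to identify $\tfrac12(1-\beta(t;\cdot))$ with the Bayes error at prior bias $t$, realize $B_\phi*\mu$ as the LLR of a pair of conditionally independent observations (one BSC($\phi$), one $\mu$-channel), and condition on the BSC output to average the Bayes error over the two induced posterior biases. Your derivation of $p_0,p_1,t_0,t_1$ via Bayes' rule, the sign flip absorbed by the absolute value, and the boundary checks at $\phi\in\{0,\tfrac12\}$ are all consistent with the paper; the alternative direct computation you sketch is a harmless extra.
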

    Proposition \ref{lemma:basic_conv} is proved in Appendix \ref{app:pl_basic_conv}. By averaging
    the $\beta$-curves in Proposition \ref{lemma:basic_conv} over $\phi$, we have the following
    ``rule of convolution'', which is proved in Appendix \ref{app:pl_convdeg}.   
   
   \begin{defn}
   For any  symmetric $\mu$, define $\supp(\mu)\triangleq \{v\in\mathbb{R}\ |\ \mu([v-\epsilon,v+\epsilon])>0\ \forall \epsilon>0\}$. 
   \end{defn}

   \begin{prop}[Rule of Convolution] 
   \label{lemma:convdeg}
   For any symmetric $\mu$, $\nu$, $\tau$, and $\ell\in \supp(\tau)$, if  
  $\nu\prec_{s}\mu$, 
then 
\begin{align}\label{ineq:roc}
    \beta\left(\tanh{\frac{r}{2}};{\nu*\tau}\right)<\beta\left(\tanh{\frac{r}{2}}; {\mu*\tau} \right)
\end{align}
for any $r\in(s-\ell, r_{\max}(\mu)-\ell)$; if $\nu\prec\mu$, then inequality (\ref{ineq:roc}) holds for $r\in[-\ell, r_{\max}(\mu)-\ell)$.
    \end{prop}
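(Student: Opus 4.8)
The plan is to average the convolution identity of Proposition~\ref{lemma:basic_conv} over a mixture representation of $\tau$, reducing the whole claim to the case of a single two‑point law $B_\phi$. Recall from the proof of Proposition~\ref{prop:box_alter} that every symmetric $\tau$ can be written as $\tau=\int_{[0,1/2]}B_\phi\,d\rho(\phi)$ for a probability measure $\rho$ on $[0,1/2]$ (with $\rho$ the law of $\tfrac12(1-\tanh\tfrac{|T|}{2})$ for $T\sim\tau$). Since ordinary convolution is bilinear and $\mu\mapsto\beta(\,\cdot\,;\mu)$ is linear in the measure, $\beta(t;\mu*\tau)=\int\beta(t;\mu*B_\phi)\,d\rho(\phi)$, and likewise for $\nu$. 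As $\nu\prec_s\mu$ (and a fortiori $\nu\prec\mu$) forces $\nu\preceq\mu$ by Proposition~\ref{prop:bounded_crit}, and degradation survives convolution (data processing, as in the proof that $\calQ$ preserves degradation), each integrand $\beta(t;\mu*B_\phi)-\beta(t;\nu*B_\phi)$ is nonnegative; it therefore suffices to show that for each target $r$ in the stated interval the integrand is \emph{strictly} positive for $\phi$ in a set of positive $\rho$‑measure. Since $\phi\mapsto r_{\max}(B_\phi)=\ln\tfrac{1-\phi}{\phi}$ decreases continuously from $+\infty$ to $0$ on $[0,1/2]$ and $\supp(\tau)$ is the closure of $\{\pm r_{\max}(B_\phi):\phi\in\supp(\rho)\}$, the hypothesis $\ell\in\supp(\tau)$ yields a unique $\phi^\star\in\overline{\supp(\rho)}$ with $r_{\max}(B_{\phi^\star})=|\ell|$ (the case $|\ell|=\infty$ being vacuous, as the interval is then empty).

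For the atomic case, fix $\phi$, put $a=r_{\max}(B_\phi)\ge 0$, so $\ell=\pm a$. Applying Proposition~\ref{lemma:basic_conv} to $\mu$ and to $\nu$ with the common argument $t=|\tanh(r/2)|$ expresses $\beta(t;B_\phi*\mu)$ and $\beta(t;B_\phi*\nu)$ as the \emph{same} convex combination $w_0\,\beta(t_0;\cdot)+w_1\,\beta(t_1;\cdot)$, where $t_0,t_1$ are $\tanh$ of half of the LLR magnitudes $a+|r|$ and $|a-|r||$, and where both coefficients $w_0=\tfrac12(1+t(1-2\phi))$, $w_1=\tfrac12(1-t(1-2\phi))$ are strictly positive for $t\in[0,1)$, $\phi\in[0,\tfrac12]$. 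Hence the integrand equals $w_0\bigl(\beta(t_0;\mu)-\beta(t_0;\nu)\bigr)+w_1\bigl(\beta(t_1;\mu)-\beta(t_1;\nu)\bigr)$, and it is enough to place one of the magnitudes $a+|r|$, $|a-|r||$ strictly inside the window on which $\beta(\,\cdot\,;\nu)<\beta(\,\cdot\,;\mu)$. By the converse half of Proposition~\ref{prop:bounded_crit}(1) (for $\prec_s$ with $0\le s<r_{\max}(\mu)$; the case $s<0$ reduces to $\prec$ via Proposition~\ref{prop:bounded_crit}(4)) that window is the set of magnitudes in $(s,r_{\max}(\mu))$, while for $\nu\prec\mu$ the third equivalence of Proposition~\ref{lemma:strict_beta} widens it to $[0,r_{\max}(\mu))$. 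A short case analysis on the signs of $\ell$ and $r$ — taking the ``sum'' magnitude $a+|r|$ when $r$ and $\ell$ agree in sign and the ``difference'' magnitude $|a-|r||$ otherwise — shows that for $r\in(s-\ell,r_{\max}(\mu)-\ell)$ one of the two magnitudes lies in $(s,r_{\max}(\mu))$, and that under the stronger $\nu\prec\mu$ the admissible $r$ extends to $[-\ell,r_{\max}(\mu)-\ell)$, the endpoint $r=-\ell$ being exactly where $|a-|r||=0\in[0,r_{\max}(\mu))$. Finally, since such an $r$ lies strictly inside its interval and $a=r_{\max}(B_\phi)\to|\ell|$ as $\phi\to\phi^\star$, the point $r$ stays strictly inside $(s-\ell_\phi,r_{\max}(\mu)-\ell_\phi)$ for all $\phi$ in a neighborhood of $\phi^\star$ (with $\ell_\phi=\pm a$ carrying the sign of $\ell$); that neighborhood meets $\supp(\rho)$ in a set of positive measure, which completes the argument.

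The technical heart is the case analysis in the atomic step: one must check, uniformly over the orderings of $a$, $s$, $r_{\max}(\mu)$ and the signs of $\ell$ and $r$, that one of $a+|r|$, $|a-|r||$ always falls in the (half‑)open strict window while its matching weight $w_0$ or $w_1$ is nonzero, and one must track precisely which interval endpoints survive under $\prec$ as opposed to $\prec_s$. The mixture reduction and the continuity step that promotes a single good $\phi^\star$ to a positive‑measure set are routine by comparison.
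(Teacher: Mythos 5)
Your proof is correct and follows essentially the same strategy as the paper's: expand $\beta(t;\mu*\tau)$ by averaging the identity of Proposition~\ref{lemma:basic_conv} over a mixture representation of $\tau$, then exhibit a set of positive measure in that mixture on which one of the two inner arguments lands in the strict-gap window where $\beta(\,\cdot\,;\nu)<\beta(\,\cdot\,;\mu)$. The paper sidesteps your sign case analysis by parameterizing the mixture directly by $Z\sim\tau$ and noting that $\tanh\tfrac{|r+Z|}{2}$ always equals one of the two inner arguments $t_0,t_1$, so that $r+\ell\in(s,r_{\max}(\mu))$ (precisely the range hypothesis) plus continuity at $Z$ near $\ell\in\supp(\tau)$ delivers the strict gap with no further casework.
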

   
   Notice that the above $\beta$-curve analysis can be used to show the expected convolution-preserving properties of degradation.  
   We extend them to  $\prec_s$ and
   $\prec$, stated as follows.
	\begin{prop}\label{prop:conv}
	For any symmetric $\mu$, $\nu$, and $\tau$, we have the following facts.
	\begin{enumerate}
	\item If $\nu \prec \mu$, 
	then $\tau
	\boxconv \nu \prec \tau \boxconv \mu$. 
	\item If $\nu \prec_{s} \mu$, 
	then $\tau \boxconv \nu \prec_{s_\tau}
	\tau \boxconv \mu,$ where $s_\tau\triangleq \ln\frac{e^{s}e^{r_{\max}(\tau)}+1}{e^{s}+e^{r_{\max}(\tau)}}$.
	\item If $\nu \prec \mu$, then unless $r_{\max}(\tau)=r_{\max}(\mu)=+\infty$, we have $\tau *\nu \prec_s \tau	*\mu$  with $s={r_{\max}(\tau)-r_{\max}(\mu)}$ ; if in addition, $r_{\max}(\tau) <  r_{\max}(\mu)$, 
	then $\tau *\nu \prec \tau
	*\mu$. 
	
	\end{enumerate}
	\end{prop}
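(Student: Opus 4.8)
The plan is to treat the three items separately. Item~1 is essentially immediate: by Definition~\ref{def:strict_deg}, $\nu\prec\mu$ means $\nu\preceq B_\phi\boxconv\mu$ for some $\phi\in(0,\tfrac12]$, and since ordinary degradation is box-convolution-preserving while $\boxconv$ is commutative and associative, applying $\tau\,\boxconv\,(\cdot)$ to both sides gives $\tau\boxconv\nu\preceq\tau\boxconv(B_\phi\boxconv\mu)=B_\phi\boxconv(\tau\boxconv\mu)$, i.e.\ $\tau\boxconv\nu\prec\tau\boxconv\mu$ with the same $\phi$. For items~2 and~3 the non-strict halves, $\beta(\cdot;\tau\boxconv\nu)\le\beta(\cdot;\tau\boxconv\mu)$ and $\beta(\cdot;\tau*\nu)\le\beta(\cdot;\tau*\mu)$, are likewise immediate, since $\nu\prec_s\mu$ (resp.\ $\nu\prec\mu$) implies $\nu\preceq\mu$ and degradation is preserved under both $\boxconv$ and $*$; the real content is in producing the strict gap.

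For item~2, I would represent $\tau$ by the canonical mixture $\tau=\EE_{\phi\sim\pi_\tau}[B_\phi]$ used in the proof of Proposition~\ref{prop:box_alter}, under which $\log\frac{1-\phi}{\phi}$ has the law of $|R|$, $R\sim\tau$, so that $\sup\{\,1-2\phi:\phi\in\supp\pi_\tau\,\}=\tanh\!\big(\tfrac{r_{\max}(\tau)}{2}\big)$. Bilinearity of $\boxconv$ and Proposition~\ref{prop:beta_box} then give, for $\sigma\in\{\mu,\nu\}$,
\begin{align*}
\beta(t;\tau\boxconv\sigma)=\EE_{\phi\sim\pi_\tau}\!\left[(1-2\phi)\,\beta\!\left(\tfrac{t}{1-2\phi};\sigma\right)\right].
\end{align*}
A short computation yields the matching identities $\tanh\!\big(\tfrac{s_\tau}{2}\big)=\tanh\!\big(\tfrac{s}{2}\big)\tanh\!\big(\tfrac{r_{\max}(\tau)}{2}\big)$ and $t_{\max}(\tau\boxconv\sigma)=\tanh\!\big(\tfrac{r_{\max}(\tau)}{2}\big)\,t_{\max}(\sigma)$; in words, $s_\tau$ is exactly the image of the threshold $s$ under box convolution with the extreme atom of $\tau$. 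Then for $t$ in the open interval $\big(\tanh\tfrac{s_\tau}{2},\,t_{\max}(\tau\boxconv\nu)\big)$ I would let $\phi$ range over a set of positive $\pi_\tau$-mass whose values $1-2\phi$ lie just below $\tanh\tfrac{r_{\max}(\tau)}{2}$; one checks that for such $\phi$ the argument $\frac{t}{1-2\phi}$ falls in $\big(\tanh\tfrac{s}{2},\,t_{\max}(\nu)\big]$, which is precisely the window where $\beta(\cdot;\nu)<\beta(\cdot;\mu)$ strictly (by Definition~\ref{def:semi_strict}), and integrating the strict gap against $\pi_\tau$ over that set gives $\beta(t;\tau\boxconv\nu)<\beta(t;\tau\boxconv\mu)$. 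The right endpoint $t=t_{\max}(\tau\boxconv\nu)$ is handled separately: when $s<r_{\max}(\mu)$ one has $t_{\max}(\nu)<t_{\max}(\mu)$ (Proposition~\ref{prop:bounded_crit}(1)), hence $t_{\max}(\tau\boxconv\nu)<t_{\max}(\tau\boxconv\mu)$ and $\beta(t;\tau\boxconv\nu)=t<\beta(t;\tau\boxconv\mu)$; when $s\ge r_{\max}(\mu)$ the window $\big(\tanh\tfrac{s_\tau}{2},\,t_{\max}(\tau\boxconv\nu)\big]$ is empty, so $\prec_{s_\tau}$ holds vacuously; and when $r_{\max}(\tau)=0$ both sides collapse to $\delta_0$ and the claim is trivial. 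Reading off Definition~\ref{def:semi_strict} then finishes item~2.

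For item~3, the engine is the Rule of Convolution (Proposition~\ref{lemma:convdeg}) applied to $\nu\prec\mu$. Suppose first $r_{\max}(\tau)<\infty$, so $\pm r_{\max}(\tau)\in\supp(\tau)$ by symmetry. Choosing $\ell=r_{\max}(\tau)$ in that proposition produces a strict gap between $\beta(\tanh(r/2);\tau*\nu)$ and $\beta(\tanh(r/2);\tau*\mu)$ for all $r\in[-r_{\max}(\tau),\,r_{\max}(\mu)-r_{\max}(\tau))$, and $\ell=-r_{\max}(\tau)$ does the same for $r\in[r_{\max}(\tau),\,r_{\max}(\mu)+r_{\max}(\tau))$. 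Since $\beta(t;\cdot)$ depends only on $|t|$ and the first interval contains its left endpoint $r=-r_{\max}(\tau)$ (so the two intervals bridge at $|r|=r_{\max}(\tau)$), combining them produces a strict gap at all $t$ with $|t|<\tanh\!\big(\tfrac{r_{\max}(\tau)+r_{\max}(\mu)}{2}\big)$, plus $|t|>\tanh\tfrac{s}{2}$ in the case $r_{\max}(\tau)\ge r_{\max}(\mu)$ (with $s=r_{\max}(\tau)-r_{\max}(\mu)$). Using $r_{\max}(\tau*\sigma)=r_{\max}(\tau)+r_{\max}(\sigma)$ and $r_{\max}(\nu)<r_{\max}(\mu)$ (the latter coming from $\nu\prec\mu$ via Proposition~\ref{lemma:strict_beta}), this strict window contains $\big(\tanh\tfrac{s}{2},\,t_{\max}(\tau*\nu)\big]$ when $r_{\max}(\tau)\ge r_{\max}(\mu)$ and all of $\big[0,\,t_{\max}(\tau*\nu)\big]$ when $r_{\max}(\tau)<r_{\max}(\mu)$; together with the non-strict half and Propositions~\ref{lemma:strict_beta} and~\ref{prop:bounded_crit} this gives $\tau*\nu\prec_s\tau*\mu$ in all cases, and the stronger $\tau*\nu\prec\tau*\mu$ precisely when $r_{\max}(\tau)<r_{\max}(\mu)$. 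The one remaining case $r_{\max}(\tau)=+\infty$, $r_{\max}(\mu)<+\infty$ is trivial: then $s=+\infty$ and $\tau*\nu\prec_{+\infty}\tau*\mu$ amounts to $\tau*\nu\preceq\tau*\mu$ by Proposition~\ref{prop:bounded_crit}(4).

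The main obstacle I expect is bookkeeping rather than any new idea. In item~2 it is making precise that a positive $\pi_\tau$-mass of mixing parameters genuinely lands in the strict window; this is sensitive to whether the top of $\supp(\tau)$ is attained or merely approached, which is also what decides whether the right endpoint $t_{\max}(\tau\boxconv\nu)$ falls out of the mixture argument or has to be recovered from monotonicity of $t_{\max}$. In item~3 it is verifying that the two Rule-of-Convolution intervals splice into exactly the half-open window that Definition~\ref{def:semi_strict} demands, and clearing the various $+\infty$ and ``essential supremum not attained'' corner cases. None of this is conceptually deep, but it is where the argument spends its effort.
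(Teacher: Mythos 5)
Your proof is correct and, for items 2 and 3, follows the paper's route: item 2 via the mixture-of-homothetic-transformations representation $\beta(t;\tau\boxconv\sigma)=\EE_{\phi\sim\pi_\tau}[(1-2\phi)\beta(t/(1-2\phi);\sigma)]$ (Proposition~\ref{prop:beta_box}), and item 3 via the Rule of Convolution (Proposition~\ref{lemma:convdeg}) applied with $\ell=\pm r_{\max}(\tau)$ and then combined using evenness of $\beta$-curves, followed by Proposition~\ref{prop:bounded_crit}. Your bookkeeping for the endpoints ($t=t_{\max}(\tau\boxconv\nu)$ in item 2, the $r_{\max}(\tau)=+\infty$ case in item 3) is sound and is in fact somewhat more explicit than what the paper writes.

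The one genuine deviation is item 1. The paper folds items 1 and 2 into a single $\beta$-curve/mixture argument, whereas you prove item 1 directly from Definition~\ref{def:strict_deg}: $\nu\preceq B_\phi\boxconv\mu$ for some $\phi\in(0,\tfrac12]$, so applying $\tau\boxconv(\cdot)$ to both sides (degradation is $\boxconv$-preserving, $\boxconv$ is commutative and associative) gives $\tau\boxconv\nu\preceq B_\phi\boxconv(\tau\boxconv\mu)$ with the same $\phi$, i.e.\ $\tau\boxconv\nu\prec\tau\boxconv\mu$. This is cleaner: it bypasses the $\beta$-curve machinery entirely, avoids the endpoint subtlety that the mixture argument has at $t=t_{\max}(\tau\boxconv\nu)$, and makes clear that the degradation parameter witnessing the strict relation is preserved exactly, rather than only proved to exist.
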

\begin{proof}
The first two properties are based on the fact that box convolution with any symmetric $\tau$ can be viewed as a mixture of homothetic transformations in $\beta$-curve (see Proposition \ref{prop:beta_box}). Hence, any strict inequality is preserved, except when $\tau$ is trivial, where the proof is straightforward. 

The third property can be proved with Proposition \ref{lemma:convdeg}. Specifically, for the non-trivial case where $r_{\max}(\tau)<+\infty$, we can choose $\ell=\pm r_{\max}(\tau)$ and obtain that the $\beta$-curve inequalities are strict for $t\in [-r_{\max}(\tau), r_{\max}(\mu)-r_{\max}(\tau))\cup [r_{\max}(\tau), r_{\max}(\mu)+r_{\max}(\tau))$. Note that $\beta$-curves are even functions. We have the strict condition for any $t\in  (r_{\max}(\tau)-r_{\max}(\mu),$ $r_{\max}(\mu) +r_{\max}(\tau))$. Then we can apply 
Proposition \ref{prop:bounded_crit} to complete the proof. 
\end{proof}
   
The above results can be used to prove the following statements.
\begin{prop}\label{lemma:degfd}
For any $\phi\in(0,1)$ and any symmetric $\nu$, we have
\begin{align}
    B_\phi\boxconv(\nu^{*(2)})&\prec_{0} (B_\phi\boxconv\nu)^{*(2)},\label{ineq:fixedd2}\\ 
    B_\phi\boxconv(\nu^{*(d)})&\prec (B_\phi\boxconv\nu)^{*(d)} \ \ \ \ \ \ \ \ \ \ \ \ \ \ \ \ \ \ \ \ \ \ \ \  \ \ \ \ \  \forall\ d> 2.\label{ineq:fixedd3} 
\end{align} 
\end{prop}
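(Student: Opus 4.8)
First dispose of the degenerate cases. If $\nu$ is trivial, or $\phi=\tfrac12$, then every box-convolution appearing collapses to $\delta_0$ (since $B_{1/2}=\delta_0$ and $\mu\boxconv\delta_0=\delta_0$), so both sides of \eqref{ineq:fixedd2} and of \eqref{ineq:fixedd3} equal $\delta_0$, and $\delta_0\prec_0\delta_0$, $\delta_0\prec\delta_0$ hold trivially (by Definitions \ref{def:semi_strict} and \ref{def:strict_deg}, using $t_{\max}(\delta_0)=0$ and $B_\psi\boxconv\delta_0=\delta_0$); the cases $\phi\in\{0,1\}$ are excluded by hypothesis. Henceforth assume $\nu$ non-trivial and $\phi\in(0,\tfrac12)$ — the latter with no loss of generality, since $B_\phi=B_{1-\phi}$ — and write $a\triangleq r_{\max}(\nu)\in(0,+\infty]$. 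We shall use freely the identities $r_{\max}(\mu_1*\mu_2)=r_{\max}(\mu_1)+r_{\max}(\mu_2)$ and $r_{\max}(B_\phi\boxconv\mu)=F_\phi(r_{\max}(\mu))$ (the latter from Proposition \ref{prop:beta_box}), together with the fact that on $[0,+\infty]$ the map $F_\phi$ is strictly increasing, vanishes at $0$, and is strictly concave, hence strictly subadditive: $F_\phi(x+y)<F_\phi(x)+F_\phi(y)$ for $x,y>0$. The case $d=2$ is then immediate: apply Corollary \ref{coro:uniq_4} with $\mu_1=\mu_2=\nu$; since $r_{\max}(\mu_1)=r_{\max}(\mu_2)$, the parameter there is $|F_\phi(a)-F_\phi(a)|=0$, which is precisely \eqref{ineq:fixedd2}.

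For \eqref{ineq:fixedd3} I will induct on $d\ge3$. The inductive step $d-1\mapsto d$, valid once $d\ge4$, runs as follows. By the stringy tree lemma (Theorem \ref{thm:stringy}) and commutativity of $*$,
$$B_\phi\boxconv\nu^{*(d)}=B_\phi\boxconv\bigl(\nu^{*(d-1)}*\nu\bigr)\preceq(B_\phi\boxconv\nu)*\bigl(B_\phi\boxconv\nu^{*(d-1)}\bigr).$$
By the induction hypothesis $B_\phi\boxconv\nu^{*(d-1)}\prec(B_\phi\boxconv\nu)^{*(d-1)}$, and since $r_{\max}(B_\phi\boxconv\nu)=F_\phi(a)<(d-1)F_\phi(a)=r_{\max}\bigl((B_\phi\boxconv\nu)^{*(d-1)}\bigr)$ (using $d-1\ge2$ and $F_\phi(a)>0$), Proposition \ref{prop:conv}(3) upgrades this to
$$(B_\phi\boxconv\nu)*\bigl(B_\phi\boxconv\nu^{*(d-1)}\bigr)\prec(B_\phi\boxconv\nu)*(B_\phi\boxconv\nu)^{*(d-1)}=(B_\phi\boxconv\nu)^{*(d)}.$$
Proposition \ref{prop:trans_deg_stdeg} then chains $\preceq$ with $\prec$ to give $B_\phi\boxconv\nu^{*(d)}\prec(B_\phi\boxconv\nu)^{*(d)}$, so the whole problem reduces to the base case $d=3$.

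That base case is the heart of the matter and the step I expect to be hardest, because the $\prec_s$-calculus used elsewhere stops just short of $\prec$. Applying Corollary \ref{coro:uniq_4} to $B_\phi\boxconv(\nu^{*(2)}*\nu)$, then degrading the factor $B_\phi\boxconv\nu^{*(2)}$ to $(B_\phi\boxconv\nu)^{*(2)}$ by Theorem \ref{thm:stringy} (degradation being convolution-preserving) and invoking Proposition \ref{prop:bounded_crit}(2), yields only
$$B_\phi\boxconv\nu^{*(3)}\prec_{s}(B_\phi\boxconv\nu)^{*(3)},\qquad s=F_\phi(2a)-F_\phi(a)>0,$$
which is strictly weaker than $\prec$ — the generic implication $\prec_s\Rightarrow\prec$ in Proposition \ref{prop:bounded_crit}(4) requires $s<0$ — and one cannot remedy this by routing through the intermediate distribution $(B_\phi\boxconv\nu^{*(2)})*(B_\phi\boxconv\nu)$, whose $\beta$-curve may coincide with that of $(B_\phi\boxconv\nu)^{*(3)}$ at an interior point. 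Instead I would compare $B_\phi\boxconv\nu^{*(3)}$ with $(B_\phi\boxconv\nu)^{*(3)}$ directly: by Proposition \ref{lemma:strict_beta}(2) it suffices to prove $\beta(t;B_\phi\boxconv\nu^{*(3)})<\beta(t;(B_\phi\boxconv\nu)^{*(3)})$ for all $t\in[0,t_{\max}(B_\phi\boxconv\nu^{*(3)})]$. On $(\tanh(s/2),\,t_{\max}(B_\phi\boxconv\nu^{*(3)})]$ this strict inequality is exactly the $\prec_s$ just obtained — note $\tanh(s/2)\le t_{\max}(B_\phi\boxconv\nu^{*(3)})$ because $F_\phi(2a)-F_\phi(a)\le F_\phi(3a)$.

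The delicate piece is the initial segment $t\in[0,\tanh(s/2)]$, where a priori only the weak inequality from Theorem \ref{thm:stringy} is available; I would close it by a direct analysis of the two $\beta$-curves (using the explicit integral/piecewise descriptions behind Propositions \ref{lemma:basic_conv} and \ref{lemma:uniq_4b} together with the monotonicity of $\beta$-curves) that reduces the comparison on this segment to its tightest instance, the point $t=0$. There, by Proposition \ref{prop:rem:beta_err}, $\tfrac12(1-\beta(0;\mu))$ is the minimum error probability of $\mu$ under the uniform prior, so the task becomes to show that $B_\phi\boxconv\nu^{*(3)}$ — three independent noisy $\nu$-observations of one single $\mathrm{BSC}(\phi)$-corrupted copy of the input — is strictly worse than $(B_\phi\boxconv\nu)^{*(3)}$ — three noisy $\nu$-observations of three independently $\mathrm{BSC}(\phi)$-corrupted copies. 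This strict loss is precisely the crux: it materializes only with three or more observations (with one or two it vanishes, which is exactly why $d=2$ gives nothing beyond $\prec_0$), and it is the one place where a genuinely new, hands-on estimate is needed; everything else — the case $d=2$ and the induction to $d\ge4$ — is bookkeeping with the convolution/box-convolution-preservation and transitivity lemmas already in hand. Granting \eqref{ineq:fixedd3} for $d=3$, the induction above completes the proof.
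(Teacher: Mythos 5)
Your $d=2$ case and your inductive step $d-1\mapsto d$ for $d\ge4$ match the paper's argument exactly (Corollary~\ref{coro:uniq_4} for $d=2$; the chain through $(B_\phi\boxconv\nu)*(B_\phi\boxconv\nu^{*(d-1)})$ together with Proposition~\ref{prop:conv}(3) and Proposition~\ref{prop:trans_deg_stdeg} for the induction). But the base case $d=3$, which you correctly single out as the heart of the matter, is not actually proved: you reduce it, heuristically, to a strict $\beta$-curve comparison at $t=0$ and then stop, announcing that ``a genuinely new, hands-on estimate is needed.'' That estimate is never supplied, so the proof has a real gap.

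Moreover, the one route you dismiss is precisely the one the paper takes, and your reason for dismissing it is mistaken. You write that one cannot close the initial segment $t\in[0,\tanh(s/2)]$ via the intermediate distribution $(B_\phi\boxconv\nu^{*(2)})*(B_\phi\boxconv\nu)$ ``whose $\beta$-curve may coincide with that of $(B_\phi\boxconv\nu)^{*(3)}$ at an interior point.'' In fact it cannot coincide there. Write $\tau=B_\phi\boxconv\nu$, $\nu'=B_\phi\boxconv\nu^{*(2)}$, $\mu'=(B_\phi\boxconv\nu)^{*(2)}$. From \eqref{ineq:fixedd2} you already have $\nu'\prec_0\mu'$, and the comparison $\tau*\nu'\preceq\tau*\mu'$ can be strengthened by the Rule of Convolution (Proposition~\ref{lemma:convdeg}) with the choice $\ell=r_{\max}(\tau)=F_\phi(a)\in\supp(\tau)$: this yields strict $\beta$-curve inequality for all $t=\tanh(|r|/2)$ with $r\in(-F_\phi(a),\,F_\phi(a))$, i.e.\ for $t\in[0,\tanh(F_\phi(a)/2))$. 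Since $F_\phi$ is strictly subadditive, $s=F_\phi(2a)-F_\phi(a)<F_\phi(a)$, so this interval strictly contains $[0,\tanh(s/2)]$. Combined with your Corollary-\ref{coro:uniq_4} step, which gives strictness on $(\tanh(s/2),\,t_{\max}(B_\phi\boxconv\nu^{*(3)})]$, the union covers $[0,t_{\max}(B_\phi\boxconv\nu^{*(3)})]$ and Proposition~\ref{lemma:strict_beta} delivers $\prec$. So the chain you discarded is exactly the missing piece: your intuition about where the difficulty lies was right, but you overlooked that Proposition~\ref{lemma:convdeg}, applied at $\ell=r_{\max}(\tau)$ rather than at a generic support point, spreads the strictness from \eqref{ineq:fixedd2} across the whole problematic segment near $t=0$.
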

\begin{proof}
Inequality (\ref{ineq:fixedd2}) directly follows from Corollary \ref{coro:uniq_4}. Inequality (\ref{ineq:fixedd3}) clearly holds when $\phi=\frac{1}{2}$ or $\nu$ is trivial. We prove inequality (\ref{ineq:fixedd3}) by induction for both  $B_\phi$ and $\nu$ are non-trivial. 
 
(a). Consider the base case where $d=3$. Let $r_{\nu}=r_{\max}(\nu)$. Because convolution preserves degradation, by stringy tree lemma, or  inequality (\ref{ineq:fixedd2}) and Proposition \ref{prop:bounded_crit},
we have
\begin{align}\label{ineq:degs1}
    (B_\phi\boxconv\nu)*(B_\phi\boxconv(\nu^{*(2)})) \preceq (B_\phi\boxconv\nu)^{*(3)}.
\end{align}  
Next, from Corollary \ref{coro:uniq_4}, we have
\begin{align}
    B_\phi\boxconv(\nu^{*(3)}) \prec_s(B_\phi\boxconv\nu)*(B_\phi\boxconv(\nu^{*(2)})),
\end{align} where 
$$s=F_{\phi}(r_{\max}(\nu^{*(2)}))-F_{\phi}(r_{\max}(\nu))=F_{\phi}(2r_{\nu})-F_{\phi}(r_{\nu}).$$
The above two steps form a chain of degradation. By transitivity stated in Proposition \ref{prop:bounded_crit},  this implies 
$$ B_\phi\boxconv(\nu^{*(3)}) \prec_s(B_\phi\boxconv\nu)^{*(3)}. $$

 To prove the needed statement, it suffices to show that any of these two steps has strict inequality in $\beta$-curves for $t\in[0,s]$.
We apply Rule of Convolution to inequality (\ref{ineq:degs1}) and let $\ell= F_{\phi}(r_\nu)=r_{\max}(B_\phi\boxconv\nu) \in\supp{(B_\phi\boxconv\nu)}$, the strict condition holds for $$t\in\left[0,\tanh\frac{r_{\max}((B_\phi\boxconv\nu)^{*(2)})-\ell}{2}\right)=\left[0,\tanh\frac{F_{\phi}(r_\nu)}{2}\right).$$
Because $F_{\phi}(r_\nu)>s$ for both $B_\phi$ and $\nu$ non-trivial, the strict degradation statement is implied by Proposition \ref{lemma:strict_beta}.

(b) Assume inequality (\ref{ineq:fixedd3}) is proved for some $d=d_0\geq 3$. By induction assumption, we have the following chain similar to the base case. 
$$ B_\phi\boxconv(\nu^{*(d_0+1)})\preceq (B_\phi\boxconv\nu)*B_\phi\boxconv(\nu^{*(d_0)}) \prec (B_\phi\boxconv\nu)^{*(d_0+1)}.$$
In particular, we apply the third property in Proposition~\ref{prop:conv} to show strict degradation in the second step.  
Therefore, the induction step follows from Proposition \ref{prop:trans_deg_stdeg}.

(c) To conclude, inequality (\ref{ineq:fixedd3}) is proved for any $d\geq 3$. 
\end{proof}
Proposition \ref{lemma:degfd} implies the following  degradation relationships between $B_\phi\boxconv(\mathcal{Q}\mu)$ and $\mathcal{Q}(B_\phi\boxconv\mu)$ when $d$ is deterministic.

\begin{corollary}\label{cor:fixeddbasic}
If $\PP[d=d_0]=1$ for some fixed $d_0$, then for any $\phi\in(0,\frac{1}{2})$ and symmetric $\mu$, we have
\begin{align}
   \label{eq:basic2} &B_\phi\boxconv(\mathcal{Q}\mu)\prec_0 \mathcal{Q}(B_\phi\boxconv\mu)& \textup{ if } d_0=2,\\
     &B_\phi\boxconv(\mathcal{Q}\mu)\prec \mathcal{Q}(B_\phi\boxconv\mu)& \textup{ if } d_0>2.\label{eq:2_cor_2}
\end{align}
\end{corollary}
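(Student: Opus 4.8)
The plan is to reduce the corollary directly to Proposition~\ref{lemma:degfd}, exploiting that in the deterministic-degree regime the BP operator takes the especially simple form of Proposition~\ref{prop:q_express} and that box convolution is commutative and associative.

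First I would record the consequences of Proposition~\ref{prop:q_express} when $\PP[d=d_0]=1$: namely $\calQ\mu=(B_\delta\boxconv\mu)^{*(d_0)}$ and, applied to the (symmetric) input $B_\phi\boxconv\mu$, also $\calQ(B_\phi\boxconv\mu)=(B_\delta\boxconv(B_\phi\boxconv\mu))^{*(d_0)}$. Set $\nu\triangleq B_\delta\boxconv\mu$, which is symmetric. Then $B_\phi\boxconv(\calQ\mu)=B_\phi\boxconv(\nu^{*(d_0)})$, and since $\phi\in(0,\frac{1}{2})\subset(0,1)$ I may apply Proposition~\ref{lemma:degfd} to this $\nu$: it gives $B_\phi\boxconv(\nu^{*(2)})\prec_0(B_\phi\boxconv\nu)^{*(2)}$ when $d_0=2$, and $B_\phi\boxconv(\nu^{*(d_0)})\prec(B_\phi\boxconv\nu)^{*(d_0)}$ when $d_0>2$.

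It then remains to identify the right-hand sides with $\calQ(B_\phi\boxconv\mu)$. By commutativity of box convolution (Definition~\ref{def:boxc}) and associativity (Proposition~\ref{prop:box_alter}), $B_\phi\boxconv\nu=B_\phi\boxconv B_\delta\boxconv\mu=B_\delta\boxconv(B_\phi\boxconv\mu)$, whence $(B_\phi\boxconv\nu)^{*(d_0)}=(B_\delta\boxconv(B_\phi\boxconv\mu))^{*(d_0)}=\calQ(B_\phi\boxconv\mu)$ by Proposition~\ref{prop:q_express} once more. Substituting this equality of distributions into the two relations obtained in the previous step yields exactly \eqref{eq:basic2} and \eqref{eq:2_cor_2}.

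There is essentially no obstacle here: the analytic content — producing a genuine strict gap between the $\beta$-curves of $B_\phi\boxconv(\nu^{*(d_0)})$ and $(B_\phi\boxconv\nu)^{*(d_0)}$ — is already carried by Proposition~\ref{lemma:degfd} (hence by Corollary~\ref{coro:uniq_4} and Proposition~\ref{prop:conv}), which moreover is stated for an arbitrary symmetric $\nu$, so degenerate inputs (for instance $\mu$ trivial, or $\delta=\frac{1}{2}$, which makes $\nu=\delta_0$) need no separate treatment. The only point requiring care is the legitimacy of sliding the outer $B_\phi$ past the inner $B_\delta$, which is precisely the commutativity of box convolution. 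This corollary is the deterministic-degree building block; Theorem~\ref{lm:uniq_key} will then follow by averaging the corresponding $\beta$-curve inequalities over a general degree distribution $P_d$.
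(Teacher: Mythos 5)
Your proof is correct and follows exactly the paper's approach: substitute $\nu=B_\delta\boxconv\mu$ into Proposition~\ref{lemma:degfd}, then use commutativity of box convolution to identify $(B_\phi\boxconv\nu)^{*(d_0)}$ with $\calQ(B_\phi\boxconv\mu)$. You have simply spelled out the two-line argument the paper gives, with no substantive difference.
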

\begin{proof}
Recall that $\mathcal{Q}$ can be expressed as in Proposition \ref{prop:q_express}. The results in the corollary can be exactly obtained by letting   $\nu=B_{\delta}\boxconv \mu$ in Proposition \ref{lemma:degfd} and applying the commutativity of box convolution. 
\end{proof}

\subsection{Proof of Theorem \ref{lm:uniq_key}}


\begin{proof}
For brevity, we focus on non-trivial cases where $\mu$ is non-trivial and $\delta\neq \frac{1}{2}$. 
We first consider the deterministic $d$ case and fill in the gap for $d=2$.  Note that $\calQ$ preserves degradation.  By Corollary \ref{cor:fixeddbasic} and Proposition \ref{prop:bounded_crit}, we have the following chain of degradation.
$$B_\phi\boxconv \calQ^2\mu \prec_0 \calQ (B_\phi\boxconv \calQ\mu) \preceq {\mathcal{Q}^2(B_\phi\boxconv\mu)}.$$
In particular, the first step is obtained by replacing $\mu$ with $\calQ\mu$ in Corollary \ref{cor:fixeddbasic}. 
The above chain implies strict inequality in $\beta$-curves for $t\in(0,t_{\max}(B_\phi\boxconv \calQ^2\mu)]$. By non-trivial condition,  from Proposition \ref{lemma:strict_beta}, it remains to prove strict inequality of $\beta$-curves at $t=0$. 

To that end, we zoom in on the second step and apply Rule of Convolution to the first inequality of the following chain.
\begin{align}
    \calQ (B_\phi\boxconv \calQ\mu)&=(B_\phi\boxconv B_{\delta}\boxconv\calQ\mu)^{*(2)}\\
    &\preceq (B_\phi\boxconv B_{\delta}\boxconv\calQ\mu)*( B_{\delta}\boxconv\calQ(B_\phi\boxconv\mu)) \label{inequality:1_2}\\ 
    &\preceq ( B_{\delta}\boxconv\calQ(B_\phi\boxconv\mu))^{*(2)}={\mathcal{Q}^2(B_\phi\boxconv\mu)}.
\end{align}
Note that Corollary \ref{cor:fixeddbasic} and Proposition \ref{prop:conv} implies $$B_\phi\boxconv B_{\delta}\boxconv\calQ\mu\prec_0  B_{\delta}\boxconv\calQ(B_\phi\boxconv\mu),$$ 
and the non-trivial condition implies that  the $r_{\max}$ functions for both sides are different.
Therefore, we can choose $\ell=r_{\max}(B_\phi\boxconv B_{\delta}\boxconv\calQ\mu)$ for the Rule of Convolution and apply it to  inequality (\ref{inequality:1_2}), which leads to the needed strict condition at $t=0$.  
 
Now we consider general degree distributions. 
First for $\PP[d>2]=0$, recall that our formulation assumes non-trivial cases where $\PP[d\leq 1]<1$. We have $d=2$ with non-zero probability. Then the $r_{\max}$ function of $B_\phi\boxconv(\mathcal{Q}^2\mu)$ is identical to that of  its $d=2$ component. Thus, by linearity, 
our earlier proof for the deterministic $d=2$ case implies strict inequality of $\beta$-curves for the full range $t\in[0,t_{\max}(B_\phi\boxconv(\mathcal{Q}^2\mu))]$, 
and the needed statement is implied.

On the other case, we have $\PP[d>2]>0$. If $d$ is upper bounded by some fixed integer almost surely, we can let $d_0$ be the largest possible $d$ for such degree distribution and apply the same linearity argument to inequality (\ref{eq:2_cor_2}) to prove the statement. Otherwise, $d$ is unbounded, and we have strict inequality on $\beta$-curves for any $t<t_{\max}(\mathcal{Q}(B_\phi\boxconv\mu))=1$. Note that  $t_{\max}(B_\phi\boxconv(\mathcal{Q}^2\mu)=1-2\phi<1$. The statement follows from Proposition \ref{prop:bounded_crit}.   
\end{proof}

\subsection{Proof of Theorem \ref{thm:key_survey}}\label{sec:pt_key_survey}



We start by formulating two useful results. 
\begin{prop}\label{prop:sur_basic} 
For any $\phi\in(0,1)$ and any symmetric distributions $\mu$, $\nu$, let $s_{\min}\triangleq F_{\phi}(r_{\max}(\mu))-r_{\max}(\nu)$. We have
\begin{align}\label{ineq:sur_basic_1}
    B_\phi\boxconv(\mu*\nu)\prec (B_\phi\boxconv\mu)*\nu 
\end{align} 
if $s_{\min}< 0$, and 
\begin{align}\label{ineq:sur_basic_2}
    B_\phi\boxconv(\mu*\nu)\prec_{s_{\min}} (B_\phi\boxconv\mu)*\nu
\end{align} 
otherwise.
\end{prop}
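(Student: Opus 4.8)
The plan is to reduce Proposition~\ref{prop:sur_basic} to the already-established Rule of Convolution (Proposition~\ref{lemma:convdeg}) by first handling the degenerate cases and then exhibiting a suitable intermediate distribution in a chain of degradations. First I would dispose of the trivial situations: if $\phi=\tfrac12$ then $B_\phi\boxconv(\cdot)=\delta_0$ on both sides and the statements hold vacuously (reading $\prec_s$ with $\delta_0$ via Proposition~\ref{prop:bounded_crit}), and if $\mu$ is trivial then $B_\phi\boxconv\mu=B_\phi$ and both sides coincide up to a box-convolution that is handled directly. So assume $\phi\in(0,\tfrac12)$ and $\mu$ non-trivial, whence $r_{\max}(B_\phi\boxconv\mu)=F_\phi(r_{\max}(\mu))$.

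The core step is to apply the stringy-tree inequality (Theorem~\ref{thm:stringy}) in the form $B_\phi\boxconv(\mu*\nu)\preceq (B_\phi\boxconv\mu)*(B_\phi\boxconv\nu)$, which is \emph{too weak} on its own because the right-hand side has $B_\phi\boxconv\nu$ rather than $\nu$. Instead I would invoke Corollary~\ref{coro:uniq_4} with $\mu_1=\mu$ and $\mu_2=\nu$ to get the \emph{strict} relation $B_\phi\boxconv(\mu*\nu)\prec_{s_0}(B_\phi\boxconv\mu)*(B_\phi\boxconv\nu)$ with $s_0=|F_\phi(r_{\max}(\mu))-F_\phi(r_{\max}(\nu))|$, and then use that $\nu\succeq B_\phi\boxconv\nu$ (since box-convolution only degrades — cf.\ Remark~\ref{remark:box_phy}), so by the convolution-preserving monotonicity of degradation $(B_\phi\boxconv\mu)*(B_\phi\boxconv\nu)\preceq (B_\phi\boxconv\mu)*\nu$. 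Chaining these via the transitivity clause in Proposition~\ref{prop:bounded_crit} (part 2, $\nu\prec_s\tau\preceq\mu\Rightarrow\nu\prec_s\mu$) already yields $B_\phi\boxconv(\mu*\nu)\prec_{s_0}(B_\phi\boxconv\mu)*\nu$ for \emph{some} $s_0$; the remaining work is to sharpen the threshold from $s_0$ down to $s_{\min}=F_\phi(r_{\max}(\mu))-r_{\max}(\nu)$ and, when $s_{\min}<0$, to promote $\prec_{s_{\min}}$ to full $\prec$ using part~4 of Proposition~\ref{prop:bounded_crit}.

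To get the precise threshold I would not go through Corollary~\ref{coro:uniq_4} but apply the Rule of Convolution (Proposition~\ref{lemma:convdeg}) directly with the roles $\nu\rightsquigarrow B_\phi\boxconv(\text{something})$, $\mu\rightsquigarrow(B_\phi\boxconv\mu)$, $\tau\rightsquigarrow\nu$, choosing $\ell=r_{\max}(\nu)\in\supp(\nu)$. Concretely, starting from a strict relation of the type $B_\phi\boxconv\mu'\prec_{s'}(B_\phi\boxconv\mu)$ obtained on the $\nu$-free parts (or directly from Corollary~\ref{coro:uniq_4} applied to $\mu*\nu$ viewed as a convolution), Proposition~\ref{lemma:convdeg} gives strict $\beta$-curve inequality for $r\in(s'-r_{\max}(\nu),\,r_{\max}(B_\phi\boxconv\mu)-r_{\max}(\nu))=(s'-r_{\max}(\nu),\,F_\phi(r_{\max}(\mu))-r_{\max}(\nu))$, and one reads off the upper endpoint as exactly $\tanh\tfrac{s_{\min}}{2}$ after the identification $t=\tanh\tfrac r2$. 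Checking that $t_{\max}$ strictly decreases (the extra hypothesis in part~1 of Proposition~\ref{prop:bounded_crit}) is immediate since $r_{\max}(B_\phi\boxconv(\mu*\nu))=F_\phi(r_{\max}(\mu)+r_{\max}(\nu))<F_\phi(r_{\max}(\mu))+r_{\max}(\nu)=r_{\max}((B_\phi\boxconv\mu)*\nu)$, strict sub-additivity of the concave increasing $F_\phi$. When $s_{\min}<0$ this interval already extends below $0$, so part~4 of Proposition~\ref{prop:bounded_crit} upgrades $\prec_{s_{\min}}$ to $\prec$, giving inequality~\eqref{ineq:sur_basic_1}.

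The main obstacle I anticipate is bookkeeping the $\beta$-curve endpoints so that the threshold comes out as $s_{\min}$ and not some looser quantity: Corollary~\ref{coro:uniq_4} by itself produces the symmetric threshold $|F_\phi(r_{\max}(\mu))-F_\phi(r_{\max}(\nu))|$ coming from the STL comparison against $(B_\phi\boxconv\mu)*(B_\phi\boxconv\nu)$, whereas we want the (generally larger, hence stronger) range corresponding to comparing against $(B_\phi\boxconv\mu)*\nu$. The resolution is that the $\beta$-curve of the true right-hand side $(B_\phi\boxconv\mu)*\nu$ dominates that of $(B_\phi\boxconv\mu)*(B_\phi\boxconv\nu)$ pointwise (monotonicity again), so any strictness already present on $(s_0-\dots)$ persists and, via a second application of Proposition~\ref{lemma:convdeg} now convolving the strict relation $B_\phi\boxconv\mu\prec(\dots)$ with $\nu$ at $\ell=r_{\max}(\nu)$, extends to the full interval $(s_{\min}-0,\,r_{\max}(\text{RHS})-r_{\max}(\text{RHS}_{\mathrm{trivial part}}))$. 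Carefully tracking which $\ell\in\supp(\nu)$ is used and invoking the ``$\nu\prec\mu$'' branch of Proposition~\ref{lemma:convdeg} whenever the earlier relation is a full $\prec$ is where the delicacy lies, but no genuinely new idea beyond Propositions~\ref{lemma:convdeg}, \ref{prop:bounded_crit} and Corollary~\ref{coro:uniq_4} should be needed.
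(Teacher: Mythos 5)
Your chain $B_\phi\boxconv(\mu*\nu)\preceq(B_\phi\boxconv\mu)*(B_\phi\boxconv\nu)\preceq(B_\phi\boxconv\mu)*\nu$ is the right one, and you correctly observe that pushing the strictness into the \emph{first} link via Corollary~\ref{coro:uniq_4} only produces $\prec_{s_0}$ with $s_0=|F_\phi(r_{\max}(\mu))-F_\phi(r_{\max}(\nu))|>s_{\min}$, which is too weak. The proposed sharpening, however, misapplies the Rule of Convolution. You take the common convolving distribution to be $\tau=\nu$ with $\ell=r_{\max}(\nu)$, which yields a strict $\beta$-curve interval $\left(s'-r_{\max}(\nu),\,F_\phi(r_{\max}(\mu))-r_{\max}(\nu)\right)$ whose \emph{upper} endpoint is $s_{\min}$; but $\prec_{s_{\min}}$ requires strictness for $t\in\left(\tanh\frac{s_{\min}}{2},\,t_{\max}\right]$, an interval whose \emph{lower} endpoint is $\tanh\frac{s_{\min}}{2}$, so the strict region you obtain sits on the wrong side of the threshold. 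There is also a structural mismatch: convolving a relation of the form $B_\phi\boxconv\mu'\prec_{s'}B_\phi\boxconv\mu$ by $\nu$ compares $(B_\phi\boxconv\mu')*\nu$ against $(B_\phi\boxconv\mu)*\nu$, and no choice of $\mu'$ makes the former equal to $B_\phi\boxconv(\mu*\nu)$, since box-convolution does not distribute over $*$; so the step you describe does not belong to the chain at all.

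The correct move is to put the strictness into the \emph{second} link. Start from $B_\phi\boxconv\nu\prec\nu$ and convolve both sides with $\tau=B_\phi\boxconv\mu$, invoking the third statement of Proposition~\ref{prop:conv} (equivalently the Rule of Convolution with $\ell=r_{\max}(B_\phi\boxconv\mu)=F_\phi(r_{\max}(\mu))$, not $\ell=r_{\max}(\nu)$). This yields $(B_\phi\boxconv\mu)*(B_\phi\boxconv\nu)\prec_{s_{\min}}(B_\phi\boxconv\mu)*\nu$ with exactly $s_{\min}=r_{\max}(\tau)-r_{\max}(\nu)$, upgrading to full $\prec$ when $r_{\max}(\tau)<r_{\max}(\nu)$, i.e.\ when $s_{\min}<0$. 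The first link, from Corollary~\ref{coro:uniq_4}, is then used only in its non-strict form, and transitivity (Proposition~\ref{prop:bounded_crit}(2), Proposition~\ref{prop:trans_deg_stdeg}) finishes the proof. The gain from $s_0$ to $s_{\min}$ is precisely the positive slack $r_{\max}(\nu)-F_\phi(r_{\max}(\nu))$ that comes from comparing $\nu$ with its degradation $B_\phi\boxconv\nu$; the strictness must live in that comparison, not in the stringy-tree step.
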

\begin{proof} 
Similar to the basic setting, our technique is to prove strict inequalities for beta-curves by forming a chain of degradation. 
First observe that $B_\phi\boxconv\nu\prec \nu$ for any $\phi\in(0,1)$.  By convolving $B_\phi\boxconv\mu$ on both sides and note that degradation is convolution-preserving, we have
\begin{align}\label{ineq:sur_proof_0}
 (B_\phi\boxconv\mu)*(B_\phi\boxconv\nu) \preceq (B_\phi\boxconv\mu)*\nu.
\end{align}
On the other hand, Corollary \ref{coro:uniq_4} implies following step, which completes the chain.
\begin{align}\label{ineq:sur_proof_1}
 B_\phi\boxconv(\mu*\nu)  \preceq (B_\phi\boxconv\mu)*(B_\phi\boxconv\nu) 
\end{align}

When $s_{\min}< 0$, we have $r_{\max}(B_\phi\boxconv\mu)=F_{\phi}(r_{\max}(\mu))<r_{\max}(\nu)$.
Hence, we can apply the third statement in Proposition \ref{prop:conv} to obtain a strict version of inequality (\ref{ineq:sur_proof_0}), i.e., $(B_\phi\boxconv\mu)*(B_\phi\boxconv\nu) \prec (B_\phi\boxconv\mu)*\nu$. Then, inequality (\ref{ineq:sur_basic_1}) follows from the transitivity statement in Proposition \ref{prop:trans_deg_stdeg}. 

More generally, the strict version of inequality (\ref{ineq:sur_proof_0}) can be written as $(B_\phi\boxconv\mu)*(B_\phi\boxconv\nu) \prec_{s_{\min}} (B_\phi\boxconv\mu)*\nu$ according to Proposition \ref{prop:conv}. Then, inequality \eqref{ineq:sur_basic_2} is proved by the second statement in Proposition \ref{prop:bounded_crit}.
\end{proof}

Next, consider the case of $d=1$, so that  $\calQ_{\textup{s}}\mu=(B_{\delta}\boxconv\mu)*\mu_{\textup{s}}$. 
By induction, one can derive the following result  (proof in Appendix \ref{app:pl_surimp}).
\begin{prop}\label{lemma:sur_imp} 
For $\phi\in(0,1)$ and $d=1$, let $r_{\textup{s}}\triangleq r_{\max}(\mu_{\textup{s}})$, $s_0\triangleq F_{\phi}(r_{\max}(\mu))$, then for any $k\in\mathbb{N}$, we have
\begin{align}
    B_\phi\boxconv(\calQ^k_{\textup{s}}\mu)\prec \calQs^{k}(B_\phi\boxconv\mu)\nonumber
\end{align} 
if $s_k\triangleq F_{\delta}(s_{k-1})-r_{\textup{s}}<0$, and 
\begin{align}
    B_\phi\boxconv(\calQ^k_{\textup{s}}\mu)\prec_{s_k} \calQs^{k}(B_\phi\boxconv\mu)\nonumber
\end{align} 
otherwise.
\end{prop}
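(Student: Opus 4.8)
The plan is to argue by induction on $k$, with the base case $k=1$ coming directly from Proposition~\ref{prop:sur_basic} and the inductive step built from the stringy‑tree bound, the induction hypothesis, and the ``Rule of Convolution'' (Proposition~\ref{lemma:convdeg}). Since $B_\phi=B_{1-\phi}$ as symmetric distributions, I may assume $\phi\in(0,\tfrac12]$; the degenerate sub‑cases ($\phi=\tfrac12$, so $B_\phi=\delta_0$ and the left side is trivial, or $r_{\textup{s}}=+\infty$) force $s_k=-\infty$ for $k\ge 1$ and the asserted relation then reads $B_\phi\boxconv\calQs^k\mu\prec\calQs^k(B_\phi\boxconv\mu)$, which is immediate (or argued as below with arbitrarily large translation points in $\supp(\mus)$). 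Throughout I track the two auxiliary sequences $\rho_k\triangleq r_{\max}(\calQs^k\mu)$ and $\rho_k'\triangleq r_{\max}(\calQs^k(B_\phi\boxconv\mu))$, which for $d=1$ satisfy $\rho_0=r_{\max}(\mu)$, $\rho_0'=F_\phi(r_{\max}(\mu))=s_0$, and $\rho_k=F_\delta(\rho_{k-1})+r_{\textup{s}}$, $\rho_k'=F_\delta(\rho_{k-1}')+r_{\textup{s}}$, to be contrasted with $s_k=F_\delta(s_{k-1})-r_{\textup{s}}$.

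For $k=1$, $\calQs\mu=(B_\delta\boxconv\mu)*\mus$, so applying Proposition~\ref{prop:sur_basic} with its ``$\mu$'' equal to $B_\delta\boxconv\mu$ (whose $r_{\max}$ is $F_\delta(r_{\max}(\mu))$) and its ``$\nu$'' equal to $\mus$ gives $B_\phi\boxconv\calQs\mu\prec_{s_{\min}}(B_\phi\boxconv B_\delta\boxconv\mu)*\mus$ with $s_{\min}=F_\phi(F_\delta(r_{\max}(\mu)))-r_{\textup{s}}$; using the commutation $F_\phi\circ F_\delta=F_\delta\circ F_\phi$ this is exactly $F_\delta(s_0)-r_{\textup{s}}=s_1$, and by commutativity of box convolution $(B_\phi\boxconv B_\delta\boxconv\mu)*\mus=\calQs(B_\phi\boxconv\mu)$, while $s_1<0$ is covered by the $\prec$ branch of Proposition~\ref{prop:sur_basic}. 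For the inductive step, write $\calQs^k\mu=(B_\delta\boxconv\calQs^{k-1}\mu)*\mus$ and $\calQs^k(B_\phi\boxconv\mu)=(B_\delta\boxconv\calQs^{k-1}(B_\phi\boxconv\mu))*\mus$ and consider the chain
\[
B_\phi\boxconv\calQs^k\mu\ \preceq\ (B_\delta\boxconv(B_\phi\boxconv\calQs^{k-1}\mu))*\mus\ \preceq\ \calQs^k(B_\phi\boxconv\mu),
\]
where the first $\preceq$ is Proposition~\ref{prop:sur_basic} followed by box‑convolution commutativity, and the second is the induction hypothesis (in its weak form $B_\phi\boxconv\calQs^{k-1}\mu\preceq\calQs^{k-1}(B_\phi\boxconv\mu)$) together with the fact that $\calQs$ preserves degradation. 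It remains to upgrade this chain to $\prec_{s_k}$, i.e.\ by Proposition~\ref{prop:bounded_crit}(1) to show the associated $\beta$‑curves differ strictly on $t\in(\tanh(s_k/2),\,t_{\max}(B_\phi\boxconv\calQs^k\mu)]$ and that $t_{\max}(B_\phi\boxconv\calQs^k\mu)<t_{\max}(\calQs^k(B_\phi\boxconv\mu))$. The first link is in fact $\prec_\sigma$ with $\sigma=F_\phi(r_{\max}(B_\delta\boxconv\calQs^{k-1}\mu))-r_{\textup{s}}=F_\delta(F_\phi(\rho_{k-1}))-r_{\textup{s}}$, which already supplies the strict $\beta$‑gap for all $t>\tanh(\sigma/2)$; for smaller $t$, the induction hypothesis in its strict form $B_\phi\boxconv\calQs^{k-1}\mu\prec_{s_{k-1}}\calQs^{k-1}(B_\phi\boxconv\mu)$ (or $\prec$ when $s_{k-1}<0$), after $B_\delta\boxconv(\cdot)$ (whose effect on the index, by Proposition~\ref{prop:conv}(2) with $\tau=B_\delta$, evaluates exactly to $F_\delta(s_{k-1})$) and then the Rule of Convolution with $\tau=\mus$ and translation point $\ell=r_{\textup{s}}\in\supp(\mus)$, yields a strict $\beta$‑gap on $r\in(F_\delta(s_{k-1})-r_{\textup{s}},\,F_\delta(\rho_{k-1}')-r_{\textup{s}})=(s_k,\,F_\delta(\rho_{k-1}')-r_{\textup{s}})$. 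Since a strict gap on either link of the chain forces one for the composition (the other link being a degradation), these two ranges cover all of $(\tanh(s_k/2),\,t_{\max}(B_\phi\boxconv\calQs^k\mu)]$ exactly when $\sigma\le F_\delta(\rho_{k-1}')-r_{\textup{s}}$, i.e.\ when $F_\phi(\rho_{k-1})\le\rho_{k-1}'$; together with the strict version $F_\phi(\rho_k)<\rho_k'$ (giving the $t_{\max}$ comparison), Proposition~\ref{prop:bounded_crit}(1) then delivers $B_\phi\boxconv\calQs^k\mu\prec_{s_k}\calQs^k(B_\phi\boxconv\mu)$, promoted to $\prec$ by Proposition~\ref{prop:bounded_crit}(4) when $s_k<0$.

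The only genuine work is the bookkeeping of the ``$\prec_s$'' index: the three elementary operations $B_\delta\boxconv(\cdot)$, $B_\phi\boxconv(\cdot)$ and $*\,\mus$ transform it by three different rules (Propositions~\ref{prop:conv}, \ref{lemma:convdeg}, \ref{prop:sur_basic}), and the content of the step is precisely that their composite reproduces the recursion $s_k=F_\delta(s_{k-1})-r_{\textup{s}}$ while the high‑$t$ contribution of the outer stringy‑tree bound overlaps the low‑$t$ contribution of the inductive gap. That overlap is the auxiliary monotonicity fact $F_\phi(\rho_{k-1})\le\rho_{k-1}'$, which I would prove by a side induction on $k$ using $\rho_0'=F_\phi(\rho_0)$, the commutation $F_\phi\circ F_\delta=F_\delta\circ F_\phi$, monotonicity of $F_\delta$, and the sub‑additivity estimate $F_\phi(a+b)-F_\phi(a)<b$ for $\phi<\tfrac12$, $b>0$ (which holds because $F_\phi'<1$); the same computation yields the strict inequality $F_\phi(\rho_k)<\rho_k'$. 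Beyond this, one must dispatch the degenerate parameter ranges ($\phi\in\{0,1\}$ excluded, $\phi=\tfrac12$ and $\delta=\tfrac12$ trivial, $r_{\textup{s}}=+\infty$) and the threshold case $s_k=0$ separately, which are all routine.
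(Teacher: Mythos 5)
Your proposal is correct and follows essentially the same route as the paper's proof: induction with base case from Proposition~\ref{prop:sur_basic}, a two‑link degradation chain in the inductive step, and tracking where each link produces a strict $\beta$‑curve gap so that the union covers $(\tanh(s_k/2),\,t_{\max}(B_\phi\boxconv\calQs^k\mu)]$.

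The one place where you deviate — introducing a separate ``side induction'' to establish $F_\phi(\rho_{k-1})<\rho_{k-1}'$ from the sub‑additivity $F_\phi(a+b)<F_\phi(a)+b$ — is valid but unnecessary: the paper reads the needed inequality $r_{\max}(\calQs^{k-1}(B_\phi\boxconv\mu))>F_\phi(\rho_{k-1})$ directly off the inductive hypothesis $B_\phi\boxconv\calQs^{k-1}\mu\prec_{s_{k-1}}\calQs^{k-1}(B_\phi\boxconv\mu)$ (strictness at $t_{\max}$ of the left‑hand side forces a strict $t_{\max}$ separation). The same sub‑additivity of $F_\phi$ still appears in the paper, just deployed once at the very end to compare the top of the first link's strict range, $F_\phi(F_\delta(\rho_{k-1}))+r_{\textup{s}}$, with $F_\phi(\rho_k)=F_\phi(F_\delta(\rho_{k-1})+r_{\textup{s}})$; you avoid that comparison by staying with the $\prec_\sigma$ form of the first link, whose upper end is already $t_{\max}(B_\phi\boxconv\calQs^k\mu)$ by definition. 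A cosmetic slip: you wrote the overlap condition as $F_\phi(\rho_{k-1})\le\rho_{k-1}'$, but since both strict‑gap intervals are open at the relevant endpoint you in fact need the strict inequality $F_\phi(\rho_{k-1})<\rho_{k-1}'$ for $k\ge 2$ — which your side induction does give, so the argument goes through.
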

One can show that there is a finite $k$ for $s_k<0$. For example, as a rough estimate, we have $s_0\leq F_{\phi}(+\infty)<+\infty$ and $s_k\leq s_{k-1}-r_\textup{s}$ for any positive $s_k$. Because $\mu_{\textup{s}}$ is non-trivial, we also have $r_{\textup{s}}>0$. Hence, we can find $k\leq 1+{s_0}/{r_{\textup{s}}}$ for strict degradation to hold, which gives the needed statement.

\begin{proof}[Proof of Theorem~\ref{thm:key_survey}]
First, consider the case $\PP[d>1]=0$. Recall that the theorem statement assumes $\PP[d=0]<1$, so that
we have $d=1$ with non-zero probability. Consequently, the $\beta$-curve analysis is dominated by
the $d=1$ component, and strict inequalities in the full range of $t$ can be obtained using Proposition \ref{lemma:sur_imp}. 

Formally, 
$\mathcal{Q}_{\textup{s}}$ can be written as a linear combination of two operators, each corresponds to the BP operator for a deterministic $d\in\{0,1\}$. For brevity, we denote them by $\mathcal{Q}_{\textup{s},0}$ and $\mathcal{Q}_{\textup{s},1}$. 
Then, $\mathcal{Q}_{\textup{s}}^k$ can be expanded into a linear combination of $2^k$ chains, and  each side of inequality  (\ref{sdeg:sur_key1}) can be  decomposed into the following $2^k$ terms.  
\begin{align*}
   \sum_{d_1,...,d_k\in\{0,1\}} 
   B_{\phi}\boxconv \mathcal{Q}_{\textup{s},d_1}\mathcal{Q}_{\textup{s},d_2}...\mathcal{Q}_{\textup{s},d_k}\mu \cdot \prod_{j=1}^{k} \PP[d=d_j]&
   \prec \sum_{d_1,...,d_k\in\{0,1\}} \mathcal{Q}_{\textup{s},d_1}\mathcal{Q}_{\textup{s},d_2}...\mathcal{Q}_{\textup{s},d_k}(B_{\phi}\boxconv\mu) \cdot \prod_{j=1}^{k} \PP[d=d_j]
\end{align*}
Each corresponding terms in the above inequality can be compared individually. Recall that $\calQ_{\textup{s}}\mu=(\calQ\mu)*\mu_{\textup{s}}.$ By Corollary \ref{cor:fixeddbasic} and Proposition \ref{prop:sur_basic}, 
the inequality $B_\phi\boxconv(\mathcal{Q}_{\textup{s}}\mu)\preceq \mathcal{Q}_{\textup{s}}(B_\phi\boxconv\mu)$ holds for any BP operator $\mathcal{Q}_{\textup{s}}$, which includes $\mathcal{Q}_{\textup{s},0}$ and $\mathcal{Q}_{\textup{s},1}$. By applying this inequality recursively, we have the following individual comparisons. 
\begin{align*}
   B_{\phi}\boxconv \mathcal{Q}_{\textup{s},d_1}\mathcal{Q}_{\textup{s},d_2}...\mathcal{Q}_{\textup{s},d_k}\mu &
   \preceq  \mathcal{Q}_{\textup{s},d_1}\mathcal{Q}_{\textup{s},d_2}...\mathcal{Q}_{\textup{s},d_k}(B_{\phi}\boxconv\mu) 
\end{align*}
Among all terms, the ones with all BP operators corresponds to $d=1$ achieves strict degradation due to Proposition \ref{lemma:sur_imp}. Note that these terms have the largest $t_{\max}$ values on both sides, because $t_{\max}(\mathcal{Q}_{\textup{s},d}\nu)$ is non-decreasing with respect to $d$ and $t_{\max}(\nu)$ for any symmetric $\nu$.
The interval $t\in [0, t_{\max}(B_\phi\boxconv(\mathcal{Q}_\textup{s}^k\mu))]$ must be contained within the range where the $\beta$-curve inequality between these two terms is strict. 
The $\PP[d=0]<1$ condition ensures that this  gap 
has non-zero weights in the overall $\beta$ functions. Then  
inequality (\ref{sdeg:sur_key1}) is proved by Proposition \ref{lemma:strict_beta}. 

~
It remains to consider general degree distributions with $\PP[d>1]>0$. 
 Note that in the basic setting, we have essentially proved that if 
 $\PP[d>1]>0$, then 
\begin{equation}\label{eq:si_7}
	B_{\phi}\boxconv\mathcal{Q}\mu\prec_0  \mathcal{Q}(B_{\phi}\boxconv\mu). 
\end{equation}
Specifically, the above inequality is directly implied by Theorem \ref{lm:uniq_key} if $\PP[d> 2]>0$. In the other case, we have that the $r_{\max}$ function of $B_{\phi}\boxconv\mathcal{Q}\mu$ is dominated by its $d=2$ component. Thus, inequality (\ref{eq:basic2}) implies non-zero gaps in $\beta$-curves for all $t\in(0,t_{\max}(B_{\phi}\boxconv\mathcal{Q}\mu)]$, which proves 
 inequality (\ref{eq:si_7}). 

Let 
$r_{\textup{s}} = r_{\max}(\mu_{\textup{s}})$, $r_Q = r_{\max}(\calQ \mu)$, $\tilde r_Q =
r_{\max}(\calQ (B_\phi \boxconv \mu))$.
We first apply Proposition \ref{prop:sur_basic} and then the Rule of Convolution 
to obtain
\begin{equation}\label{eq:si_6}
	B_{\phi}\boxconv\mathcal{Q}_{\textup{s}}\mu\preceq (B_{\phi}\boxconv\mathcal{Q}\mu)*\mu_{\textup{s}}\preceq  
\mathcal{Q}_{\textup{s}}(B_{\phi}\boxconv\mu).
\end{equation} 
Consider the first step of inequality~\eqref{eq:si_6}, 
the statement of Proposition \ref{prop:sur_basic} implies that the gap between the $\beta$-curves on both sides is strict for any $t=\tanh{|s|\over 2}$ with 
\begin{equation}\label{eq:si_8}
	F_\phi(r_Q) - r_{\textup{s}} < s < F_\phi(r_Q) + r_{\textup{s}}. 
\end{equation}
Then by the rule of convolution, the $\beta$-curve inequality for the second step is strict for $t=\tanh{|s|\over 2}$ with
\begin{equation}\label{eq:si_9}
	- r_{\textup{s}} < s < \tilde r_Q - r_{\textup{s}}.
\end{equation}
Note that inequality~\eqref{eq:si_7} implies that $\tilde r_Q > r_{\max}(B_\phi \boxconv \calQ \mu) =
F_\phi(r_Q)$. Thus,~\eqref{eq:si_8} and~\eqref{eq:si_9} cover all $0 \le s < F_\phi(r_Q) + r_{\textup{s}}$,
concluding the proof of 
	\begin{align}\label{ineq:fpd_final}
	    B_{\phi}\boxconv\mathcal{Q}_{\textup{s}}\mu\prec
	\mathcal{Q}_{\textup{s}}(B_{\phi}\boxconv\mu).
	\end{align} 
\end{proof}

\section{Implications}\label{sec:imp}

\subsection{Implications on Robust Reconstruction}\label{sec:rob}

A variant of BOT was formulated in \cite{janson2004}, called robust reconstruction, where all leaf observations are obtained through some identical noisy channels. The estimation problem 
is to infer the root variable given the tree structure and the noisy leaf observations. 
Robust reconstruction for the Ising model case was studied in \cite{MNS16}, which can be described as  
in the setting of Fig.~\ref{fig:main}, except that we select 
$\mu_{\textup{s}}$ to be the trivial distribution $\delta_0$. 
As derived earlier, the LLR distributions for this formulation at depth $h$ 
is exactly given by $
\calQ^h \mu_{\textup{r}}$, where $\calQ$ is the BP operator defined in Section \ref{sec:intro}. 

Theorem \ref{thm: uniq} (or Corollary \ref{thm: uniq_as} for the asymmetric case) 
 implies the following statement.


\begin{thm}\label{thm:robust_conv}
For any fixed $P_d$ and $\delta\in[0,1]$, the distributions in the following classes all exist and are identical, unless $d=1$ a.s. and $\delta\in\{0,1\}$.
  
    (a) The limiting LLR distribution for the basic setting.  
    
    (b) The limiting LLR distribution for robust construction with any non-trivial initialization.  
    
    (c) The dominant BP fixed point, i.e., a fixed point $\mu^*$ of  $\mathcal{Q}$ where any other fixed point $\mu$ satisfies 
    $\mu\preceq\mu^*$.
\end{thm}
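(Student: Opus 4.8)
The statement is in essence a repackaging of Theorem~\ref{thm: uniq} (together with Corollary~\ref{thm: uniq_as}), so the proof is mostly a matter of aligning definitions. First I would dispose of the exceptional case: when $d=1$ a.s.\ and $\delta\in\{0,1\}$ the operator $\calQ$ is the identity, the recursion never moves, so the ``limit'' in (b) is just the initialization and generically differs from that in (a); this is precisely why the case is excluded. In every other case Theorem~\ref{thm: uniq} applies, and I would introduce $\mu^*$ for the common weak limit it produces --- the unique non-trivial symmetric fixed point if one exists, and $\delta_0$ otherwise --- and then show that each of the three displayed objects exists and equals $\mu^*$.

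\textbf{Classes (a) and (b).} As derived in Section~\ref{subsec:der}, robust reconstruction with a leaf channel whose LLR is $\mu_{\textup{r}}$ is precisely the recursion $\mu_{(h)}=\calQ^{h}\mu_{\textup{r}}$ (i.e.\ $\mu_{\textup{s}}=\delta_0$). Hence for any non-trivial $\mu_{\textup{r}}$ --- symmetric, or general via Corollary~\ref{thm: uniq_as} --- the limit in (b) exists by Theorem~\ref{thm: uniq} and equals $\mu^*$, independently of the choice of $\mu_{\textup{r}}$. The basic BOT setting of (a) is the special case $\mu_{\textup{r}}=B_0$ (perfectly observed leaves, $B_0[+\infty]=1$), which is non-trivial, so (a) exists and also equals $\mu^*$; alternatively one may invoke the monotone decreasing sequence $\tilde\mu_h=\calQ^{h}B_0$ and its convergence to $\mu^*$ already established in the proof of Theorem~\ref{thm: uniq}. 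In particular (a) and (b) agree and both equal $\mu^*$.

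\textbf{Class (c).} By the uniqueness half of Theorem~\ref{thm: uniq}, the set of symmetric fixed points of $\calQ$ is either $\{\delta_0\}$ (subcritical regime) or $\{\delta_0,\mu^*\}$. In the first case $\delta_0$ is vacuously dominant; in the second, $\delta_0\preceq\mu^*$ since the trivial channel is a degraded version of any BMS channel --- concretely $\beta(t;\delta_0)=|t|\le\beta(t;\mu^*)$ for all $t\in[0,1]$, so $\delta_0\preceq\mu^*$ by the BSS Theorem~\ref{thm:ftdeg} --- hence $\mu^*$ is dominant. Either way a dominant fixed point exists; it is unique by antisymmetry of degradation (Proposition~\ref{prop:q_deg}), and equals $\mu^*$. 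Should one wish to allow asymmetric fixed points in (c), Corollary~\ref{thm: uniq_as} guarantees that the only non-trivial fixed point is the symmetric $\mu^*$, leaving the argument intact. Combining the three parts shows classes (a)--(c) all exist and coincide with $\mu^*$.

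\textbf{Where the care is needed.} There is no genuine analytic obstacle here beyond what is already contained in Theorem~\ref{thm: uniq}; the two points that must be handled carefully are (i) correctly identifying the basic BOT recursion with the noiseless initialization $B_0$ (so that (a) is literally an instance of the convergence statement), and (ii) treating the critical/subcritical dichotomy uniformly, so that ``all exist and are identical'' holds with the common value $\delta_0$ when no non-trivial fixed point exists. One should also record that the exceptional case genuinely breaks the equivalence: for $d=1$, $\delta=0$ we have $\calQ=\mathrm{id}$, so (b) returns the (arbitrary) initialization while (a) returns $B_0$.
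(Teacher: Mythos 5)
Your proposal is correct and takes the same route the paper intends — the paper states this theorem with only the one-line justification ``Theorem~\ref{thm: uniq} (or Corollary~\ref{thm: uniq_as} for the asymmetric case) implies the following statement,'' and your write-up is the careful unpacking of exactly that implication: identify (a) and (b) as instances of the $\calQ$-recursion and invoke uniqueness of the limit; handle the exceptional identity-map case; and for (c) use that in the no-survey case the fixed-point set is $\{\delta_0\}$ or $\{\delta_0,\mu^*\}$ (with $\delta_0\preceq\mu^*$ always, by the BSS comparison), so the dominant element is well-defined, unique by antisymmetry of $\preceq$, and equals the common limit. The only point worth stressing beyond what you wrote is that $\delta_0$ is automatically a fixed point here because $\mu_{\textup{s}}=\delta_0$ (no survey), which is what makes the dichotomy $\{\delta_0\}$ vs.\ $\{\delta_0,\mu^*\}$ exhaustive; your argument implicitly uses this, and it is worth saying explicitly when writing it up.
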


\begin{remark}
In \cite{MNS16}, it was conjectured that the error probability for the maximum likelihood estimator is independent of the observation channels when $h\rightarrow\infty$, as long as their channel capacity is non-zero. 
Note that this error probability can be written as an expectation over the LLR distribution (see Proposition \ref{prop:rem:beta_err}).  The unique convergence stated in Theorem \ref{thm:robust_conv} provides a positive proof to this conjecture. More generally, the same guarantee holds for any quantity that can be written as the expectation of a bounded continuous function on $(-\infty,+\infty]$, such as mutual information and Bayesian estimation errors under different prior distributions. 
Theorem \ref{thm:robust_conv} also provides a proof of Proposition 1 in \cite{9517800}.
\end{remark}






\subsection{Boundary Irrelevance for Broadcast with Survey}\label{sec:bi_p}

We first present a definition of boundary irrelevance in terms of LLR distributions.
\begin{defn}
For any degree distribution $P_d$, $\delta\in[0,1]$, and symmetric non-trivial survey distribution $\mu_{\textup{s}}$,
let $\calQ_\textup{s}$ be the associated BP operator. We say boundary irrelevance (BI) is satisfied if both $ \mu_{(h)}\triangleq \calQ_\textup{s}^h B_0$ and $    \underline{\mu}_{(h)} \triangleq\calQ_\textup{s}^h B_{\frac{1}{2}}$ weakly converges to the same distribution on domain $(-\infty,+\infty]$ as $h\rightarrow \infty$.
\end{defn}

Note that $\mu_{(h)}$ represents the LLR distribution for estimation with full leaf information,
and $\underline{\mu}_{(h)}$ represents the corresponding LLR distribution with no leaf
information (see Fig.~\ref{fig:main}). The above definition essentially states that ignoring
leaf information will not affect estimation as $h\rightarrow\infty$, which is consistent with the
notation of BI defined in the literature. In particular, one can show that our definition is
equivalent to the version in \cite{ACGP21}, and is stronger than the error-probability guarantee
in \cite{kms16}. Therefore, we present the following Theorem, 
which is a direct consequence of Theorem \ref{thm: uniq_s}, which simultaneously resolves 
Conjecture 1 in \cite{ACGP21} and Conjecture 1 in \cite{kms16}.


\begin{thm}\label{thm:bi}
BI holds for any combination of $P_d$, $\delta\in[0,1]$, and symmetric non-trivial $\mu_{\textup{s}}$.
\end{thm}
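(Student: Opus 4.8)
The plan is to derive Theorem~\ref{thm:bi} as an immediate corollary of Theorem~\ref{thm: uniq_s}; the only work is to line up the definitions. First I would note that in the definition of BI the two sequences are $\mu_{(h)} = \calQ_{\textup{s}}^h B_0$ and $\underline{\mu}_{(h)} = \calQ_{\textup{s}}^h B_{\frac12}$, where $\calQ_{\textup{s}}$ carries the given parameters $(P_d,\delta,\mu_{\textup{s}})$. Here $B_0$ is the LLR law of the noiseless leaf channel ($B_0[\{+\infty\}]=1$), while by Definition~\ref{def:boxc} the distribution $B_{\frac12}$ is supported on $\{-\ln\frac{1-1/2}{1/2},\ln\frac{1-1/2}{1/2}\}=\{0\}$, i.e. $B_{\frac12}=\delta_0$. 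Both are symmetric distributions on $(-\infty,+\infty]$.

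Next I would invoke Theorem~\ref{thm: uniq_s} verbatim: its hypotheses — arbitrary $P_d$, arbitrary $\delta\in[0,1]$, and non-trivial symmetric $\mu_{\textup{s}}$ — coincide with those of Theorem~\ref{thm:bi}, and there is no exceptional case once $\mu_{\textup{s}}$ is non-trivial. Theorem~\ref{thm: uniq_s} produces the unique non-trivial symmetric fixed point $\mu^*$ and asserts that $\calQ_{\textup{s}}^h\mu$ converges weakly to $\mu^*$ for \emph{every} symmetric $\mu$ (the survey makes $\delta_0$ non-stationary, since $\calQ_{\textup{s}}\delta_0=\mu_{\textup{s}}\neq\delta_0$, so the initialization $B_{\frac12}=\delta_0$ is not excluded). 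Specializing to $\mu=B_0$ gives $\mu_{(h)}\to\mu^*$ weakly, and specializing to $\mu=B_{\frac12}=\delta_0$ gives $\underline{\mu}_{(h)}\to\mu^*$ weakly; in particular both limits exist and agree, which is exactly BI. In fact the proof of Theorem~\ref{thm: uniq_s} already establishes the weak convergence of precisely these two sequences, since it sandwiches the general iteration between $\calQ_{\textup{s}}^hB_0$ and $\calQ_{\textup{s}}^h\delta_0$.

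There is no analytic obstacle; all of the work has been spent on Theorem~\ref{thm:key_survey} and Theorem~\ref{thm: uniq_s}. The only remaining remarks, which I would include for completeness, are bookkeeping: (i) the present LLR-distribution formulation of BI is equivalent to the formulation used in~\cite{ACGP21} and, via Proposition~\ref{prop:rem:beta_err}, is stronger than the error-probability statement conjectured in~\cite{kms16}, so Theorem~\ref{thm:bi} closes Conjecture~1 of~\cite{ACGP21} and Conjecture~1 of~\cite{kms16}; and (ii) these equivalences rest only on the standard dictionary between symmetric distributions and BMS channels, so no further estimates are needed.
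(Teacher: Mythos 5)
Your proof is correct and takes exactly the approach the paper intends: the paper itself presents Theorem~\ref{thm:bi} as a ``direct consequence of Theorem~\ref{thm: uniq_s}'' without further elaboration, and your argument faithfully supplies the bookkeeping, correctly identifying $B_{\frac12}=\delta_0$ and $B_0$ as symmetric initializations and noting that the proof of Theorem~\ref{thm: uniq_s} already establishes convergence of precisely these two sequences via the sandwich argument.
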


\subsection{Uniqueness and Convergence in the Large $d$ Limit}\label{sec:large_d}

We consider a recursion process characterized by the following operator $\calQ_{\textup{L}}$: 
For any symmetric $\mu_{\textup{s}}$ and any distribution $P_{\overline{d}}$ on domain
$[0,+\infty]$, we set
 $$\calQ_{\textup{L}}\mu=\mathbb{E}_{\overline{d} \sim P_{\overline{d}}}\left[\mathcal{N}\left(\overline{d}\cdot V_\mu\right)\right]*\mu_{\textup{s}},$$
where $\mathcal{N}(s)\triangleq \mathcal{N}(\frac{s}{2},s)$ for any $s\in[0,+\infty]$,  $V_\mu\triangleq\EE_{R\sim\mu}\left[4{\tanh\left(\frac{R}{2}\right)}\right]$ for any symmetric $\mu$, and $\mathbb{E}_{\overline{d} \sim P_{\overline{d}}}\left[\cdot \right]$ represents a  mixture of distributions over the law of $\bar{d}$.  
This operator was considered in \cite{mx15}
 as a limit of $\calQ_{\textup{s}}$ (or $\calQ$ when $\mu_{\textup{s}}$ is trivial) for $\delta\rightarrow \frac{1}{2}$, 
 where the degree distribution $P_d$ is parameterized by $\delta$, and $\overline{d}\triangleq d(1-\delta)^2$ converges in distribution to $P_{\overline{d}}$ on domain $[0,+\infty]$.  
Similar to earlier sections, one can define the fixed point equation to be $\mu=\calQ_{\textup{L}}\mu$ and define BP recursion as $\mu_{(h+1)}=\calQ_{\textup{L}}\mu_{(h)}$. 
The operator $\mathcal{Q}_{\textup{L}}$ can also be defined for asymmetric distributions by setting $V_{\mu}=\mathbb{E}_{R\sim\frac{1}{2}(\mu+\mu^-)}[4\tanh^2\left(\frac{R}{2}\right)]$, where $\mu^-$ is the complement of $\mu$ (see Definition \ref{def:llrd}). 

To extend our earlier results to $\calQ_{\textup{L}}$, we prove its contractivity in terms of the degradation index. In particular, note that the contraction implied by the BP operator is non-multiplicative, a careful investigation is needed to show that strict inequalities in $\beta$-curves are maintained in the limit of large $d$.   
We present the results in the following theorem, and provide a proof in Appendix \ref{app:pl_ld}. 

\begin{thm}\label{thm:larged}
Consider the large $d$ regime defined by any $P_{\overline{d}}$ and any symmetric $\mu_{\textup{s}}$.
\begin{enumerate}
    \item There is at most one unique non-trivial BP fixed point, and it is symmetric. (Uniqueness of non-trivial BP fixed point)
    \item Non-trivial symmetric BP fixed point exists if and only if either  $\EE\left[\overline{d}\right]\in(1,+\infty]$ or $\mu_{\textup{s}}$ is non-trivial. (Existence of non-trivial BP fixed point)
    \item BP recursion with any non-trivial initialization converges to the  unique non-trivial fixed point if it exists, or to the trivial $\delta_0$ otherwise. (Independence of Convergence and Initialization)
    \item When $\mu_{\textup{s}}$ is non-trivial, the above convergence statement also applies for the trivial initialization. 
(Boundary Irrelevance)
\end{enumerate}
\end{thm}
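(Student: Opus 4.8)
The plan is to mirror the architecture that was developed for $\calQ$ and $\calQ_{\textup{s}}$ in Sections \ref{sec:pre}–\ref{sec:pt_uniq}, transporting every ingredient to the Gaussian limit operator $\calQ_{\textup{L}}$. First I would record the basic structural facts: $\calQ_{\textup{L}}$ is weakly continuous, it preserves degradation (which follows since $\mathcal{N}(\bar d\, V_\mu)$ is, up to box-convolution, the LLR of a BMS channel whose quality is monotone in $V_\mu$, and $V_\mu$ is monotone under degradation by Prop.~\ref{prop:rem:beta_err} applied with $t\to 0$ — indeed $V_\mu$ is an affine image of the error probability at uniform prior), and that $\calQ_{\textup{L}}$ commutes with box-convolution in the weak sense needed, because box-convolving by $B_\phi$ scales $V_\mu$ exactly by $(1-2\phi)^2$ (this is the Gaussian analogue of Prop.~\ref{prop:beta_box}). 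The key quantitative step is the analogue of Theorems \ref{lm:uniq_key} and \ref{thm:key_survey}: for every $\phi\in(0,\tfrac12)$ and non-trivial symmetric $\mu$,
\begin{align*}
    B_\phi\boxconv \calQ_{\textup{L}}^k\mu \prec \calQ_{\textup{L}}^k(B_\phi\boxconv\mu)
\end{align*}
for some finite $k$ (with $k=1$ except in degenerate low-degree cases), where $\prec$ is strict degradation in the sense of Definition \ref{def:strict_deg}.

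To prove that strict-contraction inequality I would first establish the non-strict version $B_\phi\boxconv\calQ_{\textup{L}}\mu\preceq\calQ_{\textup{L}}(B_\phi\boxconv\mu)$ as a limit of the stringy tree lemma (Theorem \ref{thm:stringy}) along the $\delta\to\tfrac12$ parametrization described in Section \ref{sec:large_d}, or directly by comparing $\beta$-curves: the $\beta$-curve of $\EE_{\bar d}[\mathcal N(\bar d V_\mu)]*\mu_{\textup{s}}$ depends on $\mu$ only through $V_\mu$, and $V_{B_\phi\boxconv\mu}=(1-2\phi)^2 V_\mu$ while $V_{\calQ_{\textup{L}}\mu}$ and the convolution-with-$\mu_{\textup{s}}$ step interact in a way analogous to Proposition \ref{lemma:convdeg}. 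For strictness I would use the refinement $\prec_s$ (Definition \ref{def:semi_strict}) exactly as in Section \ref{sec:pt_key_survey}: the Gaussian term $\mathcal N(\bar d V_\mu)$ has full support (unbounded $r_{\max}$) whenever $\EE[\bar d]V_\mu>0$, so the $\beta$-curve gap, once it is strict at some point, propagates to all $t<1=t_{\max}$, and when $\mu_{\textup{s}}$ is non-trivial the convolution with $\mu_{\textup{s}}$ (Proposition \ref{prop:sur_basic}, or rather its Gaussian analogue) carries the gap down to $t=0$. The finitely-many degenerate sub-cases ($\EE[\bar d]$ possibly zero on part of the support, $\mu_{\textup{s}}$ trivial) are handled by iterating $k$ times, just as $\PP[d\le 2]$ was handled in Theorem \ref{lm:uniq_key}.

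Given this contraction statement, parts (1) and (3)–(4) follow by repeating verbatim the arguments in the proofs of Theorems \ref{thm: uniq} and \ref{thm: uniq_s}: the degradation index $\phi^*$ and Proposition \ref{lemma:index_def} give uniqueness via the strict-upper-bound property, and weak convergence from an arbitrary non-trivial (resp. trivial, when $\mu_{\textup{s}}$ non-trivial) initialization follows by sandwiching $\calQ_{\textup{L}}^h\delta_0 \preceq \calQ_{\textup{L}}^h\mu \preceq \calQ_{\textup{L}}^h B_0$ between the monotone sequences and invoking Prop.~\ref{prop:lim_deg_com}/\ref{prop:sand}; the asymmetric case reduces to the symmetric one exactly as in Corollary \ref{thm: uniq_as} since $V_\mu$ for asymmetric $\mu$ was defined precisely to make $\calQ_{\textup{L}}$ symmetrize in one step. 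Part (2), existence, is a small separate computation: linearizing $\calQ_{\textup{L}}$ near $\delta_0$, one has $V_{\calQ_{\textup{L}}\mu}\approx \EE[\bar d]\,V_\mu + V_{\mu_{\textup{s}}}$ to first order (since $V_{\mathcal N(s)}\to s$ as $s\to 0$), so $\delta_0$ is the only fixed point when $\EE[\bar d]\le 1$ and $\mu_{\textup{s}}$ is trivial, whereas a non-trivial fixed point is produced by the monotone convergence of $\calQ_{\textup{L}}^h B_0$ whenever $\EE[\bar d]>1$ or $\mu_{\textup{s}}$ is non-trivial (the limit cannot collapse to $\delta_0$ because $V$ is bounded below along the orbit).

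The main obstacle I expect is the strictness in the limiting regime: unlike $\calQ$, the operator $\calQ_{\textup{L}}$ destroys all fine structure of $\mu$ and retains only the scalar $V_\mu$, so one must verify that the $\beta$-curve comparison — which for the finite-$\delta$ operators relied on the piecewise-linear structure and the explicit support bookkeeping of Propositions \ref{lemma:uniq_4b}–\ref{lemma:convdeg} — survives passing to the Gaussian limit, i.e. that the strict inequality does not degrade to equality as $\delta\to\tfrac12$. I would address this not by taking limits of the finite-$\delta$ inequalities (whose gaps could vanish) but by redoing the $\beta$-curve computation directly for the Gaussian mixture $\EE_{\bar d}[\mathcal N(\bar d V)]$, using that its $\beta$-curve is a strictly convex-analytic function of $t$ on $[0,1)$ whenever $V>0$, which forces any two distinct scalings $V_1\neq V_2$ to produce $\beta$-curves that are strictly ordered on the whole interval $[0,1)$ — this is cleaner in the Gaussian limit than in the general case and is the one place the large-$d$ setting is genuinely easier rather than just a transcription.
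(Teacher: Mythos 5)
Your proposal mirrors the paper's architecture closely: you correctly set up the analogue of Theorems~\ref{lm:uniq_key}/\ref{thm:key_survey} as a strict contraction $B_\phi\boxconv\calQ_{\textup{L}}\mu\prec\calQ_{\textup{L}}(B_\phi\boxconv\mu)$ (the paper's Proposition~\ref{prop:larged}, part~2, with $k=1$), you correctly observe that $\calQ_{\textup{L}}$ preserves degradation and that box-convolution scales $V_\mu$ by $(1-2\phi)^2$, you derive parts (1), (3), (4) by the same sandwiching and degradation-index machinery as Theorems~\ref{thm: uniq}/\ref{thm: uniq_s}, and your $V$-dynamics argument for existence (part~2) is essentially the computation the paper delegates to a footnote. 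You also correctly identify the one genuinely new step — a direct $\beta$-curve computation for the Gaussian mixture rather than passing to the limit of the finite-$\delta$ inequalities — and the paper agrees (Appendix~\ref{app:pl_ld}).

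However, your stated justification for the strictness is not quite the right argument, and as sketched it does not close the gap. You claim that strict convex-analyticity of the Gaussian $\beta$-curve forces any two distinct scalings $V_1\neq V_2$ to be strictly ordered on $[0,1)$. That statement is true (and follows from degradation between $\mathcal{N}(s_1)$ and $\mathcal{N}(s_2)$), but it is not the comparison that is actually required. The inequality to be proved compares $B_\phi\boxconv\EE_{\bar d}[\mathcal{N}(\bar d V_\mu)]$ — a BSC-concatenation of a Gaussian mixture, which is \emph{not} itself a Gaussian mixture — against $\EE_{\bar d}[\mathcal{N}(\bar d\,\theta^2 V_\mu)]$ with $\theta=1-2\phi$. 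The box-convolution on the left rescales the $\beta$-curve as in Proposition~\ref{prop:beta_box}, so what is really needed is the pointwise inequality $\theta\,\beta(t/\theta;\mathcal{N}(s))<\beta(t;\mathcal{N}(\theta^2 s))$ for $t\in[0,\theta]$ and $s>0$. This does not follow from ``distinct scalings are strictly ordered.'' The paper proves it by an explicit $\beta$-curve formula for $\mathcal{N}(s)$ via the error function, a closed form for its derivative in $\sqrt s$, and then the strict convexity of $\operatorname{arctanh}$ (giving $r^*(\theta^2 s,\theta t)\le r^*(s,t)$) together with the Lagrange mean value theorem. Your plan needs this specific argument (or an equivalent one), not the monotonicity-in-$V$ observation.

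A minor inaccuracy: $V_\mu=4\EE[\tanh^2(R/2)]$ is not an affine image of the uniform-prior error probability $\frac{1}{2}(1-\beta(0;\mu))$; it equals $8\int_0^1\beta(t;\mu)\,dt-4$, so it is a weighted integral of the full $\beta$-curve. It is nonetheless monotone under degradation, which is the property you actually use, so this does not affect the logic.
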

\begin{remark}
The uniqueness statement in the above theorem resolves Conjecture 2.6 in \cite{mx15}, by applying the special case where $P_{\overline{d}}$ is a delta distribution 
and $\mu_{\textup{s}}=B_{\alpha}$. 
Generally, our formulation does not assume $d$ scales with $(1-2\delta)^{-2}$ in high probability. The sublinear and superlinear components of $P_d$ are naturally captured  by non-zero mass points in $P_{\overline{d}}$ at $0$ and $+\infty$.
\end{remark}

\subsection{Full Characterizations of Accuracy and Entropy in Stochastic Block Model}
\label{sec:sbm_} 


Consider a $2$-SBM problem with a set of $n$ vertices $\mathcal{V}=\{v_1,v_2,...,v_n\}$. Let $X_v$ denote the label on vertex $v$, and $X=(X_{v_1}, X_{v_2}, ..., X_{v_n})$ denote the collection of all labels. The entries of $X$ are i.i.d. Ber$(\frac{1}{2})$.  A random graph $G$ 
 is generated based on the labels according to the rules of SBM. Formally, we represent $G$ using its adjacency matrix, i.e., $G_{ij}=1$ if and only if $v_i$ and $v_j$ are adjacent. 
 Then all $\{G_{ij}\}_{i\leq j}$ are independent Bernoulli random variables, with 
 \begin{align*}\mathbb{P}[G_{ij}=1]=\begin{cases}
  \frac{a}{n} & \textup{if } X_{v_i}=X_{v_j}\textup{ and }i\neq j,
 \\\frac{b}{n} & \textup{if } X_{v_i}\neq X_{v_j},\\
 0 & \textup{otherwise}.
 \end{cases}
 \end{align*}
 The goal in this setting is to design algorithms that use  the random graph 
 to produce an estimate of $X$. 
 
  There are two main quantities of interests. For any estimator $\hat{X}$, its \emph{estimation accuracy}, denoted by $\textup{acc}_n(\hat{X})$, is defined as follows \cite{MNS16}.
\begin{align}\label{eqdef:acc}
    \textup{acc}_n(\hat{X})\triangleq \frac{1}{2}+\left|\frac{1}{n}\sum_{v\in\mathcal{V}}\left|X_v-\hat{X}_v\right|-\frac{1}{2}\right|. 
\end{align}
In particular, note that the conditioned graph distribution is invariant under a global bit flip of hidden labels. No algorithm can achieve a non-trivial estimation in the expected number of correctly estimated labels. 
The accuracy defined in equation (\ref{eqdef:acc}) captures the correlation between the partitions induced by the labels,  which removes the global bit-flip effect.


Note that $\textup{acc}_n(\hat{X})$ is random. The quantity $p_{\textup{G}}(a,b)$ was introduced in \cite{MNS16} to measure the performance of estimators, defined as the maximum accuracy that can be achieved by any estimator for large $n$ with non-zero probability. 
Formally, let $f$ denote the function that takes the observations and returns $\hat{X}$, $p_{\textup{G}}(a,b)$  is defined as follows. 
\begin{align*}
p_{\textup{G}}(a,b)=\adjustlimits\lim_{\epsilon\rightarrow 0} \limsup_{n\rightarrow \infty} \sup_{f} \sup \left\{{p}\ \Big|\  \mathbb{P}\left[\textup{acc}_n(\hat{X})\geq p\right]\geq \epsilon\right\}. 
\end{align*}
The problems of interests are to characterize $p_{\textup{G}}(a,b)$ and to prove whether it can be achieved by any algorithm with high-probability. Both were only resolved when $a$ and $b$ satisfy certain conditions. 
However, with the leaf-independence result  proved in Section \ref{sec:rob}, the proofs in \cite{mns15, MNS16, mns_proof} 
can be extended to all regimes.

The other quantity of interest is the so called \emph{SBM entropy}, denoted by $\mathcal{H}(a,b)$,
which is defined to be the limit of the normalized conditional entropy of all labels $X$ given the
graph $G$, as $n\rightarrow\infty$: 
$$ \mathcal{H}(a,b)\triangleq\lim_{n\rightarrow\infty}\frac{1}{n} H(X|G).$$
The SBM entropy also characterizes the normalized mutual information between the labels and the
graph defined as 
$$ \mathcal{I}(a,b) \triangleq\lim_{n\rightarrow\infty}\frac{1}{n} I(X;G)\,.$$
Similar to the accuracy metric, it was an open problem to characterize $\mathcal{H}(a,b)$ and $\mathcal{I}(a,b)$ for all parameter values using BP fixed points. It was pointed out in \cite{ACGP21} that a complete characterization can be obtained once the BI result stated in Section \ref{sec:bi_p} is proved.   

To summarize, we have the following theorem, which strengthens Theorem 2.9 in \cite{MNS16} and Theorem 1 in \cite{ACGP21}.  


\begin{thm}
For any $a$ and $b$, 
\begin{enumerate}
    \item we have
$p_{\textup{G}}(a,b)=p({\mu^*_{a,b}})$, where ${\mu^*_{a,b}}$ is the dominant BP fixed point (as specified in Theorem \ref{thm:robust_conv}) for the broadcast on tree problem  with $P_d=\textup{Pois}(\frac{a+b}{2})$ and $\delta=\frac{b}{a+b}$, and 
    $p({\mu^*_{a,b}})\triangleq {\mu^*_{a,b}}((0,+\infty])+\frac{1}{2}{\mu^*_{a,b}}(\{0\})$; 
\item there is a polynomial time algorithm that achieves  $p_{\textup{G}}(a,b)$ with high probability, i.e., with $\textup{acc}_n(\hat{X})$ converges in probability to $p_{\textup{G}}(a,b)$ as $n\rightarrow\infty$;
\item we have 
$\mathcal{H}(a,b)=\log 2-\mathcal{I}(a,b)=\int_{0}^1 \mathbb{E}_{R\sim\mu_{\epsilon,a,b}}[\log(2\cosh{\frac{R}{2}})-\frac{R}{2}\tanh(\frac{R}{2}) ]d\epsilon$, where $\mu_{\epsilon,a,b}$ is the unique BP fixed point for the broadcast with survey setting with $P_d=\textup{Pois}(\frac{a+b}{2})$, $\delta=\frac{b}{a+b}$, and BEC survey $\mu_{\textup{s}}=\epsilon B_{0}+(1-\epsilon) B_{\frac{1}{2}}$.  
\end{enumerate}   
\end{thm}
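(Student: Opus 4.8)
The plan is to treat all three parts as consequences of two already-existing reductions from 2-SBM to broadcasting on trees --- the one of~\cite{mns15,MNS16,mns_proof} for parts (1)--(2), and the interpolation identity of~\cite{ACGP21} for part~(3) --- into which I would feed, respectively, Theorem~\ref{thm:robust_conv} (leaf/initialization irrelevance) and Theorem~\ref{thm:bi} (boundary irrelevance). In other words, I would not reprove those reductions; I would only check that each uses ``the BP fixed point is unique'' as a black box rather than a high-SNR hypothesis, and then supply that black box.

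For parts~(1) and~(2), I would start from the standard local-weak-convergence coupling: a neighborhood of radius $\Theta(\log n)$ around a uniformly random vertex of the SBM graph couples, with probability $1-o(1)$, to the Galton--Watson tree with offspring law $\textup{Pois}(\frac{a+b}{2})$ and edges acting as a BSC of crossover $\delta=\frac{b}{a+b}$. This produces a two-sided sandwich for $p_{\textup{G}}(a,b)$. The upper bound comes from a genie that additionally reveals the true labels on a far-away sphere: the optimal estimate of $X_v$ from $(G,\text{revealed labels})$ then has, in the $n\to\infty$ limit, accuracy equal to the full-leaf BOT accuracy of class~(a) in Theorem~\ref{thm:robust_conv}, and letting the revealed sphere recede gives $p_{\textup{G}}(a,b)\le p(\mu^*_{a,b})$. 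The lower bound comes from an explicit local algorithm: obtain a weak estimate of $X$ by a spectral / non-backtracking procedure (valid throughout the non-trivial regime $\frac{(a-b)^2}{2(a+b)}>1$), use it to break the global flip symmetry, then run $O(\log n)$ rounds of BP on $O(\log n)$-radius neighborhoods; by the coupling this attains the robust-reconstruction accuracy of class~(b). The whole pipeline is polynomial-time, which yields part~(2) once the bounds match. Previously~\cite{MNS16} could only close the gap under a high-SNR hypothesis; Theorem~\ref{thm:robust_conv} identifies classes (a), (b) and the dominant fixed point unconditionally, so both bounds equal $p(\mu^*_{a,b})$, and the value $p(\mu^*_{a,b})=\mu^*_{a,b}((0,+\infty])+\frac12\mu^*_{a,b}(\{0\})$ is simply the accuracy of the sign-of-LLR estimator under a uniform prior (Proposition~\ref{prop:rem:beta_err} with $t=0$).

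For part~(3), I would use the area/interpolation argument of~\cite{ACGP21}. Give each label the side observation $Y_v^{(\epsilon)}$ obtained by passing $X_v$ through a binary erasure channel of erasure probability $\epsilon$; then
$$\mathcal H(a,b)=\int_0^1 \lim_{n\to\infty}\frac{d}{d\epsilon}\Big(\frac1n H\big(X\,\big|\,G,Y^{(\epsilon)}\big)\Big)\,d\epsilon,$$
using $H(X\,|\,G,Y^{(1)})=H(X\,|\,G)$ and $H(X\,|\,G,Y^{(0)})=0$. By local weak convergence the $\epsilon$-derivative limits to $\EE_{R\sim\mu}\big[\log(2\cosh\tfrac R2)-\tfrac R2\tanh\tfrac R2\big]$ evaluated at the stationary law $\mu$ of the broadcasting-with-survey recursion $\calQ_{\textup{s}}$ for $\mu_{\textup{s}}=\epsilon B_0+(1-\epsilon)B_{1/2}$ --- but for this to be well defined one needs boundary irrelevance, i.e.\ that the recursions started from full leaf information and from no leaf information converge to the \emph{same} law; this is exactly what~\cite{ACGP21} left conditional. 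Theorem~\ref{thm:bi} supplies it for every $\epsilon\in(0,1)$, Theorem~\ref{thm: uniq_s} identifies that common law as the unique non-trivial fixed point $\mu_{\epsilon,a,b}$, and the endpoints $\epsilon\in\{0,1\}$ are handled by continuity; hence the conditional formula of~\cite{ACGP21} becomes unconditional. Finally $\mathcal H(a,b)=\log 2-\mathcal I(a,b)$ is immediate from $H(X)=n\log 2$.

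The main obstacle --- really the only place where care beyond bookkeeping is needed --- sits inside part~(2): the weak-recovery seed is statistically correlated with the very $O(\log n)$-neighborhood on which BP is then run, so it cannot simply be inserted as a ``leaf observation'' into the tree recursion. One has to run the leave-one-out / resampling argument of~\cite{MNS16,mns_proof} to show that, conditioned on a typical instance, the seed restricted to the boundary of a large ball looks like an i.i.d.\ noisy copy of the true labels there, hence corresponds under the coupling to a \emph{non-trivial symmetric} initialization of $\calQ$. Once that is established, Theorem~\ref{thm: uniq} forces the iterates to converge to $\mu^*_{a,b}$ no matter how noisy the seed is --- which is precisely the strengthening over~\cite{MNS16}. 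We would import this correlation-decoupling step essentially verbatim from the cited works; our contribution is only to replace their high-SNR endgame with the unconditional uniqueness and boundary-irrelevance statements.
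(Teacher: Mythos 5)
Your proposal is correct and matches the paper's approach: the paper likewise proves this theorem by invoking the existing SBM-to-BOT reductions of \cite{mns15, MNS16, mns_proof} for parts (1)--(2) and the interpolation/area formula of \cite{ACGP21} for part (3), supplying Theorem~\ref{thm:robust_conv} and Theorem~\ref{thm:bi} respectively as the previously-conditional black boxes. The paper's own justification is terser (it just cites those works and the new uniqueness/BI results without re-stating the genie, seed-decoupling, or $\epsilon$-interpolation steps), but the substance is the same; the only small wrinkle in your write-up is that you use $\epsilon$ as the erasure probability while the statement uses $\mu_{\textup{s}}=\epsilon B_0+(1-\epsilon)B_{1/2}$ (so $\epsilon$ is the revelation probability), a harmless change of variable in the integral.
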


\subsection{Stochastic Block Model with Side Information}
\label{sec:sbm_side}   

Consider a variant of the $2$-SBM formulation, where the estimator has additional access to a noisy version of all hidden labels. Similar to the broadcast with survey setting, each label is observed through an independent symmetric channel, and we denote  
their LLR distribution by $\mu_{\textup{s}}$. 

In the presence of this side information, a different notion of accuracy was considered in the literature. In \cite{kms16,mx15}, the authors considered estimators that 
asymptotically maximizes 
the expected fraction of correctly estimated labels. Formally, we denote this function by  
 $p_n(\hat{X})$, which can be defined by the following equation. 
\begin{align}\label{eq:est_acc_noisy}
p_{n}(\hat{X})\triangleq \mathbb{E}\left[\frac{1}{n}\sum_{v\in\mathcal{V}}\left|X_v-\hat{X}_v\right|\right]=\frac{1}{n}\sum_{v\in\mathcal{V}}\PP\left[X_v=\hat{X}_v\right].
\end{align}

The estimation accuracy defined in equation (\ref{eq:est_acc_noisy}) can be maximized by applying  the ML estimator individually for each $X_v$. However, the ML estimator becomes computationally intractable when the graph is large as it relies on global information. Therefore, local algorithms have been studied, and they have been conjectured to be optimal 
\cite{kms16, mx15}. 
In particular, for any fixed parameter $t\in\mathbb{N}_+$, an algorithm is called $t$-local if it estimates each $X_v$ only using the information within the subgraph induced by vertices with a distance from $v$ less than $t$. 
Such local information resembles the distribution of local observation in the broadcast with survey setting  
as $n\rightarrow \infty$ for fixed $a$ and $b$, up to a graph isomorphism. Hence, one can estimate each $X_v$ using the same belief propagation 
whenever the graph is locally tree-like. 

Local BP is asymptotically optimal among local algorithms. 
We present the following theorem, which states that there are no gaps between the estimation accuracies of 
local and global algorithms.\footnote{In certain parts of \cite{mx15}, a generalized setting was considered, where $a$, $b$ are $n$-dependent. It is clear that the same generalization is not considered in their Conjecture 1, otherwise the stated limits may not converge. However, one can still prove a similar optimality result using the BI presented in Section \ref{sec:bi_p} and \ref{sec:large_d}, stated in terms of the absolute difference between estimation accuracies. More generally, this asymptotical optimality can hold whenever the local tree-like condition is satisfied for large $n$.  }
\begin{thm}\label{thm:2sbn_side}
For 2-SBM with any fixed $a$, $b$ and side information generated based on any non-trivial $\mu_{\textup{s}}$, we have
\begin{align}\label{eq:sbm_s_opt}
    \lim_{t\rightarrow\infty}\lim_{n\rightarrow\infty}p_n(\hat{X}_{\textup{BP}}^{(t)})= \lim_{n\rightarrow\infty}p^*_n,
\end{align}
where $\hat{X}_{\textup{BP}}^{(t)}$ is any estimator that runs local BP with parameter $\delta=\frac{b}{a+b}$, 
and $p^*_n$ is the optimal estimation accuracy over all estimators.
\end{thm}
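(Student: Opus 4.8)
The plan is to show that both sides of~\eqref{eq:sbm_s_opt} equal $p(\mu^*)$, where $\mu^*$ is the unique non-trivial symmetric fixed point of $\calQ_{\textup{s}}$ (furnished by Theorem~\ref{thm: uniq_s}) for $P_d=\textup{Pois}(\tfrac{a+b}{2})$, $\delta=\tfrac{b}{a+b}$ and the given non-trivial $\mu_{\textup{s}}$, and $p(\mu)\triangleq\mu((0,+\infty])+\tfrac12\mu(\{0\})=\tfrac12\bigl(1+\beta(0;\mu)\bigr)$ is the probability that MAP decoding of a uniform bit through a channel of LLR law $\mu$ is correct. Two facts about $p$ are used throughout: it is bounded and weakly continuous on distributions over $(-\infty,+\infty]$ (since $r\mapsto\tanh\tfrac{|r|}{2}$ is bounded and continuous, with value $1$ at $+\infty$), and it is monotone under degradation, $\nu\preceq\mu\Rightarrow p(\nu)\le p(\mu)$, by Theorem~\ref{thm:ftdeg} evaluated at $t=0$. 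Recall also that $B_{1/2}=\delta_0$, so $\underline\mu_{(t)}=\calQ_{\textup{s}}^{t}B_{1/2}=\calQ_{\textup{s}}^{t}\delta_0$ and $\mu_{(t)}=\calQ_{\textup{s}}^{t}B_0$.

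First I would use the local-weak-convergence reduction already developed in~\cite{MNS16,mx15,kms16}: for fixed $a,b$ the radius-$t$ neighborhood of a uniformly chosen vertex of the 2-SBM graph, together with the side information on it, converges in distribution (up to isomorphism) to the depth-$t$ truncation of the Galton--Watson broadcast tree with offspring $\textup{Pois}(\tfrac{a+b}{2})$, edge channels $\textup{BSC}(\delta)$, and i.i.d.\ surveys of LLR law $\mu_{\textup{s}}$; moreover only $o(n)$ vertices have non-tree-like radius-$t$ neighborhoods. Since $t$-local BP (run correctly) treats the boundary as unobserved, its LLR for the central vertex has law converging to $\underline\mu_{(t)}$ (up to a depth offset immaterial as $t\to\infty$), so by exchangeability and weak continuity of $p$ we get $\lim_{n\to\infty}p_n(\hat X_{\textup{BP}}^{(t)})=p(\underline\mu_{(t)})$. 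As $\hat X_{\textup{BP}}^{(t)}$ is a valid estimator and $p_n$ is maximized by per-vertex MAP, this gives $\liminf_{n\to\infty}p^*_n\ge p(\underline\mu_{(t)})$ for every $t$, hence $\liminf_n p^*_n\ge\sup_t p(\underline\mu_{(t)})$.

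Next I would prove the matching bound $\limsup_{n\to\infty}p^*_n\le p(\mu_{(t)})$ for every $t$. Condition on the radius-$t$ neighborhood of the central vertex $v$, which is a tree with probability $1-o_n(1)$; then the rest of the graph and the remaining side information depend on the hidden labels only through the labels at distance $t$, i.e.\ $X_v\to X_{L_t}\to(\text{everything beyond distance }t)$ is a Markov chain. Consequently the experiment available to the globally optimal estimator for $X_v$ is a degraded version of the one in which $X_{L_t}$ is revealed exactly, and the latter has LLR law $\mu_{(t)}$; controlling the $o_n(1)$ contribution of the bounded-length cycles exactly as in the non-reconstruction estimates of~\cite{MNS16,mx15}, the per-vertex optimal LLR law is asymptotically degraded with respect to $\mu_{(t)}$, so monotonicity of $p$ yields $\limsup_n p^*_n\le p(\mu_{(t)})$, hence $\limsup_n p^*_n\le\inf_t p(\mu_{(t)})$.

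Finally I would close the argument with uniqueness. By Theorem~\ref{thm:bi} (boundary irrelevance, itself a consequence of Theorem~\ref{thm: uniq_s}) both $\underline\mu_{(t)}=\calQ_{\textup{s}}^{t}B_{1/2}$ and $\mu_{(t)}=\calQ_{\textup{s}}^{t}B_0$ converge weakly to $\mu^*$ as $t\to\infty$, and since the degradation order makes $p(\underline\mu_{(t)})$ non-decreasing and $p(\mu_{(t)})$ non-increasing in $t$, weak continuity of $p$ gives $\sup_t p(\underline\mu_{(t)})=\lim_t p(\underline\mu_{(t)})=p(\mu^*)=\lim_t p(\mu_{(t)})=\inf_t p(\mu_{(t)})$. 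Combining the last two paragraphs forces $\lim_{n\to\infty}p^*_n=p(\mu^*)$, while $\lim_{t\to\infty}\lim_{n\to\infty}p_n(\hat X_{\textup{BP}}^{(t)})=\lim_{t\to\infty}p(\underline\mu_{(t)})=p(\mu^*)$ as well, which is~\eqref{eq:sbm_s_opt}. The main obstacle is the upper bound of the third paragraph: the degradation/localization step needs the careful $n\to\infty$ coupling and bounded-cycle error control of~\cite{MNS16,mx15}, while the only genuinely new ingredient is the \emph{unconditional} boundary irrelevance of Theorem~\ref{thm:bi}, which is precisely what makes the two ends of the sandwich coincide without any high-SNR restriction.
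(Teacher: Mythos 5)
Your proposal takes essentially the same route as the paper: reduce the per-vertex problem to the broadcast-with-survey tree via local weak convergence (the paper delegates this to Lemmas 3.7 and 3.9 of \cite{mx15}), squeeze $p_n^*$ and $p_n(\hat X_{\textup{BP}}^{(t)})$ between $p(\calQ_\textup{s}^t B_{1/2})$ and $p(\calQ_\textup{s}^t B_0)$, and then close the sandwich with Theorem~\ref{thm:bi}. One small remark: your identification of which side carries $\mu_{(t)}$ versus $\underline\mu_{(t)}$ is the logically coherent one (local BP should yield the no-leaf recursion and the data-processing upper bound the full-leaf one); the paper's display appears to transpose the two labels, which is harmless for the final conclusion since both limits coincide under BI.
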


\begin{remark}
Conjecture 1 in \cite{kms16} corresponds to taking $\mu_{\textup{s}}=pB_{0}+(1-p)B_{\frac{1}{2}}$ for $p>0$ (binary erasure channels). Conjecture 2.5 in \cite{mx15} corresponds to taking $\mu_{\textup{s}}=B_{\alpha}$ for $\alpha\in[0,\frac{1}{2})$ (binary symmetric  channels). Thus, Theorem \ref{thm:2sbn_side} 
closes both of them. 
\end{remark}

\begin{proof}[Proof of Theorem \ref{thm:2sbn_side}]
The proof can be established by first connecting SBM to the broadcast with survey setting, similar to the approach presented in \cite[Section 2.4]{mx15}.
Formally, one can show that Lemma 3.7 and Lemma 3.9 in \cite{mx15} holds for general symmetric $\mu_{\textup{s}}$, which 
bound the accuracies on both sides of equation \eqref{eq:sbm_s_opt} using limiting  distributions of BP recursion. I.e., for any fixed $t$, $a$, $b$, and symmetric $\mu_{\textup{s}}$, we have  
\begin{align*}
     \lim_{n\rightarrow\infty}p_n\left(\hat{X}_{\textup{BP}}^{(t)}\right)= p\left(\mu_{(t)}\right),\\
      \limsup_{n\rightarrow\infty}p_n^*\leq  p\left(\underline{\mu}_{(t)}\right),
\end{align*}
where $\mu_{(t)}$ and $\underline{\mu}_{(t)}$ are the LLR distributions in the BP recursions  initialized by $\mu_{(0)}=B_{0}$ and $\underline{\mu}_{(0)}=B_{\frac{1}{2}}$. 
As a consequence, the needed equality condition 
is implied if all limiting distributions are identical, which is essentially the BI condition stated in Section \ref{sec:bi_p}.  
In this work, we proved that the BI condition holds for all regimes (Theorem \ref{thm:bi}), which completes the proof of optimality of local BP algorithms.
\end{proof}

\ifgeneral
\section{Further Extensions}\label{sec:gllr}


Consider an infinite tree channel generated by the following 
class of structures. 
\begin{defn}[Element Tree]
An element tree $T$ is defined to be a tuple that consist of

(a) a finite rooted tree;

(b) a parameter $\delta_e$ for each edge $e$; 

(c) a subset of leaves $V_{\textup{G}}$  called the \emph{growing points};

(d) a symmetric distribution $\mu_{v}$ for each vertex $v\notin V_{\textup{G}}$. 
\end{defn}

The tree network is generated recursively based on a distribution of element trees, denoted by $P_{T}$. 
Initially, the tree has a single growing point at the root vertex. 
Then for each step $h\in\mathbb{N}_+$, all growing points are replaced by some random subtrees, that are i.i.d. with distribution $P_{T}$
.\footnote{For readers interested in models where the tree can grow from non-leaf vertices, any such growing point can be treated equivalently as a leaf vertex connected to the tree through an edge with $\delta_e=0$.} 
Given the tree structure, 
a single bit $X_0$ is broadcast downlink from the root, and 
each edge $e$ serves as a BSC channel with crossover probability $\delta_e$. We consider the process of estimating $X_0$ for the tree channel generated at each step $h$, where the estimator has  access to the tree structure, the variables on the boundary of the network (i.e., the growing points) and survey observations at all other vertices each through a channel characterized by the corresponding $\mu_{v}$. Generally, we also allow that the observations on the growing points are noisy, i.e.,   independently though identical binary-input channels.

Similar to earlier formulations, the LLR distribution of this problem can be determined through BP recursion. 
For any fixed element tree $T$, we denote its BP operator 
by $\calQ_T$, which is defined as follows. 
If $T$ is a single growing point, then $\calQ_T$ is the identity map. Otherwise, let $o$ denote the root vertex, $e_1,...,e_d$ denote its incident edges, 
and  $T_1,...,T_d$ denote the corresponding subtrees rooted at the $d$ children vertices, then $\calQ_T\mu$ is given by the law of 
\begin{align}
    R \triangleq R_{\textup{s}}+ \sum_{u=1}^d  F_{\delta_{e_u}}(\hat R_u)\,, 
\end{align} 
where $R_{\textup{s}}\sim \mu_o$, $\hat R_u\sim (1-\delta_{e_u}){\hat\mu_u}+\delta_{e_u}\hat\mu_u^-$ (all jointly independent), $\hat\mu_{u}\triangleq\calQ_{T_u}\mu$, and $\hat\mu_u^-$ is the complement (see Definition \ref{def:llrd}) of $\hat\mu_u$. For a general distribution $P_T$, we define $\calQ_{P_T}\mu\triangleq\mathbb{E}_{P_T}[\calQ_{T}\mu]$, and we say $\mu$ is a BP fixed point if $\mu=\calQ_{P_T}\mu$. 

\begin{prop}
Let $\mu_{(h)}$ be the LLR distribution of the tree model at step $h$, we have $\mu_{(h+1)}=\calQ_{P_T}\mu_{(h)}$. 
\end{prop}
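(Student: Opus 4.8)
The plan is to carry out the derivation of Section~\ref{subsec:der}, now along the recursive element-tree construction. First I would record the structural fact driving everything: the tree channel at step $h+1$ is obtained by drawing a single element tree $T\sim P_T$ at the root and replacing each growing point $g\in V_{\textup{G}}$ of $T$ by an independent copy of the tree channel at step $h$ — indeed the subtree hanging off $g$ is what one gets by running $h$ further growth rounds from a single growing point, hence is distributed exactly as the step-$h$ model. Conditioning on $X_0$ and on the full tree structure (in particular on the realization $T$ at the root), the set of observations splits into conditionally independent bundles: the survey observations at the vertices $v\notin V_{\textup{G}}$ of the root element tree, together with, for each growing point $g$ of $T$, the entire bundle of observations inside the step-$h$ subtree rooted at $g$. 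These are independent given the spins along $T$ because each bundle depends on those spins only through the value at one vertex.

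Second, with $T$ fixed, I would prove by induction along $T$ — from the growing points toward the root $o$ — that for every vertex $v$ of $T$ the LLR for the spin $X_v$ computed from the observations lying in the element-subtree $T_v$ rooted at $v$ (and in the step-$h$ subtrees below it), conditioned on $X_v=0$, has law $\calQ_{T_v}\mu_{(h)}$, and conditioned on $X_v=1$ has the complement law. The base case $v\in V_{\textup{G}}$ is exactly the definition of $\mu_{(h)}$ for the step-$h$ model (and $\calQ_{\{v\}}$ is the identity). For the inductive step at a non-growing $v$ with children $u$ through edges $e_u$: the BSC on $e_u$ makes $X_u$ equal $X_v$ with probability $1-\delta_{e_u}$ given all non-descendant values, so conditioned on $X_v=0$ the child LLR $\hat R_u$ has the mixture law $(1-\delta_{e_u})\hat\mu_u+\delta_{e_u}\hat\mu_u^-$ with $\hat\mu_u=\calQ_{T_u}\mu_{(h)}$ by the induction hypothesis; marginalizing out $X_u$ turns this bundle's contribution to the LLR of $X_v$ into $F_{\delta_{e_u}}(\hat R_u)$ by the identical computation in Section~\ref{subsec:der}; adding the independent survey term $R_{\textup{s}}\sim\mu_v$ and summing over $u$ yields precisely the law $\calQ_{T_v}\mu_{(h)}$ that defines $\calQ_T$. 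The BMS/symmetry argument of Section~\ref{subsec:der} (or the existence of the complement distribution for an LLR law in the general case) gives the companion claim for the $X_v=1$ conditioning.

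Applying this at $v=o$ shows that, conditioned on the root element tree being $T$ and on $X_0=0$, the step-$(h+1)$ LLR variable has law $\calQ_T\mu_{(h)}$. Finally I would average over $T\sim P_T$: since the tree structure is part of the observed data and is independent of $X_0$, the LLR of the mixed channel equals the mixture of the conditional LLR laws, so $\mu_{(h+1)}=\mathbb{E}_{P_T}[\calQ_T\mu_{(h)}]=\calQ_{P_T}\mu_{(h)}$.

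The argument is in essence bookkeeping layered on top of Section~\ref{subsec:der}, so I do not expect a genuine obstacle; the one point requiring care is the element-tree induction, specifically the emergence of the complement $\hat\mu_u^-$ and the mixture $(1-\delta_{e_u})\hat\mu_u+\delta_{e_u}\hat\mu_u^-$. Making this precise amounts to tracking the joint law of $(X_v,X_u,\text{observations below }u)$ and using that the below-$u$ observations are conditionally independent of everything else given $X_u$; this is also where one must be careful that the identity holds for the Radon--Nikodym version of the LLR up to null sets, and that the closing ``LLR of a mixture is the mixture of LLRs'' step is legitimate precisely because the mixing variable (the tree) is observed.
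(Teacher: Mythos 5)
Your proposal is correct and takes essentially the same approach as the paper: the paper's proof simply defers to the BP derivation in Section~1.1 (``follows the same steps\ldots as BP applies to any tree graphical models''), and your argument is a careful unpacking of exactly that reduction, via the induction from growing points to the root along a fixed realization of $T$, followed by averaging over $T\sim P_T$.
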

\begin{proof}
The derivation follows the same steps in Section \ref{subsec:der}, as BP applies to any tree graphical models.   
\end{proof}

The above framework generalizes the settings in earlier sections, which allows us to cover several models of interest as special cases (see Fig. \ref{fig:threegraphs}).
We provide a full characterization for the uniqueness of BP fixed points in all cases.
\begin{figure}
     \centering
     \begin{subfigure}[b]{0.3\textwidth}
         \centering
         \includegraphics[width=\textwidth]{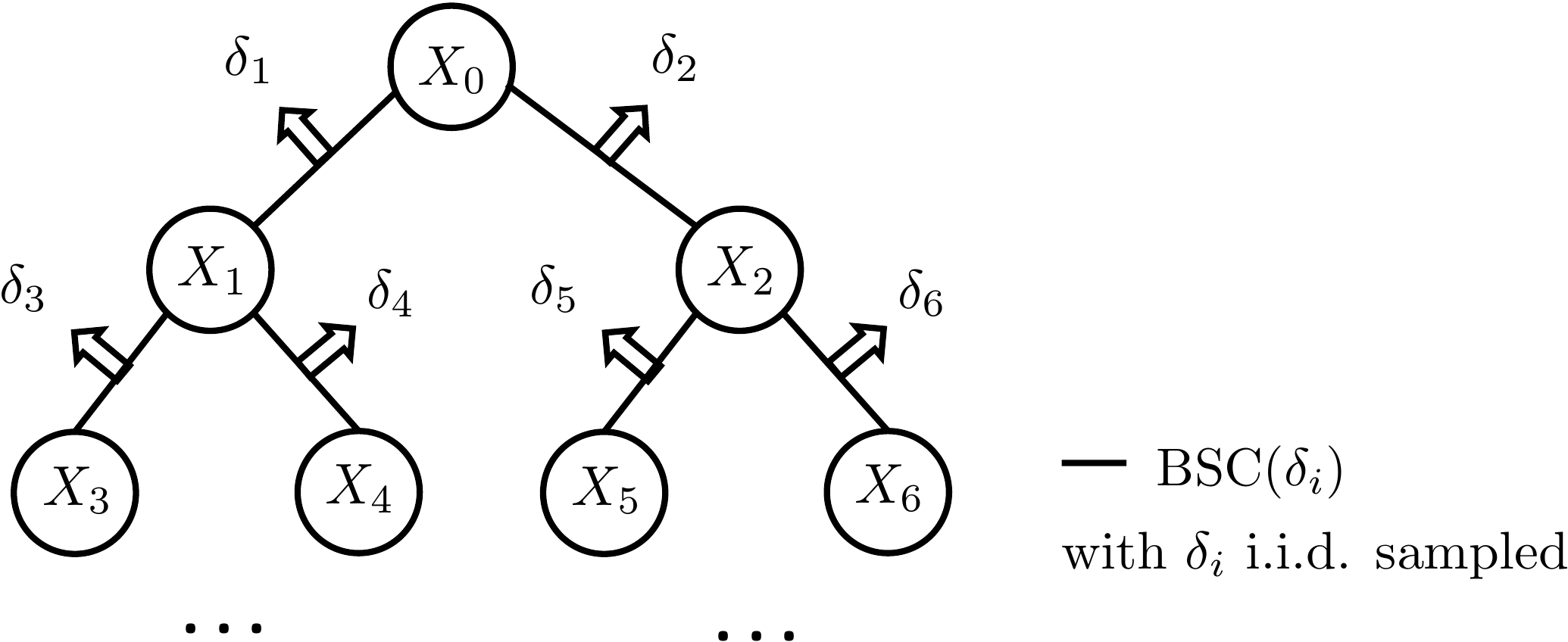}
         \caption{Crossover probabilities being  i.i.d. random and known to the estimator. \newline}
     \end{subfigure}
     \hfill
     \begin{subfigure}[b]{0.3\textwidth}
         \centering
         \includegraphics[width=\textwidth]{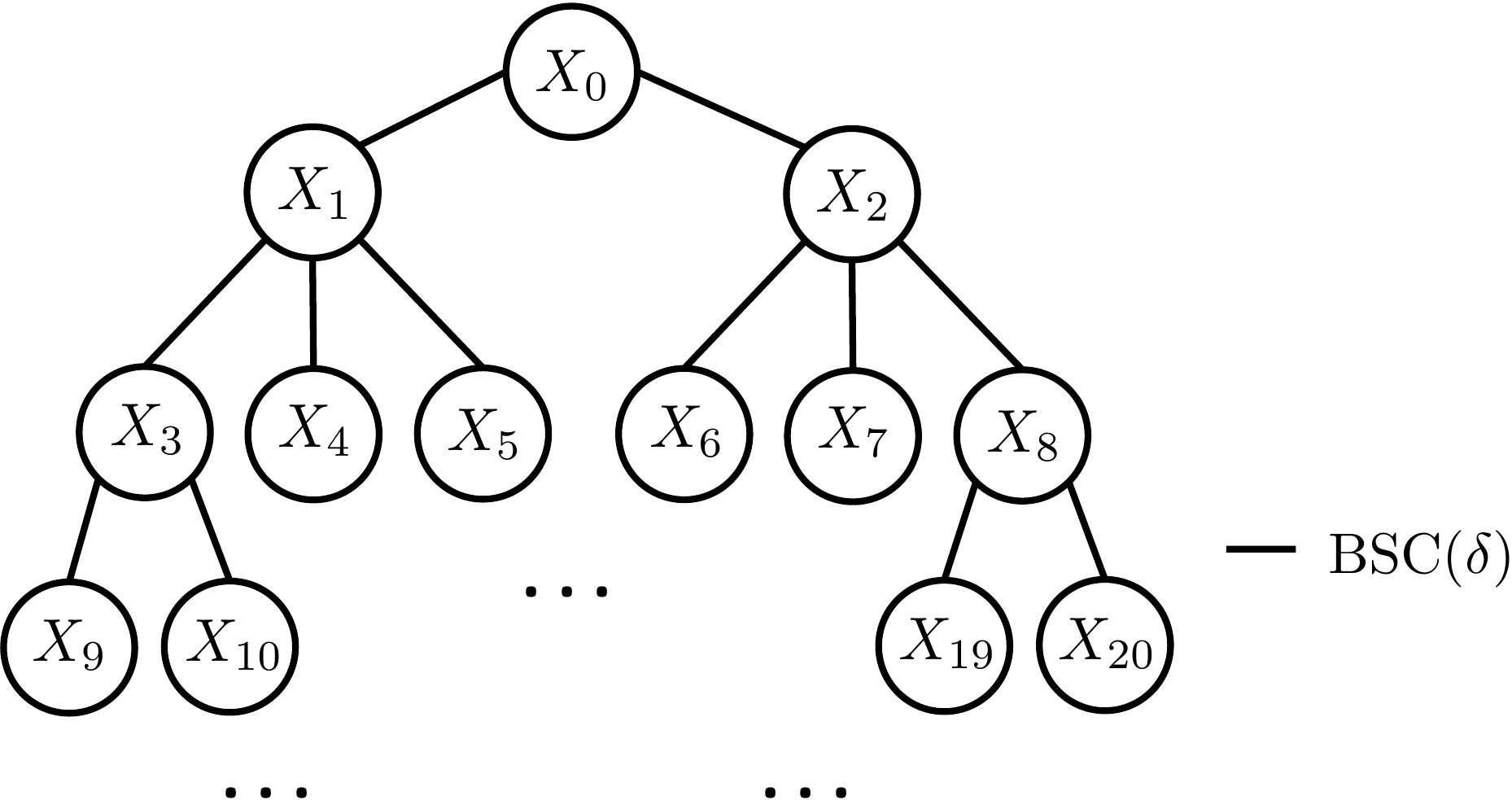}
         \caption{Periodic, i.e., the degree of each vertex depends periodically on their depths.\newline}
     \end{subfigure}
     \hfill
     \begin{subfigure}[b]{0.3\textwidth}
         \centering
         \includegraphics[width=\textwidth]{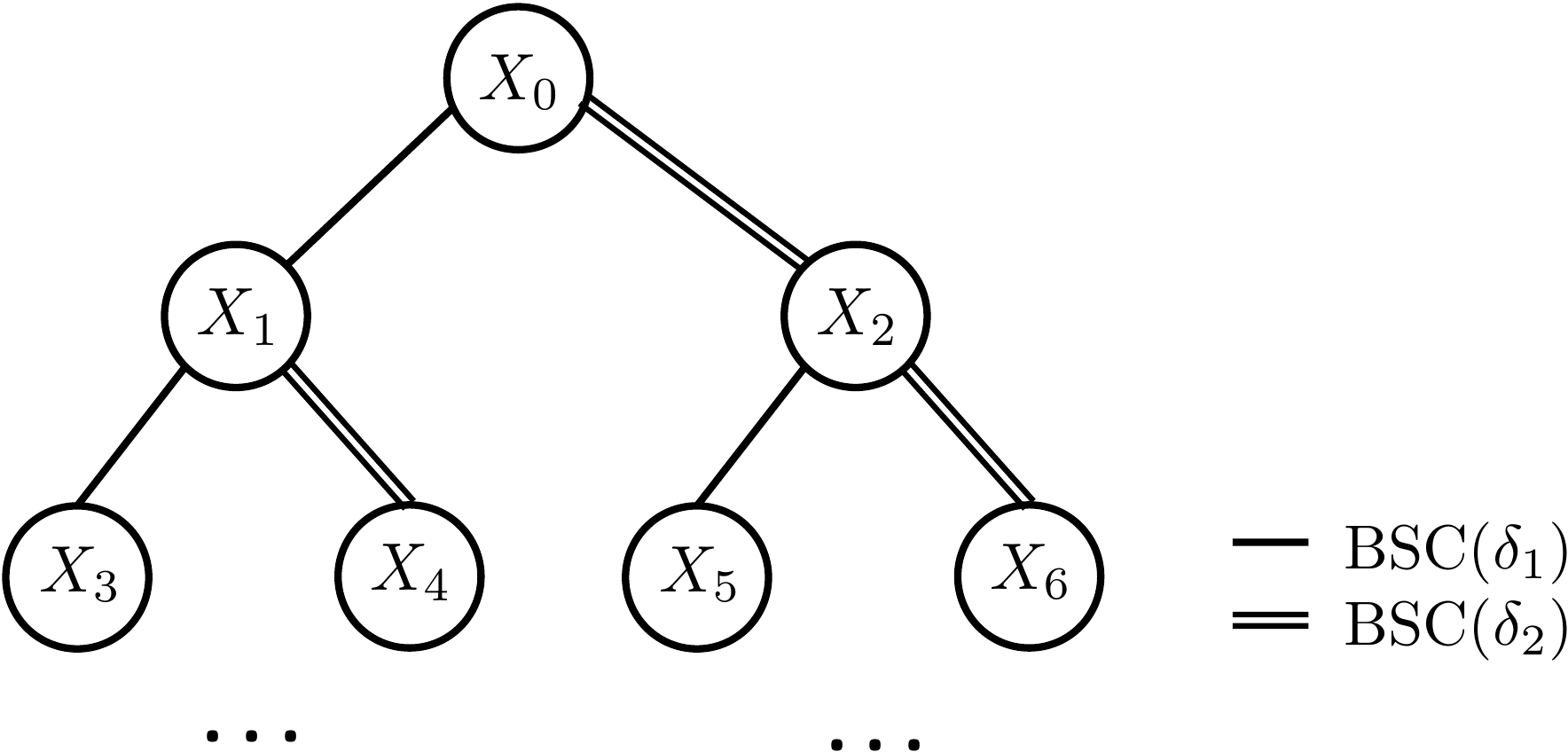}
         \caption{Nonisotropic, i.e., each vertex is connected through a fixed list of channels, each with  distinct crossover probabilities.}
     \end{subfigure}
        \caption{
An Illustration of three different classes of Ising models.   All presented examples can be generated from certain element-tree distributions. }
  \label{fig:threegraphs}
\end{figure} 
For ease of discussion, we assume 
non-zero capacity for all edges, i.e., $\delta_e\neq \frac{1}{2}$, otherwise the channel can be simplified by removing the corresponding subtrees. 
Among all tree distributions, there is one subclass where the recursion is easy to analyze, but the results need to be stated  separately. 
\begin{defn}\label{def:simple_path}
We say an element tree \emph{simple}, if the tree network reduces to a noiseless simple path, i.e.,
 there is exactly one growing point, all edges on the path
from the growing point to the root has $\delta_e\in\{0,1\}$, 
 and all  
survey distributions $\mu_v$
are trivial. We say an element tree distribution $P_T$ is \emph{trivial} if $T$ is simple w.p.1.
\end{defn}

This subclass of trivial tree distributions corresponds to the exceptional cases specified in Theorem \ref{thm:main} in the basic formulation. Generally, $\calQ_{P_T}$ is the identity operator for any trivial $P_T$. 
Given this definition, we state our main theorem for  non-trivial cases as follows.
\begin{thm}\label{thm:gen_t}
Fix any non-trivial distribution $P_T$. The operator $\calQ_{P_T}$ has at most one unique non-trivial fixed point $\mu^*$, and it is symmetric. For all non-trivial $\mu$, the recursion $\calQ^{h}_{P_T}\mu$ converges weakly to the same fixed point, to $\mu^*$ if it exists, and to the trivial $\delta_0$ otherwise. 
\end{thm}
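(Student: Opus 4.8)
The plan is to run the same two-phase argument used for Theorems~\ref{thm: uniq} and~\ref{thm: uniq_s} --- a uniqueness argument driven by the degradation index, followed by a sandwiching argument for convergence --- with the role of Theorems~\ref{lm:uniq_key} and~\ref{thm:key_survey} now played by a generalized strict-degradation estimate (Theorem~\ref{thm:gen_t_tec}): for every $\phi\in(0,\tfrac12)$ and every symmetric $\mu$ there is an integer $k=k(P_T)$, determined by the polygon condition (Definition~\ref{def:polc}), such that
$$B_{\phi}\boxconv\calQ_{P_T}^{k}\mu\ \prec\ \calQ_{P_T}^{k}(B_{\phi}\boxconv\mu).$$
To prove this estimate I would first fix a single element tree $T$ and observe that $\calQ_T$ is built entirely from the elementary operations whose interaction with $B_\phi\boxconv(\cdot)$ was already analyzed in Section~\ref{sec:pt_uniq}: each edge $e$ contributes a box-convolution with $B_{\delta_e}$, each internal vertex contributes a convolution of the incoming subtree laws and a convolution with its survey $\mu_v$. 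Chaining the stringy tree lemma (Theorem~\ref{thm:stringy}), Corollary~\ref{coro:uniq_4}, and the Rule of Convolution (Prop.~\ref{lemma:convdeg}) along root-to-leaf paths exactly as in the proofs of Theorems~\ref{lm:uniq_key} and~\ref{thm:key_survey} gives $B_\phi\boxconv\calQ_T\mu\preceq\calQ_T(B_\phi\boxconv\mu)$ unconditionally, with the $\beta$-curves separating on a \emph{positive-width} interval of $t$ --- hence $\prec$ after passing to the composed depth-$k$ element tree and averaging over $P_T$ --- precisely when the depth-$k$ tree contains a branch vertex whose two subtrees have distinct effective $r_{\max}$-values after a $B_\phi$-insertion; this ``polygon'' configuration is what the polygon condition packages, and it is guaranteed to appear for all large enough $k$ exactly because $P_T$ is non-trivial (a non-simple element tree either genuinely branches or carries a non-trivial survey somewhere, and $B_\phi\boxconv$ strictly shrinks $r_{\max}$). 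The non-trivial hypothesis also rules out the degenerate case where $\calQ_{P_T}$ is the identity operator.

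Granting this estimate, uniqueness is immediate by copying the proof of Theorem~\ref{thm: uniq}. Given two non-trivial fixed points $\mu,\nu$, Proposition~\ref{lemma:index_def}(1)--(2) supply $\phi=\phi^{*}(\mu,\nu)<\tfrac12$ with $B_{\phi}\boxconv\mu\preceq\nu$ (assuming $\phi^*>0$); the estimate gives $B_{\phi}\boxconv\mu\prec\calQ_{P_T}^{k}(B_{\phi}\boxconv\mu)$, and since $\calQ_{P_T}$ preserves degradation, $\calQ_{P_T}^{k}(B_{\phi}\boxconv\mu)\preceq\calQ_{P_T}^{k}\nu=\nu$, so $B_{\phi}\boxconv\mu\prec\nu$ by Proposition~\ref{prop:trans_deg_stdeg}, contradicting $\phi=\phi^{*}(\mu,\nu)$ via Proposition~\ref{lemma:index_def}(3). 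Hence $\phi^{*}(\mu,\nu)=0$, i.e. $\mu\preceq\nu$, and by symmetry $\nu\preceq\mu$, so $\mu=\nu$ by antisymmetry of degradation (Proposition~\ref{prop:q_deg}). The assertion that any non-trivial fixed point is symmetric, together with convergence from asymmetric initializations, follows by the same reduction as in Corollary~\ref{thm: uniq_as} (Appendix~\ref{app:asym}): $\calQ_{P_T}$ maps the class of LLR distributions into itself, and any such distribution is sandwiched in degradation between $B_0$ and the symmetrization used there, so it suffices to control symmetric inputs.

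For convergence with symmetric initialization, I would argue as in the final part of the proof of Theorem~\ref{thm: uniq}. The sequence $\tilde\mu_h=\calQ_{P_T}^{h}B_0$ (perfectly observed growing points) is monotonically decreasing in $\preceq$ because it is the LLR law of the channel $X_0\mapsto(\text{depth-}h\text{ boundary and surveys})$ with a shrinking amount of information; by Proposition~\ref{prop:lim_deg_com} it converges weakly to a fixed point, which we take to be the dominant $\mu^{*}$ (non-trivial iff a non-trivial fixed point exists, else $\mu^{*}=\delta_0$). Dually, $\calQ_{P_T}^{h}\delta_0$ is monotonically increasing (the survey set grows with $h$) and, by uniqueness, converges to the same $\mu^{*}$. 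For an arbitrary non-trivial symmetric $\mu$, choose $\phi<\tfrac12$ with $B_{\phi}\boxconv\mu^{*}\preceq\mu$ (Proposition~\ref{prop:deg_closed}); the stringy tree lemma makes $\calQ_{P_T}^{h}(B_{\phi}\boxconv\mu^{*})$ monotonically increasing, and when $\mu^*\neq\delta_0$ its limit is a non-trivial fixed point, hence equals $\mu^{*}$ by uniqueness. The sandwich $\calQ_{P_T}^{h}(B_{\phi}\boxconv\mu^{*})\preceq\calQ_{P_T}^{h}\mu\preceq\tilde\mu_h$ together with Proposition~\ref{prop:sand} then forces $\calQ_{P_T}^{h}\mu\to\mu^{*}$, and when $\mu^{*}=\delta_0$ the cruder sandwich $\delta_0\preceq\mu\preceq B_0$ already suffices.

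The main obstacle is the first paragraph --- establishing Theorem~\ref{thm:gen_t_tec} and, in particular, formulating the polygon condition so that it simultaneously (i) is implied, after finitely many growth steps, by $P_T$ being non-trivial, and (ii) guarantees a positive-width interval of $t$ on which the $\beta$-curves strictly separate. The difficulty is that, unlike in regular trees, the crossover probabilities $\delta_e$ and the survey laws $\mu_v$ now vary across the tree, so the clean identities of Proposition~\ref{lemma:basic_conv} must be applied along general root-to-leaf paths and then recombined at branch vertices while tracking all the $r_{\max}$-values; the genuinely delicate point, already visible in the $d=2$ case of Theorem~\ref{lm:uniq_key} where $\calQ^2$ rather than $\calQ$ was needed, is ensuring the gap is not squeezed to the single endpoint $t=0$ --- and it is exactly this phenomenon that the iteration count $k$ in the polygon condition is designed to absorb.
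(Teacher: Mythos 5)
Your argument for Theorem~\ref{thm:gen_t} itself --- given the intermediate results Theorem~\ref{thm:gen_t_tec} and Proposition~\ref{prop:gen_t_ele} --- mirrors the paper's proof closely: uniqueness of the non-trivial symmetric fixed point via the degradation-index contraction exactly as in Theorem~\ref{thm: uniq}, elimination of asymmetric fixed points and reduction to symmetric initializations via the Corollary~\ref{thm: uniq_as} mechanism, and convergence by sandwiching $\calQ_{P_T}^{h}\mu$ between the monotone-decreasing sequence $\calQ_{P_T}^{h}B_0$ and the monotone-increasing sequence $\calQ_{P_T}^{h}(B_{\phi^*(\mu^*,\mu)}\boxconv\mu^*)$, whose monotonicity comes from the generalized stringy tree lemma (Prop.~\ref{prop:gen_t_ele}(3)) and whose limit is pinned down by uniqueness.

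One caveat: your informal description of the polygon condition --- ``the depth-$k$ tree contains a branch vertex whose two subtrees have distinct effective $r_{\max}$-values after a $B_\phi$-insertion'' --- does not match Definition~\ref{def:polc}, which is phrased in terms of the absence of \emph{dominant growing points} in BP message passing (no single input at a growing point should determine the sign of the output when set to $+r$ against $-r$ elsewhere). In particular, a string-shaped element tree with a single growing point and a non-trivial survey somewhere on the path has no branch vertices at all, yet is non-trivial and must (after enough self-concatenations) satisfy the polygon condition: the sign-flipping power comes from the survey, not from branching. The paper's proof handles this via polygon numbers and the identity-vs.-representativeness dichotomy (Propositions~\ref{prop:gen_ele_ft}, \ref{prop:gent_ess}, \ref{prop:gent_ptt}, \ref{prop:last}), which you correctly flag as the hard part but whose structure your sketch does not yet capture.
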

\begin{remark}
Note that the trivial fixed point $\mu=\delta_0$ exists if and only if  
all survey distributions are trivial w.p.1. 
This corresponds to the no survey case in the basic formulation. 
 In all other cases, the operator $\calQ_{P_T}$ has exactly one unique non-trivial fixed point, and all BP recursions  converge to $\mu^*$, including ones initialized with the trivial distribution. 
\end{remark}
To state the key intermediate results, we introduce the following concepts. 
\begin{defn}\label{def:polc}
For any fixed tree channel $T$ and any $r>0$, we say a growing point $v$ is \emph{dominant}, if the sign of the LLR message returned by the BP algorithm 
remains consistent whenever the input message at $v$ is $r$ and 
the inputs at all other growing points are within $[-r,r]$.  
For a given parameter $r$, we say $T$ satisfies the \emph{polygon condition} if there are no dominant growing points, and 
we say $T$ is \emph{representative} if it also satisfies   
$$F_T(r)\triangleq r_{\max}\left(\calQ_TB_{\frac{1}{1+e^r}}\right)>r.$$
\end{defn}
\begin{thm}\label{thm:gen_t_tec}
For any distribution $P_T$, any non-trivial symmetric fixed point $\mu$, and any $\phi\in(0,\frac{1}{2})$, let $r= r_{\max}(B_{\phi}\boxconv\mu)$, we have
\begin{align}\label{ineq:gen_t_1}
    B_{\phi}\boxconv\mathcal{Q}_{P_T}\mu\prec  \mathcal{Q}_{P_T}(B_{\phi}\boxconv\mu)
\end{align}  
if $T$ is representative with non-zero probability. Moreover, for any $P_T$ that is non-trivial, there is an integer $h>0$ that the  tree network created by distribution $P_T$  in $h$ steps is representative with non-zero probability, implying 
\begin{align}\label{ineq:gen_t_h}
     B_{\phi}\boxconv\mathcal{Q}_{P_T}^h\mu\prec  \mathcal{Q}_{P_T}^h(B_{\phi}\boxconv\mu).
\end{align} 
\end{thm}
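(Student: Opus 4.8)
The plan is to follow the template of the proofs of Theorems~\ref{lm:uniq_key} and~\ref{thm:key_survey}, lifting the $\beta$-curve bookkeeping from the fixed-degree/survey decompositions to the recursive structure of an element tree.

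First I would reduce to a single (deterministic) element tree. By linearity of $\beta$-curves under mixtures, $\beta(t;\calQ_{P_T}\mu)=\EE_{T\sim P_T}[\beta(t;\calQ_T\mu)]$, so it suffices to show (i) $B_\phi\boxconv\calQ_T\mu\preceq\calQ_T(B_\phi\boxconv\mu)$ for every $T$ in the support, and (ii) a strict $\beta$-curve gap, on the relevant range of $t$, contributed by a positive-probability set of element trees. Claim (i) is the per-element-tree stringy tree lemma, which I would prove by structural induction on $T$ using the decomposition $\calQ_T\mu=\mu_o*\bigl((B_{\delta_{e_1}}\boxconv\calQ_{T_1}\mu)*\cdots*(B_{\delta_{e_d}}\boxconv\calQ_{T_d}\mu)\bigr)$: pull $B_\phi$ past $*\mu_o$ with Proposition~\ref{prop:sur_basic}, distribute it over the convolution with Corollary~\ref{coro:uniq_4} (equivalently the STL), use commutativity of box convolution to write $B_\phi\boxconv B_{\delta_{e_u}}=B_{\delta_{e_u}}\boxconv B_\phi$, and invoke the induction hypothesis on each $\calQ_{T_u}$ together with the fact that $*\,\tau$ and $\tau\boxconv(\cdot)$ preserve degradation. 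Since $\mu$ is a fixed point, the left side of~\eqref{ineq:gen_t_1} is exactly $B_\phi\boxconv\mu$, whose $t_{\max}$ equals $\tanh(r/2)$; so by Proposition~\ref{lemma:strict_beta}, proving~\eqref{ineq:gen_t_1} amounts to showing that the $\beta$-curve of $\calQ_{P_T}(B_\phi\boxconv\mu)$ strictly dominates that of $B_\phi\boxconv\mu$ on $[0,\tanh(r/2))$, together with $t_{\max}(\calQ_{P_T}(B_\phi\boxconv\mu))>\tanh(r/2)$.

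For the strict part I would track the top of the support. An induction on $T$ shows that $r_{\max}(\calQ_T\nu)$ depends on $\nu$ only through $r_{\max}(\nu)$, via an increasing function $G_T$ built from the $F_{\delta_e}$'s and additions, and that $F_T(r)=G_T(r)$; hence representativeness of $T$ is exactly the statement $G_T(r)>r$, which gives $t_{\max}(\calQ_T(B_\phi\boxconv\mu))=\tanh(G_T(r)/2)>\tanh(r/2)$ and settles the $t_{\max}$-gap after averaging. The heart of the matter is the interior gap. Re-running the structural induction with the refined relation $\prec_s$, I would maintain an invariant $B_\phi\boxconv\calQ_T\mu\prec_{s_T}\calQ_T(B_\phi\boxconv\mu)$ whose slack $s_T$ is produced at each node by the Rule of Convolution (Proposition~\ref{lemma:convdeg}) together with Proposition~\ref{prop:sur_basic}/Corollary~\ref{coro:uniq_4}, exactly as the slacks were propagated in the proofs of Proposition~\ref{lemma:degfd} and Theorem~\ref{thm:key_survey}. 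The polygon condition at parameter $r$ --- that no growing point's extreme message alone fixes the sign of the BP output when the others lie in $[-r,r]$ --- is precisely the statement that at no node does a single incoming branch ``pin'' the slack, so the accumulated $s_T$ can be pushed down to (or below) the value for which $\prec_{s_T}$ covers $t=0$; combined with $F_T(r)>r$, which supplies the final margin on $r_{\max}$, this yields strictness on all of $[0,\tanh(r/2)]$ for a representative $T$. Averaging over $P_T$ and applying Proposition~\ref{lemma:strict_beta} then gives~\eqref{ineq:gen_t_1}.

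For~\eqref{ineq:gen_t_h} I would show that for non-trivial $P_T$ some finite $h$-fold composition of element trees is representative with positive probability, and then apply~\eqref{ineq:gen_t_1} to the $h$-step element-tree distribution, which is itself of the required type. Two things need checking: that $G$ iterated from $r=r_{\max}(B_\phi\boxconv\mu)$ eventually produces a realization $T^{(h)}$ with $G_{T^{(h)}}(r)>r$ --- which holds because $\mu$ being a non-trivial fixed point forces $\sup_T G_T$ to lie strictly above the diagonal on $(0,r_{\max}(\mu))$ (otherwise the only fixed point would be trivial), so the increasing iterates stay above $r$ and a positive-probability branch realizes the strict inequality; and that the polygon condition eventually holds for the composed tree --- the only way it can fail at every depth is that every realization is a noiseless simple path with trivial surveys, i.e.\ that $P_T$ is trivial, which is excluded. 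I expect the main obstacle to be the interior gap in the third paragraph: propagating a quantitative $\prec_s$ invariant through an arbitrary element tree and showing the polygon (non-domination) condition is exactly what lets the slack reach $t=0$. This is the point where the $d\le 2$ (resp.\ $d\le 1$) exceptions of Theorems~\ref{lm:uniq_key} and~\ref{thm:key_survey} are absorbed into one condition, and getting the bookkeeping to close uniformly over all tree shapes is the delicate step; the combinatorial claim in the last paragraph is a secondary, more routine difficulty.
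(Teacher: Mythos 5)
Your outline for~\eqref{ineq:gen_t_1} matches the paper's route: linearity of $\beta$-curves reduces to a deterministic $T$, a structural induction carries the $\prec_s$ invariant, and the polygon number $p(T,r)$ is exactly the accumulated slack — the paper's Proposition~\ref{prop:gen_ele_ft} proves $B_\phi\boxconv\calQ_T\mu\prec_{p(T,r)}\calQ_T(B_\phi\boxconv\mu)$ and then uses $p(T,r)<0$ (polygon condition) with $F_T(r)>r$ and the fixed-point identity $r_{\max}(B_\phi\boxconv\calQ_{P_T}\mu)=r$ to get strictness on all of $[0,\tanh(r/2)]$. So the first half of your proposal is correct and in line with the paper.

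For the ``moreover'' part there are two genuine gaps. (i) You claim $\esssup F_T(r)>r$ for non-trivial $P_T$ ``otherwise the only fixed point would be trivial,'' but that step does not follow by this reasoning: $\esssup F_T(r)\ge r$ is easy from the fixed-point relation, while strictness is not a fixed-point-triviality fact. The paper (Proposition~\ref{prop:gent_ess}) proves strictness by introducing $\alpha(s;\nu)=\beta(\tanh(s/2);\nu)-\tanh(|s|/2)$ and a contraction inequality for $\alpha(F_T(s);\calQ_T\nu)$ (Proposition~\ref{prop:gent_ptt}) whose averaged version must be an equality because $\mu$ is a fixed point; tracking the Jensen-type equality cases then forces every realization of $T$ to be simple. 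This is a dedicated technical lemma, not a one-line consequence. (ii) Your claim that the polygon condition ``can only fail at every depth if every realization is a noiseless simple path'' is not correct as stated — a single realization can have a dominant growing point even while a deeper composition does not, so ruling out dominance at all depths requires more. The paper uses a quantitative estimate (Proposition~\ref{prop:last}, $p+s\le r_0+r_1$): after restricting to a positive-measure set $\mathcal{T}$ with $F_T(r)>\tilde r>r$, a dominant growing point in an $h$-fold composition forces the polygon number along the dominant path to drop by at least $\tilde r-r$ per step, which is impossible for $h$ large since it must remain nonnegative. You call this part ``routine''; it is actually where a genuinely new inequality is needed, and your intuitive substitute would not close the argument.
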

\begin{remark}
Intuitively, for any growing point $v$ to be dominant, it is equivalent to state that the estimation error probability 
 in a robust reconstruction setting can be minimized by solely measuring $X_v$. In particular, the estimation error 
 is Bayes with $\Ber({\frac{1}{2}})$ prior, and the observing channels  are 
BSC with crossover probability 
$\frac{1}{1+e^r}$. Theorem \ref{thm:gen_t_tec} states that whenever an element tree is not equivalent to a trivial network, its self-concatenation will forbid such optimal estimators as the tree grows large.
\end{remark}
\begin{remark}
 The settings in earlier sections can be interpreted as special cases 
 where the heights of the element trees are no greater than $1$  and all $\delta_e$ parameters are constant. 
 The contraction results 
 stated in our general formulation
 provide a natural interpretation on the need of special treatments for $d\leq 2$. When no survey channels are present, the polygon condition requires $F_{\delta_e}(r)$ for all edges to be the side lengths of a polygon. It requires at least $d\geq 3$ equilateral edges to form a polygon. Otherwise, self-concatenation or non-trivial survey distributions are needed to meet the condition. 
\end{remark}





We also state some useful properties.  which can be proved by induction over the structure of element trees.
\begin{prop}\label{prop:gen_t_ele}
The following statements hold for any distribution $P_T$. 
\begin{enumerate}
    \item If $\mu\preceq\nu$, then $\calQ_{P_T}\mu\preceq \calQ_{P_T} \nu$.
    \item If $\{\mu_n\}_{n\in\mathbb{N}}$ converges weakly to $\mu$, then $\calQ_{P_T}\mu_n$ converges weakly to $\calQ_{P_T}\mu$.
    \item For any symmetric $\mu$ and $\phi\in[0,1]$, we have  $  B_{\phi}\boxconv\mathcal{Q}_{P_T}\mu\preceq  \mathcal{Q}_{P_T}(B_{\phi}\boxconv\mu)$.
\end{enumerate}
\end{prop}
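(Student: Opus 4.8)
The plan is to induct on the number of levels of the element tree $T$, treat the three claims in the inductive step by combining the inductive hypothesis on the subtrees with the single-level facts already in the paper, and close by averaging over $T\sim P_T$.

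\emph{Setup and base case.} Recall $\calQ_{P_T}\mu=\EE_{P_T}[\calQ_T\mu]$ and that $\calQ_T$ is defined recursively. The base case is $T$ equal to a single growing point, where $\calQ_T$ is the identity and all three statements are trivial. For the inductive step let $o$ be the root of $T$, let $e_1,\dots,e_d$ be its incident edges with subtrees $T_1,\dots,T_d$, and let $\mu_o$ be the root survey. A preliminary claim, proved by the same induction, is that $\calQ_T$ maps symmetric distributions to symmetric distributions (by the inductive hypothesis each $\calQ_{T_u}\mu$ is symmetric, $\mu_o$ is symmetric by definition of an element tree, and convolutions and box-convolutions of symmetric laws are symmetric). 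Granting this, on symmetric inputs the recursion collapses to
$$\calQ_T\mu \;=\; \mu_o * \nu_1 * \cdots * \nu_d, \qquad \nu_u\triangleq B_{\delta_{e_u}}\boxconv\calQ_{T_u}\mu,$$
exactly as in the proof of Proposition~\ref{prop:q_express}: when $\calQ_{T_u}\mu$ is symmetric, the mixture $(1-\delta_{e_u})\calQ_{T_u}\mu+\delta_{e_u}(\calQ_{T_u}\mu)^-$ is the law of $Z_u\hat S_u$ with $Z_u\sim(-1)^{\Ber(\delta_{e_u})}$ and $\hat S_u\sim\calQ_{T_u}\mu$, and $F_{\delta_{e_u}}(Z_u\hat S_u)=Z_uF_{\delta_{e_u}}(\hat S_u)$ has law $B_{\delta_{e_u}}\boxconv\calQ_{T_u}\mu$ by Proposition~\ref{prop:box_alter}.

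\emph{Statements 1 and 3.} For statement 1, by the inductive hypothesis $\calQ_{T_u}\mu\preceq\calQ_{T_u}\nu$ for each subtree; the one-level map sending the subtree output to the law of $F_{\delta_{e_u}}(\hat R_u)$ is precisely precomposition of the corresponding channel with the $\Ber(\delta_{e_u})$ bit-flip, and degradation is monotone under precomposition with a fixed channel, under output relabeling, and under forming the product of independent observations (here convolving the $d$ branch terms and adjoining the independent survey $\mu_o$), and it is preserved by mixing over the observed structure $T\sim P_T$; chaining these gives $\calQ_{P_T}\mu\preceq\calQ_{P_T}\nu$. For statement 3 use the collapsed form and the stringy tree lemma (Theorem~\ref{thm:stringy}) applied to the $d+1$ symmetric factors $\mu_o,\nu_1,\dots,\nu_d$, together with commutativity and associativity of $\boxconv$, to get
$$B_\phi\boxconv\calQ_T\mu \;\preceq\; (B_\phi\boxconv\mu_o) * \big(B_{\delta_{e_1}}\boxconv(B_\phi\boxconv\calQ_{T_1}\mu)\big) * \cdots * \big(B_{\delta_{e_d}}\boxconv(B_\phi\boxconv\calQ_{T_d}\mu)\big).$$
Now $B_\phi\boxconv\mu_o\preceq\mu_o$ (box-convolution with any $B_\phi$ degrades, cf.\ Remark~\ref{remark:box_phy} and the first line of the proof of Proposition~\ref{prop:sur_basic}, with equality for $\phi\in\{0,1\}$ on symmetric laws), while the inductive hypothesis gives $B_\phi\boxconv\calQ_{T_u}\mu\preceq\calQ_{T_u}(B_\phi\boxconv\mu)$; since $\boxconv$ and $*$ preserve degradation, the right-hand side is $\preceq \mu_o * (B_{\delta_{e_1}}\boxconv\calQ_{T_1}(B_\phi\boxconv\mu)) * \cdots = \calQ_T(B_\phi\boxconv\mu)$, the last equality being the collapsed form for the symmetric law $B_\phi\boxconv\mu$. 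Averaging over $T\sim P_T$ and pulling $B_\phi\boxconv(\cdot)$ through the expectation by bilinearity completes statement 3.

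\emph{Statement 2 and the main obstacle.} For statement 2, by the inductive hypothesis $\calQ_{T_u}\mu_n\to\calQ_{T_u}\mu$ weakly; using Skorokhod's representation (as in the weak-continuity argument inside the proof of Theorem~\ref{thm: uniq}) realize all subtree outputs on one probability space with almost sure convergence, observe that each $F_{\delta_{e_u}}$ is bounded and continuous on $(-\infty,+\infty]$, and conclude that the branch terms and hence $\calQ_T\mu_n$ converge weakly; a bounded-test-function argument then lets one average over $T\sim P_T$. The one delicate point, which I expect to be the main obstacle, is the mixture step $(1-\delta_{e_u})\rho_n+\delta_{e_u}\rho_n^-$: one must verify that the complement map $\rho\mapsto\rho^-$ is weakly continuous on the set of LLR distributions, which amounts to ruling out escape of mass to $\pm\infty$ using the defining bound $\int e^{-r}\,d\rho\le 1$. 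For symmetric $\mu_n$ this is immediate because $\rho^-$ is then just the reflection of $\rho$; in the general asymmetric case I would isolate the required uniform-integrability estimate as a short lemma. Apart from this point, every step is a routine appeal to the stringy tree lemma, the degradation-monotonicity and weak continuity of $\boxconv$ and $*$, and the observation that revealing the random structure $T$ cannot undo a degradation comparison.
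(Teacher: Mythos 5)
Your argument is correct and matches the paper's proof in substance: degradation preservation for item 1, weak continuity of the recursion's building blocks via a Skorokhod representation for item 2, and the stringy tree lemma together with commutativity of $\boxconv$ for item 3; your explicit induction over the tree structure is just a finer-grained rendering of what the paper handles by unrolling $\calQ_T$ into a single function of independent inputs and summing over the countable set of finite rooted trees. The complement-map continuity you flag for item 2 is indeed left tacit in the paper (which in effect only applies item 2 to symmetric iterates, where $\rho^-$ is a reflection), but it holds in general without a separate uniform-integrability lemma: for $f$ continuous with compact support in $\mathbb{R}$, $\int f\,d\rho_n^- = \int f(r)e^{-r}\,d\rho_n(r)\to\int f(r)e^{-r}\,d\rho(r)=\int f\,d\rho^-$ because $r\mapsto f(r)e^{-r}$ extends to a bounded continuous function on $(-\infty,+\infty]$, and since $[-\infty,+\infty)$ is compact and all the measures involved are probability measures this pins down the mass at $-\infty$ as well.
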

\begin{proof}
The first property follows from the fact that BP preserves degradation. The second property is due
to the continuity of convolution and box convolution for weak convergence. Formally, 
one can repeat the same relevant steps in the proof of Theorem \ref{thm: uniq} as follows. For any fixed rooted tree, 
we write $\calQ_{T}\mu_n$ as the law of a function that depends on a list of independent random variables with distributions given by $\mu_n$ and $\mu_v$'s.  The value of this function converges in probability as $n\rightarrow \infty$, due to the weak convergence of $\mu_n$, and this convergence is uniform with respect to all parameter values of $\delta_e$'s. Then the weak convergence of $\calQ_{P_T}\mu_n$ follows from the fact that finite rooted trees are countable. 
The third property follows from the stringy tree lemma and the commutativity of box convolution.   
\end{proof}

In the rest of this section, we prove the uniqueness theorem assuming the correctness of above  intermediate results. 
Then we provide a proof of Theorem \ref{thm:gen_t_tec}  
in Appendix \ref{app:gen_tec}.  

\begin{proof}[Proof of Theorem \ref{thm:gen_t}]
First, as explained in the proof of Theorem \ref{thm: uniq}, 
a  contraction statement in the form of inequality \eqref{ineq:gen_t_h} implies the uniqueness of non-trivial symmetric BP fixed point.  Then, as noted in the proof for asymmetric distributions (see 
proof of Corollary \ref{thm: uniq_as}
), the non-existence of asymmetric fixed point and the needed BP convergence can be proved by 
focusing on a certain subclass of the initial distributions.
In particular, let $\mu_{(h)}\triangleq \calQ^h_{P_T}B_0$ denote the recursion with a noiseless channel initialization. 
We only need to show that 
 when $\mu_{(h)}$ converges to the non-trivial $\mu^*$, the following recursion converges to the same limit for any non-trivial $\mu$. 
\begin{align} \underline{\mu}_{(h)}&\triangleq\calQ^h_{P_T}\underline{\mu}_{(0)},\nonumber\\
    \underline{\mu}_{(0)}&\triangleq B_{\phi^*(\mu^*,\mu)}\boxconv\mu^*.\nonumber\end{align}
This generalization is guaranteed by the fact that 
$\calQ_{P_T}$ preserves degradation. 

Recall that 
$\mu^*$ is a symmetric fixed point. We can apply the third property of Proposition \ref{prop:gen_t_ele} to show that \begin{align}
\underline{\mu}_{(0)}=B_{\phi^*(\mu^*,\mu)}\boxconv(\calQ_{P_T}\mu^*)\preceq \calQ_{P_T}(B_{\phi^*(\mu^*,\mu)}\boxconv\mu^*)=\underline{\mu}_{(1)}.    \nonumber
\end{align}
More generally, we have the following chain by applying the first property of Proposition \ref{prop:gen_t_ele}. \begin{align}\label{eq:ptchain}
    \underline{\mu}_{(h)}=\calQ^h_{P_T} \underline{\mu}_{(0)}\preceq \calQ^h_{P_T} \underline{\mu}_{(1)}=\underline{\mu}_{(h+1)}.
\end{align}
Therefore, the monotone convergence property of degradation can be applied and the recursion
$\underline{\mu}_{(h)}$ converges to a symmetric distribution. By the second property of Proposition \ref{prop:gen_t_ele}, this limit is a BP fixed point (see relevant steps in the proof of Theorem \ref{thm: uniq}), and we denote it by
$\underline{\mu}^*$. 
 
  Because the degradation index $\phi^*(\mu^*,\mu)$ always belongs to $[0,\frac{1}{2})$ for non-trivial $\mu$. We have that  $\underline{\mu}_{(0)}$ is non-trivial. Given the degradation chain stated by inequality \eqref{eq:ptchain}, 
 all $\underline{\mu}_{(h)}$ are bounded by $\underline{\mu}_{(0)}$, so is their limit. Hence, $\underline{\mu}^*$ is non-trivial as well. 
 Then by the uniqueness of non-trivial symmetric fixed point we have $\underline{\mu}^*=\mu^*$. 
\end{proof}








\fi
		\section*{Acknowledgement}
This work was supported in part by the MIT-IBM Watson AI Lab and by the National Science Foundation under Grant No CCF-2131115.
\appendices

\section{Properties of degradation, $\beta$-curves and weak convergence}\label{app:deg_detail}

When discussing symmetric distributions, here we consider the most general formulation where  distributions can have non-zero mass at $+\infty$, similar to \cite{richardson2004fixed, 910578}, we adopt the natural definition of weak convergence 
over domain $(-\infty,+\infty]$.

\begin{prop}\label{prop:weak_conv} The set of symmetric probability distributions on $(-\infty, \infty]$ and the set of convex, monotone functions
$\beta:[0,1]\to [0,1]$ such that $\beta(t) \ge t$ are in 1-1 correspondence given by
Def.~\ref{def:beta_s}.
Furthermore, the following are equivalent:
	\begin{enumerate}
	\item The sequence of symmetric distributions $\mu_n$ weakly converge to $\mu$ 
	\item $\beta(t; \mu_n) \to \beta(t; \mu)$ for all $t\in[0,1]$.
	\item $\beta(t; \mu_n) \to \beta(t; \mu)$ uniformly over $t\in[0,1]$.
	\end{enumerate} 
\end{prop}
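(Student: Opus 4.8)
The plan is to prove the bijection first, then reduce the three-way equivalence to standard facts about convex functions and about weak convergence on the compactified half-line $[0,\infty]$. For the bijection, fix a symmetric $\mu$, let $R\sim\mu$, and set $T\triangleq\tanh(|R|/2)\in[0,1]$ with CDF $G(t)\triangleq\PP[T\le t]$. For each constant $c\in[0,1]$ the map $t\mapsto\max\{c,t\}$ is convex, non-decreasing, lies above the identity, and equals $1$ at $t=1$; taking $\EE_{R\sim\mu}$ preserves all of this, which gives the claimed properties of $\beta(\cdot;\mu)$ together with $\beta(1;\mu)=1$. Writing $\max(T,t)=T+(t-T)^{+}$ yields $\beta(t;\mu)=\EE[T]+\int_0^t G(s)\,ds$, so $G$ is exactly the right derivative of $\beta(\cdot;\mu)$; since $\beta(t;\mu)\ge t$ and $\beta(1;\mu)=1$, convexity forces $G(t)\le(1-\beta(t;\mu))/(1-t)\le 1$, hence $\beta(\cdot;\mu)$ is $1$-Lipschitz. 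Injectivity then follows: $\beta$ determines $G=\beta'_{+}$, hence the law of $T$, hence (via the homeomorphism $t\mapsto 2\tanh^{-1}t$ of $[0,1]$ onto $[0,\infty]$) the law $\nu$ of $|R|$; the symmetry relation~\eqref{eq:def_sym} recovers $\mu$ from $\nu$ since it forces $d\mu(r)=(1+e^{-r})^{-1}d\nu(r)$ on $(0,\infty]$, $d\mu(-r)=e^{-r}(1+e^{-r})^{-1}d\nu(r)$ on $(0,\infty)$, and $\mu[\{0\}]=\nu[\{0\}]$. For surjectivity, given convex monotone $\beta\colon[0,1]\to[0,1]$ with $\beta(t)\ge t$, one has $\beta(1)=1$ and its right derivative $G$ is non-decreasing, right-continuous and $[0,1]$-valued, hence a genuine CDF on $[0,1]$ (placing mass $1-G(1^-)$ at the atom $1$); reversing the construction produces a symmetric $\mu$ with $\beta(\cdot;\mu)=\beta$, using $\EE[T]=1-\int_0^1 G=\beta(0)$ to match the constant term.

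Next I would dispatch the easy directions. For $(2)\Rightarrow(3)$: each $\beta(\cdot;\mu_n)$ is $1$-Lipschitz, so the family is equicontinuous, and pointwise convergence of an equicontinuous family on the compact interval $[0,1]$ is automatically uniform; $(3)\Rightarrow(2)$ is immediate. For $(1)\Rightarrow(2)$: for each fixed $t$ the function $r\mapsto\max\{\tanh(|r|/2),|t|\}$ is bounded and continuous on $(-\infty,+\infty]$ (it equals $1$ at $r=+\infty$ and tends to $1$ as $r\to+\infty$), so $\mu_n\Rightarrow\mu$ immediately gives $\beta(t;\mu_n)\to\beta(t;\mu)$.

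The crux is $(2)\Rightarrow(1)$. Pointwise convergence of the convex functions $\beta(\cdot;\mu_n)\to\beta(\cdot;\mu)$ forces convergence of their right derivatives at every point of differentiability of the limit — a standard fact about convex functions — i.e.\ $G_n(t)\to G(t)$ at every continuity point of $G$; hence the laws of $T_n=\tanh(|R_n|/2)$ converge weakly on $[0,1]$, and pushing forward by $t\mapsto 2\tanh^{-1}t$ gives $\nu_n\Rightarrow\nu$ on $[0,\infty]$ for the laws $\nu_n,\nu$ of $|R_n|,|R|$. Finally, for any bounded continuous $f$ on $(-\infty,+\infty]$ the symmetry relation gives $\int f\,d\mu_n=\int g_f\,d\nu_n$ with $g_f(r)=(1+e^{-r})^{-1}f(r)+e^{-r}(1+e^{-r})^{-1}f(-r)$ for $r\in[0,\infty)$ and $g_f(\infty)=f(\infty)$; since $g_f$ is bounded and continuous on $[0,\infty]$, $\nu_n\Rightarrow\nu$ yields $\int f\,d\mu_n\to\int f\,d\mu$, i.e.\ $\mu_n\Rightarrow\mu$. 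I expect the only genuinely delicate point throughout to be the bookkeeping at the point $+\infty$: verifying that the test functions above are continuous at $+\infty$, that the atom of $\mu$ at $+\infty$ corresponds to the atom of $G$ at $t=1$, and that weak convergence of the $G_n$ on $[0,1]$ correctly handles the endpoints — which is precisely why I would route the argument through the CDF $G=\beta'_{+}$ and work on the compactifications $(-\infty,+\infty]$ and $[0,\infty]$ rather than through densities.
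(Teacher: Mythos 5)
Your proof is correct, and the bijection part and the easy implications $(1)\Rightarrow(2)$ and $(2)\Leftrightarrow(3)$ match the paper essentially step for step (both reduce a symmetric $\mu$ to the law of $T=\tanh(|R|/2)$, read the CDF of $T$ off the $\beta$-curve, and use $1$-Lipschitzness for equicontinuity). The genuine difference is in the hard direction: the paper proves $(3)\Rightarrow(1)$ by a moment argument, noting that $\beta(0;\mu)=\EE[T]$ and $\int_0^1 t^{s-1}(\beta(t;\mu)-t)\,dt = \tfrac{1}{s(s+1)}\EE[T^{s+1}]$ for $s>0$, so uniform convergence of $\beta$-curves gives convergence of all moments of $T$, which (since $T$ is supported on the compact set $[0,1]$) implies weak convergence of $T_n$ and hence of $\mu_n$. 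You instead prove $(2)\Rightarrow(1)$ by invoking the standard convex-analysis fact that pointwise convergence of convex functions forces convergence of right derivatives at every differentiability point of the limit, giving $G_n(t)\to G(t)$ at continuity points of $G$ and therefore weak convergence of the laws of $T_n$ directly. Both are valid; your route is arguably more elementary and avoids introducing moments, while the paper's moment identity is a self-contained one-liner that sidesteps any discussion of one-sided derivatives at the boundary points $t\in\{0,1\}$ (a minor but real bit of bookkeeping your argument would need to address, as you yourself flag). The final push-forward step recovering $\mu_n\Rightarrow\mu$ from $\nu_n\Rightarrow\nu$ via the symmetrized test functions $g_f$ is the same device the paper uses at the outset, so the two proofs agree again there.
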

\begin{proof}  First, notice that every symmetric $R\sim \mu$ is completely determined by the
distribution of $T=\tanh {|R|\over 2}$. Indeed, for any bounded  function $f$ that is continuous on $(-\infty,+\infty]$ we have 
$$  \EE_{R\sim \mu}[f(R)] = \EE_{R\sim \mu} \left[\frac{e^Rf(R)+f(-R)}{e^R+1} \right]\,.$$
The function under the latter expectation is even and thus can be written as $h(T)$ for some
continuous function $h:[0,1]\to \mreals$. Hence, we can equivalently study correspondence between
measures $\nu$ of $T$ on $[0,1]$ and $\beta(t;\mu) = \EE_{T\sim \nu} [T \vee t]$. A simple integration
by parts shows 
\begin{equation}\label{eq:bt_curve_id}
	\beta(t;\mu) = 1 - \int_{t}^1 dy \PP[T \le y] = t+\int_t^1 dy \PP[T>y]\,.
\end{equation}
From this identity it is clear how to recover the CDF of $T$ from the $\beta(t;\mu)$ by
differentiation. 

For $1\Rightarrow 2$ we only need to use the fact that $\tanh{|R|\over 2} \vee t$ is a bounded continuous
function of $R$. For $2\Rightarrow 3$ we simply notice that $\beta$-curves are always $1$-Lipschitz,
and thus the family $\beta(t; \mu_n), n\ge 1$ is equicontinuous (hence pointwise convergence and
uniform convergence coincide). For $3\Rightarrow 1$ we first notice that 
$\mu_n \to \mu$ (weakly) is equivalent to convergence of corresponding distributions of $\tanh
{|R|\over 2}$ (as discussed above). Denote this (to be shown) convergence by $\nu_n
\to \nu$ (weakly). To that end, notice that from~\eqref{eq:bt_curve_id} we have $\beta(0)=\EE[T]$
and 
\begin{align*} \int_0^1 t^{s-1} (\beta(t)-t) dt = {1\over s(s+1)} \EE[T^{s+1}] && \forall \, s>0.\end{align*}
Therefore, uniform convergence of $\beta$-curves implies convergence of moments of $T$ and thus
(since $T$ is supported on $[0,1]$) the weak convergence as well.
\end{proof}

\begin{prop}\label{prop:lim_deg_com}\label{prop:sand} \label{prop:order}\label{prop:convcom}\label{prop:bequi}\label{prop:q_deg}
Degradation has the following properties. 
 \begin{enumerate}
 \item  (Continuity) For any two sequences of  distributions $\{{\mu}_n\}_{n\in\mathbb{N}}$ and $\{{\nu}_n\}_{n\in\mathbb{N}}$ that weakly converge to ${\mu}$ and ${\nu}$ respectively and satisfy ${\mu}_{n} \preceq{\nu}_{n}$ for any $n$, we have ${\mu}\preceq{\nu}$.
 \item (Sandwich Theorem)
For any sequences of distributions $\{\underline{\mu}_n\}_{n\in\mathbb{N}}$, $\{{\mu}_n\}_{n\in\mathbb{N}}$, $\{\overline{\mu}_n\}_{n\in\mathbb{N}}$ satisfying $\underline{\mu}_{n}\preceq \mu_{n} \preceq\overline{\mu}_{n}$ for any $n$ and $ \underline{\mu}_{n},\overline{\mu}_{n}$ weakly converges to $ \mu^*$ for some $\mu^*$, we have ${\mu}_{n}$ weakly converges to $ \mu^*$. 
\item (Monotone Convergence) For any sequence of symmetric  distributions $\{{\mu}_n\}_{n\in\mathbb{N}}$, if either ${\mu}_{n} \preceq{\mu}_{n+1}$ or ${\mu}_{n+1} \preceq{\mu}_{n}$ holds for all $n$,  then the sequence converges weakly to a symmetric distribution. 
 \item (Poset Structure and Antisymmetry)
Degradation defines a partial order on the set of all symmetric distributions. Especially, $\mu\preceq \nu$ and $\nu\preceq \mu$ implies $\mu=\nu$. 
      \item (Convolution-preserving) 
For any symmetric distributions $\mu$, $\nu$, and $\tau$ satisfying $\mu\preceq\nu$, we have $\tau*\mu\preceq\tau*\nu$ and $\tau\boxconv\mu\preceq\tau\boxconv \nu$.
\item (Invertibility over BSC Operator)
For any $\phi\in[0,\frac{1}{2})$ and any 
symmetric distributions $\mu$ and $\nu$, 
the statements $\mu\preceq\nu$ and $B_{\phi}\boxconv\mu\preceq B_{\phi}\boxconv\nu $ are equivalent.
\item  (BP-preserving)
For any $\delta\in [0,1]$ and  degree distribution $P_d$, 
let $\calQ_{\textup{s}}$ be the operator defined in Section \ref{sec:intro}. Then for any two symmetric distributions $\mu\preceq\nu$, we have  $$\calQ_{\textup{s}}\mu\preceq \calQ_{\textup{s}}\nu. $$
  \end{enumerate} 
\end{prop}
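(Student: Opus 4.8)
The unifying device is the dictionary provided by the Blackwell--Sherman--Stein theorem (Theorem~\ref{thm:ftdeg}) together with Proposition~\ref{prop:weak_conv}: degradation $\mu\preceq\nu$ is equivalent to $\beta(t;\mu)\le\beta(t;\nu)$ for all $t\in[0,1]$; weak convergence of symmetric distributions is equivalent to pointwise (equivalently uniform) convergence of $\beta$-curves; and $\mu\mapsto\beta(\,\cdot\,;\mu)$ is a bijection onto the convex, non-decreasing, $1$-Lipschitz functions $\beta\colon[0,1]\to[0,1]$ with $\beta(t)\ge t$. Since $\beta(t;\mu)$ is linear in $\mu$, degradation is also preserved under mixtures. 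With this dictionary in hand I would dispose of items~(1)--(4) and~(6) quickly and concentrate the real work on item~(5), from which~(7) follows.

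For item~(1), passing $n\to\infty$ in $\beta(t;\mu_n)\le\beta(t;\nu_n)$ and using pointwise convergence of both sides gives $\beta(t;\mu)\le\beta(t;\nu)$, i.e.\ $\mu\preceq\nu$. Item~(2) is the squeeze $\beta(t;\underline\mu_n)\le\beta(t;\mu_n)\le\beta(t;\overline\mu_n)$ with the outer $\beta$-curves converging to $\beta(t;\mu^*)$. For item~(3), if $\mu_n\preceq\mu_{n+1}$ then $n\mapsto\beta(t;\mu_n)$ is non-decreasing and bounded above by $1$, hence converges pointwise to some $g$; being a pointwise limit of convex, non-decreasing, $1$-Lipschitz functions lying in $[t,1]$, $g$ is again of this form, so $g=\beta(\,\cdot\,;\mu)$ for a unique symmetric $\mu$ by Proposition~\ref{prop:weak_conv}, and then $\mu_n\to\mu$ weakly; the decreasing case is identical (bounded below by $t\mapsto t$). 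Item~(4) is immediate: reflexivity and transitivity of $\preceq$ are those of the pointwise order on $\beta$-curves, and antisymmetry is the injectivity of $\mu\mapsto\beta(\,\cdot\,;\mu)$.

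Item~(5) is where I expect the real obstacle, because for ordinary convolution there is no closed-form rule for the $\beta$-curve. For box convolution the $\beta$-curve approach still works: write $\tau\boxconv\mu=\mathbb{E}_{\phi}[B_{\phi}\boxconv\mu]$ using the mixing law of $\phi$ appearing in the proof of Proposition~\ref{prop:box_alter}, and combine it with the homothety identity $\beta(t;B_{\phi}\boxconv\mu)=(1-2\phi)\,\beta\!\big(\tfrac{t}{1-2\phi};\mu\big)$ of Proposition~\ref{prop:beta_box} (the $\phi=\tfrac12$ term, if present, is $\delta_0$ for both $\mu$ and $\nu$); if $\beta(\,\cdot\,;\mu)\le\beta(\,\cdot\,;\nu)$ then each scaled term obeys the same inequality, and averaging over $\phi$ gives $\tau\boxconv\mu\preceq\tau\boxconv\nu$. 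For ordinary convolution I would instead pass to the operational picture: by Proposition~\ref{prop:rem:beta_err}, $\mu\preceq\nu$ says precisely that for \emph{every} prior on the input bit the minimum Bayes error of the channel with LLR $\mu$ is at least that of the channel with LLR $\nu$ (the priors $\Ber(\tfrac{1-t}{2})$, $t\in[0,1]$, and their reflections exhaust all priors by BMS symmetry). The channel with LLR $\tau*\mu$ is ``observe through the $\tau$-channel, then through the $\mu$-channel'', so by the tower property its minimum Bayes error under prior $\pi$ equals $\mathbb{E}_{\pi'\sim\rho}[\mathrm{minErr}(\mu,\pi')]$, where $\rho$ is the law of the posterior produced by the $\tau$-observation and depends only on $\tau$ and $\pi$; the same $\rho$ governs $\tau*\nu$, so $\mathrm{minErr}(\tau*\mu,\pi)-\mathrm{minErr}(\tau*\nu,\pi)=\mathbb{E}_{\rho}[\mathrm{minErr}(\mu,\pi')-\mathrm{minErr}(\nu,\pi')]\ge 0$ for all $\pi$, i.e.\ $\tau*\mu\preceq\tau*\nu$. (Alternatively one could build the explicit coupling of Definition~\ref{def:degr} by appending an independent $\tau$-channel observation of the common input bit to both sides of the coupling witnessing $\mu\preceq\nu$; the reason I single this step out is precisely that verifying the negation-invariance of Definition~\ref{def:degr} for the enlarged joint law needs care, which the tower-property formulation sidesteps.)

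Items~(6) and~(7) then follow. The forward implication in~(6) is item~(5) with $\tau=B_{\phi}$; for the converse, the homothety identity shows that $\beta(t;B_{\phi}\boxconv\mu)\le\beta(t;B_{\phi}\boxconv\nu)$ for all $t$ is equivalent to $\beta(s;\mu)\le\beta(s;\nu)$ for all $s\in[0,\tfrac{1}{1-2\phi}]\supseteq[0,1]$, hence to $\mu\preceq\nu$. For~(7), Proposition~\ref{prop:q_express} and~\eqref{eq:bp_survey} give $\calQ_{\textup{s}}\mu=\mathbb{E}_{d\sim P_d}\big[(B_{\delta}\boxconv\mu)^{*(d)}\big]*\mu_{\textup{s}}$; from $\mu\preceq\nu$, box-convolution-preservation yields $B_{\delta}\boxconv\mu\preceq B_{\delta}\boxconv\nu$, convolution-preservation applied $d$ times (and once more with $\mu_{\textup{s}}$) yields $(B_{\delta}\boxconv\mu)^{*(d)}*\mu_{\textup{s}}\preceq(B_{\delta}\boxconv\nu)^{*(d)}*\mu_{\textup{s}}$ for each $d$, and taking the $P_d$-mixture preserves the inequality by linearity of $\beta$-curves. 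Hence $\calQ_{\textup{s}}\mu\preceq\calQ_{\textup{s}}\nu$.
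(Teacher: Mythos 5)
Your proof is correct and agrees with the paper on items (2)--(4), (6), and (7), but it routes item (1) differently, and the difference is worth flagging. For continuity you pass to the limit in the $\beta$-curve inequality $\beta(t;\mu_n)\le\beta(t;\nu_n)$ and then invoke the hard direction of Blackwell--Sherman--Stein (Theorem~\ref{thm:ftdeg}) to convert the limiting $\beta$-curve inequality back into a degradation relation. The paper instead argues directly from Definition~\ref{def:degr}: it takes the sequence of joint laws $P_{Y_n,Z_n}$ witnessing $\mu_n\preceq\nu_n$, extracts a weakly convergent subsequence by tightness, and checks that the limiting joint law still satisfies the symmetry identity $dP_{Y,Z}=e^{Z}\,dP_{-Y,-Z}$. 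The paper's route is more self-contained (it never needs the hard, channel-construction direction of BSS) and, more importantly, it survives the extension to asymmetric LLR distributions in Proposition~\ref{prop:order_LLR}, where the paper only establishes the easy direction of the $\beta$-curve characterization; your BSS-based proof of item (1) would not transfer there. Within the symmetric setting, though, your argument is valid and arguably cleaner, since the whole proposition then flows uniformly through the $\beta$-curve dictionary. For item (5) you substitute a Bayes-error/tower-property argument for ordinary convolution in place of the paper's one-line appeal to a Markov-chain (parallel-concatenation) construction; these are the same idea expressed in different languages, and your version has the merit of sidestepping the check that the enlarged joint law satisfies the negation-invariance in Definition~\ref{def:degr}, a check the paper leaves implicit. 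One small slip of phrasing: for $\tau\ast\mu$ the observer sees the input \emph{in parallel} through both channels, not ``through the $\tau$-channel, then through the $\mu$-channel'' (which would describe concatenation, i.e.\ box convolution); your subsequent formalization via the random posterior is nevertheless the correct one for parallel observation.
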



\begin{remark}
 The concept of degradation
has been studied as early as in \cite{dbc,sso,csn, blackwell1953equivalent}, under a topic called comparison of experiments. It has also appeared later in \cite{shannon1958note, 1054727, richardson2008modern, 5075875, el2011network,
mastersthesis, 8362951} for 
communication channels. 
Under those contexts, degradation serves as a preorder that divides experiments or communication channels into equivalent classes. 
The relationship given by Definition \ref{def:degr} permits antisymmetry 
due to the fact that LLR is a minimal sufficient statistic. 
\end{remark}

\begin{proof}[Proof of Proposition \ref{prop:lim_deg_com}]
The first property follows from the sequential compactness of joint distributions. Consider any choice of joint distributions between $\mu_n$ and $\nu_n$ that satisfy Definition \ref{def:degr}. We can find a subsequence of these distributions that converges weakly as $n\rightarrow\infty$. Their limit provides a valid construction for degradation. Specifically, let $P_{Y,Z}$ be any such limit and by marginal convergence 
we assume that $Y\sim \mu$ and $Z\sim \nu$. 
The invariance of $P_{Y|Z}$ 
under $(Y,Z)\rightarrow (-Y,-Z)$ is preserved under weak convergence as it 
is equivalent to  $dP_{Y,Z}=e^{Z}dP_{-Y,-Z}$.

The second property is clear from Theorem~\ref{thm:ftdeg} and Prop.~\ref{prop:weak_conv}. Specifically, we only use the part of Theorem~\ref{thm:ftdeg} that states $\beta$-curve inequalities are implied by degradation. This directly follows from the fact that degradation implies inequalities in Bayes estimation errors, which lead to the needed inequalities in $\beta$-curves  (Proposition \ref{prop:rem:beta_err}).

The third follows from the identification with $\beta$-curves (see~\cite[Lemma
4.75]{richardson2008modern}). The fourth property is again via the $\beta$-curve inequalities stated in 
Theorem~\ref{thm:ftdeg}.

The rest of the properties state that degradation is  preserved under several elementary operations  and their compositions. 
They can be proved by considering the binary-input-channel equivalence of the related
distributions, and the needed Markov chain constructions naturally follow from the probabilistic
interpretations of these operations. In particular, box convolution can be viewed as channel
concatenation (see Remark \ref{remark:box_phy}), and convolution can be viewed as parallel
concatenation.
\end{proof}

\subsection{Convergence in Degradation Metric}\label{app:deg_met}


The next result explains that topology of the degradation metric is strictly finer than that of weak convergence.
\begin{prop}\label{prop:deg_met} Let 
 $\{\mu_n\}_{n\in\mathbb{N}}$ and $\mu$ be non-trivial symmetric distributions, then the following are equivalent
 \begin{enumerate}
     \item $d(\mu_n,\mu)\to 0$
     \item $\mu_n \to \mu$ (weakly) and 
    $t_{\max}(\mu_n) \to t_{\max}(\mu)$
 \end{enumerate}
\end{prop}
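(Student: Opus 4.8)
The plan is to argue entirely at the level of $\beta$-curves, using the dictionary $B_\phi\boxconv\sigma\preceq\tau \iff (1-2\phi)\,\beta\!\left(\tfrac{t}{1-2\phi};\sigma\right)\le\beta(t;\tau)$ for all $t\in[0,1]$, which follows by combining the Blackwell--Sherman--Stein criterion (Theorem~\ref{thm:ftdeg}) with the scaling identity~\eqref{eq:beta_bconv}. First I would record two reductions. Since every degradation index lies in $[0,\tfrac12)$ (Proposition~\ref{lemma:index_def}), $\ln(1-2\phi^*)\le 0$, so $d(\mu_n,\mu)\to 0$ is equivalent to the conjunction $\phi^*(\mu_n,\mu)\to 0$ and $\phi^*(\mu,\mu_n)\to 0$. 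Next, using that the infimum in $\phi^*$ is attained (Proposition~\ref{lemma:index_def}) and that box-convolving with additional noise is a degradation, one has that $\phi^*(\mu_n,\mu)\to 0$ is equivalent to: for every $\epsilon>0$, $B_\epsilon\boxconv\mu_n\preceq\mu$ for all large $n$ (and symmetrically for $\phi^*(\mu,\mu_n)$). Thus the proposition reduces to the equivalence: [$\mu_n\to\mu$ weakly and $t_{\max}(\mu_n)\to t_{\max}(\mu)$] $\iff$ [for every $\epsilon>0$, both $B_\epsilon\boxconv\mu_n\preceq\mu$ and $B_\epsilon\boxconv\mu\preceq\mu_n$ hold eventually].

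For $(1)\Rightarrow(2)$, fix $\epsilon>0$ and set $c=1-2\epsilon$. For large $n$ the two degradations translate, after substituting suitable arguments into the dictionary, into the sandwich $c\,\beta(t/c;\mu)\le\beta(t;\mu_n)\le c^{-1}\beta(ct;\mu)$ for all $t\in[0,1]$. Letting $n\to\infty$ and then $\epsilon\to 0$, both outer terms converge to $\beta(t;\mu)$ by the $1$-Lipschitz continuity of $\beta(\cdot;\mu)$, so $\beta(t;\mu_n)\to\beta(t;\mu)$ pointwise and hence $\mu_n\to\mu$ weakly by Proposition~\ref{prop:weak_conv}. For the $t_{\max}$ claim I would use that $\sigma\preceq\tau$ forces $t_{\max}(\sigma)\le t_{\max}(\tau)$ (since $\beta(t;\sigma)\le\beta(t;\tau)=t$ on $[t_{\max}(\tau),1]$) together with $t_{\max}(B_\epsilon\boxconv\mu)=c\,t_{\max}(\mu)$ from~\eqref{eq:beta_bconv}; the two degradations then give $c\,t_{\max}(\mu)\le t_{\max}(\mu_n)\le c^{-1}t_{\max}(\mu)$, and $\epsilon\downarrow 0$ concludes.

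For $(2)\Rightarrow(1)$, fix $\epsilon\in(0,\tfrac12)$ and $c=1-2\epsilon$. The crucial input is that $B_\epsilon\boxconv\mu\prec\mu$ (immediate from Definition~\ref{def:strict_deg}), so by Proposition~\ref{lemma:strict_beta} the gap $g(t):=\beta(t;\mu)-\beta(t;B_\epsilon\boxconv\mu)$ is strictly positive on $[0,c\,t_{\max}(\mu)]$, hence bounded below by some $\eta>0$ there by compactness. On $[0,c\,t_{\max}(\mu)]$ the uniform convergence $\|\beta(\cdot;\mu_n)-\beta(\cdot;\mu)\|_\infty\to 0$ (Proposition~\ref{prop:weak_conv}) upgrades this into both $\beta(\cdot;B_\epsilon\boxconv\mu)\le\beta(\cdot;\mu_n)$ and $c\,\beta(\cdot/c;\mu_n)\le\beta(\cdot;\mu)$ for large $n$. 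On the complementary interval $[c\,t_{\max}(\mu),1]$ the relation $B_\epsilon\boxconv\mu\preceq\mu_n$ is free, since $\beta(\cdot;B_\epsilon\boxconv\mu)$ already equals the diagonal there while $\beta(\cdot;\mu_n)$ dominates it. The genuinely delicate point — the main obstacle — is $B_\epsilon\boxconv\mu_n\preceq\mu$ on $[c\,t_{\max}(\mu),1]$: here $g$ degenerates exactly where the curves meet the diagonal, the perturbation argument fails, and weak convergence alone is insufficient (take $\mu_n=(1-\tfrac1n)\mu+\tfrac1n B_0$ with $t_{\max}(\mu)<1$). To handle it, for $t$ in this range either $t/c\ge t_{\max}(\mu_n)$, whence $c\,\beta(t/c;\mu_n)=t\le\beta(t;\mu)$ trivially, or $t/c<t_{\max}(\mu_n)$, whence monotonicity of $\beta$-curves gives $c\,\beta(t/c;\mu_n)\le c\,t_{\max}(\mu_n)$, which I would compare against $\beta(c\,t_{\max}(\mu);\mu)$: since $c\,t_{\max}(\mu)<t_{\max}(\mu)$ this value strictly exceeds $c\,t_{\max}(\mu)$, so once $c\,(t_{\max}(\mu_n)-t_{\max}(\mu))$ is below that fixed slack — which holds for large $n$ precisely because $t_{\max}(\mu_n)\to t_{\max}(\mu)$ — we get $c\,\beta(t/c;\mu_n)\le\beta(c\,t_{\max}(\mu);\mu)\le\beta(t;\mu)$. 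Assembling the cases, for each $\epsilon$ both degradations hold eventually, so $\phi^*(\mu_n,\mu)\le\epsilon$ and $\phi^*(\mu,\mu_n)\le\epsilon$ eventually, and $\epsilon\downarrow 0$ gives $d(\mu_n,\mu)\to 0$.
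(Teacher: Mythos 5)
Your proof is correct, but the route in the $(2)\Rightarrow(1)$ direction is genuinely different from the paper's. (The $(1)\Rightarrow(2)$ direction is essentially the same: translate the degradations $B_{\phi^*}\boxconv\cdot\preceq\cdot$ into $\beta$-curve inequalities via BSS and the scaling law~\eqref{eq:beta_bconv}, then send $\phi^*\to 0$; the paper phrases this as two one-sided bounds $\beta(t;\nu)\ge\beta(t;\mu)-2\phi^*(\mu,\nu)$ rather than your two-sided sandwich, but it is the same computation.) For $(2)\Rightarrow(1)$, the paper builds explicit auxiliary symmetric distributions by clipping $\beta$-curves to the diagonal and taking pointwise maxima (the $\nu_n$ of~\eqref{eq:aa_nun} and the $\tilde\mu_n,\tilde\nu_n$), then verifies a degradation chain $B_{\phi_n}\boxconv\mu\preceq\nu_n\preceq\mu_n$ directly by CDF comparison, extracting an explicit $\phi_n$ from the ratio of $t_{\max}$ values. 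Your approach instead fixes $\epsilon$, invokes $B_\epsilon\boxconv\mu\prec\mu$ and Proposition~\ref{lemma:strict_beta} to get a strictly positive $\beta$-gap on the compact interval $[0,(1-2\epsilon)t_{\max}(\mu)]$, bounds it below by compactness, and then uses uniform $\beta$-convergence (Proposition~\ref{prop:weak_conv}) to push $\mu_n$ into the gap; the $t_{\max}$ hypothesis enters only in the boundary region $[(1-2\epsilon)t_{\max}(\mu),1]$ through a two-case argument. Both methods work; yours avoids constructing auxiliary distributions and makes the role of compactness explicit, while the paper's is more constructive about the witnessing $\phi_n$ and cleanly isolates which hypotheses each one-sided convergence $\phi^*(\mu,\mu_n)\to 0$ versus $\phi^*(\mu_n,\mu)\to 0$ actually requires (the former needs only weak convergence, echoing Proposition~\ref{lm:uniq_key2}; the latter needs the $t_{\max}$ condition — a distinction you also flag, but which the paper's decomposition puts front and centre).
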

\begin{proof}
We first prove that convergence in degradation metric implies weak convergence. 
Recall 
Proposition~\ref{prop:weak_conv}. It suffices to show the pointwise convergence of $\beta$-curves.
The proof follows from the fact that both $\phi^*(\mu_n,\mu)$ and $\phi^*(\mu,\mu_n)$ converges to $0$ as $d(\mu_n,\mu)\rightarrow 0$, and each degradation index 
implies a uniform bound on  $\beta(t;\mu)-\beta(t;\mu_n)$. 

Consider any pair of symmetric distributions $\mu$ and $\nu$,  we derive a bound on $\beta(t;\mu)-\beta(t;\nu)$ using $\phi^*(\mu,\nu)$. 
Recall the second property of Proposition \ref{prop:deg_closed} states that
$$B_{\phi^*(\mu,\nu)}\boxconv\mu\preceq \nu.$$
By applying Theorem \ref{thm:ftdeg} and equation (\ref{eq:beta_bconv}),  we have
   \begin{align*} 
   \beta(t;\nu)\geq (1-2\phi^*(\mu,\nu))\beta\left(\frac{t}{1-2\phi^*(\mu,\nu)};\mu\right).
    \end{align*} 
    Note that $\beta$-curves are non-negative, non-decreasing function of $|t|$ that are upper bounded by $1$ for any $|t|\leq 1$. If $\phi^*(\mu,\nu)<\frac{1}{2}$, we have the following inequality.
       \begin{align*} 
   \beta(t;\nu)\geq (1-2\phi^*(\mu,\nu))\beta\left({t};\mu\right)\geq \beta\left({t};\mu\right)-2\phi^*(\mu,\nu)&&\textup{for }|t|\leq 1.
    \end{align*} 
    Because $\beta(t;\nu)=\beta(t;\nu)=|t|$ for  $|t|> 1$, we have obtained an bound of $\sup_t( \beta(t;\mu)-\beta(t;\nu))$, which converges to $0$ as $\phi^*(\mu,\nu)\rightarrow0$. 
    
    By symmetry, we also have 
    \begin{align*} 
   \beta(t;\mu)\geq \beta\left({t};\nu\right)-2\phi^*(\nu,\mu)
    \end{align*}
    for $\phi^*(\nu,\mu)<\frac{1}{2}$. Therefore, the convergence of degradation index function implies that
    $$\lim_{n\rightarrow\infty}\sup_t|\beta(t;\mu_n)-\beta(t;\mu)|\leq
    \lim_{n\rightarrow\infty}2(\phi^*(\mu,\mu_n)+\phi^*(\mu_n,\mu))=0,$$
which proves the weak convergence.

Similarly, we provide an analysis of $t_{\max}$ functions based on  the BSS Theorem \ref{thm:ftdeg} and equation (\ref{eq:beta_bconv}). For any symmetric $\mu$ and $\nu$,  Proposition \ref{prop:deg_closed} implies that  
\begin{align*}
    t_{\max}(\nu)\geq (1-2\phi^*(\mu,\nu))t_{\max}(\mu)\geq  t_{\max}(\mu)-2\phi^*(\mu,\nu).
\end{align*}
Therefore, by symmetry, we have
\begin{align*}
   \lim_{n\rightarrow\infty}|t_{\max}(\mu_n)-t_{\max}(\mu)|\leq   
    \lim_{n\rightarrow\infty}2(\phi^*(\mu,\mu_n)+\phi^*(\mu_n,\mu))=0.
\end{align*}

Now we prove the opposite direction. 
First, we show that unconditionally, the weak convergence of $\mu_{n}$ to $\mu$ implies 
\begin{align}\label{limit:aa_single}
\lim_{n\rightarrow\infty}\phi^*(\mu, \mu_n)=0.
\end{align}
 Recall that weak convergence implies $L_\infty$-convergence in $\beta$-curves (Proposition~\ref{prop:weak_conv}). We can 
  find a sequence of non-negative numbers $\epsilon_1, \epsilon_2,...$ converging to $0$ 
  such that  
\begin{align}
  \beta(t;\mu_n)\geq \beta\left({t};\mu\right)-\epsilon_n
\end{align}
for all $t\in[0,1]$ and $n\in\mathbb{N}_+$. 
The above lower bound can be realized by symmetric distributions. Formally, let $\beta_n(t)=\max\{ \beta\left({t};\mu\right)-\epsilon_n, t\}$. It is clear from Proposition~\ref{prop:weak_conv} that each  $\beta_n$ is a valid $\beta$-curve and we can find a symmetric $\nu_n$ such that $\beta\left({t};\nu_n\right)=\beta_n(t)$. Specifically, we have 
\begin{align}\label{eq:aa_nun}
    \nu_n[(-r,r)]=\begin{cases}\mu[(-r,r)]&\textup{if } r\in(0, r_{\textup{max}}(\nu_n)], \\ 1 &\textup{if } r\in(r_{\textup{max}}(\nu_n), +\infty).
\end{cases}
\end{align}


For $\mu$ being non-trivial, we have $t_{\max}(\mu)>0$.  
Therefore, we can choose $$\phi_n=\frac{1}{2}\left(1-\frac{t_{\max}(\nu_n)}{t_{\max}(\mu)}\right) 
$$
so that $r_{\max}(\nu_n)=F_{\phi_n}(r_{\max}(\mu))$ and  
\begin{align*}
\nu_n[(-F_{\phi_n}(r),F_{\phi_n}(r))]\leq \mu[(-r,r)]=
    B_{\phi_n}\boxconv\mu\,[(-F_{\phi_n}(r),F_{\phi_n}(r))]
\end{align*}
for all $r\in(0,+\infty)$. 
By the integration law in equation \eqref{eq:bt_curve_id} and the BSS Theorem, the above inequality implies 
$$B_{\phi_n}\boxconv\mu\preceq \nu_n\preceq \mu_n, $$
which proves that $\phi^*(\mu, \mu_n)\leq \phi_n$.  
From the continuity of $\beta$-curves, the convergence of $\epsilon_n$ to $0$ implies that $\lim_{n\rightarrow\infty} t_{\max}(\nu_n)=t_{\max}(\mu)$. Hence,
\begin{align*}
\lim_{n\rightarrow\infty}\phi^*(\mu, \mu_n)\leq \lim_{n\rightarrow\infty}\phi_n=0.
\end{align*}

On the other hand, we prove the following convergence with the additional assumption of $t_{\max}(\mu_n) \to t_{\max}(\mu)$. 
\begin{align}
\lim_{n\rightarrow\infty}\phi^*(\mu_n, \mu)=0.
\end{align}
By the same arguments, we pick any sequence of non-negative numbers $\epsilon_1, \epsilon_2,...$ converging to $0$ 
  such that  
\begin{align}\label{ineq:aa_r}
  \beta(t;\mu)\geq \beta\left({t};\mu_n\right)-\epsilon_n
\end{align}
holds for all $t\in[0,1]$ and $n\in\mathbb{N}_+$. Let $\tilde{\mu}_n$, $\tilde{\nu}_n$ be the symmetric distributions with $\beta\left({t};\tilde{\mu}_n\right)=\max\{ \beta\left({t};\mu_n\right), \beta\left({t};\mu\right)\}$ and 
$\beta\left({t};\tilde{\nu}_n\right)=\max\{ \beta\left({t};\tilde{\mu}_n\right)-\epsilon_n,t\}$. 
From BSS theorem we have ${\mu}_n\preceq \tilde{\mu}_n$ and $\tilde{\nu}_n\preceq\mu$.

Recall that $\mu$ is  non-trivial, which implies that $t_{\max}(\tilde{\mu}_n)\geq t_{\max}(\mu)>0$. 
We can choose  
$$\tilde{\phi}_n=\frac{1}{2}\left(1-\frac{t_{\max}(\tilde{\nu}_n)}{t_{\max}(\tilde{\mu}_n)}\right), 
$$
so the earlier proof steps imply that 
$B_{\tilde{\phi}_n}\boxconv\tilde{\mu}_n\preceq \tilde{\nu}_n$. Hence, we have the following chain $$B_{\tilde{\phi}_n}\boxconv{\mu}_n\preceq B_{\tilde{\phi}_n}\boxconv\tilde{\mu}_n\preceq \tilde{\nu}_n\preceq\mu,$$ 
which leads to the upper bound $\phi^*(\mu, \mu_n)\leq \tilde{\phi}_n$. 
Note that inequality \eqref{ineq:aa_r} and the definition of $\tilde{\nu}_n$ imply that
\begin{align*}
   \beta(t;\tilde{\nu}_n) \in[\beta(t;\mu)-\epsilon_n ,  \beta(t;\mu)].
\end{align*}
We have the convergence of $t_{\max}(\tilde{\nu}_n)$ to $t_{\max}(\mu)$ by the continuity of $\beta$-curves and the convergence of $\epsilon_n$. Note that $t_{\max}(\tilde{\mu}_n)=\max\{t_{\max}({\mu}_n), t_{\max}(\mu)\}$. We also have the convergence of $t_{\max}(\tilde{\mu}_n)$ to $t_{\max}(\mu)$.   Therefore, 
\begin{align*}
\lim_{n\rightarrow\infty}\phi^*(\mu_n, \mu)\leq \lim_{n\rightarrow\infty}\tilde{\phi}_n=0.
\end{align*}

To conclude, the convergence of both $\phi^*(\mu, \mu_n)$ and $\phi^*(\mu_n, \mu)$ to $0$ proves the convergence in degradation metric. 
\end{proof}

\begin{remark}\label{remark:lack_comp}
The convergence in degradation metric is not implied solely by the weak convergence  (or $L_\infty$-convergence in $\beta$-curves).
For example, the sequence $\mu_n= 2^{-n}\cdot  B_{\frac{1}{6}}+ \left(1-2^{-n}\right)\cdot B_{\frac{1}{3}}$ converges to $\mu=B_{\frac{1}{3}}$ both weakly and in $\beta$-curves, 
but $d(\mu_n, \mu)= \ln 2$ for any $n$.  
However, as we have shown in the earlier proof, a converse can be proved for a single-sided function $\phi(\mu, \mu_{n})$, which can be viewed as a potential function that 
is stabilized by the BP recursion. We present a slightly more general version of this result in Proposition \ref{lm:uniq_key2}. 

Besides, note that even the sequence $\mu_n$ has a bounded radius in $d$, 
it has no convergent subsequence under the same metric, which shows that bounded closed sets are
not compact in the $d$-metric. 
Further, all pairwise distances in this sequence are bounded away from zero as
$d(\mu_m,\mu_n)=\ln\frac{2}{1+2^{-|m-n|}}$. 
    Thus, while $d$-convergence is almost equivalent to weak convergence as per Prop. \ref{prop:deg_met}, the induced topologies are quite different.
\end{remark}
\begin{prop}\label{lm:uniq_key2} 
If a sequence of LLR distributions $\{\mu_n\}_{n\in\mathbb{N}}$ converges weakly to $\mu$, then for any symmetric $\nu$
\begin{align*}
\lim_{n\rightarrow\infty}\phi^*(\nu, \mu_n)=\phi^*(\nu, \mu).
\end{align*}
\end{prop}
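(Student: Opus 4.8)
The plan is to split the assertion into the two one-sided bounds $\liminf_{n}\phi^*(\nu,\mu_n)\ge\phi^*(\nu,\mu)$ and $\limsup_{n}\phi^*(\nu,\mu_n)\le\phi^*(\nu,\mu)$; write $\phi^*:=\phi^*(\nu,\mu)$ for brevity. Two degenerate cases are dealt with at once: if $\nu=\delta_0$ then $\phi^*(\nu,\cdot)\equiv 0$ (every distribution is a degraded version of $\delta_0$ and $B_0\boxconv\delta_0=\delta_0$), while if $\mu=\delta_0$ then $\phi^*=\frac{1}{2}$ and $\phi^*(\nu,\mu_n)\le\frac{1}{2}$ always, so only the $\liminf$ bound needs proof. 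Hence I may assume both $\nu$ and $\mu$ are non-trivial, whence $\phi^*<\frac{1}{2}$ by the first property of Proposition~\ref{prop:deg_closed}.

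For the lower bound I would argue by contradiction. If $\liminf_n\phi^*(\nu,\mu_n)<\phi^*$, pass to a subsequence with $\phi^*(\nu,\mu_{n_k})\le\phi'<\phi^*$ for all large $k$. The infimum defining the degradation index is attained (second property of Proposition~\ref{prop:deg_closed}), so $B_{\phi^*(\nu,\mu_{n_k})}\boxconv\nu\preceq\mu_{n_k}$; since $\phi\mapsto B_\phi\boxconv\nu$ is monotone decreasing for $\preceq$ (more crossover means more degraded, immediate from Proposition~\ref{prop:beta_box} and Theorem~\ref{thm:ftdeg}), this gives $B_{\phi'}\boxconv\nu\preceq\mu_{n_k}$. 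Letting $k\to\infty$ and using continuity of degradation under weak convergence (first property of Proposition~\ref{prop:lim_deg_com}) yields $B_{\phi'}\boxconv\nu\preceq\mu$, i.e.\ $\phi^*(\nu,\mu)\le\phi'<\phi^*$, a contradiction.

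For the upper bound, fix $\epsilon\in(0,\frac{1}{2}-\phi^*)$; I will show $B_{\phi^*+\epsilon}\boxconv\nu\preceq\mu_n$ for all large $n$, which forces $\phi^*(\nu,\mu_n)\le\phi^*+\epsilon$. The crucial algebraic point is that, with $\phi_0:=\epsilon/(1-2\phi^*)\in(0,\frac{1}{2})$, one has $B_{\phi^*+\epsilon}\boxconv\nu=B_{\phi_0}\boxconv(B_{\phi^*}\boxconv\nu)$, so $B_{\phi^*+\epsilon}\boxconv\nu\prec B_{\phi^*}\boxconv\nu$ by Definition~\ref{def:strict_deg}; as both distributions are non-trivial, Proposition~\ref{lemma:strict_beta} upgrades this to a strict inequality $\beta(t;B_{\phi^*+\epsilon}\boxconv\nu)<\beta(t;B_{\phi^*}\boxconv\nu)$ on the compact interval $[0,t^*]$ with $t^*:=t_{\max}(B_{\phi^*+\epsilon}\boxconv\nu)$, which by $1$-Lipschitzness of $\beta$-curves becomes a uniform gap $c>0$. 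Also $B_{\phi^*}\boxconv\nu\preceq\mu$ (second property of Proposition~\ref{prop:deg_closed}), so $\beta(\cdot;B_{\phi^*}\boxconv\nu)\le\beta(\cdot;\mu)$ by the Blackwell--Sherman--Stein theorem (Theorem~\ref{thm:ftdeg}), and weak convergence gives $\delta_n:=\sup_t|\beta(t;\mu_n)-\beta(t;\mu)|\to 0$ (Proposition~\ref{prop:weak_conv}). For $n$ with $\delta_n<c$ I then chain, for $t\in[0,t^*]$, $\beta(t;B_{\phi^*+\epsilon}\boxconv\nu)\le\beta(t;B_{\phi^*}\boxconv\nu)-c\le\beta(t;\mu)-c\le\beta(t;\mu_n)+\delta_n-c<\beta(t;\mu_n)$, while for $t\in(t^*,1]$ the left side is just $t\le\beta(t;\mu_n)$; by Theorem~\ref{thm:ftdeg} this means $B_{\phi^*+\epsilon}\boxconv\nu\preceq\mu_n$. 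Sending $n\to\infty$ and then $\epsilon\downarrow 0$ closes the argument.

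I expect the genuine difficulty to be producing the uniform positive gap $c$ in the upper-bound step: a bare comparison of $\beta$-curves yields only non-strict inequalities, and those get destroyed by the $\delta_n$-perturbation near $t=t^*$, so one must pass through the strict-degradation refinement (Definition~\ref{def:strict_deg}, Proposition~\ref{lemma:strict_beta}) together with the split at $t^*$, beyond which the degraded curve collapses to the identity and dominates $\beta(\cdot;\mu_n)$ automatically. Finally, when the $\mu_n$ are merely LLR (not symmetric) the same proof runs verbatim, using the routine extensions of $\beta$-curves, of the Blackwell--Sherman--Stein theorem, and of continuity of degradation to LLR distributions.
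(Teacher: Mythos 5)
Your proof is correct, but the upper bound is established by a genuinely different route than the paper's. For the lower bound you and the paper coincide (continuity of degradation under weak convergence, spelled out by you as a subsequence/contradiction argument). For the upper bound the paper first establishes the special case $\nu=\mu$, i.e.\ $\lim_n\phi^*(\mu,\mu_n)=0$ (equation~\eqref{limit:aa_single} inside the proof of Proposition~\ref{prop:deg_met}, where a perturbed $\beta$-curve is realized explicitly as a truncation of $\mu$), and then passes from $\mu$ to an arbitrary symmetric $\nu$ via the quasi-triangle inequality for the degradation index, the fourth item of Proposition~\ref{prop:tri}. You instead treat the pair $(\nu,\mu_n)$ directly: the decomposition $B_{\phi^*+\epsilon}\boxconv\nu=B_{\phi_0}\boxconv(B_{\phi^*}\boxconv\nu)$ with $\phi_0=\epsilon/(1-2\phi^*)$ puts you in the strict-degradation regime, Proposition~\ref{lemma:strict_beta} converts it into a strict $\beta$-curve inequality on the compact interval $[0,t^*]$ with $t^*=t_{\max}(B_{\phi^*+\epsilon}\boxconv\nu)$, compactness yields a uniform gap $c>0$, and the $L_\infty$ error $\delta_n\to 0$ of Proposition~\ref{prop:weak_conv} is absorbed against $c$, with the remaining $t\in(t^*,1]$ handled by the trivial bound $\beta(t;B_{\phi^*+\epsilon}\boxconv\nu)=t\le\beta(t;\mu_n)$. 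Your route avoids both the triangle inequality and the detour through $\nu=\mu$, so the proposition becomes essentially self-contained and neatly showcases where the strict-degradation refinement is actually doing work; the paper's route is shorter because it reuses previously assembled machinery. One shared caveat: when the $\mu_n$ are genuinely asymmetric, both proofs lean on the hard Blackwell direction and on $L_\infty$ convergence of $\beta$-curves beyond the symmetric setting in which Theorem~\ref{thm:ftdeg} and Proposition~\ref{prop:weak_conv} are stated (and, in your case, on dominance of $\beta$-curves at negative $t$, which for a symmetric left-hand side does follow but deserves a sentence); you flag this explicitly, and the paper's invocation of \eqref{limit:aa_single} and of the triangle inequality with LLR arguments incurs the same unspoken debt, so I do not count it against your proposal.
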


 

\begin{proof}
The following lower bound follows from the first property in  Proposition \ref{prop:lim_deg_com} and the definition of degradation index,  
$$\liminf_{n\rightarrow\infty}\phi^*(\nu, \mu_n)\geq\phi^*(\nu, \mu). $$ 
Hence, it remains to show that
\begin{align}\label{ineq:uniq_l2_sup}
\limsup_{n\rightarrow\infty}\phi^*(\nu, \mu_n)\leq \phi^*(\nu, \mu).
\end{align} 
Recall that in the proof of Proposition \ref{prop:deg_met} we have proved equation \eqref{limit:aa_single}, which covers the special case of $\nu=\mu$. 
We bound the degradation index for general $\nu$ by applying the triangle inequality, i.e., the fourth property of Proposition \ref{prop:deg_closed}. 
\begin{align}
\phi^*(\nu,\mu_n)\leq \phi^*(\mu,\mu_n)+\phi^*(\nu,\mu)-2\phi^*(\mu,\mu_n)\phi^*(\nu,\mu).
\end{align}
Then the needed result is proved 
using equation \eqref{limit:aa_single}, particularly, 
\begin{align*}
\limsup_{n\rightarrow\infty}\phi^*(\nu,\mu_n)\leq \limsup_{n\rightarrow\infty}\phi^*(\mu,\mu_n)+\phi^*(\nu,\mu)= \phi^*(\nu,\mu).
\end{align*} 
\end{proof}


\section{Supplementary Proofs for Theorem \ref{thm:main}}

\subsection{Proof of Proposition \ref{prop:trans_deg_stdeg}}\label{app:pp_trans_deg_stdeg}


 \begin{proof}
 We first prove the proposition for condition  $\nu\preceq\tau\prec \mu$. Recall the definition of strict degradation. We can find $\phi \in(0,\frac{1}{2}]$ such that $\nu\preceq \tau \preceq B_{\phi}\boxconv\mu$. By applying the transitivity of degradation, we have $\nu\preceq  B_{\phi}\boxconv\mu$, which proves  $\nu\prec\mu$.
 
 On the other hand, if $\nu\prec\tau\preceq \mu$, we have $\phi \in(0,\frac{1}{2}]$ such that $\nu\preceq B_{\phi}\boxconv\tau$. Now we need the fifth property in Proposition \ref{prop:order} to obtain $B_{\phi}\boxconv\tau\preceq B_{\phi}\boxconv\mu$. Then the rest follows from the transitivity of degradation through the same arguments.  
 \end{proof}


\subsection{Proof of Proposition \ref{lemma:index_def}} \label{app:pl_index_def}
\begin{proof}

 The first property is proved by a probability-of-error argument.  Recall the definition of degradation index. We only need to show the existence of a $\phi\in[0,\frac{1}{2})$ that satisfies $B_{\phi} \boxconv\mu\preceq \nu$. We prove this fact by choosing $\phi=\EE_{R\sim\nu}[\frac{1-\textup{sgn}(R)}{2}]$. Because $\nu$ is non-trivial and symmetric, we have $\phi\in [0,\frac{1}{2})$. 

 Following the commutativity of box convolution and the physical interpretation, we have  $B_{\phi} \boxconv\mu \preceq B_{\phi}$. On the other hand, $B_{\phi}$ can be viewed as the LLR distribution of the 1-bit maximum likelihood estimator for the estimation problem characterized by $\nu$. Hence, there is a natural joint distribution that implies $B_{\phi}\preceq\nu$. Then the statement $B_{\phi} \boxconv\mu\preceq \nu$ follows from the transitivity of degradation.

 The second property states that the set defined in equation (\ref{eq:degind_def}) has a minimum. It 
 follows from the continuity of degradation under weak convergence, see the first property in Proposition \ref{prop:lim_deg_com}. 
 The third property can be proved using the sixth property in Proposition \ref{prop:bequi}.  
 The fourth property is due to the transitivity of degradation and associativity of box convolution. 
\end{proof}

\subsection{Proof of Proposition \ref{prop:rem:beta_err}}\label{app:pp_rem:beta_err}
\begin{proof}
Recall that the optimal Bayes estimation error is achieved by the maximum a posteriori (MAP) estimator, which compares the LLR with a fixed threshold  $\ln\frac{1-t}{1+t}=-2\, \tanh^{-1} t$ for the given prior distribution. Therefore, the achieved error can be written as
$$\frac{1+t}{2}\cdot \mu\left[\left(-\infty,-2\, \tanh^{-1}t\right]\right]+\frac{1-t}{2} \cdot \int e^{-r}   \mathbbm{1}\{r \in (-2\, \tanh^{-1}t,+\infty) \} d\mu(r), $$
where the second term is derived from the definition of LLR. For brevity, we denote this quantity by $P_{\textup{e}}(t)$.  

Note that for any bounded Borel function $f: (-\infty, +\infty] \rightarrow \mathbb{R}$, its expectation over $\mu$ can be written as follows using the  symmetry condition. 
\begin{align}
    \mathbb{E}_{R\sim\mu}\left[f(R)\right] &=\mathbb{E}_{R\sim\mu}\left[\frac{e^{\frac{R}{2}}}{e^{{\frac{R}{2}}}+e^{-{\frac{R}{2}}}}f(R)+\frac{e^{-\frac{R}{2}}}{e^{{\frac{R}{2}}}+e^{-{\frac{R}{2}}}}f(R)\right]\nonumber\\
    &=\mathbb{E}_{R\sim\mu}\left[\frac{e^{\frac{R}{2}}}{e^{{\frac{R}{2}}}+e^{-{\frac{R}{2}}}}f(R)+\frac{e^{-\frac{R}{2}}}{e^{{\frac{R}{2}}}+e^{-{\frac{R}{2}}}}f(-R)\right].\nonumber 
\end{align}
We specialize the above formula for $P_{\textup{e}}(t)$, which can be viewed as the expectation of the following function
\begin{align*}
    p(R)\triangleq \frac{1+t}{2} \mathbbm{1}\left\{t\leq -\tanh\frac{R}{2}\right\}+
\frac{1-t}{2} e^{-R} \mathbbm{1}\left\{t> -\tanh\frac{R}{2}\right\}.
\end{align*}
The resulting expectation is given by
\begin{align*}
  \mathbb{E}_{R\sim\mu}\left[\min\left\{ \frac{e^{-|R|}}{e^{{\frac{R}{2}}}+e^{-{\frac{R}{2}}}}, \frac{1-t}{2}\right\}\right]
\end{align*}
which equals $\frac{1}{2}(1-\beta(t;\mu))$.
\end{proof}

\subsection{Proof of Proposition \ref{lemma:strict_beta}}\label{app:pl_st_beta}



    \begin{proof} 
    
  {Proof of 1) $\Rightarrow$ 3).} 
  From $\nu\prec\mu$ and the non-trivial condition, we can choose  $\phi\in(0,\frac{1}{2})$ to satisfy $\nu \preceq B_{\phi}\boxconv\mu$. 
  By either applying Theorem \ref{thm:ftdeg} or Jensen's inequality over any joint distribution consistent with Definition \ref{def:degr}, we have $\beta(t;\nu)\leq \beta(t;B_{\phi}\boxconv\mu)$ for any $t\in \mathbb{R}$. Using equation (\ref{eq:beta_bconv}), we have
  \begin{align}
      ({1-2\phi})\beta(t; \mu)\geq {\beta\left((1-2\phi){t};\nu\right)}.\nonumber
  \end{align} 
    Therefore, following the definition of $t_{\max}$ and non-trivial condition, we have $t_{\max}(\nu)\leq (1-2\phi)t_{\max}(\mu)<t_{\max}(\mu)$. 
        On the other hand, note that any $\beta$-curve is $1$-Lipschitz. For $t\in[0, t_{\max}(\mu))$, we have
  \begin{align}
  {\beta\left((1-2\phi){t};\nu\right)}&\geq {\beta\left({t};\nu\right)}-2\phi t.\nonumber\\
  &>{\beta\left({t};\nu\right)}-2\phi \beta(t; \mu). \nonumber
  \end{align}
  Combining the above inequalities, 
  we obtain the stated gap condition.

  Proof of 3) $\Rightarrow$ 2).  If $t_{\max}(\nu)<t_{\max}(\mu)$, then $[0, t_{\max}(\nu)]$ is a subset of $[0, t_{\max}(\mu))$. Thus, the same gap condition applies.
  
  Proof of 2) $\Rightarrow$ 1).
  Recall that both $\mu$ and $\nu$ are non-trivial.
  The $\beta$-curves for both distributions are positive and continuous over $\mathbb{R}$. Therefore, the following quantity is well-defined
    $$\phi\triangleq
    \min_{t\in\left[0,t_{\max}(\nu)\right]} \frac{1}{2}\left(1-\frac{\beta(t;\nu)}{\beta(t;{\mu})}\right).$$
 From the gap condition, we have $\phi \in\left(0,\frac{1}{2}\right)$. It remains to show $\nu\preceq B_{\phi}\boxconv\mu$.

 
  
    One can verify that the $\beta$-curve is non-decreasing for any measure. 
    Hence, for any $t\in\left[0,t_{\max}(\nu)\right]$, the above definition implies  the following inequality.
    \begin{align}
        \beta(t;\nu)\leq(1-2\phi)\beta\left(t;\mu\right)\leq(1-2\phi)\beta\left(\frac{t}{1-2\phi};\mu\right).\nonumber
    \end{align}
    Using equation (\ref{eq:beta_bconv}), the above inequality can be written as
    \begin{align}\label{ineq:pl5_31}
        \beta(t;\nu)\leq\beta\left({t};B_{\phi}\boxconv\mu\right).
    \end{align}
        Because the $\beta$-curve is lower bounded by the identity function, recall the definition of $t_{\max}(\nu)$, 
    inequality (\ref{ineq:pl5_31}) holds for $t>t_{\max}(\nu)$ as well.
    Apply this conclusion to Theorem \ref{thm:ftdeg}, we have proved that $\nu \preceq B_{\phi}\boxconv\mu$, which implies strict degradation. 
    \end{proof}

\subsection{Proof of Proposition \ref{prop:bounded_crit}}\label{app:pl_bounded_crit}
\begin{proof}
Proof of 1). The forward statement is proved by noting that when $t_{\max}(\nu)<t_{\max}(\mu)$, the set $\left(\tanh{\left(\frac{s}{2}\right)}, 
t_{\max}(\nu)\right]$ is contained in the set $\left(\tanh{\left(\frac{s}{2}\right)},
t_{\max}(\mu)\right)$. For the converse statement, we have $\beta(t;\nu)=t<\beta(t;\mu)$ for all $t\in(t_{\max}(\nu),t_{\max}(\mu))$ from the definition of $t_{\max}$. So it remains to prove that  $t_{\max}(\nu)<t_{\max}(\mu)$. If $s\geq r_{\max}(\nu)$, this inequality is implied by $r_{\max}(\nu)\leq s< r_{\max}(\mu)$.  Otherwise, we have  $\beta(t_{\max}(\nu),\mu)>\beta(t_{\max}(\nu),\nu)=t$, 
and the inequality is implied due to the continuity of $\beta$-curves.

Proof of 2). The stated condition provides a chain of degradation, so we have $\beta(t;\nu)\leq \beta(t;\tau)\leq \beta(t;\mu)$ for all $t$. In either cases, the individual steps in the condition implies that the inequality is strict for $t\in\left(\tanh{\left(\frac{s}{2}\right)}
, 
t_{\max}(\nu)\right]$. 

Proof of 3). The first statement is a direct consequence of Proposition \ref{lemma:strict_beta} and the fact that $\beta$-curves are even functions. The second statement is implied by the BSS Theorem.

Proof of 4). The first statement directly follows from the definition. The second statement is implied by Proposition \ref{lemma:strict_beta}.  
\end{proof}

\subsection{Proof of Proposition \ref{lemma:uniq_4b}}\label{app:pl_uniq4b}
\begin{proof}
For convenience, we define 
\begin{align*}
\tau_{(\textup{T})}&=B_\phi\boxconv (B_{\delta_1}*B_{\delta_2}),\\{\tau}_{(\textup{S})}&=(B_\phi\boxconv B_{\delta_1})*(B_\phi\boxconv B_{\delta_2}).
\end{align*}
To prove Proposition \ref{lemma:uniq_4b}, we first derive closed-form expressions for the $\beta$-curves. 
Note that $\tau_{(\textup{T})}$ is a symmetric discrete distribution that can be written as a linear combination of $B_{\delta_{\textup{(T)}}^-}$ and $B_{\delta_{\textup{(T)}}^+}$ for some $0\leq \delta_{\textup{(T)}}^+\leq   \delta_{\textup{(T)}}^-\leq \frac{1}{2}$. From Proposition \ref{prop:beta_box}, the $\beta$-curve of $\tau_{(\textup{T})}$ must be a piecewise linear function with at most two corner points. Concretely, let\footnote{When $r_{\max}(B_{\delta_1})=r_{\max}(B_{\delta_2})=+\infty$, $\tau_{(\textup{T})}$ is simply $B_{\delta_{\textup{(T)}}^+}$, and we can let $t_{\textup{(T)}}^{-}$ 
take any value in $[0,t_{\textup{(T)}}^{+}]$.} 
\begin{align*}
      &t_{\textup{(T)}}^{-}\triangleq t_{\max}\left(B_{\delta_{\textup{(T)}}^-}\right)=(1-2\phi)\tanh\frac{|r_{\max}(B_{\delta_1})-r_{\max}(B_{\delta_2})|}{2},\\
         &t_{\textup{(T)}}^{+}\triangleq t_{\max}\left(B_{\delta_{\textup{(T)}}^+}\right)=(1-2\phi)\tanh\frac{r_{\max}(B_{\delta_1})+r_{\max}(B_{\delta_2})}{2},\\
      & \beta_{\textup{(T)},0}\triangleq\beta\left(0;\tau_{(\textup{T})}\right)=(1-2\phi)\max\{(1-2\delta_1),(1-2\delta_2)\}.
\end{align*} 
The value-function pair $(t,\beta(t;\tau_{(\textup{S})}))$ for $t\in[0,1]$ is on the lower convex envelope of the following finite set of points.
\begin{align}\Big\{&\left(0,\beta_{\textup{(T)},0}\right), 
\left(t_{\textup{(T)}}^{-},\beta_{\textup{(T)},0}\right), 
\left(t_{\textup{(T)}}^{+},t_{\textup{(T)}}^{+}\right),(1,1)\Big\}.\nonumber\end{align}
Similarly, let 
\begin{align*}
      &t_{\textup{(S)}}^{-}\triangleq t_{\max}\left(B_{\delta_{\textup{(S)}}^-}\right)=\tanh\frac{|F_\phi\left(r_{\max}(B_{\delta_1})\right)-F_\phi\left(r_{\max}(B_{\delta_1})\right)|}{2},\\
         &t_{\textup{(S)}}^{+}\triangleq t_{\max}\left(B_{\delta_{\textup{(S)}}^+}\right)=\tanh\frac{F_\phi\left(r_{\max}(B_{\delta_1})\right)+F_\phi\left(r_{\max}(B_{\delta_1})\right)}{2},\\
      & \beta_{\textup{(S)},0}\triangleq\beta\left(0;\tau_{(\textup{S})}\right)=(1-2\phi)\max\{(1-2\delta_1),(1-2\delta_2)\}.
\end{align*} 
Then function $\beta(t;\tau_{(\textup{S})})$ for $t\in[0,1]$ is given by the lower convex envelope of the following finite set. 
\begin{align}\Big\{&\left(0,\beta_{\textup{(S)},0}\right), 
\left(t_{\textup{(S)}}^{-},\beta_{\textup{(S)},0}\right), 
\left(t_{\textup{(S)}}^{+},t_{\textup{(S)}}^{+}\right),(1,1)\Big\}.\nonumber\end{align}

By some elementary calculus\footnote{In particular, the concavity of $F_{\phi}$ on $\mathbb{R}_{\geq0}$.}, one can prove the following equation and inequalities, which shows that $\beta(t;\tau_{(\textup{T})})\leq \beta(t;\tau_{(\textup{S})})$ for all $t\in\mathbb{R}$. 
$$t_{\textup{(S)}}^{-}\leq t_{\textup{(T)}}^{-},$$
$$t_{\textup{(S)}}^{+}\geq t_{\textup{(T)}}^{+}.$$
$$\beta_{\textup{(S)},0}=\beta_{\textup{(T)},0}.$$
Moreover, we always have $t_{\textup{(S)}}^{+}> t_{\textup{(T)}}^{+}$ except when any of $\phi, \delta_1, \delta_2$ is $\frac{1}{2}$. In all cases, this leads to $\beta(t;\tau_{(\textup{T})})< \beta(t;\tau_{(\textup{S})})$ for any $t\in (t_{\textup{(S)}}^{-}, t_{\textup{(T)}}^{+}]$. Note that $t_{\textup{(S)}}^{-}=\tanh{\frac{s_{\min}}{2}}$ and $t_{\textup{(T)}}^{+}=t_{\max}(\tau_{\textup{(T)}})$. We have $\tau_{\textup{(T)}}\prec_{s_{\min}}\tau_{\textup{(S)}}$. 
\end{proof}

\subsection{Proof of Corollary \ref{coro:uniq_4}}\label{app:pc_uniq_4}
\begin{proof}
Let $R_1\sim \mu_1$, $R_2\sim\mu_2$ be independent variables. Then each  $\mu_j$ for $j\in\{1,2\}$ can be expressed as a mixture of distributions $B_{\delta_j}$ with $\delta_j= \frac{1}{1+e^{R_j}}$. From  the bilinearity of convolution, the distributions on both sides of inequality   \eqref{eq:gen_beta_rel} can be expressed as mixtures of the corresponding terms in inequality \eqref{eq:simple_beta_rel}, i.e., $$B_\phi\boxconv(\mu_1*\mu_2)=\mathbb{E}_{R_1,R_2}[B_\phi\boxconv (B_{\delta_1}*B_{\delta_2})],$$
 $$(B_\phi\boxconv\mu_1)*(B_\phi\boxconv\mu_2)=\mathbb{E}_{R_1,R_2}[(B_\phi\boxconv
 B_{\delta_1})*(B_\phi\boxconv B_{\delta_2})].$$  The   $\beta$-curves can be compared
 using the (obvious) property: $\beta$-curve of a mixture is given by a mixture of the
 $\beta$-curves. Thus, 
the degradation relation stated in  Proposition \ref{lemma:uniq_4b} implies that $B_\phi\boxconv(\mu_1*\mu_2)\preceq (B_\phi\boxconv\mu_1)*(B_\phi\boxconv\mu_2)$.

It remains to verify the strict condition for the inequality on  $\beta$-curves. 
For brevity, we focus on the non-trivial cases where $\phi\neq\frac{1}{2}$ and both $\mu_1$ and $\mu_2$ are non-trivial. By the strict concavity of $F_{\phi}$ on $\mathbb{R}_{\geq 0}$,  and the definition of $r_{\max}$,
we have 
\begin{align*}
    r_{\max}(B_\phi\boxconv(\mu_1*\mu_2))&=F_{\phi}(r_{\max}(\mu_1)+r_{\max}(\mu_2))\\&<F_{\phi}(r_{\max}(\mu_1))+F_{\phi}(r_{\max}(\mu_2))\\&=r_{\max}((B_\phi\boxconv\mu_1)*(B_\phi\boxconv\mu_2)),
\end{align*}
which implies that $  t_{\max}(B_\phi\boxconv(\mu_1*\mu_2))<t_{\max}((B_\phi\boxconv\mu_1)*(B_\phi\boxconv\mu_2))$.
Therefore, it suffices to compare the $\beta$-curves at any fixed $t\in \left(\tanh\left(\frac{s}{2}\right), t_{\max}( (B_\phi\boxconv\mu_1)*(B_\phi\boxconv\mu_2))\right)$. By the definition of $r_{\max}$ function, there is a non-zero probability that $R_1$, $R_2$ are sufficiently close to $r_{\max}(\mu_1)$ and $r_{\max}(\mu_2)$, respectively, such that $$t\in \left(\tanh\left(\frac{|{{F_{\phi}(r_{\max}(B_{\delta_1}))-F_{\phi}(r_{\max}(B_{\delta_2}))}}|}{2}\right), t_{\max}( (B_\phi\boxconv B_{\delta_1})*(B_\phi\boxconv B_{\delta_2}))\right).$$ According to Proposition \ref{lemma:uniq_4b} and the first statement in Proposition
\ref{prop:bounded_crit}, this condition  implies strict inequality in $\beta$-curves at point $t$ between $B_\phi\boxconv (B_{\delta_1}*B_{\delta_2})$ and $(B_\phi\boxconv B_{\delta_1})*(B_\phi\boxconv B_{\delta_2})$. Hence, their integration also contributes to a strict inequality. 
\end{proof}

\subsection{Proof of Proposition \ref{lemma:basic_conv}}\label{app:pl_basic_conv}
 \begin{proof}
    Recall that $\beta$-curves are characterized by the error probabilities of MAP estimators (see Proposition \ref{prop:rem:beta_err}). We construct an estimation problem with prior $X\sim \Ber(\frac{1-t}{2})$, and independent observations $Y,Z$ such that $Y$ is measured through a symmetric channel characterized by $\mu$ and $Z$ is measured through a BSC with crossover probability $\phi$. 
    Let $\hat{X}$ be the MAP estimator. Because the LLR distribution of this experiment is characterized by $B_{\phi}*\mu$, we have
    $$\mathbb{P}[\hat{X}= X]= \frac{1+\beta(t; B_{\phi}*\mu)}{2}.$$
    
    Conditioned on $Z=0$ or $Z=1$, the inference problem reduces to  estimating $X$ given $Y$ with different priors. Note that the MAP estimator remains the same. We have
    $$\mathbb{P}[\hat{X}= X|Z=0]= \frac{1+\beta(t_0; B_{\phi}*\mu)}{2},$$
    $$\mathbb{P}[\hat{X}= X|Z=1]= \frac{1+\beta(t_1; B_{\phi}*\mu)}{2}.$$
    Combining above results, we obtain the following equation, which is identical to the needed statement.
    $$\beta(t; B_{\phi}*\mu)=\PP[Z=0] \beta(t_0; B_{\phi}*\mu)+\PP[Z=1]\beta(t_1; B_{\phi}*\mu).$$
    \end{proof}

\subsection{Proof of Proposition \ref{lemma:convdeg}}\label{app:pl_convdeg}

    \begin{proof}
    
    Consider any fixed $r\in(s-\ell,r_{\max}(\mu)-\ell)$. Let $t=\tanh\frac{|r|}{2}$, we have $t\in[0,1)$. Then we can apply Proposition \ref{lemma:basic_conv} to evaluate $\beta(t; \tau*\mu)-\beta(t; \tau*\nu)$ by writing $\tau$ as an integration of $B_{\phi}$ distributions.    
        In particular, let $Z\sim\tau$ and $\phi=\frac{1}{e^{|Z|}+1}$. In general, we have
    \begin{align}\label{eq:pl4_int}
\beta(t; \tau*\mu)&=\mathbb{E}\left[\left(\frac{1+t-2t\phi}{2}\right)\beta\left(t_0;\mu\right)+\left(\frac{1-t+2t\phi}{2}\right)\beta\left(t_1;\mu\right)\right]
\end{align}
for any symmetric $\mu$, where $$    t_0=\tanh{\left(\frac{|Z|+|r|}{2}\right)},$$

$$
    t_1=\tanh\left({\left|\frac{|Z|-|r|}{2}\right|}\right).
$$

When $\nu\prec_s \mu$, we always have $\beta\left(t_0;\nu\right)\leq \beta\left(t_0;\mu\right)$ and $\beta\left(t_1;\nu\right)\leq \beta\left(t_1;\mu\right)$. To prove $\beta(t; \tau*\nu)<\beta(t; \tau*\mu)$ using equation (\ref{eq:pl4_int}), it remains to show that $\beta\left(t_0;\nu\right)< \beta\left(t_0;\mu\right)$ or $\beta\left(t_1;\nu\right)< \beta\left(t_1;\mu\right)$ with non-zero probability. 
Let $t_Z=\tanh\left(\left|\frac{r+Z}{2}\right|\right)$.
Because $r+\ell\in(s,r_{\max}(\mu))$ and $\ell\in\supp(\tau)$, we have $\beta\left(t_Z;\mu\right)> \beta\left(t_Z;\nu\right)$ for $Z$ in a neighbourhood of $\ell$, which holds with non-zero probability. 
Note that $t_Z\in\{t_0,t_1\}$. The needed inequality is proved.

For the second statement, we can assume $\mu$ is non-trivial,  otherwise the stated interval is an empty set and nothing needs to be proved.
Note that in this case, $\nu\prec \mu$ 
is equivalent to $\nu\prec_s \mu$ for any $s\in(-r_{\max}(\mu),0)$. Therefore, following exactly the same steps, we have the needed strict inequality holds for $r\in(s-\ell,r_{\max}(\mu)-\ell)$ for any such $s$. Hence, inequality (\ref{ineq:roc}) holds if $r$ belongs to $r\in(-r_{\max}(\mu)-\ell,r_{\max}(\mu)-\ell)$, which contains $[-\ell,r_{\max}(\mu)-\ell)$ as a subset.
    \end{proof}

\subsection{Proof of Proposition \ref{lemma:sur_imp}}\label{app:pl_surimp}

\begin{proof}
We apply induction over $k\in\mathbb{N}$. 
 The base case $k=1$ follows directly from Proposition~\ref{prop:sur_basic}. Indeed, when $s_1<0$, it implies 
$$ B_\phi \boxconv \calQs \mu = B_\phi \boxconv ((B_\delta \boxconv  \mu) * \mus) \prec
(B_\phi
\boxconv B_\delta \boxconv  \mu) * \mus = \calQs (B_\phi \boxconv \mu)$$
as $s_{\min} = F_\phi(r_{\max}(B_\delta \boxconv \mu)) - r_{\textup{s}} = F_\phi(F_\delta(r_{\max}(\mu))) - r_{\textup{s}}
= s_1<0$. Here we used the fact that $r_{\max}(B_\tau  \boxconv \nu) =
F_\tau(r_{\max}(\nu))$ and 
\begin{equation}
	F_\phi \circ F_\delta = F_\delta \circ F_\phi\,,\nonumber
\end{equation}
due to commutativity of $B_\phi \boxconv B_\delta$. Similarly, when $s_1\geq 0$, the statement of Proposition~\ref{prop:sur_basic} implies $B_\phi \boxconv \calQs \mu \prec_{s_{1}}
 \calQs (B_\phi \boxconv \mu)$.

Assume  Proposition \ref{lemma:sur_imp} holds for some $k\in\mathbb{N}$. We prove that it holds for $k+1$. 
As everywhere before, our method is to start from a non-strict degradation chain given by
\begin{align}\label{eq:si_1}
	B_\phi \boxconv (\calQs^{k+1}\mu) \preceq  \calQs (B_\phi \boxconv \calQs^k \mu) 
				\preceq \calQs (\calQs^k(B_\phi \boxconv \mu)) 
\end{align}				
and keep track of areas where the comparison of $\beta$-curves is strict.  
Denote for convenience $\mu_k = \calQs^k \mu$, $\nu_k = \calQs^k (B_\phi \boxconv \mu)$ and $r_k =
r_{\max}(\mu_k)$. We will repeatedly use the fact that $r_{\max}(B_\tau \boxconv \nu) =
F_\tau(r_{\max}(\nu))$, so that for example $r_{k+1} = F_\delta(r_k) + r_{\textup{s}}$.

Observe the first step of inequality~\eqref{eq:si_1}, which
essentially states that
$$ B_\phi \boxconv ((B_\delta \boxconv \mu_k) * \mus) \preceq (B_\delta \boxconv B_\phi \boxconv
\mu_k) * \mus\,. $$
This is simply an instance of applying Proposition~\ref{prop:sur_basic}. Thus, we have that the 
comparison of $\beta$-curves is strict for 
\begin{equation}\label{eq:si_4}
	F_\phi(F_{\delta}(r_k)) - r_{\textup{s}} < s < F_\phi(F_\delta(r_k)) + r_{\textup{s}}\,.
\end{equation}
Therefore, we can assume $r_{\textup{s}}$ 
is finite for the rest of the proof because otherwise we have already established strict comparison for the entire $s\in\mathbb{R}$, which  includes $[0,r_{\max}(B_\phi \boxconv \mu_{k+1})]$ as a subset. 

Now we analyze the second step in~\eqref{eq:si_1}. From the induction hypothesis we 
know that
\begin{equation}\label{eq:si_0}
	B_\phi\boxconv\mu_{k}\prec_{s_{k}} \nu_{k}\,
\end{equation}
for $s_k\geq 0$, and $B_\phi\boxconv\mu_{k}\prec \nu_{k}$ otherwise. 
Applying box convolution with $B_\delta$ to both sides of these inequalities and by equation (\ref{eq:beta_bconv}), we obtain 
$$ B_\delta \boxconv B_\phi\boxconv\mu_{k}\prec_{s_{k}'} B_\delta \boxconv \nu_k\,,$$
for $s'_k \triangleq F_\delta(s_k)$. 
Next, by convolving with $\mus$ on both sides, and then 
Proposition~\ref{lemma:convdeg} with $\ell = r_{\textup{s}}$, we get that 
$$ (B_\delta \boxconv B_\phi\boxconv\mu_{k}) * \mus \preceq (B_\delta \boxconv \nu_k) * \mus =
\nu_{k+1} $$
with inequality for $\beta$-curves strict for all $t=\tanh{|s|\over 2}$ with
\begin{equation}\label{eq:si_3}
	s_{k+1} = s_{k}' - r_{\textup{s}} < s < F_\delta(r_{\max}(\nu_k)) - r_{\textup{s}}\,.
\end{equation}

From induction hypothesis~\eqref{eq:si_0} we have $r_{\max}(\nu_k) > r_{\max}(B_\phi
\boxconv \mu_k) = F_\phi(r_k)$. Therefore, because of the strictness of the inequality we have
that~\eqref{eq:si_3} and~\eqref{eq:si_4} together imply comparison of first and last $\beta$-curves
in~\eqref{eq:si_1} for all
$t=\tanh(|s|/2)$ with
\begin{equation}\label{eq:si_5}
	s_{k+1} < s < F_\phi(F_\delta(r_k)) + r_{\textup{s}}\,.
\end{equation}
Finally, notice that $F_\phi(x+y) < F_\phi(x) + y$ for $x\ge 0$, $y> 0$, and 
$\phi\in(0,1)$.
Thus the right-hand side of~\eqref{eq:si_5} is strictly bigger than $r_{\max}(B_\phi \boxconv
\mu_{k+1}) = F_\phi(r_{k+1}) = F_\phi(F_\delta(r_k) + r_{\textup{s}})$. In all, we have established
strict comparison of $\beta$-curves for 
$$ s_{k+1} < s \le r_{\max}(B_\phi \boxconv \mu_{k+1})\,.$$
\end{proof}

\section{Proof of Corollary \ref{thm: uniq_as}}\label{app:asym}


\begin{prop}\label{prop:llrd_basic}
The following statements are true.
\begin{enumerate}
\item All symmetric distributions are LLR distributions. 
\item Any LLR distribution has a unique complement.
    \item Any distribution $\mu$ on $(-\infty,+\infty]$ is an LLR distribution if and only if it has a complement. 
\end{enumerate}  
\end{prop}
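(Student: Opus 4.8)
The plan is to prove the three statements essentially in the order given, since (3) is the cleanest and subsumes part of the content of (1) and (2). For statement (3), I would argue both directions. Suppose first that $\mu$ has a complement $\mu^-$, i.e. $d\mu(r) = e^r \, d\mu^-(r)$ on $(-\infty,+\infty)$ with $\mu^-$ a probability measure on $[-\infty,+\infty)$. Then for any measurable $E \subseteq (-\infty,+\infty)$ we have $\int_E e^{-r} \, d\mu(r) = \mu^-[E] \le \mu^-[(-\infty,+\infty)] \le 1$, and since $\mu^-$ puts no mass at $+\infty$ while the $+\infty$ atom of $\mu$ contributes $0$ to $\int e^{-r}\,d\mu(r)$, we get $\int e^{-r}\,d\mu(r) \le 1$, so $\mu$ is an LLR distribution. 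Conversely, if $\int e^{-r}\,d\mu(r) \le 1$, I would \emph{define} a measure $\mu^-$ on $[-\infty,+\infty)$ by $\mu^-[E] \triangleq \int e^{-r}\mathbbm{1}\{r\in E\}\,d\mu(r)$ for $E\subseteq(-\infty,+\infty)$, and then assign the leftover mass $1 - \int e^{-r}\,d\mu(r) \ge 0$ to the atom at $-\infty$. One checks this is a probability measure on $[-\infty,+\infty)$ and that $d\mu(r) = e^r\,d\mu^-(r)$ on the finite part by construction (the $+\infty$ atom of $\mu$ is consistent since $e^{-r}\to 0$). This gives statement (3).

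Statement (1) then follows immediately: if $\mu$ is symmetric, equation~\eqref{eq:def_sym} says $\mu[E] = \int e^{-r}\mathbbm{1}\{-r\in E\}\,d\mu(r)$ for all measurable $E\subseteq(-\infty,+\infty)$; taking $E = (-\infty,+\infty)$ and noting $\mu$ has no atom at $-\infty$ (by symmetry, an atom at $-\infty$ would force an atom at $+\infty$ of infinite relative weight, contradiction — or more directly, plug $E$ shrinking to $-\infty$) shows $\int e^{-r}\,d\mu(r) = \mu[(-\infty,+\infty)] \le 1$, so $\mu$ is an LLR distribution by definition. Alternatively one observes directly from~\eqref{eq:def_sym} that $\mu$ itself, restricted appropriately and reflected, serves as the complement. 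For statement (2), uniqueness of the complement: if $\mu^-_1$ and $\mu^-_2$ both satisfy~\eqref{eq:negative}, then on every measurable $E\subseteq(-\infty,+\infty)$ they agree, being both equal to $\int e^{-r}\mathbbm{1}\{r\in E\}\,d\mu(r)$; the only remaining freedom is the mass at $-\infty$, but that is pinned down as $1$ minus the total finite mass $\int e^{-r}\,d\mu(r)$, since $\mu^-$ must be a probability measure on $[-\infty,+\infty)$. Hence $\mu^-_1 = \mu^-_2$.

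I expect the main (minor) obstacle to be bookkeeping at the boundary points $\pm\infty$: one must be careful that $\mu$ may have an atom at $+\infty$ (allowed) but the complement lives on $[-\infty,+\infty)$ and may have an atom at $-\infty$, and that the relation $d\mu(r) = e^r\,d\mu^-(r)$ is only required on the \emph{open} line $(-\infty,+\infty)$. The factor $e^{-r}$ vanishes as $r\to+\infty$, which is exactly what makes the $+\infty$ atom of $\mu$ invisible to $\int e^{-r}\,d\mu$ and consistent with $\mu^-$ having no mass there; dually the $-\infty$ atom of $\mu^-$ absorbs the ``defect'' $1 - \int e^{-r}\,d\mu(r)$. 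Once these conventions are stated cleanly (as the paper already does in Definition~\ref{def:llrd}), all three parts are short measure-theoretic verifications with no real analytic difficulty.
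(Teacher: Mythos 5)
Your proposal is correct and, apart from proving (3) first and then deducing (1) and (2), it is essentially the paper's own argument: each part reduces to the observation that $\int e^{-r}\,d\mu(r)$ equals the total mass of $\mu$ restricted to $(-\infty,+\infty)$ (via symmetry) or of $\mu^-$ restricted to $(-\infty,+\infty)$ (via the complement relation), with the defect $1-\int e^{-r}\,d\mu(r)$ placed as an atom at $-\infty$. The reorganization is harmless and if anything slightly cleaner.
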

\begin{proof}

For the first statement, consider any symmetric $\mu$. The LHS of inequality \eqref{eq:intasym} equals $\mu[(-\infty,+\infty)]$, which is no greater than $1$.
For the second statement, 
the CDF of any complement distribution is completely determined by equation \eqref{eq:negative}, which proves the  uniqueness. 
Then, inequality \eqref{eq:intasym} ensures that the CDF specified by equation \eqref{eq:negative} is always bounded within $[0,1]$, which proves the existence.
For third statement, the LHS of inequality \eqref{eq:intasym} equals $\mu^{-}[(-\infty,+\infty)]$, which is no greater than $1$.

\end{proof}
The notion of degradation can be generalized to LLR distributions. One can verify that the following definition is consistent with Definition \ref{def:degr} on symmetric distributions. 
\begin{defn}
\label{def:degr_asym}
For any two LLR distributions $\mu_{Y}$, $\mu_{Z}$, we say $\mu_Y$ is a \emph{degraded version} of $\mu_{Z}$, denoted by $\mu_Y\preceq
\mu_Z$, if one can define joint distributions 
$\mu_{Y,Z}$ and $\mu_{Y^-,Z^-}$, 
with $\mu_{Y}$, $\mu_{Z}$, 
and their complements being the marginal distributions, 
such that $\mu_{Y|Z}$ and $\mu_{Y^-|Z^-}$ are identical. 
\end{defn}

Moreover, 
we have the following generalization.
\begin{prop}\label{prop:order_LLR}
Degradation defines a partial order on the set of all LLR distributions, which satisfies continuity, sandwich theorem, and has monotone convergence.
\end{prop}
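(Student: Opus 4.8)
The plan is to prove Proposition~\ref{prop:order_LLR} by reducing each claim about LLR distributions to the corresponding already-established fact about symmetric distributions (Proposition~\ref{prop:lim_deg_com}), exploiting the observation that degradation between LLR distributions is governed entirely by the pair $(\mu,\mu^-)$ and that this pair can be encoded as a single symmetric distribution. Concretely, first I would set up the following encoding: to an LLR distribution $\mu$ with complement $\mu^-$ associate a symmetric distribution $\widehat\mu$ on $(-\infty,+\infty]$ obtained by ``symmetrizing'' the binary-input channel that $\mu$ represents --- e.g. the LLR distribution of the channel $X\mapsto Y$ where $X\sim\Ber(1/2)$, $P_{Y|X=0}$ has LLR law $\mu$ and $P_{Y|X=1}$ has LLR law $\mu^-$ (see the footnote after Definition~\ref{def:llrd}). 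The key lemma to prove is that $\mu_Y\preceq\mu_Z$ in the sense of Definition~\ref{def:degr_asym} if and only if $\widehat\mu_Y\preceq\widehat\mu_Z$ in the sense of Definition~\ref{def:degr}. The ``only if'' direction is immediate: the common conditional kernel $\mu_{Y|Z}=\mu_{Y^-|Z^-}$ is exactly the symmetric noise channel witnessing $\widehat\mu_Y\preceq\widehat\mu_Z$. The ``if'' direction requires extracting from a symmetric kernel witnessing $\widehat\mu_Y\preceq\widehat\mu_Z$ the two joint laws $\mu_{Y,Z}$ and $\mu_{Y^-,Z^-}$ with a common conditional; this is the place where one must be slightly careful, and I expect it to be the main obstacle.

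Granting this lemma, the four assertions follow quickly. For the \textbf{partial order} claim: reflexivity and transitivity of $\preceq$ on LLR distributions transfer from the poset structure of $\preceq$ on symmetric distributions (fourth property of Proposition~\ref{prop:lim_deg_com}) via the encoding, and antisymmetry follows because $\widehat\mu_Y=\widehat\mu_Z$ forces $\mu_Y=\mu_Z$ (the channel, hence its LLR-at-input-$0$ law, is recovered from $\widehat\mu$; uniqueness of the complement, Proposition~\ref{prop:llrd_basic}, part~2, is what makes the encoding injective). For \textbf{continuity}: if $\mu_n\to\mu$ and $\nu_n\to\nu$ weakly with $\mu_n\preceq\nu_n$, one checks that $\mu_n\to\mu$ weakly implies $\mu_n^-\to\mu^-$ weakly --- this is because the map $\mu\mapsto\mu^-$ is weakly continuous on LLR distributions by equation~\eqref{eq:negative}, as $\int e^{-r}\mathbbm 1\{r\in\cdot\}\,d\mu(r)$ depends continuously on $\mu$ for bounded continuous test functions (using that $e^{-r}$ is bounded on $[-\infty,+\infty)$ after the usual change of variables to $\tanh(r/2)$) --- hence $\widehat\mu_n\to\widehat\mu$ and $\widehat\nu_n\to\widehat\nu$ weakly, and continuity of degradation on symmetric distributions (first property of Proposition~\ref{prop:lim_deg_com}) gives $\widehat\mu\preceq\widehat\nu$, i.e. $\mu\preceq\nu$.

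For the \textbf{sandwich theorem} and \textbf{monotone convergence}: these are formal consequences of the symmetric-case versions once we know (i) the encoding $\mu\mapsto\widehat\mu$ is continuous for weak convergence and a homeomorphism onto its image, and (ii) $\preceq$ corresponds under the encoding. Monotone convergence: a $\preceq$-monotone sequence $\mu_n$ of LLR distributions maps to a $\preceq$-monotone sequence $\widehat\mu_n$ of symmetric distributions, which converges weakly to some symmetric $\widehat\mu^*$ by the third property of Proposition~\ref{prop:lim_deg_com}; one then argues $\widehat\mu^*$ is in the image of the encoding --- i.e. is the ``symmetrization'' of some LLR distribution $\mu^*$ --- which should follow since the image is characterized by a closed condition (a symmetry/support constraint) preserved under weak limits, and finally $\mu_n\to\mu^*$ weakly by continuity of the inverse encoding. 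Sandwich is analogous. The crux throughout is the encoding lemma and, within it, checking that the image of the encoding is weakly closed and that weak convergence of $\widehat\mu_n$ is equivalent to joint weak convergence of $(\mu_n,\mu_n^-)$; I would isolate these as two short sub-lemmas. A cleaner alternative worth trying, which avoids the encoding entirely, is to redo the sequential-compactness argument of the proof of Proposition~\ref{prop:lim_deg_com} directly for Definition~\ref{def:degr_asym} --- pass to a subsequence along which both joint laws $\mu_{Y_n,Z_n}$ and $\mu_{Y_n^-,Z_n^-}$ converge, check the limiting conditionals still agree (this identity is again of the form $dP_{Y,Z}=e^Z\,dP_{-Y,-Z}$-type and preserved under weak limits, modulo the asymmetry which one tracks through the complement), and conclude; for antisymmetry one imports the $\EE[e^{-\alpha Y}]$ potential-function argument verbatim since it only used the symmetry of $Z$, which in the asymmetric setting is replaced by the defining relation $d\mu_Z=e^{r}d\mu_Z^-$. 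I would present whichever of the two routes turns out to have the fewer measure-theoretic side conditions, most likely the direct-compactness route.
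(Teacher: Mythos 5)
Your Route 1 (the encoding) has a genuine gap: the symmetrization $\mu\mapsto\widehat\mu$ is not injective, so both the antisymmetry argument and the ``if'' direction of your encoding lemma fail. Uniqueness of the complement (Proposition~\ref{prop:llrd_basic}) only makes the map $\mu\mapsto\widehat\mu$ well-defined; it does not make it one-to-one. Concretely, take any non-symmetric LLR distribution $\mu$. Its symmetrization $\widehat\mu$ is a symmetric distribution, and symmetric distributions are fixed points of the encoding, so $\mu$ and $\widehat\mu$ are two distinct LLR distributions with the same image. Worse, one can check (e.g.\ via the generalized $\beta$-curve of Definition~\ref{def:beta_as}) that $\mu$ and $\widehat\mu$ are $\preceq$-incomparable: $\beta(t;\widehat\mu)=\tfrac12\bigl(\beta(t;\mu)+\beta(-t;\mu)\bigr)$ is an even function of $t$, while $\beta(t;\mu)$ is not, so their difference changes sign across $t=0$. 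Hence $\widehat\mu_Y\preceq\widehat\mu_Z$ does not imply $\mu_Y\preceq\mu_Z$. Physically, symmetrizing a binary-input channel by a known random flip discards precisely the asymmetry between the two inputs, which is exactly the information a degradation relation between asymmetric channels must respect; the image of the encoding is in fact \emph{all} of the symmetric distributions, so the ``closed image characterized by a preserved constraint'' you hoped for does not exist, and the inverse-encoding step in your monotone-convergence and sandwich arguments also collapses.

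The paper's actual proof avoids any such encoding. It generalizes the $\beta$-curve to LLR distributions as $\beta(t;\mu)=\EE_{R\sim\frac{1}{2}(\mu+\mu^-)}\bigl[\,|\tanh(R/2)-t|\,\bigr]$ (Definition~\ref{def:beta_as}); crucially this is \emph{not} an even function of $t$, so it lives in a strictly larger space than symmetric $\beta$-curves and retains the asymmetry. It then shows (i) this $\beta$-curve is in bijection with LLR distributions (differentiate it to recover the law of $T=\tanh(R/2)$ under $\tfrac12(\mu+\mu^-)$, from which $\mu$ is reconstructed) and (ii) degradation in the sense of Definition~\ref{def:degr_asym} is monotone for it, via the error-probability interpretation of Proposition~\ref{prop:rem:beta_err}. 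Antisymmetry then follows, and continuity / sandwich / monotone convergence are obtained by reusing the sequential-compactness arguments from Proposition~\ref{prop:lim_deg_com}, with the $(Y,Z)\to(-Y,-Z)$ invariance replaced by the analogous identity with the complement joint law. Your Route 2 (direct compactness, with $\alpha\mapsto\EE[e^{-\alpha Y}]$ as the separating functional for antisymmetry) is in the same spirit and is the one I would pursue --- it, like the paper's proof, works directly in the LLR space without projecting to symmetric distributions --- but you would need to handle the masses at $\pm\infty$ with care: the a.s.\ inequality $e^{-Y}\ge\EE[e^{-Z}\mid Y]$ is vacuous on $\{Y=+\infty\}$, so one must first argue that $Y=+\infty$ forces $Z=+\infty$ a.s.\ (which does follow from the fact that $\mu_Y^-$ puts no mass at $+\infty$ and the conditional kernel is shared) before the Jensen step can be integrated to $\EE[e^{-\alpha Y}]\ge\EE[e^{-\alpha Z}]$.
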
 
\begin{proof}
Our proof relies on the following generalization of $\beta$-curves. 
\begin{defn}
\label{def:beta_as}
For any LLR distribution $\mu$, we define its $\beta$\emph{-curve} as a function on domain $t \in\mathbb{R}$ given by the following equation.
\begin{align}
    \beta(t;\mu)\triangleq \mathbb{E}_{R\sim\frac{1}{2}(\mu+\mu^-)}\left[\left|\tanh\frac{R}{2}-t\right|\right],
\end{align}
where $\mu^-$ is the complement distribution.
\end{defn}
Given this definition, one can prove that LLR distributions and  their $\beta$-curves are in one-to-one correspondence. This is because every symmetric $\mu$ is completely determined by the distribution of $T=\tanh\frac{R}{2}$ for $R\sim \frac{1}{2}(\mu+\mu^-)$, and their CDF  can be obtained by differentiating the $\beta$-curve based on the following equation. 
\begin{align*}
	\beta(t;\mu) =
	1-t + \int_{-1}^t dy \PP[T \le y]- \int_{t}^1 dy \PP[T \le y]\, . 
\end{align*}
The BSS theorem can be generalized for asymmetric distributions for the above $\beta$-curve definition. In particular, we only require 
$\beta(t;\nu)\leq \beta(t;\mu)
$ 
for any $t\in\mathbb{R}$  and $\nu\preceq\mu$. 
This can be proved by the fact that Proposition \ref{prop:rem:beta_err} holds for asymmetric distributions, then the comparison of $\beta$-curves is implied by the comparison of corresponding hypothesis testing problems. 

Combining the above results, for any $\mu\preceq\nu\preceq\mu$ the $\beta$-curves of $\mu$ and $\nu$ are identical. Therefore, we have $\mu=\nu$, which proves antisymmetry.
The rest of the properties are obtained by the corresponding steps in Proof of Proposition \ref{prop:order},  
except that for continuity, the  $(Y,Z)\rightarrow(-Y,-Z)$ transformation is replaced by the comparison to the joint distribution generated by complements.
\end{proof}


\begin{proof}[Proof of Corollary \ref{thm: uniq_as}] 

The corollary is proved by first showing the unique convergence, i.e., the BP recursion converges to the same symmetric fixed point for all non-trivial initializations. 
Once it is proved, the non-existence of asymmetric fixed point follows naturally from the fact that any fixed point is the limiting distribution of the BP recursion with itself as the initialization.


 Consider any fixed non-trivial initialization $\mu$. We denote its recursion by $\tilde{\mu}_{(h)}\triangleq \calQ_{\textup{s}}^h\mu$. As a reference, we also consider the BP recursion in the setting with perfect leaf observation, 
 which is given by $\mu_{(h)}= \calQ^h_{\textup{s}}B_0$.  Note that $\tilde{\mu}_{(h)}$ can be interpreted as the LLR distributions with noisy leaf observations. 
 By the natural coupling, we have $B_{\frac{1}{2}}\preceq \tilde{\mu}_{(h)}\preceq\mu_{(h)}$. 
 Recall that the sandwich theorem of degradation holds for general LLR distributions. This implies the needed convergence if $\mu_{(h)}$ converges to the  trivial distribution. 



For the other case,  we have that $\mu_{(h)}$ converges to a non-trivial symmetric fixed point $\mu^*$ by the statement of Theorem \ref{thm:main}.
When the limit of $\mu_{(h)}$ is non-trivial, we need to replace  $B_{\frac{1}{2}}$ 
with degradation bounds that also converge to $\mu^*$. 
To that end, we extend the notion of degradation index. 
Because $\mu^*$ is symmetric, we can define $\phi^*(\mu^*,\tilde{\mu}_{(0)})$ exactly the same way as in equation (\ref{eq:degind_def}), and we let $\underline{\mu}_{(h)}$ be the recursion initialized by $$ \underline{\mu}_{(0)}\triangleq B_{\phi^*(\mu^*,\tilde{\mu}_{(0)})}\boxconv\mu^*.$$   
The first and the second property 
in Proposition \ref{prop:deg_closed} hold 
under this extension, implying that $\underline{\mu}_{(0)}$ is non-trivial and $\underline{\mu}_{(0)}\preceq\mu$. 
The proof for the first property can be generalized 
by choosing  
$\phi$ in the proof of in Proposition \ref{prop:deg_closed} as follows. 
$$\phi=\inf_{P_{Z|Y}} \max\left\{\mathbb{P}_{Y\sim \nu}[Z=0],\mathbb{P}_{Y\sim \nu^-}[Z=1] \right\},$$
where $\nu^-$ is the complement distribution of $\nu$.
From elementary statistics, it is known that $B_\phi$ is the LLR distribution of a thresholding quantized version of the estimation problem characterized by $\nu$, where the quantizer is chosen to achieve the minimax error probability. Therefore, we have $B_{\phi}\preceq \nu$, and the rest of the proof follows the same steps. 

Because degradation is still preserved under BP recursion for general distributions. We have $$\underline{\mu}_{(h)}=\calQ^h\underline{\mu}_{(0)}\preceq\calQ^h{\mu}=\tilde{\mu}_{(h)}. $$
Note that $\underline{\mu}_{(0)}$ is  symmetric. The uniqueness theorem states that the 
the constructed lower bounds converge to $\mu^*$. 
Then the unique convergence of 
$\calQ^h\mu$ 
follows from the sandwich theorem. 
\end{proof}

\section{Proof of Theorem \ref{thm:larged}} \label{app:pl_ld}
Note that the definition of degradation index remain unchanged. Theorem \ref{thm:larged} can be proved using the same steps upon the following proposition.\footnote{Except for the existence of non-trivial fixed points, which can be proved by analyzing the evolution of $V_{\mu_{h}}$ with noiseless initialization.}
\begin{prop} \label{prop:larged} For any symmetric $\mu$ and $\nu$,
\begin{enumerate}
    \item if $\nu\preceq\mu$, then $\calQ_{\textup{L}} \nu\preceq\calQ_{\textup{L}} \mu$; 
     \item we have $B_\phi\boxconv \calQ_{\textup{L}}\mu\prec \calQ_{\textup{L}}(B_\phi \boxconv\mu)$ for any $\phi\in(0,\frac{1}{2})$;
      \item  if $\nu$ 
is nontrivial, then 
$\phi^*(\calQ_{\textup{L}}\mu,\calQ_{\textup{L}}\nu)<\phi^*(\mu,\nu)$ or $\mu\preceq\nu$.
\end{enumerate}
\end{prop}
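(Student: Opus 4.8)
The plan is to follow the same template used for $\calQ$ and $\calQ_{\textup{s}}$: prove (part~1) that $\calQ_{\textup{L}}$ preserves degradation, prove (part~2) a strict commutation relation $B_\phi\boxconv\calQ_{\textup{L}}\mu\prec\calQ_{\textup{L}}(B_\phi\boxconv\mu)$, and then deduce (part~3) the index contraction by repeating verbatim the argument in the proof of Theorem~\ref{thm: uniq}.

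For part~1, I would first record two elementary facts. From $\beta(0;\mu)=\EE[T]$ and the identity $\int_0^1 t^{s-1}(\beta(t;\mu)-t)\,dt=\tfrac{1}{s(s+1)}\EE[T^{s+1}]$ with $T=\tanh(|R|/2)$ (Prop.~\ref{prop:weak_conv}), one gets $V_\mu=8\int_0^1(\beta(t;\mu)-t)\,dt$, so $V$ is monotone under degradation by BSS (Theorem~\ref{thm:ftdeg}); and the Gaussian family satisfies $\mathcal{N}(s_1)\preceq\mathcal{N}(s_2)$ whenever $s_1\le s_2$ (degrade by adding independent Gaussian noise) together with the convolution–semigroup law $\mathcal{N}(s_1)*\mathcal{N}(s_2)=\mathcal{N}(s_1+s_2)$. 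Then $\nu\preceq\mu$ gives $V_\nu\le V_\mu$, hence $\mathcal{N}(\bar dV_\nu)\preceq\mathcal{N}(\bar dV_\mu)$ for every $\bar d$, hence $\EE_{\bar d}[\mathcal{N}(\bar dV_\nu)]\preceq\EE_{\bar d}[\mathcal{N}(\bar dV_\mu)]$ (a mixture of $\beta$-curves, plus BSS), and convolving with $\mu_{\textup{s}}$ (Prop.~\ref{prop:convcom}) yields $\calQ_{\textup{L}}\nu\preceq\calQ_{\textup{L}}\mu$.

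Part~2 is the heart of the matter. Using Prop.~\ref{prop:beta_box} and the integral formula for $V$ one checks $V_{B_\phi\boxconv\mu}=(1-2\phi)^2V_\mu$, so $\calQ_{\textup{L}}(B_\phi\boxconv\mu)=\EE_{\bar d}[\mathcal{N}(\bar d(1-2\phi)^2V_\mu)]*\mu_{\textup{s}}$. Peeling off the survey by the stringy tree lemma (Theorem~\ref{thm:stringy}) together with $B_\phi\boxconv\mu_{\textup{s}}\preceq\mu_{\textup{s}}$, and using bilinearity of $\boxconv$, reduces everything to the single Gaussian building block
\[
B_\phi\boxconv\mathcal{N}(s)\ \prec\ \mathcal{N}((1-2\phi)^2 s),\qquad s\in(0,\infty],\ \phi\in(0,\tfrac12)
\]
(for $s=\infty$ this is $B_\phi\prec B_0$, immediate; the trivial $\mu$ case degenerates to $B_\phi\boxconv\mu_{\textup{s}}\prec\mu_{\textup{s}}$, also immediate). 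The non-strict relation $B_\phi\boxconv\mathcal{N}(s)\preceq\mathcal{N}((1-2\phi)^2 s)$ follows by writing $\mathcal{N}(s)=\mathcal{N}(s/n)^{*n}$, applying the stringy tree lemma, and letting $n\to\infty$ via the central-limit description of $\calQ_{\textup{L}}$. The strict version — equivalently, by Prop.~\ref{lemma:strict_beta}, that $\beta(t;B_\phi\boxconv\mathcal{N}(s))<\beta(t;\mathcal{N}((1-2\phi)^2 s))$ for all $t\in[0,1-2\phi)$ — is the main obstacle: its ``root cause'' is that $V_{B_\phi\boxconv\mathcal{N}(s)}=(1-2\phi)^2V_{\mathcal{N}(s)}<V_{\mathcal{N}((1-2\phi)^2 s)}$ by strict concavity of $s\mapsto V_{\mathcal{N}(s)}$ (concavity of the Gaussian MMSE in the signal-to-noise ratio), but turning this into a pointwise strict gap on the whole interval $[0,1-2\phi)$ — not merely in integrated form — requires care. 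I would do it either by a direct estimate of the relevant Gaussian tail ($Q$-function) integrals, or by importing the \emph{explicit} $\beta$-curve gap of Prop.~\ref{lemma:uniq_4b}/Theorem~\ref{lm:uniq_key} for the finite-$\delta$ operators and showing the accumulated gap does not vanish under the rescaling $d(1-2\delta)^2\to\bar d$. Granting the building block, $\prec$ passes through the mixture over $\bar d$ (provided $P_{\bar d}$ charges $(0,\infty]$; otherwise $\calQ_{\textup{L}}$ is a constant map or the identity and the claim is trivial), and the final convolution with $\mu_{\textup{s}}$ is handled by Prop.~\ref{prop:sur_basic} and the Rule of Convolution (Prop.~\ref{lemma:convdeg}), distinguishing $r_{\max}(\mu_{\textup{s}})$ large versus small so as to cover all of $t\in[0,1-2\phi]$; transitivity (Prop.~\ref{prop:trans_deg_stdeg}, Prop.~\ref{prop:bounded_crit}) then gives $B_\phi\boxconv\calQ_{\textup{L}}\mu\prec\calQ_{\textup{L}}(B_\phi\boxconv\mu)$.

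Part~3 is then routine. If $\mu\not\preceq\nu$, then by Prop.~\ref{lemma:index_def} (using $\nu$ non-trivial) we have $\phi^*:=\phi^*(\mu,\nu)\in(0,\tfrac12)$ and $B_{\phi^*}\boxconv\mu\preceq\nu$; applying part~1 gives $\calQ_{\textup{L}}(B_{\phi^*}\boxconv\mu)\preceq\calQ_{\textup{L}}\nu$, applying part~2 gives $B_{\phi^*}\boxconv\calQ_{\textup{L}}\mu\prec\calQ_{\textup{L}}(B_{\phi^*}\boxconv\mu)$, and chaining (Prop.~\ref{prop:trans_deg_stdeg}) yields $B_{\phi^*}\boxconv\calQ_{\textup{L}}\mu\prec\calQ_{\textup{L}}\nu$, whence $\phi^*(\calQ_{\textup{L}}\mu,\calQ_{\textup{L}}\nu)<\phi^*$ by the third property of Prop.~\ref{lemma:index_def}. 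The remaining assertions of Theorem~\ref{thm:larged} then follow exactly as in Theorems~\ref{thm: uniq} and~\ref{thm: uniq_s} and Corollary~\ref{thm: uniq_as} (uniqueness from the contraction, convergence by monotone/sandwich arguments between the noiseless- and no-information-initialized recursions), the only genuinely new ingredient being the analysis of $V_{\mu_h}$ along the noiseless recursion, which governs whether a non-trivial fixed point exists.
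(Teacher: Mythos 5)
Your decomposition of the problem is the right one, and parts~1 and~3 are sound: part~1 via the functional $V_\mu=8\int_0^1(\beta(t;\mu)-t)\,dt$ (a cleaner, more self-contained argument than the paper's one-line ``view $\calQ_{\textup{L}}$ as a limit of $\calQ$'' remark), and part~3 by exactly the chain the paper intends. You also correctly identify the crux of part~2 --- after peeling off the survey, everything reduces to the single Gaussian comparison $B_\phi\boxconv\mathcal{N}(s)\prec\mathcal{N}(\theta^2 s)$ with $\theta=1-2\phi$, equivalently the pointwise strict gap $\beta\bigl(t;\mathcal{N}(\theta^2 s)\bigr)>\theta\,\beta(t/\theta;\mathcal{N}(s))$ for $t\in[0,\theta]$.

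The gap is that you do not prove this building block; you flag it as ``the main obstacle'' and offer two candidate routes without carrying either through. Of those, the second route --- import the finite-$\delta$ gap of Prop.~\ref{lemma:uniq_4b}/Theorem~\ref{lm:uniq_key} and argue it survives the scaling $d(1-2\delta)^2\to\bar d$ --- is precisely what the paper cautions against in Section~\ref{sec:large_d}: strict $\beta$-curve inequalities are \emph{not} automatically preserved under weak limits, so one would have to produce a quantitative, uniform-in-$\delta$ lower bound on the finite-$d$ gap, which is not obviously available from Prop.~\ref{lemma:uniq_4b}. Your first route is the one the paper actually takes, and it is a short, self-contained calculation: write $\beta(t;\mathcal{N}(s))$ as a max over thresholds of an erf expression with explicit maximizer $r^*(s,t)=\sqrt{2/s}\,\operatorname{arctanh}(t)$, compute $\frac{d}{d\sqrt s}\beta(t;\mathcal{N}(s))=\sqrt{(1-t^2)/(2\pi)}\,e^{-r^*(s,t)^2-s/8}$, observe $r^*(\theta^2 s,\theta t)\le r^*(s,t)$ by convexity of $\operatorname{arctanh}$, and conclude the strict inequality by the Lagrange mean value theorem along $\sqrt{s}$. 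Without this (or an equivalent direct estimate) your part~2 --- and hence part~3 and Theorem~\ref{thm:larged} --- is not established. One more small caution: your non-strict limiting argument via $\mathcal{N}(s)=\mathcal{N}(s/n)^{*n}$ and the stringy tree lemma needs a justification that $(B_\phi\boxconv\mathcal{N}(s/n))^{*n}\to\mathcal{N}(\theta^2 s)$ weakly, which is a Lindeberg/CLT step you should state; but that non-strict half is anyway superseded once the strict erf estimate is in hand.
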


\begin{proof}
The first statement in Proposition \ref{prop:larged} is proved by viewing $\calQ_{\textup{L}}$ as a limit of $\calQ$ or $\calQ_\textup{s}$. The third statement can be proved using the second statement and Proposition \ref{prop:deg_closed}. 
Therefore, we focus on the proof for the second statement, 
which is obtained by deriving and comparing related $\beta$-curve functions.
For brevity, we shall ignore cases where $\mu$ is trivial or $P_{\overline{d}}$ is a delta distribution at $0$, where the statement is obviously true. Hence, the LHS of the needed inequality is non-trivial, and it suffices to examine the $\beta$-curves on both sides due to Proposition \ref{lemma:strict_beta}.

We first focus on the case where $\mu_{\textup{s}}$ is trivial, which implies that any $\calQ_{\textup{L}}\mu$ is a mixture of Gaussians, and their $\beta$-curves can be written using integrals of elementary functions. We have the following statement, which is derived from the fact that $\beta$-curves are linear functions of the errors in Bayes estimation. The error function for each fixed prior distribution is the minimum of estimation error over all threshold decoders, and the optimal threshold is the LLR that is consistent with the prior.  

\begin{prop}
For any $t\in[0,1]$ and $s\in[0,+\infty]$, we have
$$\beta(t;\mathcal{N}(s))=\max_{r\in[0,+\infty]}\frac{1}{2}\left((1-t)\textup{erf}\left(\frac{\sqrt{s}}{2\sqrt{2}}-r\right)+(1+t)\textup{erf}\left(\frac{\sqrt{s}}{2\sqrt{2}}+r\right)\right),$$
where $\textup{erf}(z)\triangleq {\frac{2}{\sqrt{\pi}}}\int_{0}^{z}e^{-x^2}dx$, and a maximizer is given by $r=r^*(s,t)\triangleq\sqrt{\frac{2}{s}}\, \textup{arctanh} (t)$. Moreover, for any fixed $t\in[0,1]$,  
$$\frac{d}{d \sqrt{s}}\beta(t;\mathcal{N}(s))=\sqrt{\frac{1-t^2}{2\pi}}e^{-r^*(s,t)^2-\frac{s}{8}}.$$
\end{prop}
Due to convexity of $\textup{arctanh}$, we have $r^*(\theta^2s,\theta t)\leq r^*(s,t)$ for any $\theta\in(0,1)$ and $t\in[0,1]$. Therefore, the following proposition is implied by the Lagrange mean value theorem.
\begin{prop}
The following inequality holds for any $\theta\in(0,1)$, $t\in[0,\theta]$, and $s\in(0,+\infty]$.
$$\beta( t;\mathcal{N}(\theta^2s))> \theta \beta(t/\theta;\mathcal{N}(s)).$$ 
\end{prop}

From linearity of  $\beta$-curves, 
 for $\mu_{\textup{s}}$ being the trivial distribution, we have
 $$\beta(t;\calQ_{\textup{L}}\mu)=\mathbb{E}_{P_{\overline{d}}}\left[\beta(t;\mathcal{N}\left(\overline{d}\cdot V_\mu\right)\right].$$
For brevity, let $\theta\triangleq(1-2\phi)^2$. Note that $V_{B_\phi\boxconv \mu}=\theta^2 V_{\mu}$ and recall Proposition \ref{prop:beta_box}. For any $t\in[0,\theta]$ we have
\begin{align*}
    \beta(t;B_{\phi}\boxconv\calQ_{\textup{L}}\mu)&=\theta\mathbb{E}_{P_{\overline{d}}}\left[\beta(t/\theta;\mathcal{N}\left(\overline{d}\cdot V_\mu\right)\right],\\
    \beta(t;\calQ_{\textup{L}}(B_\phi \boxconv\mu))&=\mathbb{E}_{P_{\overline{d}}}\left[\beta(t;\mathcal{N}\left(\overline{d}\cdot \theta^2V_\mu\right)\right].
\end{align*}
Recall that we only need to provide a prove for $\PP{[\overline{d}>0]>0}$ and non-trivial $\mu$. We have $V_\mu>0$ and $t_{\max}({B_{\phi}\boxconv\calQ_{\textup{L}}\mu})=\theta$. By an integration argument, the above results implies that 
$$\beta(t;\calQ_{\textup{L}}(B_\phi \boxconv\mu))>\beta(t;B_{\phi}\boxconv\calQ_{\textup{L}}\mu)$$ for any $t\in [0,t_{\max}({B_{\phi}\boxconv\calQ_{\textup{L}}\mu})]$, which proves the needed statement. 

~

Now we take non-trivial $\mu_{\textup{s}}$ into account. Let $\mathcal{Q}_{\textup{L},\delta_0}\mu \triangleq \mathbb{E}_{\overline{d} \sim P_{\overline{d}}}\left[\mathcal{N}\left(\overline{d}\cdot V_\mu\right)\right]$, we have that $\mathcal{Q}_{\textup{L}}\mu=\mathcal{Q}_{\textup{L},\delta_0}\mu*\mu_{\textup{s}}$. The earlier proof steps have essentially shown that $$B_\phi\boxconv \calQ_{\textup{L},\delta_0}\mu\prec \calQ_{\textup{L},\delta_0}(B_\phi \boxconv\mu)$$
for any $\phi\in(0,\frac{1}{2})$. Compared to the case with fixed $P_d$, the above inequality exactly leads to a form that corresponds to inequality \eqref{eq:si_7}. Therefore, the proof can be completed using the same steps from inequality \eqref{eq:si_6} to inequality \eqref{ineq:fpd_final}.
\end{proof}



\section{Proof of 
Theorem \ref{thm:gen_t_tec}
}\label{app:gen_tec}

In this appendix, we prove the intermediate results for the uniqueness theorem stated in Section \ref{sec:gllr}. For brevity, we shall assume $\delta_e<\frac{1}{2}$ throughout this section, as the more general cases produce the same set of BP operators.
We start by investigating the case of deterministic $T$, where a generalized version of inequality \eqref{ineq:gen_t_1} can be proved based on the following concepts.
\begin{defn} 
For any element tree $T$ and parameter $r>0$, we define the \emph{polygon number} of any growing point $v$ to be essential infimum of the LLR message returned by the BP algorithm, constrained on the input message at $v$ being $r$ and at all other growing points being $-r$.   We define the polygon number of the pair $(T,r)$ 
to be maximum polygon number over all growing points, and denote it by $p(T,r)$. 
\end{defn}

\begin{prop}\label{pp:gen59}
For any $r>0$, an element tree $T$ satisfies the polygon condition if and only if $p(T,r)<0$. 
\end{prop}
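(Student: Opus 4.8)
The plan is to derive the equivalence vertex by vertex from a single pointwise monotonicity property of the belief-propagation recursion.

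First I would fix the element tree $T$ and interpret ``the LLR message returned by the BP algorithm'' as the value $\mathrm{out}(o)$ produced by the deterministic sum--product recursion: $\mathrm{out}(v)$ is the prescribed input message at each growing point $v$, and at each non-growing vertex $w$ one sets $\mathrm{out}(w)=s_w+\sum_u F_{\delta_{e_u}}(\mathrm{out}(u))$, the sum over the children $u$ of $w$, with $e_u$ the edge joining them and $s_w$ the survey LLR at $w$ (drawn independently from $\mu_w$, so that $\mathrm{out}(o)$ is random only through the surveys). A short induction on the structure of $T$ then yields the key lemma: for every fixed realization of the surveys, $\mathrm{out}(o)$ is nondecreasing in each growing-point input; and since $\delta_e<\frac{1}{2}$ makes every $F_{\delta_e}$ increasing with $F_{\delta_e}(0)=0$, $F_{\delta_e}([0,\infty])\subseteq[0,\infty]$ and $F_{\delta_e}((0,\infty])\subseteq(0,\infty]$, whenever all survey LLRs and all growing-point inputs are $\ge 0$ one has $\mathrm{out}(o)\ge 0$, strictly so if the input at some growing point is $>0$.

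Next I would fix $r>0$ and a growing point $v$. By the monotonicity lemma, over all admissible configurations (input $r$ at $v$, inputs in $[-r,r]$ at the remaining growing points) the value of $\mathrm{out}(o)$ is minimized --- for every survey realization simultaneously --- by setting all the other inputs to $-r$; writing $M_v$ for that random value, the polygon number of $v$ is exactly $\essinf M_v$. The two implications now fall out. If $\essinf M_v\ge 0$, then $M_v\ge 0$ almost surely, hence by monotonicity $\mathrm{out}(o)\ge M_v\ge 0$ a.s.\ at \emph{every} admissible configuration, so the root-message sign is consistently nonnegative and $v$ is dominant. Conversely, if $\essinf M_v<0$, then $M_v<0$ with positive probability, while on the positive-probability event $\{\text{all survey LLRs}\ge 0\}$ (the surveys being independent and each symmetric law putting mass $\ge\frac{1}{2}$ on $[0,\infty]$) the configuration with input $r$ at $v$ and $0$ elsewhere gives $\mathrm{out}(o)>0$ by the lemma; so the root-message sign is negative on one configuration and positive on another, each with positive probability, i.e.\ not consistent, so $v$ is not dominant. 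Taking the disjunction over all growing points converts ``$v$ is dominant $\iff$ the polygon number of $v$ is $\ge 0$'' into ``$T$ has a dominant growing point $\iff p(T,r)\ge 0$'', which is the contrapositive of the claim. The degenerate cases are immediate: if $T$ has no growing point then both sides hold vacuously, and if all surveys are trivial the event above has probability one.

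The step I expect to require the most care is the monotonicity lemma --- in particular, being explicit that here ``the BP algorithm'' refers to the per-realization sum--product recursion above, and that the mixture $(1-\delta_{e_u})\hat\mu_u+\delta_{e_u}\hat\mu_u^-$ occurring in the definition of $\calQ_T$ concerns only the \emph{distribution} of these messages over the random tree ensemble, not the computation on a fixed realization. Everything else is routine bookkeeping about the sign of $\mathrm{out}(o)$.
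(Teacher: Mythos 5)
Your proposal is correct and follows the same approach as the paper, which also invokes monotonicity of the per-realization BP recursion to conclude that a growing point is dominant if and only if its polygon number is nonnegative, and then takes the maximum over growing points. You spell out some details the paper leaves implicit -- most notably exhibiting a positive-probability configuration with strictly positive output to rule out the possibility of a consistently nonpositive sign -- but this is elaboration rather than a genuinely different argument.
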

\begin{proof}
Recall Definition \ref{def:polc}. 
By the monotonicity of BP, any growing point is domiant if and only if their polygon number non-negative. Thus, the polygon coditon holds if and only if all of them are negative, which is equivalent to having  $p(T,r)<0$.   
\end{proof}

\begin{prop}\label{prop:gen_ele_ft}
 Let $\mathcal{Q}_{T}$ denote the BP operator for any deterministic element tree $T$. Then for any non-trivial symmetric $\mu$ and any $\phi\in(0,\frac{1}{2})$, we have
\begin{align}\label{ineq:genl_sineq}
    B_{\phi}\boxconv\mathcal{Q}_{T}\mu\prec_{p}  \mathcal{Q}_{T}(B_{\phi}\boxconv\mu),
\end{align}  
where $p=p(T,r_{\max}(B_{\phi}\boxconv\mu))$.
\end{prop}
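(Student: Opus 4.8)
The plan is to prove Proposition~\ref{prop:gen_ele_ft} by structural induction on the element tree $T$, closely following the decomposition of $\calQ_T$ and the $\beta$-curve bookkeeping used in the proofs of Theorem~\ref{lm:uniq_key} and Theorem~\ref{thm:key_survey}. Fix $\phi\in(0,\tfrac12)$, a non-trivial symmetric $\mu$, and write $r\triangleq r_{\max}(B_\phi\boxconv\mu)$. Since $\mu$ is symmetric every $\calQ_{T_u}\mu$ is symmetric, so (exactly as in Proposition~\ref{prop:q_express}) the recursion collapses to $\calQ_T\mu=\mu_o*D_1*\cdots*D_d$, where $o$ is the root of $T$, the $e_u$ are its incident edges, the $T_u$ the corresponding subtrees, and $D_u\triangleq B_{\delta_{e_u}}\boxconv\calQ_{T_u}\mu$. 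In the base case $T$ is a single growing point: $\calQ_T$ is the identity, $p(T,r)=r=r_{\max}(B_\phi\boxconv\mu)$, and the claim reduces to $B_\phi\boxconv\mu\prec_{r_{\max}(B_\phi\boxconv\mu)}B_\phi\boxconv\mu$, which is the fourth part of Proposition~\ref{prop:bounded_crit}. (Recall $\delta_e<\tfrac12$ throughout; simple element trees, Definition~\ref{def:simple_path}, have $\calQ_T$ equal to the identity and $p(T,r)=r$, and are handled exactly as the base case. Below we treat the generic case.)

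For the inductive step, apply the induction hypothesis to each child, obtaining $B_\phi\boxconv\calQ_{T_u}\mu\prec_{p_u}\calQ_{T_u}(B_\phi\boxconv\mu)$ with $p_u\triangleq p(T_u,r)$. Box-convolving with $B_{\delta_{e_u}}$ and using the second part of Proposition~\ref{prop:conv} together with the commutation $F_\phi\circ F_{\delta_{e_u}}=F_{\delta_{e_u}}\circ F_\phi$ (so that the parameter $s_\tau$ of Proposition~\ref{prop:conv} becomes $F_{\delta_{e_u}}(s)$), this gives
\begin{align*}
B_\phi\boxconv D_u=B_\phi\boxconv B_{\delta_{e_u}}\boxconv\calQ_{T_u}\mu\;\prec_{F_{\delta_{e_u}}(p_u)}\;B_{\delta_{e_u}}\boxconv\calQ_{T_u}(B_\phi\boxconv\mu)\triangleq C_u .
\end{align*}
Also $B_\phi\boxconv\mu_o\preceq\mu_o$, strictly when $\mu_o$ is non-trivial. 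Repeated use of the stringy tree lemma yields $B_\phi\boxconv\calQ_T\mu\preceq(B_\phi\boxconv\mu_o)*(B_\phi\boxconv D_1)*\cdots*(B_\phi\boxconv D_d)$, and then degrading the survey factor and every child factor componentwise (degradation is convolution-preserving) gives the non-strict chain $B_\phi\boxconv\calQ_T\mu\preceq\mu_o*C_1*\cdots*C_d=\calQ_T(B_\phi\boxconv\mu)$.

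The crux is to pin down exactly where the comparison of $\beta$-curves along this chain is strict and show it coincides with the interval $(\tanh(p(T,r)/2),\,t_{\max}(B_\phi\boxconv\calQ_T\mu)]$. For each index $u_0$ — and, if $\mu_o$ is non-trivial, for the survey factor — keep that one factor strict and degrade the remaining factors non-strictly, then apply the Rule of Convolution (Proposition~\ref{lemma:convdeg}) with $\ell$ the combined $r_{\max}$ of the other factors; this turns the strict interval of the $u_0$-th factor into a strict interval for the whole convolution whose left endpoint is $\tanh$ of $\tfrac12\bigl(F_{\delta_{e_{u_0}}}(p_{u_0})-r_{\max}(\mu_o)-\sum_{u'\ne u_0}F_{\delta_{e_{u'}}}(r_{\max}(\calQ_{T_{u'}}(B_\phi\boxconv\mu)))\bigr)$. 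Since $r_{\max}(\calQ_{T_{u'}}(B_\phi\boxconv\mu))$ depends only on $T_{u'}$ and $r$, this left endpoint, maximized over $u_0$, is precisely the recursion satisfied by the polygon number $p(T,r)$ (with $p=r$ at a growing point), as follows from the definition of $p(T,r)$ and the monotonicity and sign-symmetry of BP used to prove Proposition~\ref{pp:gen59}. Taking the union of these strict intervals, using that the $\beta$-curve of a mixture is the mixture of $\beta$-curves, and clearing the degenerate factors and the endpoint $t_{\max}$ with the remaining parts of Proposition~\ref{prop:bounded_crit}, yields $B_\phi\boxconv\calQ_T\mu\prec_{p(T,r)}\calQ_T(B_\phi\boxconv\mu)$.

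I expect this last paragraph to be the main obstacle: verifying term by term that the compositional rule for the strict-$\beta$-curve endpoints produced by the Rule of Convolution and the box-convolution-preservation of $\prec_s$ matches the recursion defining $p(T,r)$, and simultaneously checking — as in the ``$t=0$'' arguments in the proof of Theorem~\ref{lm:uniq_key} — that no strictness is lost at the endpoint $t_{\max}$ when some $\delta_{e_u}$ is extreme or $\mu_o$ is trivial. Once Proposition~\ref{prop:gen_ele_ft} is established, inequality~\eqref{ineq:gen_t_1} follows by averaging over $P_T$ and using that the polygon condition is equivalent to $p(T,r)<0$ (Proposition~\ref{pp:gen59}), and~\eqref{ineq:gen_t_h} by iterating to a depth at which representativeness holds with positive probability.
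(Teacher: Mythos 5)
Your overall framework --- structural induction, $\beta$-curve bookkeeping via the Rule of Convolution, and matching the endpoints of strictness with the recursion defining $p(T,r)$ --- is the right one, and your observation that the per-factor left endpoints $F_{\delta_{e_{u_0}}}(p_{u_0})-\ell_{u_0}$ reproduce the polygon-number recursion is correct. The gap is in where you source the strictness. You pull $B_\phi$ through the full product $\mu_o*D_1*\cdots*D_d$ \emph{non-strictly} via the stringy tree lemma, and then try to generate the entire strict region from the per-factor degradations $B_\phi\boxconv D_u\prec_{F_{\delta_{e_u}}(p_u)}C_u$ and $B_\phi\boxconv\mu_o\prec\mu_o$. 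That cannot work in general. Take $T$ to be a root with trivial survey and $d\ge 2$ children that are all single growing points on edges with a common parameter $\delta$. Then $D_u=B_\delta\boxconv\mu$, so by commutativity $B_\phi\boxconv D_u=B_\delta\boxconv(B_\phi\boxconv\mu)=C_u$ \emph{exactly}; the inductive strictness parameter is $p_u=r=r_{\max}(C_u)$, so $B_\phi\boxconv D_u\prec_{F_\delta(p_u)}C_u$ reduces to $C_u\preceq C_u$ with no strict content, and every Rule-of-Convolution interval you form is empty. Yet $p(T,r)=(2-d)F_\delta(r)\le 0$, so the proposition requires a strict $\beta$-curve gap over essentially the whole range $[0,t_{\max}]$ --- for $d=2$ and $d=3$ this is exactly the nontrivial content of \eqref{ineq:fixedd2} and \eqref{ineq:fixedd3}. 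That strictness lives entirely in the stringy-tree step, and your chain has discarded it.

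The paper's proof avoids this by never pulling $B_\phi$ through all factors at once. It splits the convolution into two pieces --- the child $\tilde T_{j^*}$ whose $\calQ_{\tilde T_{j}}\mu$ has minimal $r_{\max}$ against the remainder $\hat T_{j^*}$, or the survey factor against the rest --- and applies a \emph{strict} two-factor stringy-tree comparison (Corollary~\ref{coro:uniq_4}, or Proposition~\ref{prop:sur_basic} for the survey peel). Crucially, that strictness reaches the right endpoint $r_{\max}(B_\phi\boxconv\calQ_T\mu)$, which the per-factor Rule-of-Convolution intervals structurally cannot do (their right ends are $r_{\max}(C_{u_0})-\ell_{u_0}$, which falls far short of $r_{\max}(\calQ_T(B_\phi\boxconv\mu))$, and can be empty as above). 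The per-factor mechanism you describe then fills in the left portion near $p(T,r)$, and the choice of $j^*$ is what guarantees the two strict regions actually meet. So the missing ingredient is the strict stringy-tree input together with the pairwise peeling that preserves it; it is not recoverable from Proposition~\ref{prop:bounded_crit} applied to your non-strict outer bound.
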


\begin{proof}
We prove the proposition by induction over the size of the tree. For the base cases where the element tree only has one single vertex, there are two possibilities. If the tree has a growing point, we have $p=r_{\max}(B_{\phi}\boxconv\mu)$ and $\calQ_T$ being the identity function. In this case, inequality \eqref{ineq:genl_sineq} reduces to $\nu\prec_{r_{\max}(\nu)}  \nu$ with $\nu=B_{\phi}\boxconv\mu$, which directly follows from the definition of $\prec_p$. 
Otherwise, let $\mu_o$ denote the survey distribution on the only vertex. We have $p=-r_{\max}(\mu_{o})$ and $\calQ_T$ being the constant operator that returns $\mu_{o}$. In this case, inequality \eqref{ineq:genl_sineq} reduces to $ B_{\phi}\boxconv\mu_o\prec_{-r_{\max}(\mu_{o})} \mu_o,$  which can be proved using the rule of box convolution on $\beta$-curves (Proposition \ref{prop:beta_box}). 

Now 
we prove the induction step by considering three possible cases. Again, let $o$ denote the root vertex. We first prove the inequality when $\mu_o$ is trivial and the root has at most one child.  
Note that the base cases have already covered one-vertex trees. 
The root has exactly one incident edge in this regime. We denote this edge by $e$, and the sub element tree rooted at the unique child by $\tilde{T}$. 
The induction assumption implies
$$  B_{\phi}\boxconv\mathcal{Q}_{\tilde{T}}\mu\prec_{\tilde{p}}  \mathcal{Q}_{\tilde{T}}(B_{\phi}\boxconv\mu),$$
where $\tilde{p}=p(\tilde{T},r_{\max}(B_{\phi}\boxconv\mu))$. 
It is clear that $p=F_{\delta_e}(\tilde{p})$ and $\calQ_T\nu=B_{\delta_e}\boxconv \calQ_{\tilde{T}}\nu$ for any symmetric $\nu$. Therefore, the needed inequality in this case can be obtained by box convolving both sides of the above inequality with $B_{\delta_e}$, and the analysis follows from the commutativity of box convolution and the second property of Proposition \ref{prop:conv}.

For the second regime, we assume that $\mu_o$ is trivial, but the root has at least two children. 
We create a new element tree for each incident edge of the root,  by removing all other incident edges and their subtrees, and we denote them by $\tilde{T}_1,...,\tilde{T}_d$. 
Further, let $j^*$ be any index that minimizes $r_{\max}(\calQ_{T_j}
\mu)$. We create another element tree, denoted by $\hat{T}_{j^*}$, 
to be the sub element tree of $T$ that excludes everything from $\tilde{T}_{j^*}$, but not the root vertex. It is clear that   $\calQ_T\mu=(\calQ_{\tilde{T}_{j^*}}\mu)*(\calQ_{\hat{T}_{j^*}}\mu)$. 
Hence, the following is implied by Corollary \ref{coro:uniq_4}. 
\begin{align}\label{ineq:treeg_1}
    B_{\phi}\boxconv\mathcal{Q}_{T}\mu\prec_{p^*} (B_{\phi}\boxconv\mathcal{Q}_{\tilde{T}_{j^*}}\mu)*(B_{\phi}\boxconv\mathcal{Q}_{\hat{T}_{j^*}}\mu),
\end{align} 
where $p^*=|F_{\phi}(r_{\max}(\mathcal{Q}_{\hat{T}_{j^*}}\mu))-F_{\phi}(r_{\max}(\mathcal{Q}_{\tilde{T}_{j^*}}\mu))|$. Recall that 
both $\tilde{T}_{j^*}$ and $\hat{T}_{j^*}$ are not identical to the full tree in this regime. The induction assumptions can be applied to each,  and 
we have the following inequalities due to that convolution preserves degradation (see Proposition \ref{prop:convcom}). 
\begin{align}
   (B_{\phi}\boxconv\mathcal{Q}_{\tilde{T}_{j^*}}\mu)*(B_{\phi}\boxconv\mathcal{Q}_{\hat{T}_{j^*}}\mu)
&\preceq   (\mathcal{Q}_{\tilde{T}_{j^*}}(B_{\phi}\boxconv\mu))*(B_{\phi}\boxconv\mathcal{Q}_{\hat{T}_{j^*}}\mu)\label{ineq:gent_2}\\&\preceq(\mathcal{Q}_{\tilde{T}_{j^*}}(B_{\phi}\boxconv\mu))*(\mathcal{Q}_{\hat{T}_{j^*}}(B_{\phi}\boxconv\mu))\label{ineq:gent_3}\\
&=\mathcal{Q}_{T}(B_{\phi}\boxconv\mu).\nonumber
\end{align}  
Note that the above steps and inequality \eqref{ineq:treeg_1} exactly complete a full chain of degradation. Similar to the basic settings, 
we can prove the  overall 
strict inequality in 
$\beta$-curves 
by keeping track of the strict conditions for each individual step. 
We first look at 
inequality \eqref{ineq:gent_3}. By induction assumptions, we have 
\begin{align*}
    B_{\phi}\boxconv\mathcal{Q}_{\hat{T}_{j^*}}\mu&\prec_{\hat{p}}  \mathcal{Q}_{\hat{T}_{j^*}}(B_{\phi}\boxconv\mu),
\end{align*}
where 
$\hat{p}=p(\hat{T}_{j^*},r_{\max}(B_{\phi}\boxconv\mu))$. Therefore, by choosing $\ell=
r_{\max}(\mathcal{Q}_{\tilde{T}_{j^*}}(B_{\phi}\boxconv\mu))$ for the rule of convolution, inequality \eqref{ineq:gent_3} provides strict conditions for
\begin{align*}
    s\in(\hat{p}-r_{\max}(\mathcal{Q}_{\tilde{T}_{j^*}}(B_{\phi}\boxconv\mu)), r_{\max}(B_{\phi}\boxconv\mathcal{Q}_{\hat{T}_{j^*}}\mu)-r_{\max}(\mathcal{Q}_{\tilde{T}_{j^*}}(B_{\phi}\boxconv\mu))].
\end{align*} 
The lower boundary of the above set is no greater than $p$, because it is exactly the maximum polygon number over growing points that are within $\hat{T}_{j^*}$.
As mentioned earlier, inequality \eqref{ineq:treeg_1} already provides non-zero gaps for $s\in (p^*, r_{\max}( B_{\phi}\boxconv\mathcal{Q}_{T}\mu)]$. Given our choice of $j^*$, we also have $p^*= r_{\max}(B_{\phi}\boxconv\mathcal{Q}_{\hat{T}_{j^*}}\mu)-r_{\max}(B_{\phi}\boxconv\mathcal{Q}_{\tilde{T}_{j^*}}\mu)$.  Therefore, when $\tilde{T}_{j^*}$ is simple, we have $$\mathcal{Q}_{\tilde{T}_{j^*}}(B_{\phi}\boxconv\mu)=B_{\phi}\boxconv\mu=B_{\phi}\boxconv\mathcal{Q}_{\tilde{T}_{j^*}}\mu,$$ and we have already established strict inequality for $s\in (p, r_{\max}( B_{\phi}\boxconv\mathcal{Q}_{T}\mu)]$. 

When $\tilde{T}_{j^*}$ is not simple, we also need to investigate inequality \eqref{ineq:gent_2}. Similarly, by applying rule of convolution to the induction assumption, with $\ell=r_{\max}(B_{\phi}\boxconv\mathcal{Q}_{\hat{T}_{j^*}}\mu)$, and using the fact that $\beta$-curves are even functions for symmetric distributions, we obtain strict inequalities for 
\begin{align*}
    s\in(r_{\max}(B_{\phi}\boxconv\mathcal{Q}_{\hat{T}_{j^*}}\mu)-r_{\max}(\mathcal{Q}_{\tilde{T}_{j^*}}(B_{\phi}\boxconv\mu)), r_{\max}(B_{\phi}\boxconv\mathcal{Q}_{\hat{T}_{j^*}}\mu)-p(\tilde{T}_{j^*},r_{\max}(B_{\phi}\boxconv\mu))).
\end{align*} 
To proceed, we need to prove a simple fact: for any  $\tilde{T}_{j^*}$ being non-simple, we have \begin{align}\label{ineq_gent_simple}
    p(\tilde{T}_{j^*},r_{\max}(B_{\phi}\boxconv\mu))<r_{\max}(B_{\phi}\boxconv\mathcal{Q}_{\tilde{T}_{j^*}}\mu).
\end{align}  Let $v$ be a growing point in $\tilde{T}_{j^*}$ that maximizes the polygon number for $r=r_{\max}(B_{\phi}\boxconv\mu)$, and let $\delta_1,...,\delta_k$ denote the parameters of the edges on the path from the root to $v$. We define $p_0=r_{\max}(B_{\phi}\boxconv B_{\delta_1}\boxconv...\boxconv B_{\delta_k}\boxconv\mu)$. We have 
$p_0> p(\tilde{T}_{j^*},r_{\max}(B_{\phi}\boxconv\mu))$, because $p_0$ is exactly the output LLR message of the BP algorithm with the input at $v$ being $r$ and all other inputs setting to $0$ (including the surveys). Here we used the commutativity of box convolution. On the other hand, note that $B_{\delta_1}\boxconv...\boxconv B_{\delta_k}\boxconv\mu\preceq \mathcal{Q}_{\tilde{T}_{j^*}}\mu$, because the first quantity corresponds to a network with strictly less observation points. Hence, $p_0< r_{\max}(B_{\phi}\boxconv\mathcal{Q}_{\tilde{T}_{j^*}}\mu)$, and inequality (\ref{ineq_gent_simple}) is proved. This  implies that the upper boundary of the set obtained from inequality \eqref{ineq:gent_2}  is greater than $p^*$. Thus, the union of all three intervals covers the entire $(p, r_{\max}( B_{\phi}\boxconv\mathcal{Q}_{T}\mu)]$.


Having proved the induction step for any element tree with trivial $\mu_o$, we consider the third regime where $\mu_o$ can be any symmetric distribution. Let $\tilde{T}$ be the element tree obtained by removing the survey channel at the root from $T$. We have $\calQ_{T}\nu=\mu_o*\calQ_{\tilde{T}}\nu$ for any symmetric $\nu$, and we have already proved that $\calQ_{\tilde{T}}$.
  \begin{align}
      B_{\phi}\boxconv\mathcal{Q}_{\tilde{T}}\mu\prec_{\tilde{p}}  \mathcal{Q}_{\tilde{T}}(B_{\phi}\boxconv\mu),
\end{align}  
where $\tilde{p}=p(\tilde{T},r_{\max}(B_{\phi}\boxconv\mu))$. The above inequality leads to the following chain 
\begin{align*}
    B_{\phi}\boxconv\mathcal{Q}_{\tilde{T}}\mu&=B_{\phi}\boxconv(\mu_o*\mathcal{Q}_{\tilde{T}}\mu)\preceq
    \mu_o*(B_{\phi}\boxconv\mathcal{Q}_{\tilde{T}}\mu)\preceq \mu_o*\mathcal{Q}_{\tilde{T}}(B_{\phi}\boxconv\mu)= \mathcal{Q}_{{T}}(B_{\phi}\boxconv\mu).
\end{align*} 
 The first inequality is a direct application of Proposition \ref{prop:sur_basic}, and it states strict inequality in $\beta$-curves for $$s\in (F_{\phi}(r_{\max}(\mathcal{Q}_{\tilde{T}}\mu))-r_{\max}(\mu_o),r_{\max}( B_{\phi}\boxconv\mathcal{Q}_{\tilde{T}}\mu)].$$
 Apply rule of convolution  to the second inequality with $\ell=r_{\max}(\mu_o)$, we obtain strict conditions for
 $$s\in(\tilde{p}-r_{\max}(\mu_o),F_{\phi}(r_{\max}(\mathcal{Q}_{\tilde{T}}\mu)) -\ell ].$$
 By the definition of polygon number, the lower boundary of the above interval is exactly $p$. Hence, we have proved that the overall degradation implied by the above chain has strict inequalities in $\beta$-curves for all $s\in(p,r_{\max}( B_{\phi}\boxconv\mathcal{Q}_{\tilde{T}}\mu)]$.
\end{proof}
We present another important fact. 
\begin{prop}\label{prop:gent_ess}
For any distribution $P_T$, non-trivial symmetric point $\mu$, and any $\phi\in(0,\frac{1}{2})$, let $r=r_{\max}(B_{\phi}\boxconv\mu)$, we have
\begin{align}\label{ineq:gent_techm1}
    \esssup\, F_T(r)\geq r.
\end{align}
Moreover, the above inequality is strict if and only if $P_T$ is non-trivial.
\end{prop}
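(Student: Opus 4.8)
The plan is to collapse the whole statement to a one–dimensional inequality for the map $x\mapsto F_T(x)$ and then exploit its concavity together with the fixed point relation $\mathcal{Q}_{P_T}\mu=\mu$.

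\emph{Reduction to $F_T$.} Since all edges are normalized to be ferromagnetic ($\delta_e<\tfrac12$), expanding $\mathcal{Q}_T$ over the structure of $T$ gives, for any symmetric $\nu$,
\[ r_{\max}(\mathcal{Q}_T\nu)=r_{\max}(\mu_o)+\sum_{u}F_{\delta_{e_u}}\!\big(r_{\max}(\mathcal{Q}_{T_u}\nu)\big), \]
with base case the identity at a growing point. As $B_{1/(1+e^x)}$ is supported on $\{\pm x\}$, this says precisely $r_{\max}(\mathcal{Q}_T\nu)=F_T\big(r_{\max}(\nu)\big)$ with $F_T$ as in Definition \ref{def:polc}. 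From the same recursion and the concavity of each $F_\phi$ on $\mathbb{R}_{\ge0}$ (already used in the proof of Corollary \ref{coro:uniq_4}), an induction on the tree structure shows that $F_T$ is non-decreasing and concave on $[0,\infty)$ with $F_T(0)=r_{\max}(\mu_o)+\sum_uF_{\delta_{e_u}}(F_{T_u}(0))\ge0$. Since the radius of a mixture is the essential supremum of the radii of its components, $r_{\max}(\mathcal{Q}_{P_T}\nu)=\esssup_{T\sim P_T}F_T\big(r_{\max}(\nu)\big)$. Finally, using $\delta_e\ne\tfrac12$ one checks that $F_T$ coincides with the identity on an interval $[0,\rho_0]$, $\rho_0>0$, if and only if $T$ is simple: any branching forces a linear piece of $F_T$ with integer slope $\ge2$, and any non-$\{0,1\}$ edge makes $F_T$ strictly sublinear, while all surveys being trivial is exactly $F_T(0)=0$.

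\emph{The inequality.} Write $\rho\eqdef r_{\max}(\mu)$ and $r\eqdef r_{\max}(B_\phi\boxconv\mu)=F_\phi(\rho)$; since $\phi\in(0,\tfrac12)$ we have $r\le\rho$. Because $\mu$ is a fixed point,
\[ \rho=r_{\max}(\mu)=r_{\max}(\mathcal{Q}_{P_T}\mu)=\esssup_{T\sim P_T}F_T(\rho). \]
For each $T$ in the support, concavity of $F_T$ on $[0,\rho]$ together with $F_T(0)\ge0$ gives
\[ F_T(r)\ \ge\ \frac{\rho-r}{\rho}F_T(0)+\frac{r}{\rho}F_T(\rho)\ \ge\ \frac{r}{\rho}F_T(\rho), \]
and taking the essential supremum over $T$ yields $\esssup_TF_T(r)\ge\frac{r}{\rho}\esssup_TF_T(\rho)=r$, which is the claimed bound. (When $\rho=+\infty$ one argues separately: either a positive-probability element tree carries a chain of $\delta_e=0$ edges descending to a growing point, in which case $F_T(x)\ge x$ for all $x$ outright, or one lets $x_0\uparrow\infty$ in the same chord estimate.)

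\emph{Strictness.} If $P_T$ is trivial then $T$ is simple almost surely, $\mathcal{Q}_{P_T}$ is the identity, and $\esssup_TF_T(r)=r$, so strictness fails; this is the ``only if'' direction. For the ``if'' direction, suppose $P_T$ is non-trivial and, toward a contradiction, $\esssup_TF_T(r)=r$, i.e.\ $F_T(r)\le r$ for almost every $T$. For such $T$, concavity and $F_T(0)\ge0$ force $F_T'(r^-)\le\frac{F_T(r)-F_T(0)}{r}\le1$, hence $F_T(x)\le F_T(r)+(x-r)\le x$ for all $x\ge r$; combined with $\esssup_TF_T(x)\ge x$ on $[0,\rho]$ (the chord estimate above with $r$ replaced by any $x\le\rho$), this shows $\esssup_TF_T(x)=x$ for all $x\in[r,\rho]$. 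On the other hand, the deterministic strict comparison of Proposition \ref{prop:gen_ele_ft}, namely $B_\phi\boxconv\mathcal{Q}_T\mu\prec_{p(T,r)}\mathcal{Q}_T(B_\phi\boxconv\mu)$, yields $F_\phi(F_T(\rho))<F_T(r)$ whenever $T$ has no dominant growing point (so $p(T,r)<0$). Feeding in the element trees that witness $\esssup_TF_T(\rho)=\rho$ — which occur with positive probability — this is incompatible with $\esssup_TF_T(r)=r$ unless each such witnessing tree has a dominant growing point; together with $F_T=\mathrm{id}$ on $[r,\rho]$ this forces every such tree to be simple, and propagating this description through the equation $\mu=\mathcal{Q}_{P_T}\mu$ gives that $P_T$ is trivial. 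The main obstacle is precisely this last propagation step: transferring the ``simple / dominant growing point'' description, which the argument first secures only for the element trees realizing $\esssup_TF_T(\rho)=\rho$, to the entire support of $P_T$, which is where the fixed point structure has to be used most carefully.
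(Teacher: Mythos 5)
Your proof of the inequality itself is correct but takes a detour the paper does not need: the paper simply invokes the stringy tree lemma (Proposition~\ref{prop:gen_t_ele}, item~3), which for the fixed point $\mu=\calQ_{P_T}\mu$ reads $B_\phi\boxconv\mu\preceq\calQ_{P_T}(B_\phi\boxconv\mu)$, and then takes $r_{\max}$ of both sides to get $r\le\esssup_T F_T(r)$ in one line. Your chord-estimate route via concavity of $F_T$ also proves this (modulo the somewhat informal handling of $\rho=r_{\max}(\mu)=+\infty$), but the extra structure you introduce is not needed for this direction.

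The ``if'' direction of strictness, however, has a genuine gap, and it is exactly the one you flag at the end. Working with $r_{\max}$ (an essential supremum) can only pin down the behavior of element trees near the extremal value $\esssup_T F_T(\rho)=\rho$: you have no guarantee the supremum is attained on a positive-measure set (it may be approached only along a sequence $T_n$ with $F_{T_n}(\rho)\uparrow\rho$), and even if it were attained, simplicity would follow only for those extremal trees, not for the whole support of $P_T$. There is no way to "propagate" through $\mu=\calQ_{P_T}\mu$ using $r_{\max}$ alone, because $r_{\max}$ of a mixture discards everything except the extremal component. The paper avoids this by switching from $r_{\max}$ to the $\beta$-curve value at the single point $t=\tanh(r/2)$, via the auxiliary function $\alpha(s;\nu)=\beta(\tanh(s/2);\nu)-\tanh(s/2)$. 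Proposition~\ref{prop:gent_ptt} gives the \emph{pointwise} (per-tree) bound $\alpha(F_T(r);\calQ_T\mu)\le\tanh(F_T(r)/2)\,\alpha(r;\mu)/\tanh(r/2)$, with equality iff $T$ has at most one growing point and trivial surveys. Combining this with the monotonicity of $\alpha$ and the assumption $F_T(r)\le r$ a.e.\ yields $\alpha(r;\calQ_T\mu)\le\alpha(r;\mu)$ almost surely, while the fixed-point relation forces the two sides to agree \emph{in expectation} (because $\beta$-curves, unlike $r_{\max}$, are linear under mixtures). Equality in expectation of an a.e.\ inequality then forces equality a.e., which squeezes the equality conditions of Proposition~\ref{prop:gent_ptt} at every $T$ in the support and yields triviality of $P_T$. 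This linearization is the missing idea in your argument; without it, the conclusion cannot be pushed beyond the extremal trees.
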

\begin{proof}
The stated inequality is a direct consequence from the third statement in Proposition \ref{prop:gen_t_ele}, by taking $r_{\max}$ on both sides 
and using the fact that $\mu$ is a BP fixed point. So, our proof will be focused on the strict condition. 
Specifically, it is clear that the equality condition holds for trivial $P_T$, because $F_T$ is the identity function w.p.1 under such condition. We only need to prove the converse, showing that the equality condition implies the trivialness of $P_T$.

For convenience, we define 
$$\alpha(s;\nu)\triangleq \beta\left(\tanh\frac{s}{2};\nu\right)-\tanh\frac{|s|}{2}$$ 
for symmetric $\nu$.
Our proof relies on the following property, which is proved in Appendix \ref{app:gent_ptecht}. 

\begin{prop}\label{prop:gent_ptt}
For any element tree $T$, symmetric $\nu$, and $s\in(0,r_{\max}(\nu))$, we have 
 \begin{align}\label{ineq:gent_teccot}
\alpha\left({F_T(s)};\calQ_T\nu\right)\leq \tanh\frac{F_T(s)}{2} \cdot
\alpha\left({s};\nu\right)/\tanh\frac{s}{2} \, ,
\end{align}
where $\calQ_T$ denotes the BP operator for  the fixed  $T$. The equality condition holds if and only if $T$ 
has no more than one growing point and all survey distributions are trivial. 
\end{prop} 
Assume the equality condition of inequality \eqref{ineq:gent_techm1}.  We have
$F_T(r)\leq r$ w.p.1.  
Note that 
$\alpha(s,\nu)$ 
is non-increasing for $s\geq 0$. This condition implies  
$$\alpha\left({r};\calQ_T\mu\right)\leq \alpha\left({F_T(r)};\calQ_T\mu\right).$$ 
Recall that $\mu$ is non-trivial and $\phi\in(0,\frac{1}{2})$. We have $r=F_{\phi}(r_{\max}(\mu))\in(0,r_{\max}(\mu))$. 
 Hence, one can apply Proposition \ref{prop:gent_ptt} to the above inequality and obtain \begin{align}\label{ineq:702}
    \alpha\left({r};\calQ_T\mu\right)\leq\tanh\frac{F_T(r)}{2} \cdot
\alpha\left({r};\mu\right)/\tanh\frac{r}{2}\leq \alpha\left({r};\mu\right). 
\end{align}
Because $\mu$ is a BP fixed point, two sides of the above inequalities are identical in expectation. Therefore, 
we need all equality conditions to hold in probability.  
First, $r\in(0,r_{\max}(\nu))$ implies $\alpha\left({r};\mu\right)>0$. We must have $F_T(s)=s$ w.p.1 for the second step in inequality \eqref{ineq:702}. Then, combined with the equality condition stated in   Proposition \ref{prop:gent_ptt}, the tree $T$ must be simple w.p.1, implying that $P_T$ is trivial.  
\end{proof}

Now we use 
these results to prove Theorem \ref{thm:gen_t_tec}. 
\begin{proof}
The first statement relies on the linearity of $\beta$-curves and Proposition \ref{prop:gen_ele_ft}. As proved earlier in this appendix (recall Proposition  \ref{pp:gen59} and Proposition \ref{prop:gen_ele_ft}), any element tree $T$ that satisfies the polygon condition with parameter $r=r_{\max}(B_{\phi}\boxconv\mu)$ implies 
$$    B_{\phi}\boxconv\mathcal{Q}_{T}\mu\prec_p  \mathcal{Q}_{T}(B_{\phi}\boxconv\mu)$$
for some $p=p(T,r)<0$. From the fourth property of Proposition \ref{prop:bounded_crit}, we essentially have
$$    B_{\phi}\boxconv\mathcal{Q}_{T}\mu\prec  \mathcal{Q}_{T}(B_{\phi}\boxconv\mu).$$
Given this relation, any $T$ that is representative implies strict inequalities in $\beta$-curves 
for all $s\in[0,r]$. By linearity, the same non-zero gap condition holds for two sides of inequality (\ref{ineq:gen_t_1}). Hence, the needed strict degradation 
is implied by the fact that $r_{\max}(B_{\phi}\boxconv\mathcal{Q}_{P_T}\mu)$ is exactly $r$, as $\mu$ is a BP fixed point.    

~

To prove that random self-concatenation leads to representative trees with non-zero probability, it suffices to find a subset of element trees with non-zero measure on $P_T$, such that all self-concatenations within this subset is representative for some fixed $h$. 
Recall that $P_T$ is not simple. According to Proposition \ref{prop:gent_ess}, we can find $\tilde{r}>r$ such that  $\mathbb{P}[F_T(r)>\tilde{r}]>0$. Hence, let $\mathcal{T}$ be the subset of element trees 
satisfying $F_T(r)>\tilde{r}$, the set $\mathcal{T}$ has non-zero measure.

Now we show that there is a fixed $h$ such that all self-concatenations within $\mathcal{T}$ with $h$ steps lead to element trees satisfying the polygon condition. 
Assume the contrary, by Definition \ref{def:polc},  for each sufficiently large $h$ there is a concatenation $T_{(h)}$ of trees from $\mathcal{T}$ and a growing point $v$ at the leaf such that the polygon number of $v$ is non-negative. 
 We utilize the following proposition, proved in Appendix \ref{app:last}, which essentially provides an upper bound of polygon numbers using $F_T(r)$.
\begin{prop}\label{prop:last}
Consider any element tree with $k$ growing points and a list of non-negative numbers $r_0, r_1,...,r_k$. Let $s$ denote the essential supremum of the output LLR message for the BP algorithm with $k$ inputs at the growing points given by $r_1,r_2,...,r_k$, and let $p$ denote the essential infimum 
with $k$ inputs given by   $r_0,-r_2,-r_3,...,-r_k$. If $p\geq 0$, then $p+s\leq r_0+r_1$.  
\end{prop}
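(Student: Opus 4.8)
The plan is to write the root message produced by belief propagation on a \emph{fixed} element tree as one explicit deterministic function of its boundary data, and then to extract three elementary structural properties of that function -- monotonicity, oddness, and $1$-Lipschitzness in each coordinate -- from which the inequality $p+s\le r_0+r_1$ drops out after a single application of oddness to align coordinates.

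First I would set up notation. Fix an element tree $T$ with growing points labelled $1,\dots,k$. Given input messages $x_1,\dots,x_k$ at the growing points and survey LLRs $\{y_v\}$ at the non-growing vertices, belief propagation returns a deterministic value at the root, which I denote $g_T(x_1,\dots,x_k;\{y_v\})$; the essential infimum and supremum in the definitions of $p$ and $s$ are taken over the randomness $y_v\sim\mu_v$. By construction $g_T$ obeys the recursion that defines $\calQ_T$: it is $x_1$ when $T$ is a single growing point, it is $y_v$ when $T$ is a single non-growing vertex, and otherwise $g_T = y_o + \sum_{u=1}^{d} F_{\delta_{e_u}}\!\bigl(g_{T_u}\bigr)$, where $o$ is the root, $e_1,\dots,e_d$ its incident edges and $T_1,\dots,T_d$ the corresponding subtrees. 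In a tree each growing point $j$ lies in exactly one of the subtrees hanging off each vertex on the unique $j$-to-root path, so $x_j$ acts on $g_T$ only through one composition chain of the maps $F_{\delta_e}$ along that path, to which quantities not involving $x_j$ are added.

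From this recursion I would record, by a routine induction on the number of vertices of $T$, three facts: (i) $g_T$ is nondecreasing in every argument, since each $F_{\delta_e}$ is (here $\delta_e<\tfrac12$ throughout Appendix~\ref{app:gen_tec}); (ii) $g_T$ is odd, $g_T(-x_1,\dots,-x_k;\{-y_v\}) = -\,g_T(x_1,\dots,x_k;\{y_v\})$, because $F_{\delta_e}$ is odd and the aggregation is linear; and (iii) $g_T$ is $1$-Lipschitz in each argument, because the elementary bound
\[
 F_\delta'(x) \;=\; \frac{(1-2\delta)\,\operatorname{sech}^2(x/2)}{\,1-(1-2\delta)^2\tanh^2(x/2)\,}\;\in\;[0,1]
\]
(and $|F_\delta'|\le1$ for every $\delta\in[0,1]$) shows each $F_{\delta_e}$ is $1$-Lipschitz, and a composition of $1$-Lipschitz maps followed by adding a term constant in the variable is again $1$-Lipschitz.

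Finally I would reduce $s$ and $p$ to extremal evaluations of $g_T$. Writing $m_v\triangleq r_{\max}(\mu_v)\ge0$, so that the essential infimum of $\mu_v$ is $-m_v$ by symmetry, the monotonicity (i) gives $s = g_T(r_1,r_2,\dots,r_k;\{m_v\})$ and $p = g_T(r_0,-r_2,\dots,-r_k;\{-m_v\})$. Applying the oddness (ii) with $(x_1,\dots,x_k)=(-r_0,r_2,\dots,r_k)$ and $y_v=m_v$ rewrites the second as $p = -\,g_T(-r_0,r_2,\dots,r_k;\{m_v\})$, whence
\[
 p+s \;=\; g_T(r_1,r_2,\dots,r_k;\{m_v\}) - g_T(-r_0,r_2,\dots,r_k;\{m_v\}),
\]
and the two evaluations of $g_T$ now differ only in the first coordinate, $r_1$ versus $-r_0$, with $r_1\ge0\ge-r_0$. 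Monotonicity makes this difference nonnegative, and $1$-Lipschitzness in the first coordinate bounds it by $r_1-(-r_0)=r_0+r_1$; this is exactly $p+s\le r_0+r_1$. The only care needed is in item (iii) -- isolating the single path through which each input acts, and checking that values at $\pm\infty$ cause no trouble since $F_\delta$ is continuous and (for $\delta\in(0,1)$) bounded. I would also note that the hypothesis $p\ge0$ is never actually used in this argument; it only records the ``dominant growing point'' regime in which the proposition is invoked.
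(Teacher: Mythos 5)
Your proof is correct and is genuinely different from the paper's. The paper singles out the path $L=(v_0,\dots,v_\ell)$ from the root to the growing point carrying $r_1$ (resp.\ $r_0$) and telescopes along it: in the $s$-configuration it bounds $s$ from above by $r_1$ plus the sum of the essential suprema $q_j$ of the off-path contributions entering each $v_j$, and in the $p$-configuration it bounds $p$ from above by $r_0$ minus the same $\sum_j q_j$; adding kills the $q_j$'s. The step-by-step contraction it invokes, $F_\delta(x)\le x$, is one-sided and holds only for $x\ge 0$, which is why the hypothesis $p\ge 0$ is propagated to all intermediate path vertices to keep the telescoping valid. You replace this by three global properties of the deterministic BP map $g_T$ -- monotonicity, oddness, and $1$-Lipschitzness coordinate-wise -- and use oddness to rewrite $p=-g_T(-r_0,r_2,\dots,r_k;\{m_v\})$, so that $p+s$ becomes a difference of two evaluations of $g_T$ differing only in the first coordinate, killed by the $1$-Lipschitz bound. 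This is tidier (no $q_j$'s, no sign-tracking along the path, no need to separately establish that off-path contributions are symmetric between the two configurations -- oddness handles all coordinates at once) and, as you observe, the hypothesis $p\ge 0$ is not needed for the inequality itself when all quantities are finite, whereas the paper's proof does rely on it. The one place $p\ge 0$ still does some work is the degenerate case where a survey mass at $+\infty$ reaches the root unattenuated (through $\delta_e=0$ edges), making $s=+\infty$ and, by oddness, $p=-\infty$, so $p+s$ is undefined; $p\ge 0$ excludes exactly that. With that caveat noted, the argument is airtight.
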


Let $L$ denote the path from the root to $v$, and let $T_1,...,T_h$ denote the element trees that overlaps with $L$ in the sequential order that the root belongs to $T_1$ and $v$ belongs to $T_h$. Let $o_{j-1}$ denote the root of $T_j$, and let $o_h$ denote $v$. We consider the BP algorithm on $T_{(h)}$ where the input is $r$ for $v$ and $-r$ for all other growing points. Let $p_j$ denote essential infimum of the message sent from vertex $o_j$. Because all inputs except for the one at $v$ are negative and $p_0$ is non-negative in the end, 
the essential infimum for all messages sent from vertices on $L$ must be non-negative, i.e., $p_0,...,p_h\geq 0$. 

Now we use Proposition \ref{prop:last} to derive a recursive relation between these quantities. Consider each $T_j$, let $p_j$ be the input at its growing point on $L$ (i.e., $o_j$). For each other growing point, let their input be the essential infimum of their  message sent in the above process. Then the essential infimum of the message sent from $o_{j-1}$ is exactly $p_{j-1}$. 
We view the input at $o_{j}$ to be $r_0$, and the inputs at all other growing points to be $-r_2,-r_3,...,-r_k$. Then the variable $p$ in Proposition \ref{prop:last} is identical to $p_{j-1}$.
Consider the other set of input configurations with $r_1=r$. Because of the monotonicity of BP algorithms and $F_{T}(r)>r$ for all element trees in $\mathcal{T}$, all $r_2,...,r_k$ are lower bounded by $r$. Again, due to monotonicity, we have $s\geq \tilde{r}$. Therefore, Proposition \ref{prop:last} states that $p_{j-1}\leq p_j-(\tilde{r}-r)$.

Because $p_h=r$ is finite, and $\tilde{r}-r$ is a fixed positive number given $\mathcal{T}$. There can be at most finitely many $h$ for the recursion to holds with $p_0\geq 0$, which leads to a contradiction.   
\end{proof}

 

\subsection{Proof of Proposition \ref{prop:gent_ptt}}\label{app:gent_ptecht}

Note that any BP operator is a composition of convolution and box convolution. We prove the inequality by keeping track of all the composition steps. We first state the 
specialized statements for each building block in the following proposition. 
\begin{prop}\label{prop_gent_basic_a}
Consider any symmetric $\nu$ and  $s>0$.  
\begin{enumerate}
    \item For any $\delta\in[0,1]$, we have 
        \begin{align}\label{ineq:gent_teccot_box}
    \alpha(F_{\delta}(s); B_{\delta}\boxconv\nu )=  \left(\tanh\frac{F_{\delta}(s)}{2}\right) \cdot \frac{\alpha(s; \nu)}{ \tanh\frac{s}{2}}.
    \end{align}
\item For any symmetric $\tau$ and $r\geq 0$, 
we have
   \begin{align}\label{ineq:gent_teccot_sur}
    \alpha(s+r;\nu*\tau)\leq \left(\tanh\frac{s+r}{2}\right)\cdot \frac{\alpha(s; \nu)+\alpha(r;\tau)}{\tanh\frac{s}{2}+\tanh\frac{r}{2}},
\end{align}
where the equality condition holds if and only if either  $r=0$ and $\tau$ is trivial, or $\alpha(s; \nu)=\alpha(r;\tau)=0$. 
\end{enumerate}
\end{prop}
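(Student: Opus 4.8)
To prove Proposition~\ref{prop_gent_basic_a}, I would first record the reformulation
$\alpha(s;\mu)=\EE_{R\sim\mu}\big[(\tanh\tfrac{|R|}{2}-\tanh\tfrac{s}{2})^{+}\big]$ for $s>0$, which is immediate from the definitions of $\alpha$ and $\beta$ since $\max\{x,\tanh\tfrac s2\}-\tanh\tfrac s2=(x-\tanh\tfrac s2)^{+}$. Part~(1) is then a one-line computation. Assuming $\delta<\tfrac12$ (the case $\delta=\tfrac12$ being trivial), the definition of $F_\delta$ gives $\tanh\tfrac{F_\delta(s)}{2}=(1-2\delta)\tanh\tfrac s2$; evaluating the homothety identity $\beta(t;B_\delta\boxconv\nu)=(1-2\delta)\,\beta\big(\tfrac{t}{1-2\delta};\nu\big)$ of Proposition~\ref{prop:beta_box} at $t=\tanh\tfrac{F_\delta(s)}{2}$ yields $\beta\big(\tanh\tfrac{F_\delta(s)}{2};B_\delta\boxconv\nu\big)=(1-2\delta)\,\beta\big(\tanh\tfrac s2;\nu\big)$, and subtracting $\tanh\tfrac{F_\delta(s)}{2}=(1-2\delta)\tanh\tfrac s2$ gives $\alpha(F_\delta(s);B_\delta\boxconv\nu)=(1-2\delta)\,\alpha(s;\nu)=\tanh\tfrac{F_\delta(s)}{2}\cdot\alpha(s;\nu)\big/\tanh\tfrac s2$.

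For part~(2) the plan is to reduce to an elementary pointwise inequality on the two summands. Let $R_1\sim\nu$ and $R_2\sim\tau$ be independent and put $p=\tanh\tfrac{|R_1|}{2}$, $q=\tanh\tfrac{|R_2|}{2}$, $a=\tanh\tfrac s2$, $b=\tanh\tfrac r2$. Using $|R_1+R_2|\le|R_1|+|R_2|$, monotonicity of $x\mapsto\tanh\tfrac x2$ on $[0,+\infty]$, the addition formula $\tanh\tfrac{|R_1|+|R_2|}{2}=\tfrac{p+q}{1+pq}$, and the identity $\tanh\tfrac{s+r}{2}=\tfrac{a+b}{1+ab}$ (which turns the factor $\tanh\tfrac{s+r}{2}\big/(\tanh\tfrac s2+\tanh\tfrac r2)$ into $\tfrac1{1+ab}$), the statement of part~(2), after applying the reformulation of $\alpha$ and taking expectations over $R_1,R_2$, follows from the inequality
\[
\Big(\tfrac{p+q}{1+pq}-\tfrac{a+b}{1+ab}\Big)^{+}\ \le\ \tfrac1{1+ab}\big((p-a)^{+}+(q-b)^{+}\big),\qquad p,q,a,b\in[0,1].
\]
To prove this I would set $g(p,q)=\tfrac{(1+ab)(p+q)}{1+pq}-(a+b)$, observe that $g(a,b)=0$ and $\partial_p g=(1+ab)\tfrac{1-q^2}{(1+pq)^2}\ge0$ (and symmetrically $\partial_q g\ge0$), so $g(p,q)\le g(p\vee a,\,q\vee b)$ while $(p-a)^{+}+(q-b)^{+}=(p\vee a-a)+(q\vee b-b)$. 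It then suffices to check $g(p',q')\le(p'-a)+(q'-b)$ for $p'\ge a$, $q'\ge b$, which follows by writing $g(p',q')=\int_a^{p'}\partial_p g(\xi,q')\,d\xi+\int_b^{q'}\partial_q g(a,\eta)\,d\eta$ and verifying $\partial_p g(\xi,q')\le1$ for $\xi\ge a,\ q'\ge b$ and $\partial_q g(a,\eta)\le1$ for $\eta\ge b$; e.g.\ the first reduces to $(1+ab)(1-q'^2)\le(1+aq')^2$, whose two sides differ by a function that is increasing in $q'$ and nonnegative at $q'=b$. Taking expectations then yields part~(2).

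The equality claim I would obtain by tracking where each step is tight. The forward implication is a direct check: if $r=0$ and $\tau=\delta_0$ then both sides equal $\alpha(s;\nu)$ (using $\alpha(0;\delta_0)=0$ and $\nu*\delta_0=\nu$), and if $\alpha(s;\nu)=\alpha(r;\tau)=0$ then $|R_1|\le s$ and $|R_2|\le r$ a.s., so both sides vanish. For the converse, equality forces a.s.\ equality in the triangle-inequality step and in the pointwise inequality above; since (as $s>0$ forces $a>0$) the derivatives $\partial_p g$, $\partial_q g$ are strictly below $1$ away from the boundary of $\{p\ge a,\,q\ge b\}$ unless $b=0$, a short case analysis shows that when $r>0$ one must have $(p-a)^{+}+(q-b)^{+}=0$ a.s.\ (i.e.\ $\alpha(s;\nu)=\alpha(r;\tau)=0$), while when $r=0$ one is forced to $q=0$ a.s.\ (i.e.\ $\tau$ trivial). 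The main work here, and the step I expect to be the real obstacle, is this elementary inequality together with the precise delineation of its equality cases --- in particular separating the degenerate $r=0$ branch from the $\alpha(s;\nu)=\alpha(r;\tau)=0$ branch: the reduction to it is routine, but the derivative estimates and the equality bookkeeping need to be carried out carefully.
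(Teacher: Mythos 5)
Your proposal is correct and reaches the same conclusion, but the route through part~(2) is genuinely different from the paper's. The paper writes $\alpha(s+r;\nu*\tau)=\EE[f(|R_\nu|,|R_\tau|)]$ where $f$ is computed \emph{exactly}, via the symmetry condition, as the conditional expectation over the four sign configurations of $(R_\nu,R_\tau)$ given $(|R_\nu|,|R_\tau|)$; this $f$ is a three-branch piecewise function, and the paper compares $f\le g$ directly ``by elementary calculus.'' You instead insert the extra bound $\tanh\tfrac{|R_1+R_2|}{2}\le\tanh\tfrac{|R_1|+|R_2|}{2}$, which dominates the paper's $f$ and reduces everything to the single clean pointwise inequality $g(p,q)^+\le(p-a)^++(q-b)^+$ with $g(p,q)=\tfrac{(1+ab)(p+q)}{1+pq}-(a+b)$. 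That inequality is then dispatched by monotonicity plus the derivative bounds $\partial_p g,\,\partial_q g\le 1$ on $\{\xi\ge a,\,\eta\ge b\}$, which is tidier than the paper's case analysis on the piecewise $f$. (Note: both you and the paper arrive at exactly the same integrand $g$ for the right-hand side, after the identities $\tanh\tfrac{s+r}{2}=\tfrac{a+b}{1+ab}$ and $\tanh\max\{\tfrac{x}{2},\tfrac{s}{2}\}=p\vee a$.)

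The cost of the extra triangle-inequality step is that the equality analysis now has two places where slack can occur, and you only track the second (the pointwise inequality). Fortunately that suffices: with $a>0$, your derivative bounds are strict everywhere on $\{\xi\ge a,\eta\ge b\}$ when $b>0$ (since $\partial_p g(a,b)=\tfrac{1-b^2}{1+ab}<1$ and $\partial_q g(a,b)=\tfrac{1-a^2}{1+ab}<1$), so pointwise equality forces $p\le a$ and $q\le b$; when $b=0$ one has $\partial_p g(\xi,0)=1$ but $\partial_q g(a,\eta)<1$ for $\eta>0$, so equality forces $q=0$. In both cases the first (triangle-inequality) step is then automatically tight, so no information is lost. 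One small correction: your parenthetical ``whose two sides differ by a function that is increasing in $q'$'' should read that $(1+aq')^2-(1+ab)(1-q'^2)$ is increasing and nonnegative at $q'=b$; as written the sign is flipped, though the intended estimate is right. Overall the argument is sound, and arguably more self-contained than the paper's; the main thing to fill in rigorously is exactly the equality bookkeeping you flag, which does go through.
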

\begin{proof}
The first property follows from the rule of box convolution on $\beta$-curves (Proposition \ref{prop:beta_box}). To prove the second property, consider independent variables $R_{\nu}\sim\nu$ and $R_{\tau}\sim\tau$. We have 
$$   \alpha(s+r;\nu*\tau)=\mathbb{E}\left[\max\left\{\tanh\frac{|R_{\nu}+R_{\tau}|}{2}-\tanh\frac{s+r}{2},0\right\}\right].$$
By symmetry condition, the above quantity can be written as
$$   \alpha(s+r;\nu*\tau)=\mathbb{E}[f(|R_{\nu}|,|R_{\tau}|)],$$
where
\begin{align*}
    f(|R_{\nu}|,|R_{\tau}|)\triangleq\begin{cases}
    \tanh\max\left\{\frac{|R_{\nu}|}{2},\frac{|R_{\tau}|}{2}\right\}-\tanh\frac{s+r}{2} & \textup{if } s+r< ||R_{\nu}|-|R_{\tau}||\\
    \frac{1}{2}\left(\tanh\frac{|R_{\nu}|+|R_{\tau}|}{2}-\tanh\frac{s+r}{2}\right)\left({1+\tanh\frac{|R_{\nu}|}{2}\tanh\frac{|R_{\tau}|}{2}}\right) & \textup{if } ||R_{\nu}|-|R_{\tau}||\leq s+r< |R_{\nu}|+|R_{\tau}|\\
    0& \textup{otherwise}
    \end{cases}.
\end{align*}
Note that the RHS of inequality \eqref{ineq:gent_teccot_sur} can be written similarly as the expectation of the following function 
\begin{align*}
    g(|R_{\nu}|,|R_{\tau}|)\triangleq\left(\tanh\frac{s+r}{2}\right)\cdot\left( \frac{\tanh\max\left\{\frac{|R_{\nu}|}{2},\frac{s}{2}\right\}+\tanh\max\left\{\frac{|R_{\tau}|}{2},\frac{r}{2}\right\}}{\tanh\frac{s}{2}+\tanh\frac{r}{2}}-1 \right).
\end{align*}
By elementary calculus, we have 
\begin{align*}
    f(|R_{\nu}|,|R_{\tau}|)\leq  g(|R_{\nu}|,|R_{\tau}|).
\end{align*} 
Then by taking the expectation, we have proved inequality \eqref{ineq:gent_teccot_sur}.

When equality holds, we need $f(|R_{\nu}|,|R_{\tau}|)=  g(|R_{\nu}|,|R_{\tau}|)$ w.p.1. 
We prove that either one of the equality conditions  stated in Proposition \ref{prop_gent_basic_a} is needed for this statement. 
When $s\in(0,r_{\max}(\nu))$, 
we have $\mathbb{P}[|R_{\nu}|>s]>0$. Then for any fixed $|R_{\nu}|>s$, the equality $f(|R_{\nu}|,|R_{\tau}|)=  g(|R_{\nu}|,|R_{\tau}|)$ is equivalent to $r=0$ and trivial $\tau$.
Consequently, the first condition is needed in this regime. 

For the other case, we have $s\geq r_{\max}(\nu)$, which implies $\alpha(s;\nu)=0$. By symmetry, one can prove that inequality \eqref{ineq:gent_teccot_sur} is strict for $r<r_{\max}(\tau)$, given that we assumed $s>0$. Hence, in this case, equality also requires $r\geq r_{\max}(\tau)$, which implies $\alpha(r;\tau)=0$. 
To summarize, the second equality condition is needed in this regime.
\end{proof}

\begin{proof}[Proof of Proposition \ref{prop:gent_ptt}]
If $T$ has no growing points, then  $\calQ_T$ returns a fixed distribution and $F_T(s)=r_{\max}(\calQ_T\nu)$. Hence, the LHS of inequality (\ref{ineq:gent_teccot}) is zero. Then, because $s<r_{\max}(\nu)$, we have $\alpha{(s;\nu)}$ being positive, and the inequality is proved. The equality condition in this case is given by $F_{T}(s)=0$. Recall that we assumed all edges to have non-zero capacity. This condition is equivalent to having all survey distributions be trivial, which is the same condition stated in the proposition.

In the rest of this proof, we assume $T$ has at least one growing point. Again, we prove the stated inequality by induction over the size of the tree and the number of non-trivial survey distributions. 
Consider the base case where $T$ only has one vertex. It has to be a single growing point.  Therefore, we have $F_T(s)=s$ and $\calQ_T$ being the identity function. Hence, the equality condition holds.

For the induction step, $T$ has at least two vertices. We can choose one incident edge $e$ of the root vertex, such that the corresponding subtree contains at least one growing point. We consider two possible cases. First,  
if the root has only one incident edge and $\mu_o$ is trivial, we let $\tilde{T}$ denote the subtree obtained by removing the the root and edge $e$. The BP operator can be written as $\calQ_T \nu=  B_{\delta_e}\boxconv \calQ_{\tilde{T}}\nu$   and $F_{T}(s)= F_{\delta_e}(F_{\tilde{T}}(s))$. Recall that $\tilde{T}$ has at least one growing point and all edges have non-zero capacity. We have $F_{\tilde{T}}(s)>0$ for any $s>0$. Hence, the rule of box convolution on $\beta$-curves implies
    \begin{align*}
    \alpha(F_{T}(s); \calQ_T\nu )=\alpha(F_{\delta}(F_{\tilde{T}}(s)); B_{\delta}\boxconv\calQ_{\tilde{T}}\nu )=  \left(\tanh\frac{F_{T}(s)}{2}\right) \cdot \frac{\alpha(F_{\tilde{T}}(s); \calQ_{\tilde{T}}\nu)}{ \tanh\frac{F_{\tilde{T}}(s)}{2}}.
    \end{align*}
By applying the induction assumption on $\tilde{T}$, the RHS of the above inequality is exactly upper bounded by the RHS of inequality \eqref{ineq:gent_teccot}. Furthermore, the equality holds if and only if $\tilde{T}$ has exactly one growing point and all survey distributions are trivial. This is consistent with the statement in Proposition \ref{prop:gent_ptt}.  

For the other case, let $T_1$ denote the subtree consists of $\tilde{T}$, $e$, and a root with a trivial survey distribution; let $T_2$ denote the subtree that contains all other contents of $T$. We have $\calQ_T\nu=\calQ_{T_1}\nu*\calQ_{T_2}\nu$ and $F_{T}(s)=F_{T_1}(s)+F_{T_2}(s)$.  Because $T_1$ contains at least one growing point, we have $F_{T_1}(s)>0$. Thus, we have the following expression from the second statement of Proposition \ref{prop_gent_basic_a}. 
    \begin{align*}
    \alpha(F_{T}(s); \calQ_T\nu )=\alpha(F_{T_1}(s)+F_{T_2}(s); \calQ_{T_1}\nu*\calQ_{T_2}\nu)\leq   \left(\tanh\frac{F_{T}(s)}{2}\right) \cdot \frac{\alpha(F_{T_1}(s); \calQ_{T_1}\nu)+\alpha(F_{T_2}(s);\calQ_{T_2}\nu)}{ \tanh\frac{F_{T_1}(s)}{2}+\tanh\frac{F_{T_2}(s)}{2}}.
    \end{align*}
Note that both $T_1$ and $T_2$ contain either at least one less vertex or one less non-trivial survey distribution compared to $T$. Hence, induction assumptions can be applied, which exactly gives inequality \eqref{ineq:gent_teccot}. On the other hand, recall that $\alpha(s;\nu)>0$ and $F_{T}(s)>0$. The overall equality requires $\alpha(F_{T_1}(s); \calQ_{T_1}\nu)+\alpha(F_{T_2}(s);\calQ_{T_2}\nu)>0$. Hence, to satisfy the equality condition of Proposition \ref{prop_gent_basic_a}, we need $F_{T_2}(s)=0$ and $\calQ_{T_2}\nu$ is trivial. Because $s$ is positive, these conditions imply that $T_2$ has no growing points, and all its survey distributions are trivial. Again, because $F_{T}(s)>0$, overall equality requires the equality condition for the $T_1$ induction. Therefore, $T_1$ can have at most one growing point, and all its  survey distributions are trivial as well. 
Combining the requirements on $T_1$ and $T_2$, the element tree $T$ must have have at most one growing point and all trivial survey distributions. 





\end{proof}


\subsection{Proof of Proposition \ref{prop:last}} \label{app:last}
\begin{proof}
Let $v_0$ denote the root, and $L=(v_0,v_1,v_2,...,v_\ell)$ denote the path from the root to the first growing point. When the inputs are given by $r_1,r_2,r_3,...,r_k$, 
let 
$q_j$ be the essential supremum of the summation of all messages received at $v_j$ excluding the one received from the path. Because messages are degraded when passing through each edge, the final output is upper bounded by the summation of all external messages received on the path, i.e.,
$$s\leq r_1+ q_0+...+q_{k-1} .$$

When the inputs are given by $r_0,-r_2,-r_3,...,-r_k$, 
let $p_j$ denote the essential infimum of the message sent from $r_j$. Because $p\geq 0$ and all input messages except for the first growing point are non-positive. We must have $p_0,...,p_k\geq 0$. Thus, 
the essential infimum of the message received at $r_j$ from the path is no greater than $p_{j+1}$. By symmetry, the essential infimum of the summation of all other received messages at $r_j$ is $-q_j$. Hence, we have $p_j\leq p_{j+1}-q_j$, which implies that 
$$p\leq r_0-q_0-...-q_{k-1}.$$

Combine these two bounds, we have $p+s\leq r_0+r_1$.
\end{proof}

\bibliographystyle{IEEEtran}
\bibliography{references}
\end{document}